\def\ContinueLineNumber{\lstset{firstnumber=last}}
\begin{document}
\captionsetup[lstlisting]{singlelinecheck=false, margin=0pt}
\renewcommand\lstlistingname{Algorithm}
\lstnewenvironment{sflisting}[1][]
  {\lstset{#1}}{}
\newtheorem{theorem}{Theorem}
\newtheorem{lem}{Lemma}
\newtheorem{lemm}{Lemma}
\newtheorem{cor}{Corollary}
\newtheorem{coro}{Corollary}
\newtheorem{prop}{Proposition}
\newtheorem{obs}{Observation}
\newtheorem{obse}{Observation}
\newtheorem{ill}{Illustration}
\newtheorem{inv}{Invariant}
\newtheorem{defi}{Definition}
\newtheorem{defin}{Definition}
\newtheorem{claim}{Claim}
%
% paper title
% Titles are generally capitalized except for words such as a, an, and, as,
% at, but, by, for, in, nor, of, on, or, the, to and up, which are usually
% not capitalized unless they are the first or last word of the title.
% Linebreaks \\ can be used within to get better formatting as desired.
% Do not put math or special symbols in the title.
\title{Quadboost: A Scalable Concurrent Quadtree}
%
%
% author names and IEEE memberships
% note positions of commas and nonbreaking spaces ( ~ ) LaTeX will not break
% a structure at a ~ so this keeps an author's name from being broken across
% two lines.
% use \thanks{} to gain access to the first footnote area
% a separate \thanks must be used for each paragraph as LaTeX2e's \thanks
% was not built to handle multiple paragraphs
%
%
%\IEEEcompsocitemizethanks is a special \thanks that produces the bulleted
% lists the Computer Society journals use for "first footnote" author
% affiliations. Use \IEEEcompsocthanksitem which works much like \item
% for each affiliation group. When not in compsoc mode,
% \IEEEcompsocitemizethanks becomes like \thanks and
% \IEEEcompsocthanksitem becomes a line break with idention. This
% facilitates dual compilation, although admittedly the differences in the
% desired content of \author between the different types of papers makes a
% one-size-fits-all approach a daunting prospect. For instance, compsoc 
% journal papers have the author affiliations above the "Manuscript
% received ..."  text while in non-compsoc journals this is reversed. Sigh.

\author{Keren~Zhou,
		Guangming~Tan,
		Wei~Zhou
\IEEEcompsocitemizethanks{\IEEEcompsocthanksitem K.Zhou and G.Tan are with Institute of Computing Technology, Chinese
Academy of Sciences.
 \protect\\
% note need leading \protect in front of \\ to get a newline within \thanks as
% \\ is fragile and will error, could use \hfil\break instead.
\IEEEcompsocthanksitem W.Zhou is with School of Software, Yunnan University.}%
}

\IEEEtitleabstractindextext{%
\begin{abstract}
Building concurrent spatial trees is more complicated than binary search trees since a space hierarchy should be preserved during modifications. We present a non-blocking quadtree--{\em quadboost}--that supports concurrent insert, remove, move, and contain operations. To increase its concurrency, we propose a decoupling approach that separates physical adjustment from logical removal within the remove operation. In addition, we design a continuous find mechanism to reduce its search cost. The move operation combines the searches for different keys together and modifies different positions with atomicity. The experimental results show that quadboost scales well on a multi-core system with 32 hardware threads. More than that, it outperforms existing concurrent trees in retrieving two-dimensional keys with up to 109\% improvement when the number of threads is large. The move operation proved to perform better than the best-known algorithm, with up to 47\%.
\end{abstract}

% Note that keywords are not normally used for peerreview papers.
\begin{IEEEkeywords}
Concurrent Data Structures, Quadtree, Continuous Find, Decoupling, LCA
\end{IEEEkeywords}}

% make the title area
\maketitle

% To allow for easy dual compilation without having to reenter the
% abstract/keywords data, the \IEEEtitleabstractindextext text will
% not be used in maketitle, but will appear (i.e., to be "transported")
% here as \IEEEdisplaynontitleabstractindextext when the compsoc 
% or transmag modes are not selected <OR> if conference mode is selected 
% - because all conference papers position the abstract like regular
% papers do.
\IEEEdisplaynontitleabstractindextext
% \IEEEdisplaynontitleabstractindextext has no effect when using
% compsoc or transmag under a non-conference mode.

% For peer review papers, you can put extra information on the cover
% page as needed:
% \ifCLASSOPTIONpeerreview
% \begin{center} \bfseries EDICS Category: 3-BBND \end{center}
% \fi
%
% For peerreview papers, this IEEEtran command inserts a page break and
% creates the second title. It will be ignored for other modes.
\IEEEpeerreviewmaketitle

\IEEEraisesectionheading{\section{Introduction}\label{sec:introduction}}
% Computer Society journal (but not conference!) papers do something unusual
% with the very first section heading (almost always called "Introduction").
% They place it ABOVE the main text! IEEEtran.cls does not automatically do
% this for you, but you can achieve this effect with the provided
% \IEEEraisesectionheading{} command. Note the need to keep any \label that
% is to refer to the section immediately after \section in the above as
% \IEEEraisesectionheading puts \section within a raised box.

% The very first letter is a 2 line initial drop letter followed
% by the rest of the first word in caps (small caps for compsoc).
% 
% form to use if the first word consists of a single letter:
% \IEEEPARstart{A}{demo} file is ....
% 
% form to use if you need the single drop letter followed by
% normal text (unknown if ever used by the IEEE):
% \IEEEPARstart{A}{}demo file is ....
% 
% Some journals put the first two words in caps:
% \IEEEPARstart{T}{his demo} file is ....
% 
% Here we have the typical use of a "T" for an initial drop letter
% and "HIS" in caps to complete the first word.
\IEEEPARstart{M}{ulti-core} processors are currently the universal computing engine in computer systems. Therefore, it is urgent to develop data structures that provide an efficient and scalable multi-thread execution. At present, concurrent data structures~\cite{moir2007concurrent}, such as stacks, linked-lists, and queues have been extensively investigated. As a fundamental building block of many parallel programs, these concurrent data structures provide significant performance benefits~\cite{shavit2011data, david2015asynchronized}.

Recently, research on concurrent trees is focused on binary search trees (BSTs)~\cite{ellen2010non, howley2012non, natarajan2014fast, ramachandran2015castle, ramachandran2015fast, chatterjee2014efficient, drachsler2014practical}, which are at the heart of many tree-based algorithms. The concurrent paradigms of BSTs were extended to design concurrent spatial trees like R-Tree~\cite{ng1993concurrent, chen1997study}. However, there remains another unaddressed spatial tree--quadtree, which is widely used in applications for multi-dimensional data. For instance, spatial databases, like PostGIS~\cite{obe2011postgis},  adopt octree, a three-dimensional variant of quadtree, to build spatial indexes. Video games apply quadtree to handle collision detection~\cite{Qollide}. In image processing~\cite{sullivan1994efficient}, quadtree is used to decompose pictures into separate regions. 

There are different categories of quadtree according to the type of data a node represents, where two major types are region quadtree and point quadtree~\cite{mehta2004handbook}. Point quadtree stores actual keys in each node, yet it is hard to design concurrent algorithms for point quadtree since an insert operation might involve re-balance issues, and a remove operation needs to re-insert the whole subtree under a removed node. The region quadtree divides a given region into several sub-regions, where internal nodes represent regions and leaf nodes store actual keys. Our work focuses on region quadtree for two reasons: (1) The shape of region quadtree is independent of insert/remove operations' order, which allow us to avoid complex re-balance rules and devise specific concurrent techniques for it. (2) Further, region quadtree could be regarded as a typical external tree that we can adjust existing concurrent techniques from BSTs. Therefore, we will refer to region quadtree as {\em quadtree} in the following context.

In this paper, we design a non-blocking quadtree, referred to as \textit{quadboost}, that supports concurrent insert, remove, contain, and move operations.
%The concurrent algorithms mainly solve the problem that a region of a quadtree is divided into four equal sub-regions in a static way, which results in a long chain of nodes. 
%The challenge to be addressed is to make a trade-off between the number of nodes and efficiency of concurrent operations. 
Our key contributions are as follows:
	
\begin{itemize}
\item We propose the first non-blocking quadtree. It records traversal paths, compresses \textit{Empty} nodes if necessary, adopts a decoupling technique to increase the concurrency, and devises a continuous find mechanism to reduce the cost of retries induced by \texttt{CAS} failures. 

\item We design a lowest common ancestor (LCA) based move operation, which traverses a common path for two different keys and modifies two distinct nodes with atomicity. 

\item We prove the correctness of quadboost algorithms and evaluate them on a multi-core system. The experiments demonstrate that quadboost is highly efficient for concurrent updating at different contention levels. 
\end{itemize}

The rest of this paper is organized as follows. In Section 2, we overview some basic operations. Section 3 describes a simple \texttt{CAS} quadtree to motivate this work. Section 4 provides detailed algorithms for quadboost. We provide a sketch of correctness proof in Section 5. Experimental results are discussed in Section 6. Section 7 summarizes related works. Section 8 concludes the paper.

\section{Preliminary}
\label{sec:Preliminary}
Quadtree can be considered as a dictionary for retrieving two-dimensional keys, where $<$\textit{keyX, keyY}$>$ is never duplicated. Figure~\ref{fig:sample quadtree} illustrates a sample quadtree and its corresponding region, where we use numbers to indicate keys. Labels on edges are routing directions--Southwest (sw), Northwest (nw), Southeast (se), and Northeast (ne). The right picture is a mapping of the quadtree on a two-dimensional region, where keys are located according to their coordinates, and regions are divided by their corresponding width (\textit{w}) and height (\textit{h}). There are three types of nodes in the quadtree, which represent different regions in the right figure. \textit{Internal} nodes are circles on the left figure, and each of them has four children which indicate four equal sub-regions on different directions. The root node is an \textit{Internal} node, and it is the largest region. \textit{Leaf} nodes and \textit{Empty} nodes are located at the terminal of the quadtree; they indicate the smallest regions on the right figure. \textit{Leaf} nodes are solid rectangles that store keys; they represent regions with the same numbers on the right figure. \textit{Empty} nodes are dashed rectangles without any key. 
\begin{figure}[htbp]
\centering
\includegraphics[width=0.4\textwidth]{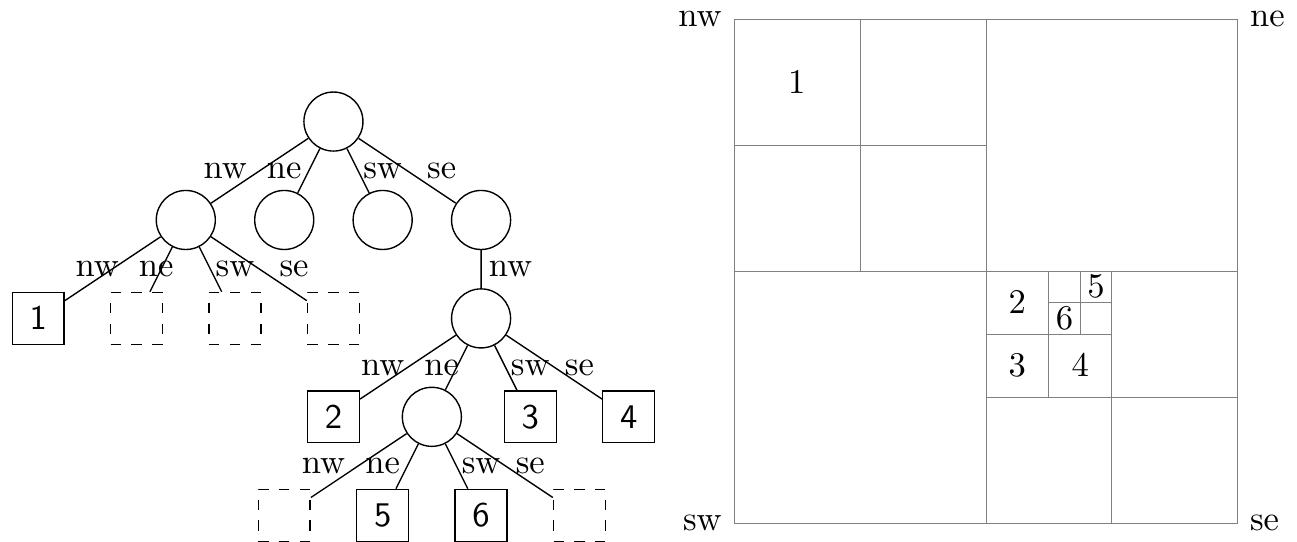}
\caption{A sample quadtree and its corresponding region}
\label{fig:sample quadtree}
\end{figure}

We describe the detailed structures of a quadtree in Figure~\ref{fig:quadtree nodes}. An \textit{Internal} node maintains its four children and a two-dimensional routing structure--$<$\textit{x, y, w, h}$>$, where $<$\textit{x, y}$>$ stands for the upper left coordinate and $<$\textit{w, h}$>$ are the width and height of the region. A \textit{Leaf} node contains a key $<$\textit{keyX, keyY}$>$ and its corresponding \textit{value}. An \textit{Empty} node does not have any field. To avoid some corner cases, we initially split the root node and its children to form two layers of dummy \textit{Internal} nodes with a layer of \textit{Empty} nodes at the terminal. Also, we present routing functions \textit{find} and \textit{getQuadrant} in the figure. For instance, if we have to locate key 1 in Figure~\ref{fig:sample quadtree}, we start with the root node and compare key 1 with its routing structures by \textit{getQuadrant}. Then we reach its \textit{nw} and perform a comparison again. In the end, we find the terminal node that contains key 1.

\begin{figure}[htbp]
\centering
\begin{sflisting}[language=Java, numberblanklines=false, numbers=left,xleftmargin=2em, basicstyle=\rmfamily\tiny, breaklines=true, breakatwhitespace=true, framexleftmargin=2em]
class Node<V> {}

class Internal<V> extends Node<V> {
    final double x, y, w, h;
    Node nw, ne, sw, se;
}

class Leaf<V> extends Node<V> {
    final double keyX, keyY;
    final V value;
}

class Empty<V> extends Node<V> {}

void find(Node& l, double keyX, double keyY) {
    while (l.class() == Internal) {
        getQuadrant(l, keyX, keyY);
    }
}

void getQuadrant(Node& l, double keyX, double keyY) {
    if (keyX < l.x + l.w / 2) {
        if (keyY < l.y + l.h / 2) l = l.nw;
        else l = l.sw;
    } else {
        if (keyY < l.y + l.h / 2) l = l.ne;
        else l = l.se;
    }
}
\end{sflisting}
\caption{Quadtree nodes and routing functions}
\label{fig:quadtree nodes}
\end{figure}
\ContinueLineNumber

There are four basic operations that rely on \textit{find}:
\begin{enumerate}
\item \textit{insert(key, value)} adds a node that contains \textit{key} and \textit{value} into a quadtree.
\item \textit{remove(key)} deletes an existing node with \textit{key} from a quadtree.
\item \textit{contain(key)} checks whether \textit{key} is in a  quadtree.
\item \textit{move(oldKey, newKey)} replaces an existing node with \textit{oldKey} and \textit{value} by a node with \textit{newKey} and \textit{value} that is not in a quadtree.
\end{enumerate}
\begin{figure}[htbp]
\begin{subfigure}{0.5\textwidth}
\centering
\includegraphics[width=0.8\textwidth]{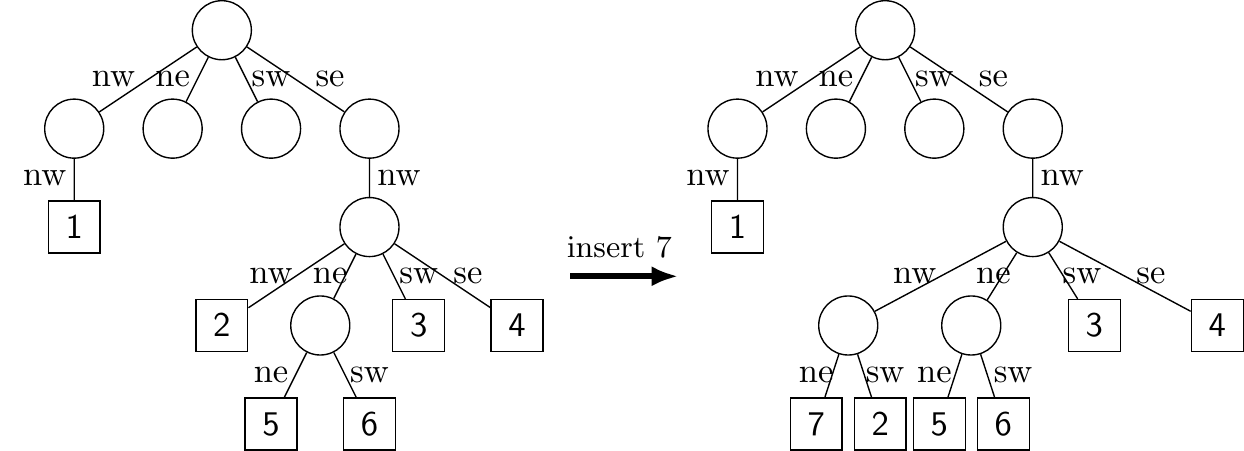}
\caption{Insert node 7 into the sample quadtree}
\label{fig:sample quadtree insert}
\end{subfigure}
\begin{subfigure}{0.5\textwidth}
\centering
\includegraphics[width=0.8\textwidth]{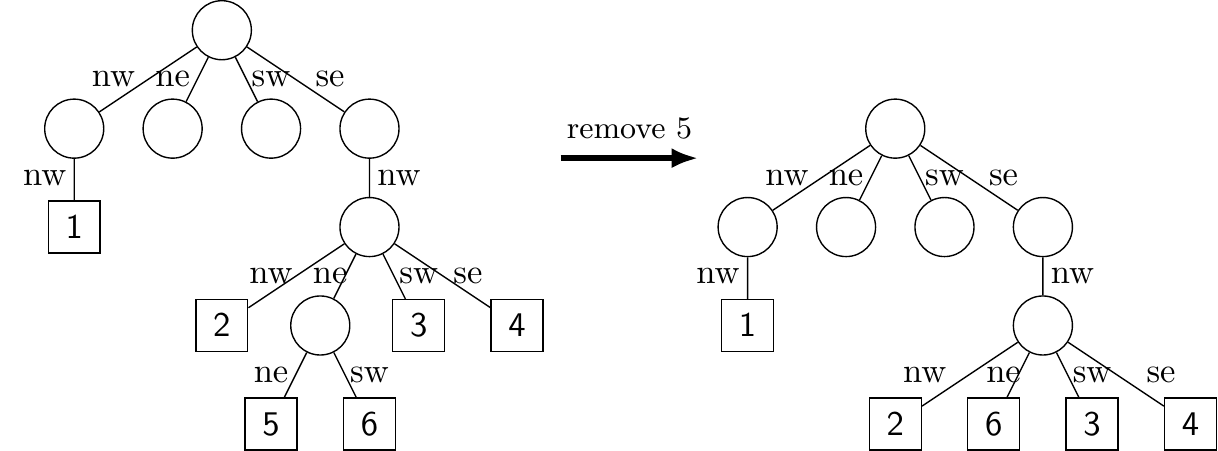}
\caption{Remove node 5 from the sample quadtree}
\label{fig:sample quadtree remove}
\end{subfigure}
\begin{subfigure}{0.5\textwidth}
\centering
\includegraphics[width=0.8\textwidth]{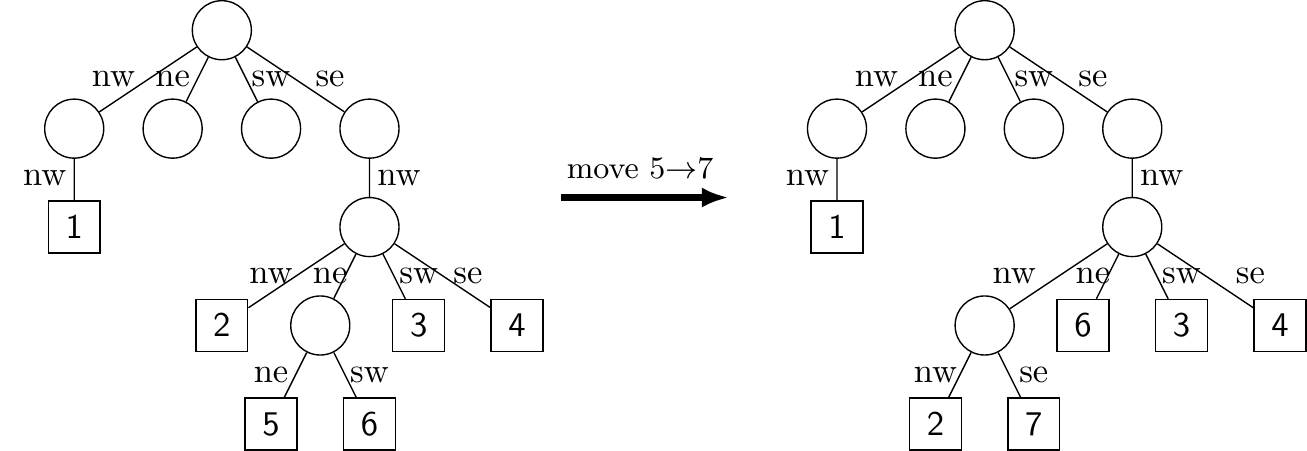}
\caption{Move the value of node 5 to node 7 from the sample quadtree}
\label{fig:sample quadtree move}
\end{subfigure}
\caption{Sample quadtree operations\protect\footnotemark}
\label{fig:sample quadtree operations}

\end{figure}
To insert a node, we first locate its position by calling \textit{find}. If we find an \textit{Empty} node at the terminal, we directly replace it by the node. Otherwise, before replacing the candidate node, we recursively divide a \textit{Leaf} node into four corresponding sub-regions until the candidate region is \textit{Empty}. Figure~\ref{fig:sample quadtree insert} illustrates a scenario of inserting node 7 (\textit{insert(7)}) as a neighborhood of node 2. The parent node is split, and node 7 is added on the \textit{ne} direction. Likewise, to remove a node, we also begin by locating it and then erase it from a quadtree. Next, we check whether its parent contains a single \textit{Leaf} node. If so, we record the node and traverse up until reaching a node that contains at least a \textit{Leaf} node or the child of the root node. Finally,  we use that node as a new parent and re-insert the recorded node as its child. Take Figure~\ref{fig:sample quadtree remove} as an example, if we remove node 5 (\textit{remove(5)}) from a quadtree, node 6 will be linked to the upper level. The move operation is a combination of the insert operation and the remove operation. It first removes the node with \textit{oldKey} and then adds a new node with \textit{newKey} into a  quadtree. Consider the scenario in Figure~\ref{fig:sample quadtree move}, after removing node 5 and inserting  node 7 with the same value (\textit{move(1, 7)}), the new tree appears on the right part. Because \textit{contain(key)} just checks whether the node returned by \textit{find} has the same \textit{key}, it does not need an extra explanation. 

\footnotetext{\footnotesize\textit{Empty} nodes are not drawn}

\section{CAS Quadtree}
\label{sec:CAS Quadtree}

\begin{figure}[htbp]
\centering
\lstset{escapeinside={(*@}{@*)}}
\begin{sflisting}[language=Java, numberblanklines=false, numbers=left,xleftmargin=2em, basicstyle=\rmfamily\tiny, breaklines=true, breakatwhitespace=true, framexleftmargin=2em]
bool contain(double keyX, double keyY) {
    Node p, l = root;
    // l: terminal node for retrieving <keyX, keyY>
    // p: parent of l
    find(p, l, keyX, keyY);
    if (inTree(l, keyX, keyY)) return true;(*@\label{cas contain in tree}@*) 
    return false;
}

bool insert(double keyX, double keyY, V value) {
    while (true) {
        Node p, l = root;
        find(p, l, keyX, keyY);
        if (inTree(l, keyX, keyY)) return false;(*@\label{cas: insert in tree}@*)
        Node newNode = createNode(l, p, keyX, keyY, value);(*@\label{cas: insert createNode}@*)  // creates a sub-tree by split l until the candidate region of <keyX, keyY> is not a Leaf node and returns the sub-tree's root node.
        if (helpReplace(p, l, newNode)) return true;(*@\label{cas: insert replace}@*)
    }
}

bool remove(double keyX, double keyY, V value) {
    Node newNode = new Empty();(*@\label{cas: remove createNode}@*)
    while (true) {
        Node p, l = root;
        find(p, l, keyX, keyY);
        if (!inTree(l, keyX, keyY)) return false;(*@\label{cas: remove in tree}@*)
        if (helpReplace(p, l, newNode)) return true;(*@\label{cas: remove replace}@*)
    }
}

void find(Node& p, Node& l, double keyX, double keyY) {
    while (l.class() == Internal) {
    	p = l;  // record the parent node
        getQuadrant(l, keyX, keyY);
    }
}

bool helpReplace(Internal p, Node oldChild, Node newChild) {
    if (p.nw == oldChild) return CAS(p.nw, oldChild, newChild);
    else if (p.ne == oldChild) return CAS(p.ne, oldChild, newChild);
    else if (p.sw == oldChild) return CAS(p.sw, oldChild, newChild);
    else if (p.se == oldChild) return CAS(p.se, oldChild, newChild);
    return false;
}

bool inTree(Node node, double keyX, double keyY) {
    return node.class == Leaf && node.keyX == keyX && node.keyY == keyY;
}
\end{sflisting}
\caption{CAS quadtree algorithm}
\label{fig:CAS quadtree}
\end{figure}

There are a plenty of concurrent tree algorithms, yet a formal concurrent quadtree has not been studied. Intuitively, we can devise a concurrent quadtree by modifying the sequential algorithm described in Section~\ref{sec:Preliminary} and adopting CAS instruction, which we name CAS quadtree. Figure~\ref{fig:CAS quadtree} depicts CAS quadtree's algorithms. Both the insert operation and the remove operation follow such a paradigm: it starts by locating a terminal node; then, it checks whether the node satisfies some conditions. If conditions are not satisfied, it returns false. Otherwise, it tries to apply a single CAS to replace the terminal node by a new node. After a successful CAS, it returns true. Or else, it restarts locating a terminal node from the root node. For a insert operation, if $<$keyX, keyY$>$ is not in the tree, it creates a corresponding sub-tree that contains $<$keyX, keyY$>$ and $value$ (line~\ref{cas: insert createNode}). Then it uses the root of the sub-tree to replace the terminal node by a single CAS (line~\ref{cas: insert replace}). To remove a node, it creates an \textit{Empty} node at the beginning (line~\ref{cas: remove createNode}). If $<$keyX, keyY$>$ is in the tree, it uses the \textit{Empty} node to replace the terminal node (line~\ref{cas: remove replace}). Different from the sequential algorithm, we do not adjust the structure as it involves several steps that cannot be implemented with atomicity. The algorithm is non-blocking. In other words, the algorithm provides a guaranteed whole progress, even if some threads starve.

\begin{figure}[htbp]
\centering
\hbox{\hspace{-1.0em}
\includegraphics[width=0.45\textwidth]{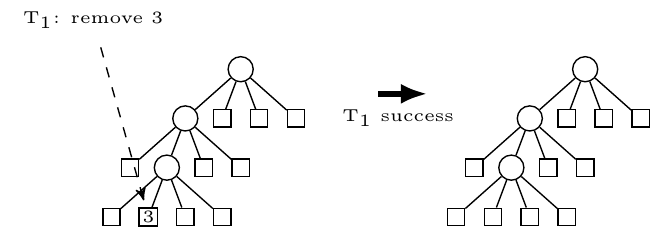}
}
\caption{Thread T$_{1}$ intends to remove node 3 from quadtree. After its removal, there remains a chain of \textit{Empty} nodes}
\label{fig:quadtree remove}
\end{figure}

\begin{figure}[htbp]
\centering
\hbox{\hspace{-1.0em}
\includegraphics[width=0.5\textwidth]{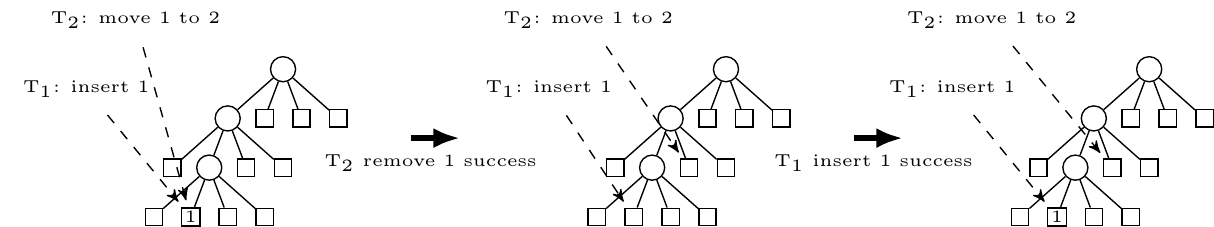}
}
\caption{An example of an incorrect move operation. Thread T$_{1}$ intends to insert node 1 into a quadtree, and thread T$_{2}$ plans to move the value from node 1 to node 2. T$_{2}$ first successfully removes node 1 and then attempts to insert node 2 into the quadtree. In the interval node 1 is added back by T$_{1}$, but T$_{2}$ is not aware of the action and reports success.}
\label{fig:quadtree move}
\end{figure}

\begin{figure*}[htbp]
\begin{subfigure}{\textwidth}
\centering
\includegraphics[scale=0.9]{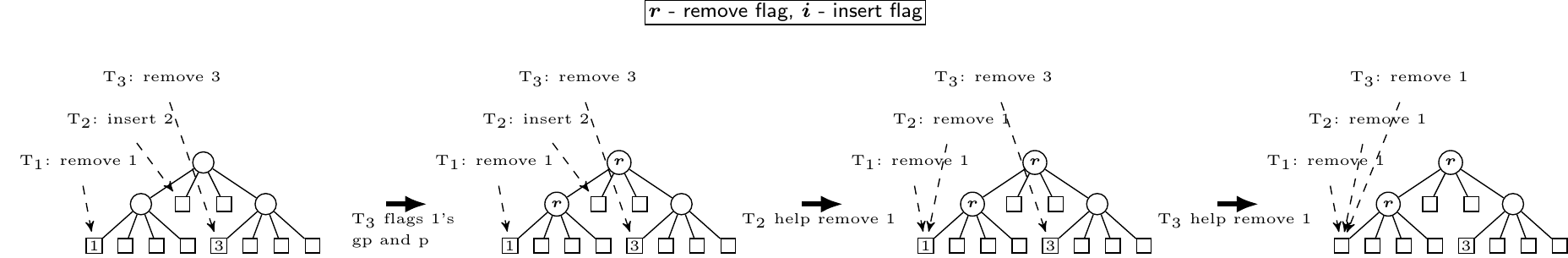}
\caption{Apply traditional removal}
\label{fig:BST's remove}
\end{subfigure}

\begin{subfigure}{\textwidth}
\centering
\includegraphics[scale=0.9]{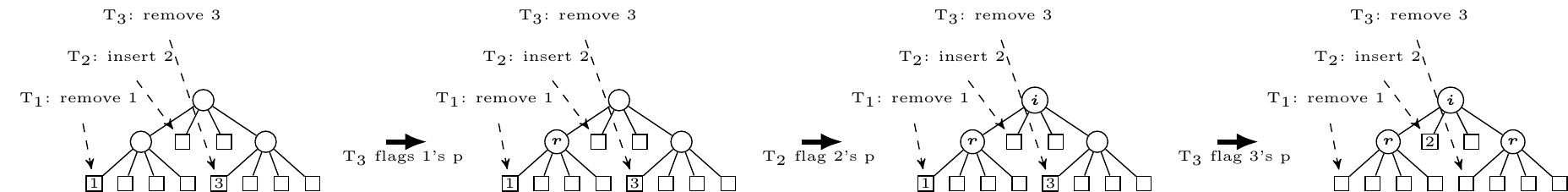}
\caption{Our decoupling approach}
\label{fig:Our decoupling approach}
\end{subfigure}

\caption{At the beginning, three threads are performing different operations concurrently: (a) T$_{1}$ removes key 1 in the lower level. (b) T$_{2}$ inserts key 2 in the upper level. (c) T$_{3}$ removes key 3 in the lower level. Consider the scenario that T$_{1}$ precedes others threads, and then both 1's parent and grandparent will be flagged. Hence, T$_{2}$ and T$_{3}$ will help T$_{1}$ remove key 1 before restarting their operations. By applying our decoupling method, only the parent node should be flagged. Hence three threads could run without intervention.}
\label{fig:decoupling}
\end{figure*}

However, CAS quadtree has several limitations. First, consider if there are a considerable proportion of remove operations. By applying the simple mechanism, we still have a large number of nodes in the quadtree because we substitute existing nodes with \textit{Empty} nodes without structural adjustment. Figure~\ref{fig:quadtree remove} illustrates a detailed example showing that there remains a chain of \textit{Empty} nodes. Hence, not only we have to traverse a long path to locate the terminal node for basic operations, but also a plenty of nodes are maintained in the memory. 

Second, we cannot implement the move operation in the simple algorithm. There might be two different nodes in a quadtree that were under modifying. If we directly erase the node with \textit{oldKey} and add the node with \textit{newKey}, the move operation cannot be correctly linearized. Figure~\ref{fig:quadtree move} shows such an incorrect move operation by combining an insert operation and a remove operation together. 

Therefore, these drawbacks motivate us to develop a new concurrent algorithm to make the move operation correct and employ an efficient mechanism to compress nodes.

\section{Quadboost}

\subsection{Rationale}

In this section, we describe how to design \textit{quadboost} algorithms to solve the two problems addressed in Section~\ref{sec:CAS Quadtree}.

To make the move operation correct, we should ensure that other threads know whether a terminal node is under moving. Hence, we attach an internal node with a separate object--\textit{Operation} (\textit{op}) to represent a node's state and record sufficient information to complete the operation. We instantiate the attachment behavior as a CAS and call it a \textit{flag} operation. We design different \textit{op}s for the insert operation, the remove operation and the move operation. The detailed description of structures and a state transition mechanism is presented in Section~\ref{sec:Structures and State Transitions}.

To erase \textit{Empty} nodes from a quadtree, we can apply a similar paradigm in the concurrent BST's removal~\cite{ellen2010non} as shown in Figure~\ref{fig:BST's remove}, which flags both the parent and the grandparent of a terminal node. This mechanism mixes logical removal and physical removal together. Different from the BST's removal in which every time the parent node has to be adjusted, we only have to compress the parent when there is only a single \textit{Leaf} child. Hence, we could separate the removal of a node and the adjustment of the structure into two phases. Meanwhile, we design an \textit{op}--\textit{Compress} to indicate a node is underlying the structural adjustment. Figure~\ref{fig:Our decoupling approach} illustrates how this method increases concurrency: three threads that handle different \textit{op}s could run in parallel. Further, for simplicity, we relax the adjustment condition to where the parent could be compressed if all children are \textit{Empty}. 

However, there is still a problem left after applying the above two methods. Recall the example in Figure~\ref{fig:quadtree remove}. We flag the bottom \textit{Internal} node to indicate that it should be compressed, but after replacing the bottom \textit{Internal} node with an \textit{Empty} node, it results in four \textit{Empty} nodes in the last level. How do we remove a series of nodes from a quadtree in a bottom-up way? We record the entire traversal path from the root to a terminal node in a stack. Since the traversal path will be altered when a node is compressed, we only have to restart locating the terminal node from its parent if any flag operation other than the flag of \textit{Compress} fails. This is called the continuous find mechanism. 

\subsection{Structures and State Transitions}
\label{sec:Structures and State Transitions}
\begin{figure}[htbp]
\centering
\lstset{escapeinside={(*@}{@*)}}
\begin{sflisting}[language=Java, numberblanklines=false, numbers=left,xleftmargin=2em, basicstyle=\rmfamily\tiny, breaklines=true, breakatwhitespace=true, framexleftmargin=2em]
class Node<V> {}

class Internal<V> extends Node<V> {
    final double x, y, w, h;
    volatile Node nw, ne, sw, se;
    volatile Operation op = new Clean();(*@\label{quadboost: structure Internal op Clean}@*) 
}

class Leaf<V> extends Node<V> {
    final double keyX, keyY;
    final V value;
    volatile Move op;
}

class Empty<V> extends Node<V> {}

class Operation {}

class Substitute extends Operation {
    Internal parent;
    Node oldChild, newNode;
}

class Compress extends Operation {
    Internal grandparent, parent;
}

class Move extends Operation {
    Internal iParent, rParent;
    Node oldIChild, oldRChild, newIChild;
    Operation oldIOp, oldROp;
    volatile bool allFlag = false, iFirst = false;
}

class Clean extends Operation {}
\end{sflisting}
\caption{quadboost structures}
\label{fig:structure}
\end{figure}

As mentioned before, we add an \textit{Operation} object to handle concurrency issues. Figure~\ref{fig:structure} shows the data structures of \textit{quadboost}. Four sub-classes of \textit{Operation}, including \textit{Substitute}, \textit{Compress}, \textit{Move}, and \textit{Clean}, describe all states in our algorithm. \textit{Substitute} provides information on the insert operation and the remove operation that are designed to replace an existing node by a new node. Hence, we shall let other threads be aware of its parent, child, and a new node for substituting. \textit{Compress} provides information for physical adjustment. We erase the parent node, previously connected to the grandparent, by swinging the link to an \textit{Empty} node. \textit{Move} stores both \textit{oldKey}'s and \textit{newKey}'s terminal nodes, their parents, their parents' prior \textit{op}s, and a new node. Moreover, we use a bool variable--\textit{allFlag} to indicate whether two parents have been attached to a \textit{Move} \textit{op} or not. Another bool variable--\textit{iFirst} is used to indicate the attaching order. For instance, if \textit{iFirst} is true, \textit{iParent} will be attached with a \textit{Move} \textit{op} before \textit{rParent}. \textit{Clean} means that there is no thread modifying the node. In contrast to Figure~\ref{fig:CAS quadtree}, the \textit{Internal} class adds an \textit{op} field to hold its state and related information. Note that a flag operation (CAS) is only applied on \textit{Internal} nodes. We atomically set \textit{Move} \textit{op}s in \textit{Leaf} nodes to linearize the move operation correctly.

Each basic operation, except for the contain operation, starts by changing an \textit{Internal} node's \textit{op} from \textit{Clean} to other states. The three basic operations, therefore, generate a corresponding state transition diagram which provides a high-level description of our algorithm. After locating a terminal node,  insert, remove, compress, and move transitions are executed as shown in Figure~\ref{fig:State transition diagram}. In the figure, we specify the flag operation that restores an \textit{op} to \textit{Clean} as \textit{unflag}, the flag operation that changes an \textit{op} from \textit{Clean} to \textit{Substitute} in \textit{insert} as \textit{iflag}, the flag operation that changes an \textit{op} from \textit{Clean} to \textit{Substitute} in \textit{remove} as \textit{rflag}, and the flag operation that changes an \textit{op} from \textit{Clean} to \textit{Compress} as \textit{cflag}. We describe how these transitions execute when a thread detects a state as follows:

\begin{figure}[htbp]
\centering
\includegraphics[scale=0.8]{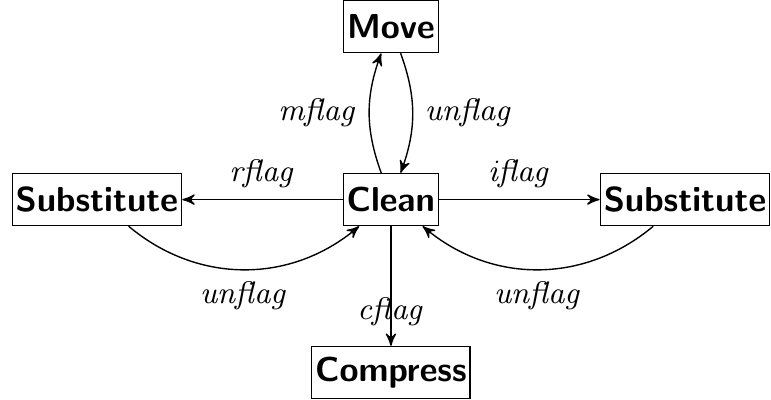}
\caption{State transition diagram}
\label{fig:State transition diagram}
\end{figure}

\begin{itemize}
\item \textbf{Clean}. For the insert transition, a thread constructs a new node and changes the parent's \textit{op} to \textit{Substitute} by \textit{iflag}. For the remove transition, the thread creates an \textit{Empty} node and changes the parent's \textit{op} from \textit{Clean} to \textit{Substitute} by \textit{rflag}. For the move transition, the thread flags both \textit{newKey}'s and \textit{oldKey}'s parents by \textit{mflag}. For the compress transition, the thread uses a \textit{cflag} operation to flag the parent if necessary. 

\item \textbf{Substitute}. The thread uses a CAS to change the existing node by a given node stored in the \textit{op}. It then restores the parent's state to \textit{Clean} by \textit{unflag}.

\item \textbf{Move}. The thread first determines the flag order by comparing \textit{oldKey}'s parent with \textit{newKey}'s parent. Suppose it flags \textit{newKey}'s parent first, it will flag \textit{oldKey}'s parent later. Then, the thread replaces \textit{oldKey}'s terminal with an \textit{Empty} node and replaces \textit{newKey}'s terminal with a new node. Finally, it \textit{unflag}s their parents' \textit{op} to \textit{Clean} in the reverse order.

\item \textbf{Compress}. The thread erases the node from the tree so that it cannot be detected. Different than other states, the node with a \textit{Compress} \textit{op} cannot be set to \textit{Clean} by \textit{unflag}.
\end{itemize}

\subsection{Concurrent Algorithms}

Figure~\ref{fig:quadboost insert} reflects quadboost's insert and contain operations. Both of them start by a find process that locates a terminal node. The find operation (line~\ref{quadboost: contain find}) pushes \textit{Internal} nodes into a stack (line~\ref{quadboost: contain path}) and keeps recording the parent node's \textit{op} (line~\ref{quadboost: contain pOp}).

The contain operation executes in a similar way as CAS quadtree. It calls the find function to locate a terminal node (line~\ref{quadboost: contain find}). The \textit{op} at line~\ref{quadboost: contain pOp} and the stack at line~\ref{quadboost: contain path} are created for a modular presentation. We can omit them in a real implementation.

\begin{figure}[htbp]
\centering
\lstset{escapeinside={(*@}{@*)}}
\begin{sflisting}[language=Java, numberblanklines=false, numbers=left,xleftmargin=2em, basicstyle=\rmfamily\tiny, breaklines=true, breakatwhitespace=true, framexleftmargin=2em]
bool contain(double keyX, double keyY) {
    Stack<Node> path;(*@\label{quadboost: contain path}@*)
    Operation pOp;(*@\label{quadboost: contain pOp}@*)
    Node l = root;(*@\label{quadboost: contain root}@*)
    find(l, pOp, path, keyX, keyY);(*@\label{quadboost: contain find}@*)
    if (inTree(l, keyX, keyY) && !moved(l)) return true;
    return false;
}

bool insert(double keyX, double keyY, V value) {
    Stack<Node> path;(*@\label{quadboost: insert path}@*)
    Operation pOp;(*@\label{quadboost: insert pOp}@*)
    Node l = root;(*@\label{quadboost: insert root}@*)
    find(l, pOp, path, keyX, keyY);(*@\label{quadboost: insert find}@*)
    while (true) {
        if (inTree(l, keyX, keyY) && !moved(l)) return false;(*@\label{quadboost: insert intree}@*)
        p = path.pop();(*@\label{quadboost: insert pop}@*)
        if (pOp.class == Clean) {(*@\label{quadboost: insert pop Clean check}@*)
            Node newNode = createNode(l, p, keyX, keyY, value);(*@\label{quadboost: insert createNode}@*)
            Operation op = new Substitute(p, l, newNode);(*@\label{quadboost: insert Substitute}@*)
            if (helpFlag(p, pOp, op)) {(*@\label{quadboost: insert helpFlag}@*)
                helpSubstitute(op);(*@\label{quadboost: insert helpSubstitute}@*)
                return true;
            } else pOp = p.op;(*@\label{quadboost: insert p op}@*)
        }
        help(pOp); (*@\label{quadboost: insert help}@*)  // help complete the operation of pOp   
        continueFind(pOp, path, l, p);(*@\label{quadboost: insert continueFind}@*)  // find a new terminal in a bottom up way 
    }
}

void continueFind(Operation& pOp, Stack& path, Node& l, Internal p) {(*@\label{quadboost: continueFind}@*)
    if (pOp.class != Compress) l = p;  // start from the parent node(*@\label{quadboost: continueFind p}@*)
    else {
    	while (!path.isEmpty()) {  // find a node in the path that is not in the Compress state
    	    l = path.pop();(*@\label{quadboost: continueFind path pop}@*)
    	    pOp = l.op;(*@\label{quadboost: continueFind pOp}@*)
    	    if (pOp.class == Compress) helpCompress(pOp);
    	    else break;  // find the node
    	}
    }
    find(l, pOp, path, keyX, keyY);(*@\label{quadboost: continueFind find}@*)
}

void find(Node& l, Operation& pOp, Stack& path, double keyX, double keyY) (*@\label{quadboost: find}@*)
    while (l.class() == Internal) {(*@\label{quadboost: find check}@*)
        path.push(l);(*@\label{quadboost: find path}@*)
    	pOp = l.op;(*@\label{quadboost: find op}@*)
        getQuadrant(l, keyX, keyY);
    }
}

bool helpFlag(Internal node, Operation oldOp, Operation newOp) {
    return CAS(node.op, oldOp, newOp);
}

void helpSubstitute(Substitute op) {
    helpReplace(op.parent, op.oldChild, op.newNode);(*@\label{quadboost: helpSubstitute helpReplace}@*)
    helpFlag(op.parent, op, new Clean()); (*@\label{quadboost: helpSubstitute helpFlag}@*)  // unflag the parent node to Clean
}

void help(Operation op) {
    if (op.class == Compress) helpCompress(op);(*@\label{quadboost: help helpCompress}@*)
    else if (op.class == Substitute) helpSubstitute(op);(*@\label{quadboost: help helpSubstitute}@*)
    else if (op.class == Move) helpMove(op);(*@\label{quadboost: help helpMove}@*)
}

bool moved(Node l) {
    return l.class == Leaf && l.op != null && !hasChild(l.op.iParent, l.op.oldIChild);
}

bool hasChild(Internal parent, Node oldChild) {
    return parent.nw == oldChild || parent.ne == oldChild || parent.sw == oldChild || parent.se == oldChild;
}
\end{sflisting}
\caption{quadboost \textit{insert} and \textit{contain}}
\label{fig:quadboost insert}
\end{figure}

In the insert operation, we create a stack at the beginning to record the traversal path (line~\ref{quadboost: contain find}) and a \textit{pOp} to record the parent node's \textit{op} (line~\ref{quadboost: contain pOp}). After locating a terminal node, we check whether the key of the node is in the tree and whether the node is moved at line~\ref{quadboost: insert intree}. We will show the reason why we have to check whether a node is moved in Section~\ref{sec: LCA-based Move Operation}. Then, we flag the parent of the terminal node before replacing it. In the next step, we call the helpSubstitute function at line~\ref{quadboost: insert helpSubstitute}, which first invokes the helpReplace function at line~\ref{quadboost: helpSubstitute helpReplace} to replace the terminal node and then \textit{unflag} the parent node at line~\ref{quadboost: helpSubstitute helpFlag}. If the flag operation fails, we update \textit{pOp} at line~\ref{quadboost: insert p op} and help it finish at line~\ref{quadboost: insert help}. More than that, we have to execute the continueFind function to restart from the nearest parent. The continueFind function (line~\ref{quadboost: continueFind}) pops nodes from the path until reaching a node whose \textit{op} is not \textit{Compress}. If the node's \textit{op} is \textit{Compress}, it helps the \textit{op} finish. Or else, it breaks the loop and performs the find operation from the last node popped at line~\ref{quadboost: continueFind find}. 

The remove operation has a similar paradigm as the insert operation. It first locates a terminal node and checks whether the node is in the tree or moved. After that, it flags the parent node and replaces the terminal node with an \textit{Empty} node. An extra step in the remove operation is the compress function at line~\ref{quadboost: remove compress}. In the algorithm, we perform the compress function before the remove operation returns true. In this way, the linearization point of the remove operation belongs to the execution of itself, and extra adjustment induced by the compress function do affect the effectiveness. The compress function should examine three conditions before compressing the parent node. First, because the remove operation must return true, we do not compress nor help if the state of the parent is not \textit{Clean} (line~\ref{quadboost: compress clean}). Second, we check if the grandparent node (\textit{gp}) is the root (line~\ref{quadboost: compress root}) as we maintain two layers of dummy \textit{Internal} nodes. At last, we check whether four children of a parent node are all \textit{Empty} (line~\ref{quadboost: compress check}). 

\begin{figure}[htbp]
\centering
\lstset{escapeinside={(*@}{@*)}}
\begin{sflisting}[language=Java, numberblanklines=false, numbers=left,xleftmargin=2em, basicstyle=\rmfamily\tiny, breaklines=true, breakatwhitespace=true, framexleftmargin=2em]
bool remove(double keyX, double keyY, V value) {
    Stack<Node> path;(*@\label{quadboost: remove path}@*)
    Node l = root, newNode = new Empty();(*@\label{quadboost: remove root}@*)
    Operation pOp;
    find(l, keyX, keyY, pOp, path);(*@\label{quadboost: remove find}@*)
    while (true) {
        if (!inTree(l, keyX, keyY) || moved(l)) return false;(*@\label{quadboost: remove intree}@*)
        p = path.pop();(*@\label{quadboost: remove pop}@*)
        if (pOp.class == Clean) {(*@\label{quadboost: remove pop Clean check}@*)    
            Operation op = new Substitute(p, l, newNode);(*@\label{quadboost: remove Substitute}@*)
            if (helpFlag(p, pOp, op) {(*@\label{quadboost: remove helpFlag}@*)
                helpSubstitute(op);(*@\label{quadboost: remove helpSubstitute}@*) 
                compress(path, p);(*@\label{quadboost: remove compress}@*)  // compress the path if necessary
                return true;
            } else pOp = p.op;(*@\label{quadboost: remove p op}@*)
        }
        help(pOp);(*@\label{quadboost: remove help}@*)
        continueFind(pOp, path, l, p);(*@\label{quadboost: remove continueFind}@*)
    }
}

void compress(Stack& path, Internal p) {
    while (true) {
        Operation pOp = p.op;(*@\label{quadboost: compress p op}@*)
        if (pOp.class == Clean) {(*@\label{quadboost: compress clean}@*)
            gp = path.pop();
            if (gp == root) return;(*@\label{quadboost: compress root}@*)  // two layers of Internal nodes are maintained
            Operation op = new Compress(gp, p);(*@\label{quadboost: compress Compress}@*)
            if (!check(p) || !helpFlag(p, pOp, op)) return;(*@\label{quadboost: compress check}@*)
            else helpCompress(op);(*@\label{quadboost: remove helpCompress}@*)
            p = gp;
        } else return;
    }
}

bool helpCompress(Compress op) {
    return helpReplace(op.grandparent, op.parent, new Empty<V>());(*@\label{quadboost: helpCompress helpReplace}@*)
}

bool check(Internal node) {
    return node.nw.class == Empty && node.ne.class == Empty && node.sw.class == Empty && node.se.class == Empty;
}
\end{sflisting}
\caption{quadboost \textit{remove}}
\label{fig:quadboost remove}
\end{figure}

\subsection{LCA-based Move Operation}
\label{sec: LCA-based Move Operation}

\begin{figure}[htbp]
\centering
\includegraphics[width=0.4\textwidth]{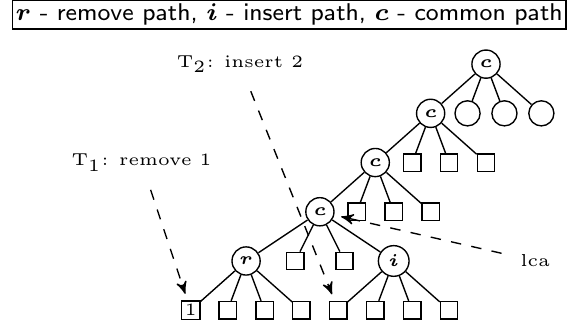}
\caption{Paths of two operations--insert node 2 and remove node 1 that share the LCA node}
\label{fig:lca}
\end{figure}

Tree-based structures share a property that two nodes in the tree have a common path starting from the root, and the lowest node in the path is called the lowest common ancestor (LCA). Figure~\ref{fig:lca} demonstrates the LCA node of a quadtree by a concrete example. Based on the observation, our LCA-based move operation is defined to find two different terminal nodes sharing a common path, remove a node with \textit{oldKey}, and insert a node with \textit{newKey}.

\begin{figure}[t]
\centering
\lstset{escapeinside={(*@}{@*)}}
\begin{sflisting}[language=Java, numberblanklines=false, numbers=left,xleftmargin=2em, basicstyle=\rmfamily\tiny, breaklines=true, breakatwhitespace=true, framexleftmargin=2em]
bool move(double oldKeyX, double oldKeyY, double newKeyX, double newKeyY) {
    Stack rPath, iPath;(*@\label{quadboost: move two paths}@*)
    // rl: terminal node of the remove path
    // il: terminal node of the insert path
    // rp: parent node of the remove path
    // ip: parent node of the insert path
    Node rl = root, il, rp, ip;(*@\label{quadboost: move root}@*)
    // rOp: op object of rp
    // iOp: op object of ip
    Operation rOp, iOp;
    // rFail: whether the remove path fail
    // iFail: whether the insert path fail
    // cFail: whether the common path fail
    bool rFail = false, iFail = false, cFail = false;
    if (!findCommon(il, rl, lca, rOp, iOp, iPath, rPath, oldKeyX, oldKeyY, newKeyX, newKeyY)) return false;(*@\label{quadboost: move findCommon}@*)
    ip = iPath.pop(); (*@\label{quadboost: move ip pop}@*)
    rp = rPath.pop();(*@\label{quadboost: move rp pop}@*)
    while (true) {
        if (rOp.class != Clean) rFail = true;(*@\label{quadboost: move rOp check}@*)
        if (iOp.class != Clean) iFail = true;(*@\label{quadboost: move iOp check}@*)
        if (iOp != rOp && ip == rp) cFail = true;(*@\label{quadboost: move ip rp same}@*)  // the same parent node has two different states
        if (!cFail && !iFail && !rFail) {
            if (il.class == Empty || il == rl) newNode = new Leaf(newKeyX, newKeyY, rl.value);(*@\label{quadboost: move create begin}@*)
            else newNode = createNode(il, ip, newKeyX, newKeyY, rl.value);(*@\label{quadboost: move createNode}@*)
            //iFirst: flag which node first            
            bool iFirst = getSpatialOrder(ip, rp);(*@\label{quadboost: move getSpatialOrder}@*)
            Operation op = new Move(ip, rp, il, rl, newNode, iOldOp, rOldOp, iFirst);(*@\label{quadboost: move create end}@*)
            if (rp != ip) {  // two parents are different
            	bool hasFlag = iFirst ? helpFlag(ip, iOp, op) : helpFlag(rp, rOp, op);(*@\label{quadboost: move helpFlag}@*)
                if (hasFlag) {
                    if (helpMove(op)) {(*@\label{quadboost: move helpMove first}@*)
                        compress(rp, rPath);(*@\label{quadboost: move compress}@*)
                        return true;				
                    } else {  // all flag operation fail
                        rFail = iFail = true;(*@\label{quadboost: rFail iFail true}@*)			
                    }
                } else {  // one of two paths fail
                    if (iFirst) {iOp = ip.op; iFail = true;} (*@\label{quadboost: iFail true}@*)
                    else {rOp = rp.op; rFail = true;}(*@\label{quadboost: rFail true}@*)
                }
            } else {  // two parents are the same
                if (helpMove(op)) return true;(*@\label{quadboost: move helpMove second}@*)
                else {rOp = iOp = rp.op; cFail = true;}(*@\label{quadboost: cFail true}@*)
            }
        }
        if (!continueFindCommon(iFail, rFail, cFail, il, rl, ip, rp, lca, rOp, iOp, oldKeyX, oldKeyY, newKeyX, newKeyY, iPath, rPath)) return false;(*@\label{quadboost: move continueFindCommon}@*)  // continuous find insert and remove paths
        cFail = iFail = rFail = false;
    }
}

bool helpMove(Move op) {
    if (op.iFirst) helpFlag(op.rParent, op.oldROp, op);  // flag the parent of the oldKey's terminal first(*@\label{quadboost: helpMove flag begin}@*)
    else helpFlag(op.iParent, op.oldIOp, op);  // flag the parent of the newKey's terminal first
    bool doCAS = op.iFirst ? op.rParent.op == op: op.iParent.op == op;  // whether the flag operation succeed(*@\label{quadboost: helpMove flag end}@*)
    if (doCAS) {  // all flags have been done
        op.allFlag = true;(*@\label{quadboost: helpMove allFlag}@*)
        op.oldRChild.op = op;(*@\label{quadboost: helpMove assign op}@*)
        if (op.oldRChild == op.oldRChild) {  // combine two CASes in two one(*@\label{quadboost: helpMove two cas begin}@*)
            helpReplace(op.rParent, op.oldRChild, op.newIChild);(*@\label{quadboost: helpMove helpReplace first}@*)
        } else {
            helpReplace(op.iParent, op.oldIChild, op.newIChild);(*@\label{quadboost: helpMove helpReplace first first}@*)
            helpReplace(op.rParent, op.oldRChild, new Empty());(*@\label{quadboost: helpMove helpReplace first second}@*)
       }(*@\label{quadboost: helpMove two cas end}@*)
    }
    if (op.iFirst) {  // unflag in a reverse order(*@\label{quadboost: helpMove reverse begin}@*)
        if (op.allFlag) helpFlag(op.rParent, op, new Clean());
        if (op.rParent != op.rParent) helpFlag(op.iParent, op, new Clean());
    } else {
        if (op.allFlag) helpFlag(op.iParent, op, new Clean());
        if (op.rParent != op.rParent) helpFlag(op.rParent, op, new Clean());
    }(*@\label{quadboost: helpMove reverse end}@*)
    return op.allFlag;
}
\end{sflisting}
\caption{quadboost \textit{move}}
\label{fig:quadboost move}
\end{figure}
Figures~\ref{fig:quadboost move} and~\ref{fig: quadboost findCommon and continueFindCommon} present the algorithm of the move operation. In contrast with the insert operation and the remove operation, the move operation begins by calling the findCommon function (line~\ref{quadboost: move findCommon}) that combines searches for two nodes together. The two searches share a common path that allows us to record them only once. We use two stacks to record nodes at line~\ref{quadboost: move two paths}. The shared path and the remove path are pushed into \textit{rPath}; and the insert path is pushed into \textit{iPath}. By doing this, we could avoid considering complicated corner cases in terms of node compressions in \textit{rPath}. The findCommon function begins by reading a child node from the parent for \textit{oldKey} and checks whether \textit{newKey} is in the same direction (line~\ref{quadboost: findCommon check}). If not, it terminates the traversal, and searches for individual keys separately (line~\ref{quadboost: findCommon find il} and line~\ref{quadboost: findCommon find rl}). If \textit{oldKey} is not in the tree or is moved, or if \textit{newKey} is in the tree and not moved, it returns false. 

The move operation then checks two parents' \textit{op} (\textit{iOp} and \textit{rOp}) before flag operations. If neither of them is \textit{Clean} (line~\ref{quadboost: move rOp check}-\ref{quadboost: move iOp check}), or \textit{rp} and \textit{ip} are the same but their \textit{op}s are different (line~\ref{quadboost: move ip rp same}), it starts the continueFindCommon function at line~\ref{quadboost: move continueFindCommon} which we will discuss later in the section. Otherwise, it creates a new node for inserting and an \textit{op} to hold essential information for a CAS at line~\ref{quadboost: move create begin}-\ref{quadboost: move create end}. There are two specific cases. If two terminal nodes share a common parent, we directly call the helpMove function at line~\ref{quadboost: move helpMove first}. Or else, to avoid live locks, we flag two nodes in a specific order. In our algorithm, we use the getSpatialOrder function at line~\ref{quadboost: move getSpatialOrder} to compare \textit{ip} and \textit{rp} in the following order: $x \rightarrow y \rightarrow w$, where $x$, $y$, and $w$ are the fields of an \textit{Internal} node. We prove that this method produces a unique order among all \textit{Internal} nodes in a quadtree in appendix~\ref{sec:appendix}. 

To start, the helpMove function flags a parent and checks whether both parents are successfully flagged at line~\ref{quadboost: helpMove flag begin}-\ref{quadboost: helpMove flag end}. If both flag operations succeed, it sets \textit{allFlag} to true at line~\ref{quadboost: helpMove allFlag}. Note that we assign the \textit{op} to \textit{oldRChild} at line~\ref{quadboost: helpMove assign op}, letting other threads know whether the CAS on \textit{iParent} has finished, which is the linearization point of the move operation. Because simultaneously $oldRChild$ is aware that it is removed, it also explains why we should examine whether a node is moved. Then, we check whether two terminal nodes are the same. If so, we could combine two CASs into a single one. If not, we should remove \textit{oldKey}'s terminal and replace \textit{newKey}'s terminal. In the end, we reset the parents' \textit{op} to \textit{Clean} in the reverse order (line~\ref{quadboost: helpMove reverse begin}-\ref{quadboost: helpMove reverse end}). 

The continueFindCommon function is also invoked when one of the flag operations fails. If \textit{rp} cannot be flagged, it pops nodes from \textit{rPath} until it reaches the LCA node (line~\ref{quadboost: continueFindCommon remove lca}) or a node's \textit{op} is \textit{Compress} (line~\ref{quadboost: continueFindCommon remove helpCompress}). It then starts from the last popped node to search for a new terminal of \textit{oldKey} at line~\ref{quadboost: continueFindCommon remove find}. If \textit{ip} cannot be flagged, it pops nodes from \textit{iPath} until it is empty (line~\ref{quadboost: continueFindCommon insert il pop}) or a node's \textit{op} is not \textit{Compress} (line~\ref{quadboost: continueFindCommon insert helpCompress}). It again continues to search for \textit{newKey} at line~\ref{quadboost: continueFindCommon insert find}. If either \textit{rPath} has popped the LCA node or \textit{iPath} is empty, it clears \textit{iPath} at line~\ref{quadboost: continueFindCommon iPath clear} and pops all nodes above the LCA node from \textit{rPath} at line~\ref{quadboost: continueFindCommon rPath set lca index}. In the end, it calls the findCommon function to locate terminal nodes again at line~\ref{quadboost: continueFindCommon findCommon}. We prove that if \textit{oldKey}'s terminal and \textit{newKey}'s terminal share an LCA node, the common path will never be changed unless the LCA is altered. 

\begin{figure}[t]
\centering
\lstset{escapeinside={(*@}{@*)}}
\begin{sflisting}[language=Java, numberblanklines=false, numbers=left,xleftmargin=2em, basicstyle=\tiny\rmfamily, breaklines=true, breakatwhitespace=true, framexleftmargin=2em]
bool findCommon(Node& il, Node& rl, Internal& lca, Operation& rOp, Operation& iOp, Stack& iPath, Stack& rPath, double oldKeyX, double oldKeyY, double newKeyX, double newKeyY) {(*@\label{quadboost: findCommon}@*)
    while (rl.class == Internal) {(*@\label{quadboost: findCommon check}@*)
        rPath.push(rl);(*@\label{quadboost: findCommon rl push}@*)
        rOp = rl.op;
        getQuadrant(rl, oldKeyX, oldKeyY);
        if (!sameDirection(oldKeyX, oldKeyY, newKeyX, newKeyY, il, rl)) 
            break // check whether two nodes are in the same direction
    }
    lca = rPath.top();(*@\label{quadboost: findCommon lca top}@*)
    find(rl, oldKeyX, oldKeyY, rOp, rPath);  // find oldKey's terminal(*@\label{quadboost: findCommon find rl}@*)
    if (!inTree(rl, oldKeyX, oldKeyY) || moved(rl)) return false;(*@\label{quadboost: findCommon intree rl}@*)
    il = lca;
    find(il, newKeyX, newKeyY, iOp, iPath);  // find newKey's terminal(*@\label{quadboost: findCommon find il}@*)
    if (inTree(il, newKeyX, newKeyY) && !moved(il)) return false;(*@\label{quadboost: findCommon intree il}@*)
}

bool continueFindCommon(Node& il, Node& rl, Internal& lca, Operation& rOp, Operation& iOp, Stack& iPath, Stack& rPath, Internal ip, Internal rp, double oldKeyX, double oldKeyY, double newKeyX, double newKeyY, bool iFail, bool rFail, bool cFail) {(*@\label{quadboost: continueFindCommon}@*)
    if (rFail && !cFail) {  // restart to find oldKey's terminal in cast that the lca's op remains the same(*@\label{quadboost: continueFindCommon rFail}@*)
    	help(rOp);(*@\label{quadboost: continueFindCommon remove help}@*)
    	if (rOp.class != Compress) rl = rp;(*@\label{quadboost: continueFindCommon remove rl rp}@*)
    	else {
    	    while (!rPath.empty()) {
    	    	if (rPath.size() <= indexOf(lca)) {  // reach the lca node, (*@\label{quadboost: continueFindCommon remove lca}@*)
    	    	    cFail = true;  // break the loop the find the common path again
    	    	    break;
    	    	}
    	    	rl = rPath.pop();(*@\label{quadboost: continueFindCommon remove rl pop}@*)
    	    	rOp = rl.op;(*@\label{quadboost: continueFindCommon remove rl op}@*)
    	    	if (rOp.class == Compress) helpCompress(rOp);	(*@\label{quadboost: continueFindCommon remove helpCompress}@*)			
    	    	else break;	
    	    }
    	}
    	if (!cFail) {  // the lca node has not been changed
    	    find(rl, oldKeyX, oldKeyY, rOp, rPath);(*@\label{quadboost: continueFindCommon remove find}@*)
    	    if (!inTree(rl, oldKeyX, oldKeyY) || moved(rl)) return false;(*@\label{quadboost: continueFindCommon intree rl}@*)
    	    else {
    	    	rp = rPath.pop();(*@\label{quadboost: continueFindCommon rp pop}@*)
    	    	return true;
    	    }
    	}
    }
    if (iFail && !cFail) {(*@\label{quadboost: continueFindCommon iFail}@*)
        help(iOp);(*@\label{quadboost: continueFindCommon insert help}@*)
        if (iOp.class != Compress) il = ip;(*@\label{quadboost: continueFindCommon insert il ip}@*)
        else {
            while (!iPath.empty()) {
                il = iPath.pop();(*@\label{quadboost: continueFindCommon insert il pop}@*)
                iOp = il.op;(*@\label{quadboost: continueFindCommon insert il op}@*)
                if (iOp.class == Compress) helpCompress(iOp);(*@\label{quadboost: continueFindCommon insert helpCompress}@*)	
                else break;
            }
        }
        if (iOp.class == Compress) cFail = true;  // check the last op which must be the op of the lca node
        if (!cFail) {
            find(il, newKeyX, newKeyY, iOp, iPath);(*@\label{quadboost: continueFindCommon insert find}@*)
            if (inTree(il, newKeyX, newKeyY) && !moved(il)) return false;(*@\label{quadboost: continueFindCommon intree il}@*)
            else {
    	    	ip = iPath.pop();(*@\label{quadboost: continueFindCommon ip pop}@*)
    	    	return true;
    	    }
        }
    }
    if (cFail) {(*@\label{quadboost: continueFindCommon cFail}@*)
        help(iOp);(*@\label{quadboost: continueFindCommon remove help common}@*)
        help(rOp);(*@\label{quadboost: continueFindCommon insert help common}@*)
        rPath.setIndex(indexOf(lca));  // pop out all nodes above the lca node(*@\label{quadboost: continueFindCommon rPath set lca index}@*)
        iPath.clear();  // clean the input path(*@\label{quadboost: continueFindCommon iPath clear}@*)
        while (!rPath.empty()) {//first time must be not empty
            rl = rPath.pop();(*@\label{quadboost: continueFindCommon rl pop common}@*)
            rOp = rl.op;
            if (rOp.getClass() == Compress) helpCompress(rOp);(*@\label{quadboost: continueFindCommon helpCompress common}@*)
            else break;
        }  // pop up nodes in the common path
        if (!findCommon(il, rl, lca, rOp, iOp, iPath, rPath, oldKeyX, oldKeyY, newKeyX, newKeyY) return false;(*@\label{quadboost: continueFindCommon findCommon}@*)
        else {  // find two nodes
            rp = rPath.pop();(*@\label{quadboost: continueFindCommon rp pop common}@*)
            ip = iPath.pop();(*@\label{quadboost: continueFindCommon ip pop common}@*)      
        }
    }
}
\end{sflisting}
\caption{quadboost findCommon and continueFindCommon}
\label{fig: quadboost findCommon and continueFindCommon}
\end{figure}

\subsection{One Parent Optimization}
In practice, we notice that pushing a whole stack during a traversal is highly expensive. When detecting a failed flag operation, we only have to restart from the parent of a terminal node because we do not change \textit{Internal} nodes unless the compress function erases them from a quadtree. Thus, to reduce the pushing cost, we could only record the parent of the terminal node during a traversal. 

Meanwhile, we have to change the continuous find mechanism. For the insert operation and the remove operation, encountering a \textit{Compress} \textit{op}, we straightforwardly restart from the root. For the move operation, if either \textit{oldKey}'s or \textit{newKey}'s parent is under compression, we restart from the LCA node. If the LCA node also has a \textit{Compress} \textit{op}, we restart from the root. 

\section{Proof Sketch}
In this section, we prove that quadboost is both linearizable and non-blocking, and we propose a lengthy proof in appendix~\ref{sec:appendix}. 

There are four kinds of \textit{basic operations} in quadboost: the insert operation, the remove operation, the move operation, and the contain operation. Other functions are called \textit{subroutines}, which are invoked by basic functions. The insert operation and the remove operation only modify one terminal node, whereas the move operation might operate two different terminals--one for inserting a node with $newKey$, the other for removing a node with $oldKey$. We call them $newKey$'s terminal and $oldKey$'s terminal, and we call their parents $newKey$'s parent and $oldKey$'s parent, respectively. We define $snapshot_{T_{i}}$ as the state of our quadtree at some time, $T_{i}$. 

In our proofs, a CAS that changes a node's $op$ is a \textit{flag} operation and a CAS that changes a node's child is a \textit{replace} operation. Specifically, we use $iflag$, $rflag$, $mflag$, and $cflag$ to denote flag operations for the insert operation, the remove operation, the move operation, and the compress operation separately. Likewise, we use $ireplace$, $rreplace$, $mreplace$, and $creplace$ for replace operations. Moreover, we specify a flag operation that attaches a \textit{Clean} $op$ on a node as an unflag operation. 

First, we present some observations from quadboost. Then, we propose lemmas to show that our subroutines satisfy their pre-conditions and post-conditions because proofs on basic operations depend on these conditions. Next, we demonstrate that there are three categories of successful CAS transitions according to Figure~\ref{fig:State transition diagram}. We also derive some invariants of these transitions. Using above post-conditions of subroutines and invariants, we could demonstrate that a quadtree's structure is maintained during concurrent modifications. In following proofs, we show that quadboost is linearizable because it can be ordered equivalently as a sequential one by its linearization points. In the last part, we prove the non-blocking progress condition of quadboost.

\begin{obse}
The key field of a \textit{Leaf} node is never changed. The $op$ field of a \textit{Leaf} node is initially null. The space information of an \textit{Internal} node is never changed.
\end{obse}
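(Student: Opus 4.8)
The plan is to establish all three claims by direct inspection of the node definitions in Figure~\ref{fig:structure} together with a survey of every assignment statement that appears in the quadboost pseudocode; since the argument is syntactic rather than mathematical, I would organize it as three short sub-claims, one per field group, and check that the pseudocode never violates the immutability that the declarations promise.

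First, the key fields. In the \textit{Leaf} class the fields \texttt{keyX} and \texttt{keyY} are declared \texttt{final}, so no assignment to them is permitted outside the constructor. I would then confirm that the only sites constructing a \textit{Leaf} are \texttt{createNode} (invoked by \textit{insert} and \textit{move}) and the explicit \texttt{new Leaf(newKeyX, newKeyY, rl.value)} at line~\ref{quadboost: move create begin}, and that in both cases \texttt{keyX}/\texttt{keyY} are written exactly once, at construction, and never touched thereafter; hence a \textit{Leaf}'s key is fixed for the node's entire lifetime. The same reasoning handles the space information of an \textit{Internal} node: \texttt{x, y, w, h} are declared \texttt{final}, and \textit{Internal} nodes are created only inside \texttt{createNode} (when a region is split) and in the two dummy layers built at initialization, where the four coordinates are set once from the parent region's geometry and never reassigned.

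Second, the \textit{op} field of a \textit{Leaf}. Its declaration \texttt{volatile Move op;} carries no initializer, so by the default-initialization rule the field holds \texttt{null} at the instant the object is created, and no \textit{Leaf} constructor writes to it. Since the observation speaks only of the \emph{initial} value, the later write \texttt{op.oldRChild.op = op} in \texttt{helpMove} at line~\ref{quadboost: helpMove assign op} is irrelevant to this claim.

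I do not expect a real obstacle: the observation is a bookkeeping fact that pins down which fields are immutable, and it is used downstream only to justify that reading a \textit{Leaf}'s key, or an \textit{Internal}'s geometry, at two different points of an execution yields the same value. The one point requiring care is to genuinely enumerate \emph{all} assignment sites under the concurrent model --- in particular, to note that \texttt{createNode} only ever writes the \texttt{nw/ne/sw/se} pointers of the fresh \textit{Internal} nodes it builds (and the key/value of the fresh \textit{Leaf}) and never rewrites the final fields of a node already linked into the tree --- so that the phrase ``never changed'' is literally correct rather than merely true of sequential executions.
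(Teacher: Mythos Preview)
The paper states this as an Observation without proof; it is meant to be read off directly from the \texttt{final} declarations in Figure~\ref{fig:structure} (the fields \texttt{keyX}, \texttt{keyY}, \texttt{value} of \textit{Leaf} and \texttt{x}, \texttt{y}, \texttt{w}, \texttt{h} of \textit{Internal}) together with the absence of an initializer on \textit{Leaf}'s \texttt{op} field. Your syntactic inspection is correct and simply spells out what the paper leaves implicit.
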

\begin{obse}
The root node is never changed.
\end{obse}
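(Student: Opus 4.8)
The plan is a minimal-counterexample argument that leans on one fact implicit in Figure~\ref{fig:structure}: a node object never changes its class, so an \textit{Internal} node stays \textit{Internal} forever and only the references stored in its parent can ever be overwritten. I would first dispose of the trivial part: the global \texttt{root} is assigned exactly once, at initialization, and no line of \textit{insert}, \textit{remove}, \textit{move}, \textit{contain}, or any subroutine re-assigns it --- each traversal copies \texttt{root} into a fresh local (the \texttt{l}, \texttt{rl}, \texttt{il} of \textit{find}/\textit{findCommon}) before that local is advanced by \textit{getQuadrant}. Hence ``the root node is never changed'' reduces to showing that the four child fields and the \textit{op} field of the single object held by \texttt{root} are never overwritten.

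Second, I would note that every write to a mutable field of an \textit{Internal} node goes through exactly one of two \texttt{CAS} wrappers: \textit{helpReplace} (which overwrites one child of its first argument) and \textit{helpFlag} (which overwrites the \textit{op} of its first argument); every other subroutine (\textit{getQuadrant}, \textit{find}, \textit{continueFind}, \textit{check}, \textit{inTree}, \textit{moved}, \textit{hasChild}, \textit{createNode}) either only reads \textit{Internal} nodes or only touches locals and freshly created nodes. So it suffices to prove that \texttt{root} is never the first argument of \textit{helpReplace} and never the first argument of \textit{helpFlag}.

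The core step: assume toward a contradiction that some execution changes \texttt{root}, and take the first \texttt{CAS} (the linearization point of some \textit{helpReplace}/\textit{helpFlag} call) that does so. At that instant every field of \texttt{root} still holds its initial value, so \texttt{root}'s children are still the four dummy layer-two nodes, which are \textit{Internal}-class objects. Now I case on the call site. (i) A \textit{helpFlag} on \texttt{root} (from \textit{iflag}/\textit{rflag}/\textit{mflag}, a \textit{cflag} in \textit{compress}, or an unflag in \textit{helpSubstitute}/\textit{helpMove}), or a \textit{helpReplace} on \texttt{root} from \textit{helpSubstitute} or \textit{helpMove}, would require \texttt{root} to be the node \texttt{p} (resp.\ \texttt{ip}, \texttt{rp}) popped from a traversal stack --- i.e.\ the parent of a terminal node returned by \textit{find} --- but \textit{find} halts only at a non-\textit{Internal} node and \texttt{root}'s children are all \textit{Internal}, so no terminal is a child of \texttt{root}; contradiction. (ii) A \textit{helpReplace} on \texttt{root} from \textit{helpCompress} would require \texttt{op.grandparent == root}, yet \textit{compress} executes \texttt{if (gp == root) return;} before ever constructing that \textit{Compress} object (the two dummy \textit{Internal} layers above every terminal guarantee a grandparent is always available to pop), so no such \textit{Compress} \textit{op} exists; contradiction. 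Either way the supposed first offending \texttt{CAS} cannot occur, so \texttt{root} is never changed.

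The main obstacle I expect is the circularity lurking in ``root's children are \textit{Internal}'', since that is itself a structural invariant one would normally derive from the claim being proved; the minimal-counterexample framing is exactly what side-steps it, because at the first violating write \texttt{root} is, by definition, still in its initial configuration. The remaining care points are purely bookkeeping: making ``first write to \texttt{root}'' precise as a step ordering of \texttt{CAS}es, confirming the call-site enumeration is exhaustive so that no path slips \texttt{root} in as a \texttt{parent}/\texttt{grandparent}/\texttt{iParent}/\texttt{rParent}, and recording class-immutability as a one-line consequence of the declarations in Figure~\ref{fig:structure}.
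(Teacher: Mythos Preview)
The paper records this as a bare observation with no proof at all: in context it means only the trivial fact that the global \texttt{root} reference is assigned once at initialization and never re-assigned anywhere in the code, which is immediate by inspection. Where the paper actually relies on this observation (e.g.\ to justify that every traversal begins from a fixed entry point, and in Corollary~\ref{cor: common path}), that weak reading is all that is needed.

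You have read the statement more strongly---that the root object's mutable fields (its four child pointers and its \textit{op}) are also never overwritten---and given a careful minimal-counterexample argument for that stronger claim. Your argument is sound, and the case split over \textit{helpFlag}/\textit{helpReplace} call sites is complete. But note that the paper deliberately separates this stronger invariant out: immutability of the child pointers of the first two dummy layers is the content of Lemma~10 (``Two layers of dummy nodes are never changed''), which the paper proves by induction on the sequence of successful replace operations, using Lemma~2 to pin down what each \textit{op} stores. Your minimal-counterexample framing and the paper's induction are essentially equivalent here; yours is arguably crisper for the root alone, while the paper's inductive version extends uniformly to the second dummy layer as well. The one thing you cover that Lemma~10 does not make explicit is that \texttt{root.op} is never flagged; your parent-of-terminal argument handles that correctly.
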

\begin{obse}
A flag operation attaches an $op$ on a \textit{Internal} node.
\end{obse}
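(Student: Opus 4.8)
The plan is to prove this observation by a straightforward case analysis over every machine instruction in the \textit{quadboost} pseudocode that can modify an $op$ field, using the definition (stated in the proof sketch) that a \textit{flag} operation is precisely a \texttt{CAS} that changes some node's $op$ field. So the real content is: (i) locate every \texttt{CAS} in the algorithm, (ii) determine which of them target an $op$ field, and (iii) check that in each such case the target node is an \textit{Internal} node.

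First I would observe that, syntactically, only two kinds of \texttt{CAS} appear in the figures: the child-pointer \texttt{CAS} inside \texttt{helpReplace} (which writes \texttt{p.nw}, \texttt{p.ne}, \texttt{p.sw}, or \texttt{p.se}, i.e.\ a child link, not an $op$ field), and the \texttt{CAS(node.op, oldOp, newOp)} inside \texttt{helpFlag}. Hence \texttt{helpFlag} is the \emph{only} site that performs a flag operation. Its first parameter is declared with static type \textit{Internal}, so it suffices to inspect the call sites of \texttt{helpFlag} and verify each receives an \textit{Internal} argument: in \texttt{insert} and \texttt{remove} it is passed \texttt{p} (the parent popped from the path, which by construction of \texttt{find} is an \textit{Internal} node, since \texttt{find} only pushes nodes satisfying \texttt{l.class() == Internal}); in \texttt{compress} it is passed \texttt{p} again and, via the \textit{Compress} $op$ processed by \texttt{helpCompress}, the grandparent \texttt{gp}; in \texttt{helpSubstitute} it is \texttt{op.parent}, of type \textit{Internal} by the \textit{Substitute} class declaration; and in \texttt{helpMove} it is \texttt{op.iParent} and \texttt{op.rParent}, both of type \textit{Internal} by the \textit{Move} class declaration. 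The unflag calls \texttt{helpFlag(\ldots, new Clean())} are covered by the same argument since they are ordinary \texttt{helpFlag} invocations.

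The one apparent exception to address is the plain assignment \texttt{op.oldRChild.op = op} in \texttt{helpMove}, which writes the $op$ field of a \textit{Leaf} node (\textit{Leaf} carries a \texttt{volatile Move op} field, initially null by Observation~1). I would note that this is a direct volatile store, not a \texttt{CAS}, so by definition it is not a flag operation and therefore does not falsify the observation; it is the single non-\texttt{CAS} update to any $op$ field and is handled separately in the move-correctness argument. Putting these together gives the claim: every flag operation, being a \texttt{CAS} on an $op$ field, is a \texttt{helpFlag} call, and every \texttt{helpFlag} call acts on an \textit{Internal} node.

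The main obstacle is not conceptual but one of exhaustiveness: the proof is only as strong as the enumeration of \texttt{CAS} sites, so the care must go into confirming that no subroutine (\texttt{helpCompress}, \texttt{helpSubstitute}, \texttt{helpMove}, \texttt{continueFind}, \texttt{continueFindCommon}, \texttt{compress}) hides a further \texttt{CAS} on an $op$ field, and that the \textit{Leaf} $op$ write in \texttt{helpMove} is indeed the unique non-\texttt{CAS} update to any $op$ field, so that the dichotomy ``$op$ updated by a flag operation $\Leftrightarrow$ a \texttt{helpFlag} call on an \textit{Internal} node'' is genuinely complete.
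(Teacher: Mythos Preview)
Your proposal is correct and essentially matches the paper's own treatment: the paper states this as an \emph{observation} without an explicit proof in the main text, but the appendix (Lemma~\ref{lem: helpFlag}) carries out exactly the call-site enumeration you describe to establish that every invocation of \texttt{helpFlag} receives an \textit{Internal} node as its first argument. Your additional remark distinguishing the volatile store \texttt{op.oldRChild.op = op} from a \texttt{CAS}-based flag operation is accurate and mirrors the paper's separate handling of the \textit{Leaf} $op$ field.
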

\begin{obse}
The $allFlag$ and $iFirst$ field in \textit{Move} are initially false, and they will never be set back after assigning to true. 
\end{obse}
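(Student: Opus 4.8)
The plan is to prove the observation by a complete, purely syntactic enumeration of the program points at which the fields \textit{Move.allFlag} and \textit{Move.iFirst} are written. This suffices because a \textit{Move} object is reached only through a reference handed to the \textit{move}/\textit{helpMove} code path, and these fields are mutated only by direct assignments through such a reference rather than being aliased through any other structure.

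First I would settle the ``initially false'' clause. By the field initializers of the \textit{Move} class in Figure~\ref{fig:structure} (\texttt{volatile bool allFlag = false, iFirst = false;}), every freshly allocated \textit{Move} object has both fields equal to \texttt{false}. The only constructor call is at line~\ref{quadboost: move create end} of \textit{move}; it copies the locally computed \textit{getSpatialOrder} result into \textit{iFirst} and leaves \textit{allFlag} untouched. Hence immediately after construction \textit{allFlag} is \texttt{false} and \textit{iFirst} holds a value that, as the next step shows, it is never reassigned.

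Next I would prove monotonicity by scanning all of Figures~\ref{fig:quadboost insert}, \ref{fig:quadboost remove}, \ref{fig:quadboost move}, and~\ref{fig: quadboost findCommon and continueFindCommon}: the sole write to any \textit{allFlag} field is \texttt{op.allFlag = true;} in \textit{helpMove} (line~\ref{quadboost: helpMove allFlag}), which stores the constant \texttt{true}; no statement anywhere stores \texttt{false} into an \textit{allFlag} field, and no statement stores anything into an \textit{iFirst} field after the constructor. Since both fields are \texttt{volatile}, the writes to each are totally ordered and become visible across threads, so the value of \textit{allFlag} goes \texttt{false}$\,\to\,$\texttt{true} at most once and never reverts, while \textit{iFirst} is write-once; in particular neither field is ever set back after it has become \texttt{true}.

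The only real work is making the enumeration airtight, and this is the main (but routine) obstacle: I would verify that none of the helper subroutines reachable from a basic operation --- \textit{help}, \textit{helpMove}, \textit{helpSubstitute}, \textit{helpCompress}, \textit{continueFind}, \textit{continueFindCommon}, and \textit{createNode} --- mutate these two fields, confirming that every occurrence of \textit{op.allFlag} or \textit{op.iFirst} outside line~\ref{quadboost: move create end} and line~\ref{quadboost: helpMove allFlag} is a read (e.g.\ the \texttt{op.iFirst ? \dots : \dots} branch selections and the \texttt{if (op.allFlag)} guards in \textit{helpMove}), and that the \textit{Substitute} and \textit{Compress} code paths never obtain a \textit{Move} reference at all. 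Once that check is discharged the observation is immediate, and it is precisely this monotone form that the later linearizability argument invokes.
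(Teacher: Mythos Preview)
Your proposal is correct and, in fact, more thorough than the paper's own treatment: the paper states this as a bare \emph{Observation} with no accompanying proof, relying on the reader's inspection of the code. Your syntactic enumeration of all write sites to \textit{allFlag} and \textit{iFirst} is exactly how one would discharge such an observation rigorously, and your handling of the subtlety that \textit{iFirst} is set once in the constructor (to the result of \textit{getSpatialOrder}) and never thereafter is accurate---this is what makes the ``never set back'' clause hold for \textit{iFirst} even though its post-construction value may already be \texttt{true}.
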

\begin{obse}
If a \textit{Leaf} node is moved, its $op$ is set before the replace operation on $op.iParent$, which is before the replace operation on $op.rParent$.
\end{obse}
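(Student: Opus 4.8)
The plan is to treat Observation~5 as a program-order fact about \texttt{helpMove}, in the same spirit as Observations~1--4, and then lift it to the global timeline. Fix the moved \textit{Leaf} $l$ and the three events: (E1) the assignment of $op$ to the $op$ field of $op.oldRChild$ (line~\ref{quadboost: helpMove assign op}); (E2) the successful replace that removes $op.oldIChild$ from $op.iParent$; (E3) the successful replace that removes $op.oldRChild$ from $op.rParent$. First I would use Observation~1 (a \textit{Leaf}'s $op$ is initially \texttt{null}) together with the code inspection that line~\ref{quadboost: helpMove assign op} is the \emph{only} write to any \textit{Leaf}'s $op$ field: hence $l$ being moved forces $l.op = op$ for a unique \textit{Move} $op$ with $op.oldRChild = l$, and forces E1 to have been executed by some thread inside the \texttt{if (doCAS)} branch of \texttt{helpMove(op)}. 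In that branch E1 is immediately followed, in program order, by E2 then E3: either the single combined \texttt{helpReplace(op.rParent, op.oldRChild, op.newIChild)} (when $op.oldIChild = op.oldRChild$, in which case E2 and E3 are the same event) or, otherwise, \texttt{helpReplace(op.iParent, op.oldIChild, op.newIChild)} (line~\ref{quadboost: helpMove helpReplace first first}) then \texttt{helpReplace(op.rParent, op.oldRChild, new Empty())} (line~\ref{quadboost: helpMove helpReplace first second}).

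Next I would pass from ``program order inside one \texttt{helpMove}'' to ``real-time order'', which is where helping complicates things, since E1, E2, E3 may be performed by three different threads. Here I would argue that each of E2 and E3 is a \emph{unique} successful \texttt{CAS} (a parent's child pointer, once moved off $op.oldIChild$ or $op.oldRChild$, never returns, because every replace writes a freshly allocated node into the slot and, while flagged with $op$, the two parents admit only the $op$-dictated replaces), so ``the replace on $op.iParent$'' and ``the replace on $op.rParent$'' are well defined events. Then: whichever thread realizes E2 executed E1 earlier in the same invocation, so the first occurrence of E1 precedes E2; and whichever thread realizes E3 had first returned from its call \texttt{helpReplace(op.iParent, op.oldIChild, op.newIChild)}, and that call returns only once $op.iParent$ no longer holds $op.oldIChild$ (i.e.\ only once E2 has occurred), so E2 precedes E3. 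In the combined case I would additionally note $op.iParent = op.rParent$ (a \textit{Leaf} is placed into a single child slot by its constructor or by \texttt{createNode}, hence has one parent), so the chain collapses to ``$l.op$ is set before the single replace.''

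The main obstacle is the non-interference claim underpinning the uniqueness of E2 and E3, together with the companion fact that $op.iParent$ actually held $op.oldIChild$ when $op$'s flagging completed (so that its later absence genuinely witnesses E2). Both follow from the flag discipline: every basic operation checks \texttt{pOp.class == Clean} before a replace, and from the instant \texttt{doCAS} is observed true until the reverse-order unflags at the end of \texttt{helpMove} both parents carry the non-\texttt{Clean} $op$, so exactly one replace per parent occurs and it is the $op$-dictated one. Because Observation~5 precedes the flag-ordering lemmas in the development, I would state this step as a direct appeal to the \texttt{CAS} semantics of \texttt{helpFlag}/\texttt{helpReplace} and flag that its fully rigorous form is subsumed by those lemmas; the remaining care is purely bookkeeping --- always anchoring each appeal to program order inside the invocation of the thread that actually performs the event in question.
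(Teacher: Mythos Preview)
The paper gives no proof at all for this statement: in both the main body and the appendix it is labeled an \emph{Observation} and left to stand as a direct reading of the program order inside \texttt{helpMove} (line~\ref{quadboost: helpMove assign op} precedes line~\ref{quadboost: helpMove helpReplace first first}, which precedes line~\ref{quadboost: helpMove helpReplace first second}). Your proposal is therefore not so much a different route as a strictly more rigorous treatment of what the paper takes for granted.

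What you add is the lift from program order in a single invocation to real-time order across helpers, via the uniqueness of the successful replaces and the ``return from \texttt{helpReplace} on $op.iParent$ witnesses E2'' step. That argument is sound, and you are right that its clean justification leans on the flag discipline developed later (your appeal to what becomes Lemma~7 and Corollary~1). The paper sidesteps this by postponing all such reasoning to the transition lemmas and using Observation~5 only informally; your version makes the dependency explicit. If you want to match the paper's level here, a one-line pointer to the program order in \texttt{helpMove} suffices; if you want to keep your fuller argument, it would sit more naturally as part of (or a corollary to) the $flag\rightarrow replace\rightarrow unflag$ lemma rather than as the proof of an ``observation.''
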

\begin{obse}
The help function, the helpCompress function, the helpMove function, and the helpSubstitute function are not called in a mutual way. (If method A calls method B, and method B also calls method A, we say A and B are called in a mutual way.)
\end{obse}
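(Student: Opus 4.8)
The plan is to prove this by a purely syntactic inspection of the pseudocode in Figures~\ref{fig:quadboost insert}, \ref{fig:quadboost remove}, and~\ref{fig:quadboost move}. I would build the static call graph $G$ whose vertices are all the helper functions and subroutines and whose directed edges record the relation ``may call'', and then show that the subgraph of $G$ induced by $\{\texttt{help},\texttt{helpCompress},\texttt{helpMove},\texttt{helpSubstitute}\}$ contains no directed cycle. This suffices because two functions $A,B$ are ``called in a mutual way'' precisely when $G$ contains both a path $A\rightarrow\cdots\rightarrow B$ and a path $B\rightarrow\cdots\rightarrow A$, i.e.\ a directed cycle through $A$ and $B$; so acyclicity of the relevant part of $G$ is exactly the statement to be proved.

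First I would classify the auxiliary routines as sinks of $G$: \texttt{helpFlag} is a single \texttt{CAS}, \texttt{helpReplace} is a fixed sequence of \texttt{CAS}es, and \texttt{getQuadrant}, \texttt{find}, \texttt{check}, \texttt{hasChild}, \texttt{inTree}, \texttt{moved} only read fields, so none of them calls any of the four functions named in the statement (nor anything that transitively does). Then I would read off the out-edges of the four functions of interest directly from the figures: \texttt{helpSubstitute} calls only \texttt{helpReplace} and \texttt{helpFlag}; \texttt{helpCompress} calls only \texttt{helpReplace}; \texttt{helpMove} calls only \texttt{helpFlag} and \texttt{helpReplace}; and \texttt{help} calls exactly one of \texttt{helpCompress}, \texttt{helpSubstitute}, or \texttt{helpMove}, dispatched on the dynamic subtype of its \texttt{Operation} argument. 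Consequently every edge of $G$ that starts in the set $\{\texttt{help},\texttt{helpCompress},\texttt{helpMove},\texttt{helpSubstitute}\}$ and stays inside it emanates from \texttt{help}; the other three functions have no out-edge back into the set, and all their out-edges land in sinks. A graph in which a single vertex points to three others that are sinks of the induced subgraph is trivially acyclic, so no two of the four are mutually recursive. (One also sees in passing that none of the four is even directly self-recursive, although self-recursion would not count as ``mutual'' under the stated definition.)

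The one point that deserves care — and the only place a reader could reasonably object — is the helping idiom characteristic of lock-free trees, in which a thread that meets a pending \texttt{op} re-enters the generic dispatcher \texttt{help} on that \texttt{op}, which could cascade indefinitely. I would therefore explicitly verify that \texttt{helpCompress}, \texttt{helpMove}, and \texttt{helpSubstitute} never invoke \texttt{help} (nor \texttt{continueFind}/\texttt{continueFindCommon}, nor anything reaching them): each of the three manipulates only the \texttt{op} object handed to it, and only through the plain \texttt{CAS} wrappers \texttt{helpFlag} and \texttt{helpReplace}, never re-dispatching through \texttt{help}. The dispatcher \texttt{help} itself is re-entered only from the basic operations \texttt{insert}, \texttt{remove}, \texttt{move} and from \texttt{continueFind}/\texttt{continueFindCommon}, all of which lie outside the four-function set and hence cannot close a cycle through it. So the main obstacle is not a subtle argument but the discipline of checking every call site in Figures~\ref{fig:quadboost insert}, \ref{fig:quadboost remove}, and~\ref{fig:quadboost move}; once that bookkeeping is complete the observation is immediate.
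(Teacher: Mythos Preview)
Your proposal is correct and is essentially what the paper intends: the statement is labeled an \emph{Observation} and is given no proof in either the main text or the appendix, because it follows by direct inspection of the pseudocode in Figures~\ref{fig:quadboost insert}--\ref{fig:quadboost move}. Your call-graph analysis is exactly that inspection made explicit, and is more careful than anything the paper writes down.
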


\subsection{Basic Invariants}
We use $find(keys)$ to denote a set of find operations: $find$, $continueFind$, $findCommon$, and $continueFindCommon$. The find function and the continueFind function return a tuple $\langle l, pOp, path\rangle$. The findCommon function and the continueFindCommon function return two such tuples. We specify functions outside the \textit{while} loop in the insert operation, the remove operation, and the move operation are at $iteration_{0}$, and functions inside the \textit{while} loop are at $iteration_{i}, 0 < i$ ordered by their invocation sequence.

We suppose that $find(keys)$ executes from a valid $snapshot_{T_{i}}$ to derive the following conditions. Proofs for conditions of other subroutines are included in appendix~\ref{sec:appendix}.

\begin{lemm}
The post-conditions of $find(keys)$ returned at $T_{i}$, with tuples  $\langle l^{k}, pOp^{k}, path^{k}\rangle$, $0 \leq k < \vert keys \vert$.
\begin{enumerate}
\item $l^{k}$ is a \textit{Leaf} node or an \textit{Empty} node. 
\item At some $T_{i1} < T_{i}$, the top node in $path^{k}$ has contained $pOp^{k}$.
\item At some $T_{i2} < T_{i}$, the top node in $path^{k}$ has contained $l^{k}$.
\item If $pOp^{k}$ is read at $T_{i1}$, and $l^{k}$ is read at $T_{i2}$, then $T_{i1} < T_{i2} < T_{i}$.
\item For each node $n$ in the $path^{k}$, $size(path^{k}) \ge 2$, $n_{t}$ is on the top of $n_{t-1}$, and $n_{t}$ is on the direction $d \in \{nw, ne, sw, se\}$ of $n_{t-1}$ at $T_{i1} < T_{i}$.
\end{enumerate}
\end{lemm}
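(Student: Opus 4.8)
The plan is to reduce everything to the single downward loop of the \textit{find} subroutine (Figure~\ref{fig:quadboost insert}): the other three members of $find(keys)$ — \textit{continueFind}, \textit{findCommon}, \textit{continueFindCommon} — never do anything to a \textit{path} except pop a suffix (helping any \textit{Compress} $op$ they pass) and then re-enter \textit{find} with $l$ set to a node still sitting on that path. So I would first prove the five post-conditions for a bare call to \textit{find}, then argue each wrapper hands \textit{find} a prefix \textit{path} that already satisfies the structural part of the invariant, so that the bare-call argument re-establishes the whole statement. Since the lemma is explicitly about the value \emph{returned at} $T_i$, I do not need to prove that the traversal terminates here; that is deferred to the progress proof.

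For the bare call I would carry a loop invariant through the \textit{while} of \textit{find}: after $t$ iterations $path$ is a chain $n_1,\dots,n_t$ with $n_1$ the node the call started from, and for every $1<j\le t$ the read $n_{j-1}.d = n_j$ performed by \textit{getQuadrant} on $n_{j-1}$ (with $d\in\{nw,ne,sw,se\}$) happened at some earlier instant, as did the read $pOp = n_t.op$; moreover, within each iteration the $op$-read of the current node strictly precedes the child-read out of it, which pins the time ordering. The loop exits exactly when $l.\textit{class}()\neq\textit{Internal}$, i.e. $l$ is \textit{Leaf} or \textit{Empty} (condition~1), and at that moment $l$ is the value last read as $n_t.d$ while $pOp$ holds $n_t.op$ read just before it, with $n_t = path.\textit{top}()$ — this is conditions~2, 3, 4 (in the degenerate case that the loop runs zero times, $pOp$ and $l$ retain the caller's values, read before the call, so the ordering still holds). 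Condition~5 is the chain part of the invariant together with $size(path)\ge 2$: the root and its children form two dummy layers of \textit{Internal} nodes (the observations on the root and on the initial split), so \textit{find} from \textit{root} pushes at least two nodes before it can reach a terminal, and a dummy-layer node never receives a \textit{Compress} $op$ — the compress routine returns as soon as the grandparent is the root — so the popping loops in the wrappers can never drive the stack below depth two.

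For \textit{continueFind} I would split on $pOp.\textit{class}$: if it is not \textit{Compress}, $l$ is set to the parent $p$, which is still \textit{Internal} and still on the recorded path, so the bare-call invariant applies with $n_1=p$; if it is \textit{Compress}, the routine pops and helps until it reaches a node $l$ with $l.op\neq\textit{Compress}$, sets $pOp=l.op$, and calls \textit{find}, and the popped prefix remains a valid parent-child chain because \textit{Internal} nodes never change their space fields and are unlinked only by a \textit{creplace}, which the helping thread has already executed on everything it popped. The delicate case — and the step I expect to be the real obstacle — is \textit{continueFindCommon} (Figure~\ref{fig: quadboost findCommon and continueFindCommon}), which manipulates \emph{two} stacks: in the $cFail$ branch it truncates \textit{rPath} to the LCA index, clears \textit{iPath}, drains \textit{Compress} nodes off the top of \textit{rPath}, and re-invokes \textit{findCommon}, whose own loop rebuilds the shared prefix and then issues the two separate \textit{find} calls for \textit{oldKey} and \textit{newKey}. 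Here I would have to verify that (a) after truncation the surviving \textit{rPath} is still the valid common chain from the root to the current LCA — using the separate claim the paper flags, that the common prefix of the two terminals' paths is not altered unless the LCA itself is compressed; (b) the \textit{find} calls out of \textit{findCommon} still satisfy conditions~2--4 even when the child of the LCA returned by \textit{getQuadrant} is already a terminal, in which case $pOp^k$ is the LCA's $op$ read in the \textit{findCommon} loop \emph{before} that child was read; and (c) every read used as a witness for $T_{i1}$ and $T_{i2}$ genuinely occurred inside this invocation and therefore before $T_i$. Matching the returned $pOp^k$ to the correct top-of-path node after the LCA truncation, across both stacks, is the fiddly part; the bare-\textit{find} induction is routine.
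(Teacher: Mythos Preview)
Your proposal is correct and follows essentially the same route as the paper's appendix proof: isolate the downward loop of \textit{find} and verify the five conditions by a per-iteration invariant (the paper's Lemma~\ref{lem: find findCommon process}), then discharge \textit{continueFind}, \textit{findCommon}, and \textit{continueFindCommon} by a case analysis showing each hands \textit{find} a valid prefix path (the paper's Lemmas~\ref{lem: find pre-condition post-conditions}--\ref{lem: continueFindCommon pre-conditions post-conditions}). The only structural difference is that the paper packages the wrapper argument as an explicit induction over the global sequence of calls $find_0,\dots,find_{k+1}$, with separate pre- and post-condition lemmas per function, whereas you fold this into ``wrapper pops a suffix, then re-enters \textit{find}''; your $size(path)\ge 2$ justification via the two dummy layers and the $gp=\textit{root}$ guard in \textit{compress} is actually more explicit than what the appendix writes out.
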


Based on these post-conditions, we show that each $op$ created at $T_{i}$ store their corresponding information.

\begin{lemm}
For $op$ created at $T_{i}$:
\begin{enumerate}
\item If $op$ is \textit{Substitute}, $op.parent$ has contained $op.oldChild$ that is a \textit{Leaf} node at $T_{i1} < T_{i}$ from the results of $find(keys)$ at the prior iteration.
\item If $op$ is \textit{Compress}, $op.grandparent$ has contained $op.parent$ that is an \textit{Internal} node at $T_{i1} < T_{i}$ from the results of $find(keys)$ at the prior iteration. 
\item If $op$ is \textit{Move}, $op.iParent$ has contained $op.oldIChild$ that is a \textit{Leaf} node and $op.oldIOp$ before $ T_{i}$, and $op.rParent$ has contained $op.oldRChild$ that is a \textit{Leaf} and $op.oldROp$ before $T_{i}$ from the results of $find(keys)$ at the prior iteration.
\end{enumerate}
\end{lemm}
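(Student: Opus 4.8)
The plan is to prove the three clauses by a case analysis on the dynamic type of $op$. For each type there is exactly one place in the pseudocode where an $op$ of that type is constructed, so I would walk through that construction statement by statement, expressing every field of $op$ in terms of the outputs of the call in the $find(keys)$ family whose results supply its operands -- the search call issued in the immediately preceding loop pass, or the pre-loop call when $op$ is built in the first pass. The post-conditions of the preceding lemma, applied to those outputs, then give the ``has contained'' facts at times strictly before $T_i$: clause~2 for the parent--$op$ relation and clause~3 for the parent--child relation, while clause~1 together with the in-line guards of the enclosing basic operation (the \texttt{inTree} and \texttt{moved} tests, which must not fire a \texttt{return false}) pin down whether the recorded child is a \textit{Leaf} or an \textit{Empty} node. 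No machinery beyond the preceding lemma is needed.

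\textbf{Substitute.} A \textit{Substitute} object is created only at line~\ref{quadboost: insert Substitute} of \textit{insert} and line~\ref{quadboost: remove Substitute} of \textit{remove}, in both cases as \texttt{new Substitute(p, l, newNode)} with $l$ the terminal returned by the last search call and \texttt{p = path.pop()} the node that sat on top of $path$ at that instant; hence $op.parent$ is that top entry and $op.oldChild = l$, and clause~3 yields a $T_{i1} < T_i$ at which $op.parent$ held $op.oldChild$ on one of the four directions, with the guard at line~\ref{quadboost: remove intree} forcing $l$ to be a \textit{Leaf} in the \textit{remove} case (in \textit{insert}, $l$ may instead be an \textit{Empty} node and the argument is unchanged). \textbf{Compress.} A \textit{Compress} object is created only at line~\ref{quadboost: compress Compress}, as \texttt{new Compress(gp, p)} inside \textit{compress}, where \texttt{p} is the argument of \textit{compress} -- the first \texttt{path.pop()} of \textit{remove} or the previous $gp$ of the \textit{compress} loop -- and \texttt{gp = path.pop()}, so $op.parent$ and $op.grandparent$ are consecutive entries of $path$ with $op.parent$ directly above; since $path$ only ever receives \textit{Internal} nodes (they are pushed in \textit{find} under the \texttt{l.class()==Internal} test), $op.parent$ is \textit{Internal}, and clause~5 gives the $T_{i1} < T_i$ at which $op.grandparent$ held $op.parent$ on some direction.

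\textbf{Move.} This case carries the bulk of the work and is, I expect, the main obstacle. A \textit{Move} object is built only at line~\ref{quadboost: move create end}, \texttt{new Move(ip, rp, il, rl, newNode, iOp, rOp, iFirst)}, so $op.iParent = ip$, $op.rParent = rp$, $op.oldIChild = il$, $op.oldRChild = rl$, $op.oldIOp = iOp$ and $op.oldROp = rOp$, where the pairs $(il, iOp)$ and $(rl, rOp)$ and the stacks $iPath, rPath$ come either from \textit{findCommon} at line~\ref{quadboost: move findCommon} (first pass) or from \textit{continueFindCommon} at line~\ref{quadboost: move continueFindCommon} in the previous pass, with $ip$ and $rp$ the nodes popped from $iPath$ and $rPath$ right afterwards. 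I would first handle \textit{findCommon}: by the preceding lemma its post-conditions hold for the two returned tuples $\langle il, iOp, iPath\rangle$ and $\langle rl, rOp, rPath\rangle$ (it ends, on every branch, with the \texttt{find} calls at lines~\ref{quadboost: findCommon find rl} and~\ref{quadboost: findCommon find il}), which makes $ip$ the top of $iPath$ and $rp$ the top of $rPath$ and supplies times before $T_i$ at which $ip$ held $il$ and carried $iOp$ and at which $rp$ held $rl$ and carried $rOp$; the guards at lines~\ref{quadboost: findCommon intree rl} and~\ref{quadboost: findCommon intree il}, which must fail to trigger a \texttt{return false}, force $rl$ to be a \textit{Leaf} with key $\langle oldKeyX, oldKeyY\rangle$ and leave $il$ a \textit{Leaf} or an \textit{Empty} node. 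I would then rerun this for each of the three branches of \textit{continueFindCommon} -- the $rFail$ branch (line~\ref{quadboost: continueFindCommon rFail}), the $iFail$ branch (line~\ref{quadboost: continueFindCommon iFail}), and the $cFail$ branch (line~\ref{quadboost: continueFindCommon cFail}) -- each of which ends by re-issuing a \texttt{find} or \textit{findCommon} and re-popping $ip$/$rp$ before returning true, so the post-conditions again hold for the freshly overwritten tuples, with the guards at lines~\ref{quadboost: continueFindCommon intree rl} and~\ref{quadboost: continueFindCommon intree il} playing the same role; the ``prior iteration'' attribution is then just the remark that the search call of pass $k$ is what produces the $il, rl, ip, rp, iOp, rOp$ consumed when the op is built in pass $k+1$. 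The delicate point is the $cFail$ branch, which reshuffles $rPath$ around \texttt{indexOf(lca)} (line~\ref{quadboost: continueFindCommon rPath set lca index}) and restarts \textit{findCommon} (line~\ref{quadboost: continueFindCommon findCommon}); to be certain that $ip$ and $rp$ are still the well-defined tops of the two stacks after that reshuffle I would invoke the earlier-stated fact that the common path below the LCA is unchanged while the LCA itself is not compressed. Once that is in place, the remainder of the \textit{Move} case -- and the \textit{Substitute} and \textit{Compress} cases -- reduces to a routine application of the $find(keys)$ post-conditions.
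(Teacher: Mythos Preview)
Your approach is essentially the same as the paper's: a case split on the type of $op$, tracing each constructor back to the variables produced by the preceding $find(keys)$ call, and then invoking the post-conditions of that call. The paper's proof is much terser than yours---for each case it just names the construction line, identifies where the arguments come from (\texttt{path.pop()}, the returned $l$, etc.), and cites the post-condition lemmas for \textit{find}, \textit{findCommon}, \textit{continueFind}, and \textit{continueFindCommon} in one shot.

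Two minor points where you are doing more work than the paper. First, your detailed branch-by-branch walk through \textit{continueFindCommon} (the $rFail$, $iFail$, $cFail$ cases) and your plan to invoke the LCA-stability fact for the $cFail$ branch are unnecessary here: the paper has already packaged all of that into a separate lemma establishing that \textit{continueFindCommon} satisfies the same post-conditions as \textit{findCommon}, so at this point one simply cites that lemma rather than re-deriving it. Second, you correctly observe that in \textit{insert} the terminal $l$ may be an \textit{Empty} node rather than a \textit{Leaf}; the paper's own proof glosses over this and just says the post-conditions ``establish the claim,'' so you are actually being more careful than the paper on that detail.
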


Though we have not presented details of the createNode function, our implementation could guarantee that it has following conditions.

\begin{lemm}
For $createNode(l, p, newKeyX, newKeyY, value)$ that returns a $newNode$ invoked at $T_{i}$, it has the post-condition:
\begin{enumerate}
\item The $newNode$ returned is either a \textit{Leaf} node with $newKey$ and $value$, or a sub-tree that contains both $l.key$ node and $newKey$ node with the same parent.
\end{enumerate}
\end{lemm}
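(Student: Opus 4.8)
The plan is to argue by cases on the class of the terminal node $l$ that is passed to $createNode$, exploiting the fact that $createNode$ is reached in $insert$ (respectively $move$) only after the caller's guard — the $inTree$/$moved$ test at line~\ref{quadboost: insert intree} (respectively line~\ref{quadboost: findCommon intree il}) — has failed to return false. That guard leaves exactly three possibilities for $l$: (i)~$l$ is an \textit{Empty} node; (ii)~$l$ is a \textit{Leaf} node with $l.key \neq newKey$; or (iii)~$l$ is a \textit{Leaf} node with $l.key = newKey$ but $moved(l)$ holds (this last case is genuinely reachable, e.g.\ when a concurrent $move$ has logically removed the old $newKey$-leaf). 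In cases~(i) and~(iii) the region associated with the candidate position of $newKey$ is already not available as a distinct \textit{Leaf} slot for the two keys, so $createNode$ short-circuits and returns a freshly allocated \textit{Leaf} carrying $\langle newKeyX, newKeyY\rangle$ and $value$; this matches the first disjunct of the post-condition directly. In addition I would record, from the routing invariant that follows from the post-conditions of $find(keys)$ (Lemma on the post-conditions of $find(keys)$, item~5, together with $getQuadrant$'s definition), that $newKey$, and in case~(ii) also $l.key$, lies inside the region occupied by $l$, namely the quadrant of $p$'s region along the direction $d$ on which $l$ hangs from $p$; that region is exactly what the root \textit{Internal} node of the returned sub-tree represents.

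\textbf{The subdivision loop (case (ii)).} Here $l.key \neq newKey$ and $createNode$ repeatedly splits the region $R_0$ of $l$. I would formalize this as a loop carrying the invariant: after $k$ steps we hold an \textit{Internal} node $n_k$ whose region $R_k$ is the $2^{-k}$-scaled quadrant of $R_0$ reached by the common routing prefix of $l.key$ and $newKey$, $R_k$ contains both points, and $n_k$ has been installed as the appropriate descendant inside the sub-tree rooted at $n_0$ with every already-created sibling set to \textit{Empty}. Each iteration computes, via $getQuadrant$, the quadrants of $l.key$ and of $newKey$ within $R_k$; if they coincide, a child \textit{Internal} $n_{k+1}$ for that quadrant is created, the other three children of $n_k$ are set to \textit{Empty}, and the loop continues; if they differ, a \textit{Leaf} holding $l.key$ (either $l$ itself or a fresh copy) and the new \textit{Leaf} holding $newKey$ with $value$ are installed in their two distinct quadrants of $n_k$, the remaining two children of $n_k$ become \textit{Empty}, and the loop stops. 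On exit, $n_k$ is the common parent of the two key-bearing \textit{Leaf} nodes, and $createNode$ returns $n_0$, which is therefore a sub-tree containing both the $l.key$ node and the $newKey$ node under one parent — the second disjunct. A companion observation, immediate from the construction, is that $n_0$'s region equals $R_0$, so splicing $n_0$ in place of $l$ under $p$ preserves the space hierarchy.

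\textbf{Termination, the main obstacle.} The one non-routine point is that the subdivision loop halts. Since $\langle l.keyX, l.keyY\rangle \neq \langle newKeyX, newKeyY\rangle$ in case~(ii), set $\varepsilon = \max(|l.keyX - newKeyX|,\, |l.keyY - newKeyY|) > 0$. The width and height of $R_k$ are $w\,2^{-k}$ and $h\,2^{-k}$, where $w,h$ are the dimensions of $R_0$; once $\max(w,h)\,2^{-k} < \varepsilon$ the cell $R_k$ cannot contain both points, contradicting the loop invariant unless $getQuadrant$ has already sent them to different quadrants. Hence the loop terminates after at most $O(\log(\max(w,h)/\varepsilon))$ steps. (The strict "$<$" comparisons against the midpoints in $getQuadrant$ are deterministic; they only fix which side a boundary point is assigned to and do not affect this bound.) The remaining work — verifying that every non-key child created along the chain is \textit{Empty}, and bookkeeping the nested quadrant indices so that the "same parent" claim is precise — follows mechanically from the construction and $getQuadrant$; I do not anticipate difficulty there beyond care with indices, so the geometric separation/termination argument, together with the correct treatment of the $moved$-leaf edge case in case~(iii), is the crux.
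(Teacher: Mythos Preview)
Your proposal is correct in spirit and considerably more thorough than what the paper actually does. In the appendix the paper's entire argument for this lemma is, verbatim, ``Though we have not presented the details of the createNode function, our implementation guarantee that the post-condition must be satisfied.'' In other words, the paper treats the post-condition as a black-box specification of an unshown subroutine and proves nothing about its internals; the surrounding text only checks the \emph{pre}-conditions (that $p$ is \textit{Internal} and once contained $l$) by citing the post-conditions of $find(keys)$. You, by contrast, reconstruct a plausible implementation from the prose description in Section~2 and the inline comment at line~\ref{cas: insert createNode}, and then verify the post-condition constructively: a case split on the class of $l$, a loop invariant for the recursive subdivision, and a geometric termination bound via $\varepsilon = \max(|l.keyX-newKeyX|,|l.keyY-newKeyY|)$. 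Your treatment of the edge case~(iii), where $l$ is a \textit{Leaf} with $l.key = newKey$ but $moved(l)$ holds, is a genuine observation the paper glosses over; without the short-circuit you posit, the subdivision loop would indeed fail to terminate there. What your approach buys is an actual proof rather than an appeal to unseen code; what the paper's approach buys is brevity, at the cost of leaving the lemma essentially as an axiom about the implementation.
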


Using prior conditions, we derive some invariants during concurrent executions and prove that there are three kinds of successful CAS transitions. We put successful flag operations that attach $op$s on nodes at the beginning of each CAS transition. We say every successive replace operations that read $op$ belongs to it and follows the flag operations. 
 
Let $flag_{0}$, $flag_{1}$, ..., $flag_{n}$ be a sequence of successful flag operations. $flag_{i}$ reads $pOp_{i}$ and attaches $op_{i}$. $replace_{i}$ and $unflag_{i}$ read $op_{i}$ and follows it. Therefore, we say $flag_{i}$, $replace_{i}$, and $unflag_{i}$ belong to the same $op$. In addition, if there are more than one replace operation belongs to the same $op_{i}$, we denote them as $replace_{i}^{0}$, $replace_{i}^{1}$, ..., $replace_{i}^{n}$ ordered by their successful sequence. Similarly, if there is more than one flag operation that belonging to $op_{i}$ on different nodes, we denote them as $flag_{i}^{0}$, $flag_{i}^{1}$, ..., $flag_{i}^{n}$. A similar notation is used for $unflag_{i}$. 

The following lemmas prove the correct ordering of three different transitions.

\begin{lemm}
For a new node $n$:
\begin{enumerate}
\item It is created with a \textit{Clean} $op$.
\item $rflag$, $iflag$, $mflag$ or $cflag$ succeeds only if $n$'s $op$ is \textit{Clean}.
\item $unflag$ succeeds only if $n$'s $op$ is \textit{Substitute} or \textit{Move}.
\item Once $n$'s $op$ is \textit{Compress}, its $op$ will never be changed.
\end{enumerate}
\end{lemm}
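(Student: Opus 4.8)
The plan is to prove all four parts by a static analysis of every write to the \emph{op} field of an \textit{Internal} node, using the single fact that such writes occur only through \texttt{helpFlag}, i.e.\ through a \texttt{CAS}, and that a \texttt{CAS} succeeds exactly when the cell holds (a reference equal to) the expected operand. Part~1 is immediate: the \textit{Internal} constructor initializes \textit{op} to \texttt{new Clean()} (line~\ref{quadboost: structure Internal op Clean}), and no thread can observe the node before that initializer runs, so every \textit{Internal} node starts in the \textit{Clean} state. (The statement is about \textit{Internal} nodes only, since a \textit{Leaf} carries a \textit{Move}-typed field initialized to \texttt{null} and an \textit{Empty} node has none.)

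The heart of the argument is one bookkeeping claim: enumerate all call sites of \texttt{helpFlag(node, oldOp, newOp)} --- recall the signature forces \texttt{node} to be \textit{Internal} --- and classify each by the \emph{class} of the expected operand \texttt{oldOp}. The \emph{flag} sites are: \texttt{iflag} in \textit{insert} (line~\ref{quadboost: insert helpFlag}), \texttt{rflag} in \textit{remove} (line~\ref{quadboost: remove helpFlag}), and \texttt{cflag} in \textit{compress} (line~\ref{quadboost: compress check}), each textually guarded by \texttt{pOp.class == Clean}; the first \texttt{mflag} in \textit{move} (line~\ref{quadboost: move helpFlag}), reached only under \texttt{!iFail \&\& !rFail}, which forces \texttt{iOp} and \texttt{rOp} to be \textit{Clean}; and the second \texttt{mflag} inside \textit{helpMove} (line~\ref{quadboost: helpMove flag begin}), whose expected operand is \texttt{op.oldIOp} or \texttt{op.oldROp}. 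The \emph{unflag} sites are: the one in \textit{helpSubstitute} (line~\ref{quadboost: helpSubstitute helpFlag}), with a \textit{Substitute} expected operand, and those in \textit{helpMove} (lines~\ref{quadboost: helpMove reverse begin}--\ref{quadboost: helpMove reverse end}), with a \textit{Move} expected operand. No other routine writes an \textit{Internal} node's \textit{op}: \texttt{helpReplace} and \texttt{helpCompress} only swing child pointers, and the plain write \texttt{op.oldRChild.op = op} in \textit{helpMove} targets a \textit{Leaf}. For the \textit{helpMove} flags I would invoke the earlier lemma on the contents of freshly created \textit{op} objects: it says that a \textit{Move} \textit{op}'s \texttt{oldIOp} and \texttt{oldROp} are exactly the \texttt{iOp}, \texttt{rOp} returned by the \texttt{findCommon}/\texttt{continueFindCommon} iteration that built it, hence \textit{Clean}-typed by the guards just cited. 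Thus every flag uses a \textit{Clean}-class expected operand and every unflag a \textit{Substitute}- or \textit{Move}-class one --- which, combined with \texttt{CAS} semantics, is precisely Parts~2 and~3.

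Part~4 follows by a minimal-counterexample argument over the total order of successful \texttt{CAS}es on a fixed node's \textit{op} cell. A node enters the \textit{Compress} state only via a successful \texttt{cflag}, which installs a \textit{Compress} object $C$. If some later \texttt{CAS} on that cell succeeded, consider the first such: when it fires the cell still holds $C$, so its expected operand is $C$, of class \textit{Compress}. But the enumeration shows every \texttt{CAS} on an \textit{Internal} node's \textit{op} uses an expected operand of class \textit{Clean}, \textit{Substitute}, or \textit{Move} --- never \textit{Compress} --- a contradiction. Hence no \texttt{CAS} after the \texttt{cflag} succeeds and \textit{op} remains $C$ forever.

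I expect the only delicate point to be completeness and faithfulness of the enumeration of write sites --- especially justifying that the operands \texttt{op.oldIOp}/\texttt{op.oldROp} passed by the second \texttt{mflag} are the \textit{Clean}-checked values of the current move attempt (needing the lemma relating a fresh \textit{op}'s fields to the results of $find(keys)$ at the prior iteration) and confirming that \texttt{helpReplace}, \texttt{helpCompress}, and the \textit{Leaf} write \texttt{op.oldRChild.op = op} never touch an \textit{Internal} node's \textit{op}. Everything else is a direct consequence of the atomicity and success condition of \texttt{CAS}.
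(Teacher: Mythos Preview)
Your proposal is correct and follows essentially the same approach as the paper: enumerate the call sites of \texttt{helpFlag}, classify each by the class of its expected operand, and read off Parts~2--4 from the \texttt{CAS} success condition (with Part~1 immediate from the initializer). Your treatment is in fact more careful than the paper's, which does not explicitly address the second \texttt{mflag} inside \textit{helpMove} (the one using \texttt{op.oldIOp}/\texttt{op.oldROp}); you correctly trace those operands back through the \textit{Clean} guards via the lemma on the contents of a freshly created \textit{Move} object.
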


\begin{lemm}
For an \textit{Internal} node $n$, it never reuses an $op$ that has been set previously. 
\end{lemm}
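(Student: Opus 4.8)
I would prove the following slightly stronger statement, from which the lemma is immediate: \emph{for every \textit{Operation} object $\omega$ and every \textit{Internal} node $n$, at most one successful flag operation writes $\omega$ into $n.op$, and the constructor writes $\omega$ into $n.op$ only when $\omega$ is $n$'s own initial \textit{Clean}.} Since $n.op$ is initialized once and thereafter changes only through a flag operation (the observation that a flag attaches an $op$ on an \textit{Internal} node), this says that the value $\omega$ is written into $n.op$ at most once during the whole execution, hence never reused. The argument splits on the dynamic type of $\omega$. For $\omega$ a \textit{Substitute}, the object is produced by a single \texttt{new Substitute(\dots)} in \textit{insert}/\textit{remove} and is passed as the new value to exactly one \texttt{CAS}, the immediately following \texttt{helpFlag(p, pOp, op)}; inside \textit{helpSubstitute} it appears only as the old value. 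For $\omega$ a \textit{Compress}, it is created once in \textit{compress} and used as a new value only by the following \texttt{helpFlag}; \textit{helpCompress} performs a \emph{replace} only. For $\omega$ a \textit{Clean}, it is either a field initializer (never a \texttt{CAS} new value) or a fresh \texttt{new Clean()} evaluated inside one \emph{unflag}; in all three cases $\omega$ is the new value of at most one flag on \emph{any} node. This settles the non-\textit{Move} cases and records the fact I will reuse below.

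The real work is $\omega$ a \textit{Move} object, since \textit{helpMove} may run concurrently in several helping threads and each execution flags $\omega$'s parents with $\omega$. I would first establish two structural facts. (i) Every flag that writes $\omega$ into $n.op$ has a \emph{fixed} old value: $\omega.oldIOp$ when $n=\omega.iParent$, and $\omega.oldROp$ when $n=\omega.rParent$, and these two coincide in the branch guarded by \texttt{iOp != rOp \&\& ip == rp} (where $\omega.iParent=\omega.rParent$). This follows by enumerating the only sites that flag with a \textit{Move} new value: \texttt{helpFlag(ip, iOp, op)} and \texttt{helpFlag(rp, rOp, op)} in \textit{move}, and the two leading \texttt{helpFlag(\dots, op)} calls in \textit{helpMove}; every other occurrence of $op$ in \textit{helpMove} is either an old value of an unflag, a \emph{replace}, or the plain assignment \texttt{op.oldRChild.op = op} to a \textit{Leaf}. (ii) $\omega\neq\omega.oldIOp$ and $\omega\neq\omega.oldROp$, because \textit{findCommon}/\textit{continueFindCommon} read those two values from the parents' $op$ fields strictly before \texttt{new Move(\dots)} evaluates; fixing a creation timestamp $c(\cdot)$ on each \textit{Operation}, this gives $c(\omega.oldIOp),\,c(\omega.oldROp)<c(\omega)$, and in particular distinctness.

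The Move case then closes by well-founded induction on $c(\omega)$. Suppose toward a contradiction that $\omega$ is written into $n.op$ at two times $t_1<t_2$, and set $\omega' := \omega.oldIOp$ if $n=\omega.iParent$ and $\omega' := \omega.oldROp$ otherwise, so by (i) both writes use old value $\omega'$. Just after $t_1$ we have $n.op=\omega\neq\omega'$ by (ii), while just before $t_2$ we have $n.op=\omega'$, so some flag writes $\omega'$ into $n.op$ at a time in $(t_1,t_2)$; since $t_1$ is after $n$'s construction, this write is a \texttt{CAS}. Moreover $n.op=\omega'$ also held just before $t_1$, so $\omega'$ was written into $n.op$ at least twice. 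If $\omega'$ is $n$'s initial \textit{Clean}, then by the easy cases $\omega'$ is never a \texttt{CAS} new value, contradicting the \texttt{CAS} in $(t_1,t_2)$; otherwise both writes of $\omega'$ into $n.op$ are flags, so $\omega'$ is the new value of at least two flags on $n$, which contradicts the easy cases when $\omega'$ is non-\textit{Move} and the induction hypothesis at $\omega'$ (legitimate since $c(\omega')<c(\omega)$) when $\omega'$ is a \textit{Move}.

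I expect the main obstacle to be bookkeeping rather than ideas: exhaustively verifying that \emph{every} flag \texttt{CAS} with a given $op$-kind as new value arises from the listed sites, across all helping paths through \textit{help}, \textit{helpMove}, \textit{helpCompress}, and \textit{helpSubstitute} (which are not mutually recursive by the corresponding observation), and pinning down the program order ``read before create'' that yields $c(\omega.oldIOp),c(\omega.oldROp)<c(\omega)$. I would also invoke the easily checked auxiliary facts that an \textit{Operation}'s fields are assigned once at creation and never mutated, that flag operations touch only \textit{Internal} nodes, and that a new node starts with a \textit{Clean} $op$ (preceding lemma on new nodes).
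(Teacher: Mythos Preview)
Your argument is correct and in fact more careful than the paper's. The paper's proof is a short case analysis on the type of flag: for $iflag$, $rflag$, $cflag$, and $unflag$ it observes that the new value is a freshly allocated object created immediately before the single \texttt{CAS} that uses it, hence it cannot have been stored in $n.op$ before; for $mflag$ it repeats the same sentence (``a \textit{Move} $op$ is created \ldots\ before its invocation; because each $newOp$ is newly created, it has not been set before''). That last step is where the paper is informal: unlike the other kinds, a single \textit{Move} object is used as the new value by arbitrarily many \texttt{helpFlag} calls inside \texttt{helpMove} (from concurrent helpers), so ``freshly created'' does not by itself rule out two successful writes of the same \textit{Move} into the same $n.op$.

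Your proposal fills exactly this gap. The two structural observations you isolate---(i) every flag writing a \textit{Move} $\omega$ into $n.op$ uses the \emph{fixed} old value $\omega.oldIOp$ or $\omega.oldROp$, and (ii) these old values were read strictly before $\omega$ was allocated, hence are distinct from $\omega$---let you reduce a hypothetical double write of $\omega$ to a double write of an earlier-created $\omega'$, and the well-founded induction on creation time terminates in the non-\textit{Move} easy cases you already handled. What you gain over the paper is an actual argument for the \textit{Move} case; what the paper's approach buys is brevity, at the cost of leaving the reader to reconstruct precisely the induction you spell out. Your listed bookkeeping obligations (enumerating all flag sites, checking the read-before-create ordering through \texttt{findCommon}/\texttt{continueFindCommon}, and immutability of \textit{Operation} fields) are the right ones and are all routine in this code.
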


\begin{lemm}
$replace_{i}^{k}$ will not occur before $flag_{i}^{k}$ that belongs to the same $op$ that has been done.
\end{lemm}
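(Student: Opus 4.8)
The statement reduces to showing that no replace operation belonging to an $op$ is ever executed before the flag operation(s) that installed that $op$ on the relevant \textit{Internal} node have completed. I would prove this by case analysis on the type of $op$ ($Substitute$, $Compress$, or $Move$), and within each case by examining every code site that issues a replace. The organizing principle is that a thread can perform $replace_{i}^{k}$ only after it has obtained a reference to $op_{i}$, and this happens in exactly two ways: either (a) the thread created $op_{i}$ itself inside $insert$, $remove$, or $move$ and reached the replace site only after its own $helpFlag$ returned \textit{true}, so $flag_{i}^{0}$ is done; or (b) the thread read $op_{i}$ out of some \textit{Internal} node's $op$ field and then entered $help$/$helpSubstitute$/$helpCompress$/$helpMove$, in which case, by Observation~3 and the lemma that an \textit{Internal} node never reuses a previously set $op$, the only way that node could hold $op_{i}$ is that some $flag_{i}^{j}$ had already installed it.

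\textbf{The $Substitute$ and $Compress$ cases.} Here there is a single flag ($k=0$) and a single replace. For $Substitute$, the only replace is the $helpReplace$ inside $helpSubstitute$, which is reached either right after a successful $helpFlag$ in $insert$/$remove$ (so $flag_{i}^{0}$ has completed), or through a $help(op)$ call, which by the principle above requires that $op$ was read from a node and hence was installed by $flag_{i}^{0}$ beforehand. The $Compress$ case is identical, with the $helpReplace$ inside $helpCompress$ reached either after a successful $cflag$ in $compress$ or through a $help$/$helpCompress$ call on an $op$ read from a node. In both cases $replace_{i}^{0}$ strictly follows $flag_{i}^{0}$.

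\textbf{The $Move$ case.} A $Move$ $op$ may carry two flags, $flag_{i}^{0}$ and $flag_{i}^{1}$, installed (in success order, which by the $iFirst$-governed structure is also their program order: the first flag is taken in $move$ before $helpMove$, the second inside $helpMove$) on the two parents; when the two parents coincide there is only $flag_{i}^{0}$. All replaces of a $Move$ $op$ live inside $helpMove$ and are guarded by the predicate $doCAS$, which tests whether the second-flagged parent's $op$ equals $op$. By the no-reuse lemma this equality can hold only once $flag_{i}^{1}$ has completed (or, in the single-parent subcase, once the unique $flag_{i}^{0}$ has completed). Since $flag_{i}^{0}$ precedes $flag_{i}^{1}$ in time, every replace of the $Move$ $op$ --- the combined one when $oldKey$'s and $newKey$'s terminals coincide, or the two separate replaces on $iParent$ and $rParent$ otherwise --- executes only after both flags are done, hence after $flag_{i}^{k}$ for each applicable $k$. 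The plain write $op.oldRChild.op = op$ that precedes the replaces is not a flag (it targets a \textit{Leaf}, whereas flags touch only \textit{Internal} nodes) and is therefore irrelevant to the ordering.

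\textbf{Main obstacle.} The delicate point is the concurrency of helping inside $helpMove$: several threads may run $helpMove(op)$ simultaneously, and a $helpFlag$ CAS inside it may fail merely because a peer already installed $op$. I expect the bulk of the argument to lie in justifying the equivalence ``$doCAS$ evaluates to \textit{true} at some thread $\iff$ $flag_{i}^{1}$ has already been completed by some thread,'' which is precisely where the no-reuse lemma together with Observation~3 (only flags install $op$s, and only on \textit{Internal} nodes) does the real work. Once that equivalence is established, the ordering claim for all three transition types follows uniformly, and the remaining bookkeeping --- matching the superscript indices of the replaces to those of the flags in the split-parent $Move$ subcase --- is routine.
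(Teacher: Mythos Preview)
Your approach is essentially the same as the paper's: a case split on the type of $op$ (\textit{Substitute}, \textit{Compress}, \textit{Move}), and within each case an enumeration of all call sites of the corresponding help routine, distinguishing the ``creator'' path (the replace is reached immediately after the thread's own successful $helpFlag$) from the ``helper'' path (the $op$ was read out of an \textit{Internal} node's $op$ field, so by Observation~3 and the no-reuse lemma a flag must already have installed it). This is exactly how the paper proves the three constituent lemmas that together yield the corollary corresponding to this statement.

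One noteworthy difference: in the \textit{Move} case you go further than the paper's own proof of the analogous appendix lemma. The paper there only argues that \emph{some} $mflag$ has succeeded before $helpMove$ is entered (either because the creator just flagged, or because a helper read $op$ from a node), and leaves the finer point that \emph{both} flags precede every $mreplace$ to the later transition lemma. You instead argue it directly via the $doCAS$ guard together with the no-reuse lemma, which is cleaner and actually matches the superscript-$k$ phrasing of the statement more tightly. Your identification of the ``$doCAS$ true $\iff$ $flag_{i}^{1}$ done'' equivalence as the crux is apt; once that is in hand, the rest is bookkeeping, as you say.
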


\begin{lemm}
The $flag\rightarrow replace \rightarrow unflag$ transition occurs when $rflag_{i}$, $iflag_{i}$ or $mflag_{i}$ succeeds, and it has following properties:
\begin{enumerate}
	\item $replace_{i}$ never occurs before $flag_{i}$.
	\item $flag_{i}^{k}, 0 \leq k < \vert flag_{i} \vert$ is the first successful flag operation on $op_{i}.parent^{k}$ after $T_{i1}$ when $pOp_{i}^{k}$ is read.
	\item $replace_{i}^{k}, 0 \leq k < \vert replace_{i} \vert$ is the first successful replace operation on $op_{i}.parent^{k}$ after $T_{i2}$ when $op_{i}.oldChild^{k}$ is read.
	\item $replace_{i}^{k}, 0 \leq k < \vert replace_{i} \vert$ is the first successful replace operation on $op_{i}.parent^{k}$ that belongs to $op_{i}$.
	\item $unflag_{i}^{k}, 0 \leq k < \vert unflag_{i} \vert$ is the first successful unflag operation on $op_{i}.parent^{k}$ after $flag_{i}^{k}$.
	\item There is no successful unflag operation that occurs before $replace_{i}$.
	\item The first replace operation on $op_{i}.parent^{k}$ that belongs to $op_{i}$ must succeed. 
\end{enumerate} 
\end{lemm}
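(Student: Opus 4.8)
The plan is to prove all seven items together by analysing, for each $k$ with $0\le k<\vert flag_i\vert$, the life cycle of the operation object $op_i$ on the \textit{Internal} node $op_i.parent^k$, using only the expected-value semantics of \texttt{CAS} together with the earlier lemmas: a flag succeeds on a node only while its $op$ is \textit{Clean} and an unflag only while it is \textit{Substitute} or \textit{Move}; no \textit{Internal} node re-attaches an $op$ it has held before; and $replace_i^k$ does not precede $flag_i^k$.

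First I would establish two window invariants for $op_i.parent^k$. (i) From the instant $flag_i^k$ succeeds until the first successful unflag of $op_i$ on that node, its $op$ field equals $op_i$; this is immediate, since flags fail on a non-\textit{Clean} node and, by no re-use, the field never drifts back to $op_i$, so only an unflag can move it away. (ii) From $T_{i2}$, the \texttt{find} read of $op_i.oldChild^k$, until $replace_i^k$ succeeds, the child of $op_i.parent^k$ in the direction of $op_i.oldChild^k$ is still $op_i.oldChild^k$; this splits at $flag_i^k$. On $[T_{i2},flag_i^k]$ the node carries the \textit{Clean} $op$ $pOp_i^k$ throughout (by no re-use, since $flag_i^k$'s \texttt{CAS} succeeds with expected value $pOp_i^k$), and a \textit{Clean} $op$ triggers no replace, so the child is untouched; on $[flag_i^k,replace_i^k]$ the $op$ is $op_i$, every replace reads the node's $op$ first and hence belongs to $op_i$, so the only replace that can fire there is a $replace_i^k$, and the nodes \texttt{helpReplace} ever installs as a new child ($op_i$'s new node, a fresh subtree produced by \texttt{createNode}, or a fresh \textit{Empty}) are never installed a second time, ruling out an ABA. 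This ``no ABA of child pointers'' fact is the one auxiliary statement I would isolate and prove separately, from Observation~1, the lemma describing what data an $op$ stores, and the freshness of everything \texttt{createNode} returns.

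Granting the invariants, items 1--5 are short. Item~1 is the already-proved lemma that $replace_i$ does not precede $flag_i$. For item~2, $flag_i^k$ is a \texttt{CAS} whose expected value is the $pOp_i^k$ read at $T_{i1}$; by no re-use, any earlier successful flag on $op_i.parent^k$ after $T_{i1}$ would leave the field at a value that never again equals $pOp_i^k$, so $flag_i^k$ would fail, hence it is the first successful flag after $T_{i1}$. Items~3 and~4 are the replace analogue: $replace_i^k$ is a \texttt{CAS} with expected value $op_i.oldChild^k$ read at $T_{i2}$, and by invariant~(ii) that pointer is unchanged on $[T_{i2},replace_i^k)$, so $replace_i^k$ is simultaneously the first successful replace after $T_{i2}$ and the first successful replace belonging to $op_i$. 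Item~5 is invariant~(i) together with no re-use: the field is $op_i$ right up to the first successful unflag and never returns, so that unflag is the unique, and in particular the first, successful unflag of $op_i$ on $op_i.parent^k$.

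Items~6 and~7 are the delicate part, and I expect the two-parent \textit{move} case to be the main obstacle. For \textit{insert}/\textit{remove} the argument is clean: inside \texttt{helpSubstitute} the unflag \texttt{CAS} on $op_i.parent^k$ is issued only after \texttt{helpReplace} returns, and, by invariants~(i) and~(ii), from $flag_i^k$ until the first unflag of $op_i$ the child in the relevant direction is $op_i.oldChild^k$ and then --- after at most one replace --- $op_i$'s new node, with no third value; so when \texttt{helpReplace} returns, $replace_i^k$ has already fired (either this thread's \texttt{CAS} succeeded, or it failed or was skipped because a concurrent $replace_i^k$ had already occurred); since every unflag of $op_i$ is issued by a thread running \texttt{helpSubstitute}$(op_i)$, every unflag attempt is preceded by a completed $replace_i^k$, which is item~6, and then the earliest replace attempt belonging to $op_i$ still sees $op=op_i$ and child $=op_i.oldChild^k$, so by \texttt{CAS} semantics the first such attempt to execute its instruction succeeds, which is item~7. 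For \textit{move} I would redo this through \texttt{helpMove}, using Observations~4 and~5: the helper's write to $op_i.allFlag$ and its write of $op_i$ into $op_i.oldRChild.op$ precede every replace performed for $op_i$, which precedes the reverse-order unflags of the two parents; a helper reaches an unflag of $op_i$ only after observing $allFlag$ set (so the replaces are at least in progress) or after the corresponding parent has already been unflagged, and by invariant~(i) and no re-use the latter can only happen after the replace on that parent has fired. A case split on the relative order of the two flags, the two replaces and the two unflags in \texttt{helpMove}, and on whether the spatial order placed $iParent$ or $rParent$ first, then yields items~6 and~7 in this case as well; the abortive branch --- the second flag fails and the first parent is merely unflagged, with no replace --- is not an occurrence of the $flag\rightarrow replace\rightarrow unflag$ transition and is treated separately. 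Observation~6, that these helpers are not mutually recursive, is what makes the reasoning about the state ``after \texttt{helpReplace} returns'' well-founded.
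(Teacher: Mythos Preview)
Your proposal is correct and follows essentially the same approach as the paper: both rely on the no-reuse lemma for the $op$ field, the freshness of every \texttt{newChild} installed by \texttt{helpReplace} (your ``no ABA of child pointers''), and the program-order fact that every helper issues its unflag only after its replace attempt. Your packaging of items 2--5 through two explicit window invariants is a bit cleaner than the paper's item-by-item contradiction arguments, and your treatment of the \textit{move} case for item~6 (via $allFlag$, the spatial-order case split, and separating off the abortive $flag\rightarrow unflag$ branch) is more detailed than the paper's, but the underlying reasoning is the same.
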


\begin{lemm}
The $flag\rightarrow replace$ transition occurs only when $cflag_{i}$ succeeds, and it has following properties:
\begin{enumerate}
	\item $creplace_{i}$ never occurs before $cflag_{i}$.
	\item $cflag_{i}$ is the first successful flag operation on $op_{i}.parent$ after $T_{i1}$ when $pOp_{i}$ is read.
	\item $creplace_{i}$ is the first successful replace operation on $op_{i}.grandparent$ after $T_{i2}$ when $op_{i}.parent$ is read.
	\item $creplace_{i}$ is the first successful replace operation on $op_{i}.grandparent$ that belongs to $op_{i}$.
	\item There is no unflag operation after $creplace_{i}$.
	\item The first replace operation on $op_{i}.grandparent$ that belongs to $op_{i}$ must succeed.
\end{enumerate}
\end{lemm}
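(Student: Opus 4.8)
The plan is to mirror, nearly line for line, the proof of the $flag\rightarrow replace\rightarrow unflag$ transition lemma, specializing every step to the \textit{Compress} case. First I would pin down that a $flag\rightarrow replace$ transition with no trailing $unflag$ arises exactly from a successful $cflag_{i}$. By the lemmas recording which flags succeed and from which state, a successful $iflag$ or $rflag$ installs a \textit{Substitute} $op$ and a successful $mflag$ installs a \textit{Move} $op$; in all three cases the matching help routine (\texttt{helpSubstitute}, \texttt{helpMove}) performs an $unflag$ after its $replace$, so these are $flag\rightarrow replace\rightarrow unflag$ transitions. The only remaining flag is $cflag$, whose $op$ is \textit{Compress}, and \texttt{helpCompress} is literally a single \texttt{helpReplace} with no \texttt{helpFlag} afterwards. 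This observation, together with the lemma that once a node's $op$ is \textit{Compress} it is never changed again, already discharges item~5: no $unflag$ ever reads $op_{i}$.

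Items~1 and~2 are then routine. For item~1, \texttt{helpCompress(op)} is invoked either directly after a successful $cflag_{i}$ in \texttt{compress}, or from \texttt{help}/\texttt{continueFind}/\texttt{continueFindCommon} only after some \textit{Internal} node's $op$ field has been read equal to $op$; in the latter case the $cflag$ that attached $op$ there must precede that read, hence precedes $creplace_{i}$ in every case. For item~2, $cflag_{i}$ is $\texttt{CAS}(op_{i}.parent.op,\, pOp_{i},\, op_{i})$ where $pOp_{i}$ was read at $T_{i1}$; an intervening successful flag on $op_{i}.parent$ would, by the no-reuse-of-$op$ lemma, have installed an $op$ that is never installed again, so $op_{i}.parent.op$ could not have returned to $pOp_{i}$, contradicting the success of $cflag_{i}$.

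The heart of the proof is items~3, 4 and~6, which I would reduce to a single stability claim: the child pointer of $op_{i}.grandparent$ that holds $op_{i}.parent$ equals $op_{i}.parent$ throughout the interval from $T_{i2}$, the time \texttt{find}/\texttt{findCommon} read it (this read is supplied by the lemma on the information stored in a freshly created $op$), until the first $replace$ belonging to $op_{i}$. To establish it I would argue: (a) among all replace operations only $creplace$s can target an \textit{Internal}-node child slot, since $ireplace$ and $rreplace$ replace a terminal and the replaces inside \texttt{helpMove} act on the \textit{Leaf} children $op.oldIChild$, $op.oldRChild$; (b) any $creplace$ hitting this slot belongs to a \textit{Compress} $op$ with \textit{parent} $op_{i}.parent$, and such an $op$ is attached by a $cflag$ that \texttt{CAS}es $op_{i}.parent.op$ from a \textit{Clean} value; (c) no competing such $cflag$ can have succeeded before $cflag_{i}$ (otherwise $cflag_{i}$'s \texttt{CAS} from $pOp_{i}$ fails) and none can succeed afterwards, since $op_{i}.parent.op$ then equals $op_{i}$, a \textit{Compress} $op$ that is never changed; (d) because new \textit{Internal} nodes are always created fresh (cf.\ the \texttt{createNode} lemma), once this slot leaves $op_{i}.parent$ it never returns. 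Consequently every successful modification of the slot in $[T_{i2},\, creplace_{i}]$ belongs to $op_{i}$, the chronologically first such $replace$ executes against $op_{i}.parent$ and hence succeeds (item~6), that successful one is exactly $creplace_{i}$ (item~4), and no other successful replace on $op_{i}.grandparent$ happens in this window (item~3).

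I expect the stability claim in the last paragraph to be the main obstacle: carefully ruling out that a second compressor, or a \textit{move} whose \textit{rPath} or \textit{iPath} runs through $op_{i}.grandparent$ and $op_{i}.parent$, can swing that link away from $op_{i}.parent$ --- and conceivably back --- before $creplace_{i}$. Everything else is case analysis over the code paths of \texttt{compress}, \texttt{helpCompress}, \texttt{continueFind}, and \texttt{continueFindCommon}, plus direct appeals to the already-established observations and the flag/$op$ lemmas, exactly as in the $flag\rightarrow replace\rightarrow unflag$ case.
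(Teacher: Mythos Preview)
Your proposal is correct and follows essentially the same approach as the paper: items~1, 2, and~5 are handled identically via the flag-before-replace corollary, the no-$op$-reuse lemma, and the irreversibility of a \textit{Compress} $op$, respectively. For items~3, 4, and~6 the paper proves each separately by contradiction (using, at the key step of item~6, exactly your observation that $iflag/rflag/mflag$ cannot change the link from $op_i.grandparent$ to $op_i.parent$ because their $oldChild$ is a \textit{Leaf}/\textit{Empty}), whereas you bundle them into one explicit stability claim about that child slot---a cosmetic reorganization of the same argument.
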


\begin{lemm}
For the $flag\rightarrow unflag$ transition, it only results from $mflag$ (Suppose $iFirst$ is false) when:
\begin{enumerate}
	\item $unflag_{i}$ is the first successful unflag operation on $op_{i}.rParent$.	
	\item The first flag operation on $op_{i}.iParent$ must fail, and no later flag operation succeeds.
	\item $op_{i}.iParent$ and $op_{i}.rParent$ are different.
\end{enumerate}
\end{lemm}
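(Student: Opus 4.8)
The plan is to read the \emph{flag}$\rightarrow$\emph{unflag} transition directly off the ``aborted move'' branch of \texttt{helpMove}, after first ruling out that any of the other three flags can produce a bare \emph{flag}$\rightarrow$\emph{unflag}. For \emph{cflag}: it installs a \textit{Compress} $op$, and by the earlier lemma on new nodes a \textit{Compress} $op$ is never changed again, so no \emph{unflag} can ever follow it (the transition it induces is \emph{flag}$\rightarrow$\emph{replace}). For \emph{iflag} and \emph{rflag}: they install a \textit{Substitute} $op$, \texttt{helpSubstitute} calls \texttt{helpReplace} before \texttt{helpFlag}-to-\textit{Clean}, and by the earlier \emph{flag}$\rightarrow$\emph{replace}$\rightarrow$\emph{unflag} lemma no successful \emph{unflag} precedes the \emph{replace} while the first \emph{replace} belonging to $op_i$ succeeds; hence the induced transition always contains a genuine \emph{replace}. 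So only an \emph{mflag} can create a bare \emph{flag}$\rightarrow$\emph{unflag}.

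Next I would isolate, inside \emph{mflag}, the branch that skips the \emph{replace}. Fix the \textit{Move} object $op_i$ and invoke the WLOG hypothesis $iFirst=\mathit{false}$; then (from \texttt{move} and \texttt{helpMove}, together with the fact that any helper that reaches \texttt{helpMove}$(op_i)$ must already have seen $op_i$, which is only possible once $op_i.rParent$ carries $op_i$) \emph{flag}$_i$ is the flag on $op_i.rParent$, and the further flag attempts are exactly the calls \texttt{helpFlag}$(op_i.iParent, op_i.oldIOp, op_i)$ in \texttt{helpMove}. Now split on whether $op_i.allFlag$ ever turns true. It is set true only inside the \texttt{if(doCAS)} block whose guard is $op_i.iParent.op==op_i$, and that same block performs the \texttt{helpReplace}(s); so if $op_i.allFlag$ becomes true the transition again contains a successful \emph{replace} belonging to $op_i$ (by the \emph{flag}$\rightarrow$\emph{replace}$\rightarrow$\emph{unflag} lemma applied to \emph{mflag}), hence is not a bare \emph{flag}$\rightarrow$\emph{unflag}. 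Therefore a \emph{flag}$\rightarrow$\emph{unflag} transition forces $op_i.allFlag$ to remain false; since $allFlag$ is monotone (observation on \textit{Move} fields) and turns true precisely when some helper reads $op_i.iParent.op==op_i$, this means no \texttt{helpFlag}$(op_i.iParent, op_i.oldIOp, op_i)$ ever succeeds. Because $op_i.iParent.op$ equalled $op_i.oldIOp$ when $op_i$ was created (lemma on the contents of a fresh $op$) and an \textit{Internal} node never reuses an $op$ (the non-reuse lemma), the first such call must already find $op_i.iParent.op\neq op_i.oldIOp$, so it fails, and $op_i.iParent.op$ can never return to $op_i.oldIOp$, so no later call succeeds. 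That is property (2).

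Properties (1) and (3) then come out of the same branch. In the $iFirst=\mathit{false}$ arm of \texttt{helpMove}, with $allFlag$ false the \emph{replace} section is skipped and the only statement that can \emph{unflag} $op_i.rParent$ is \texttt{if(op.iParent!=op.rParent) helpFlag(op.rParent,op,new Clean())}; reaching it requires $op_i.iParent\neq op_i.rParent$, which is property (3). Right after \emph{flag}$_i$ we have $op_i.rParent.op==op_i$, so among the (possibly many concurrent) such \texttt{helpFlag} calls at least one succeeds; the one that succeeds installs a fresh \textit{Clean}, after which the non-reuse lemma forbids $op_i.rParent.op$ from ever being $op_i$ again, so no later one succeeds and exactly one does. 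Moreover, while $op_i.rParent.op==op_i$ any successful CAS on $op_i.rParent.op$ must CAS away from $op_i$ and hence belongs to $op_i$, so no other \emph{unflag} can succeed on $op_i.rParent$ between \emph{flag}$_i$ and that one. Thus that unique call is \emph{unflag}$_i$ and it is the first successful \emph{unflag} on $op_i.rParent$ after \emph{flag}$_i$, which is property (1); the case $iFirst=\mathit{true}$ is symmetric under exchanging $iParent$/$rParent$ and $oldIOp$/$oldROp$.

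The hard part will be making the second step rigorous: pinning down, in the presence of arbitrarily many helpers running \texttt{helpMove}$(op_i)$ concurrently, the equivalence ``$op_i.allFlag$ eventually true'' $\Leftrightarrow$ ``some \texttt{helpFlag}$(op_i.iParent,op_i.oldIOp,op_i)$ succeeds'' $\Leftrightarrow$ ``a \emph{replace} belonging to $op_i$ occurs'' --- in particular ruling out a helper that has observed $op_i.allFlag=\mathit{true}$ but $op_i.iParent.op\neq op_i$ racing ahead of the \emph{replace}, so that the \emph{flag}$\rightarrow$\emph{unflag} branch is taken only when there genuinely is no intervening \emph{replace}. This is exactly where the monotonicity of $allFlag$, the idempotence of the CAS-based helper routines, and the \textit{Internal}-node non-reuse lemma do the real work; the remainder is a direct walk through the \texttt{move}/\texttt{helpMove}/\texttt{helpSubstitute} code.
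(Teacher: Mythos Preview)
Your proposal is correct and follows essentially the same approach as the paper: argue that if the second flag (on $op_i.iParent$ when $iFirst=\mathit{false}$) ever succeeds then some thread sees $doCAS=\mathit{true}$, sets $allFlag$, and performs a replace (which succeeds by the ``first replace must succeed'' corollary), so a bare $flag\rightarrow unflag$ forces the second flag to fail; properties~(1) and~(3) are then read off the unflag block of \texttt{helpMove} together with the $op$-non-reuse lemma. Your write-up is in fact more careful than the paper's on two points---you explicitly rule out $cflag/iflag/rflag$ as sources of a bare $flag\rightarrow unflag$, and you flag the genuine concurrency subtlety (helpers racing on $allFlag$ versus the replace) that the paper glosses over---but the underlying argument is the same.
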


\begin{claim}
There are three kinds of successful transitions belong to an $op$: (1) $flag\rightarrow replace\rightarrow unflag$, (2) $flag\rightarrow unflag$, (3) $flag\rightarrow replace$.
\end{claim}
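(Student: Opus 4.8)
The plan is to prove the claim by a case analysis on the kind of \emph{flag} operation that opens a transition, feeding each case into one of the four preceding lemmas that pin down the shape of a single transition. First I would observe that, by the observation that a flag attaches an $op$ on an \textit{Internal} node together with the lemma that an \textit{Internal} node never reuses a previously set $op$, every successful flag $flag_i$ installs a fresh $op_i$; and by the lemma that $replace_i^{k}$ never occurs before $flag_i^{k}$, plus the ``no unflag before replace'' clauses, every successful \emph{replace} and every successful \emph{unflag} reads, hence is attributable to, exactly one such $op_i$. Consequently the successful CAS operations partition into groups, one per successful flag, and it suffices to show that each group realizes one of the three listed patterns.

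Second, I would split on which of $iflag$, $rflag$, $mflag$, $cflag$ produced $flag_i$; by the lemma on a new node, a flag succeeds only from a \textit{Clean} $op$, and Figure~\ref{fig:State transition diagram} shows these are the only outgoing flags from \textit{Clean}. If $flag_i$ is an $iflag$ or $rflag$, the lemma characterizing the $flag\rightarrow replace\rightarrow unflag$ transition says the group consists exactly of $flag_i$, the (necessarily successful) first $replace_i$ belonging to $op_i$, and the first $unflag_i$ after it --- pattern (1). If $flag_i$ is a $cflag$, the lemma on the $flag\rightarrow replace$ transition gives that the group is $cflag_i$ followed by the successful $creplace_i$ with no subsequent unflag --- pattern (3).

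Third --- and this is where the real work lies --- I would handle $mflag$. Here one $op$ of class \textit{Move} may require two successful flags $flag_i^{0}, flag_i^{1}$ on the two parents, in the order fixed by $iFirst$. Using the $allFlag$ bookkeeping (the observations that $allFlag$/$iFirst$ are monotone and that a moved \textit{Leaf}'s $op$ is set before the replace on $op.iParent$, which precedes the replace on $op.rParent$) I would argue there are exactly two sub-cases. Either the second flag also succeeds, $allFlag$ is set, and the one or two $mreplace$ operations followed by the two $unflag$ operations in reverse order go through --- one $replace$ when the two terminals coincide, two otherwise, but in both situations a $flag\rightarrow replace\rightarrow unflag$ shape, so pattern (1). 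Or the second flag fails, in which case the lemma on the $flag\rightarrow unflag$ transition forces the only successful operations of $op_i$ to be $flag_i$ on the first parent and its $unflag$ --- pattern (2). The main obstacle is excluding a ``mixed'' partial $mflag$ outcome (a $replace$ without $allFlag$, or an unflag on only one of two distinct parents when $allFlag$ holds); I would rule these out by combining the clauses of those two lemmas with the observation that the help routines are not mutually recursive, so any concurrent helper re-enters \texttt{helpMove} from a consistent view of $op_i.allFlag$ and makes the same branch choices. Once the $mflag$ case is settled, patterns (1)--(3) exhaust all possibilities and the claim follows.
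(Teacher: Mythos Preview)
Your proposal is correct and follows essentially the same approach as the paper. The paper does not give a separate proof of this claim at all: it is stated immediately after Lemmas~7--9 (the $flag\rightarrow replace\rightarrow unflag$, $flag\rightarrow replace$, and $flag\rightarrow unflag$ lemmas) as a summary of what those three lemmas jointly establish, so your case split on $iflag/rflag/cflag/mflag$ feeding into those lemmas is exactly the intended reading, just made explicit.
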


Then, we prove that a quadtree maintains its properties during concurrent modifications.

\begin{defin}
Our quadtree has the following properties:
\begin{enumerate}
\item Two layers of dummy \textit{Internal} nodes are never changed.
\item An \textit{Internal} node $n$ has four children located in the direction $d \in \{nw, ne, sw, se\}$ according to their $\langle x, y, w, h\rangle$, or $\langle keyX, keyY\rangle$. 

For \textit{Internal} nodes residing on four directions:
\begin{itemize}
\item $n.nw.x = n.x$, $n.nw.y = n.y$; 
\item $n.ne.x = n.x + w/2$, $n.ne.y = n.y$;
\item $n.sw.x = n.x$, $n.sw.y = n.y + n.h/2$;
\item $n.se.x = n.x + w/2$, $n.se.y = n.y+h/2$,
\end{itemize}
All children have their $w' = n.w / 2$, $h'=n.h/2$. 

For \textit{Leaf} nodes residing on four directions:
\begin{itemize}
\item 
$n.x \leq n.nw.keyX < n.x + n.w/2$,\\
$n.y \leq n.nw.keyY < n.y + n.h / 2$; 
\item
$n.x + n.w / 2 \leq n.ne.keyX < n.x + n.w$,\\
$n.y \leq n.ne.keyY < n.y + n.h / 2$;
\item
$n.x \leq n.sw.keyX < n.x + n.w/2$,\\
$n.y + n.h/2 \leq n.sw.keyY < n.y + n.h$;
\item 
$n.x + n.w / 2 \leq n.se.keyX < n.x + n.w$,\\
$n.y + n.h / 2 \leq n.se.keyY < n.y + n.h$.
\end{itemize}
\end{enumerate}
\end{defin}

To help clarify a quadtree's properties during concurrent executions, we define \textit{active} set and \textit{inactive} set for different kinds of nodes. For an \textit{Internal} node or an \textit{Empty} node, if it is reachable from the root in $snapshot_{T_{i}}$, it is active; otherwise, it is inactive. For a \textit{Leaf} node, if it is reachable from the root in $snapshot_{T_{i}}$ and not moved, it is active; otherwise, it is \textit{inactive}. We say a node $n$ is moved in $snapshot_{T_{i}}$ if the function moved(n) returns true at $T_{i}$. We denote $path(keys^{k}), 0 \leq k < \vert replace_{i} \vert$ as a stack of nodes pushed by $find(keys)$ in a snapshot. We define $physical\_path(keys^{k})$ to be the path for $keys^{k}$ in $snapshot_{T_{i}}$, consisting of a sequence of \textit{Internal} nodes with a \textit{Leaf} node or an \textit{Empty} node at the end. We say a subpath of $path(keys^{k})$ is an $active\_path$ if all nodes from the root to node $n \in path(keys^{k})$ are active. Hence a $physical\_path(keys^{k})$ is active only if the end node is not moved.

\begin{lemm}
Two layers of dummy nodes are never changed.
\end{lemm}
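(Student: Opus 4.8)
The plan is to prove the stronger invariant that throughout the execution the root is never touched at all---neither its \textit{op} field nor any of its four child pointers---and that its four \textit{Internal} children are permanent members of the tree, that is, none of them is ever redirected away. This is exactly the statement that the two dummy layers never change, and it is the form that survives an induction. I would run the induction over the execution prefix, equivalently over the sequence of successful CAS steps, because a node is ``changed'' only by a successful flag (a CAS on its \textit{op}) or a successful replace (a CAS on one of its children), and a node leaves the tree only when a successful replace on its unique parent redirects the incoming pointer. The base case is immediate: the initialization code explicitly builds the root, its four \textit{Internal} children, and the terminal layer of \textit{Empty} nodes.

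For the inductive step I would show that the next successful CAS can neither target the root nor remove a root's child. Every successful replace is issued by \texttt{helpReplace} from inside \texttt{helpSubstitute}, \texttt{helpMove}, or \texttt{helpCompress}, so its target is $op.parent$ for a \textit{Substitute}, $op.iParent$ or $op.rParent$ for a \textit{Move}, or $op.grandparent$ for a \textit{Compress}. By the $op$-creation lemma and the code of \textit{insert}, \textit{remove}, and \textit{move}, the first three of these are nodes popped from a \textit{path}, \textit{rPath}, or \textit{iPath} returned by some $find(keys)$; by the post-condition of $find(keys)$ on the structure of that path (size at least two, a chain of descendants whose bottom is the root) together with the inductive hypothesis that the root's children are \textit{Internal}, such a popped node is the parent of a \textit{Leaf} or \textit{Empty} terminal, hence the terminal lies at least two levels below the root and the popped parent is not the root. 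The last case is excluded directly: in \texttt{compress} the object \texttt{new Compress(gp, p)} is created only after the guard \texttt{if (gp == root) return}, so $op.grandparent$ is never the root. Consequently no replace ever CAS's one of the root's child pointers; and because the only operation that redirects a pointer away from an \textit{Internal} node is a \textit{Compress} (\textit{Substitute} and \textit{Move} replace only terminal \textit{Leaf}/\textit{Empty} children, by the post-condition of $find$), removing a root's child would again force the root into the $op.grandparent$ slot and is therefore impossible. The same reasoning rules out flag operations on the root: \textit{iflag}, \textit{rflag}, and \textit{mflag} flag the parent of a terminal returned by $find(keys)$---not the root---and \textit{cflag} flags the node \texttt{p} inside \texttt{compress}, which is either the parent of a freshly removed terminal or a \texttt{gp} that has already passed the \texttt{!= root} test, hence again not the root.

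Combining the two phases, the root is never modified and its four \textit{Internal} children are never removed or re-pointed, which is the claim; this argument also supplies the detailed justification of Observation~2. The \textit{op} fields and the subtrees hanging below the root's children are of course free to change as keys are inserted and removed, but that is governed by the second quadtree property rather than the first, and does not alter the identity, class, or spatial fields of the dummy nodes.

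The main obstacle is the apparent circularity: the step ``$find(keys)$ never returns the root as the parent of a terminal'' relies on the very invariant being proved, namely that the root's children stay \textit{Internal}. This is resolved precisely by the induction over the execution prefix, together with the fact that the traversal routines $find$, $continueFind$, $findCommon$, and $continueFindCommon$ perform no CAS---they only read---so at the instant such a routine reads \texttt{root.nw}, \texttt{root.ne}, and so on, the current snapshot already satisfies the inductive hypothesis and the popped parents have the claimed depth. One must also be careful to enumerate every site that issues a replace---in particular the two \texttt{helpReplace} calls inside \texttt{helpMove} and the combined single-CAS case in which $oldKey$'s and $newKey$'s terminals coincide---and to invoke the $op$-creation lemma so that each op's $parent$ and $grandparent$ fields are pinned to an actual $find(keys)$ result rather than to a node chosen adversarially.
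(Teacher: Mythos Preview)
Your proposal is correct and follows essentially the same skeleton as the paper's proof: an induction over the sequence of successful CAS steps, using the $op$-creation lemma to conclude that \textit{Substitute}/\textit{Move} replace operations target only \textit{Leaf}/\textit{Empty} children (hence never unlink the second dummy layer), and the explicit \texttt{gp == root} guard in \texttt{compress} to rule out a \textit{Compress} replace on the root. The paper's argument is considerably terser and speaks only about replace CASes, whereas you additionally verify that no flag CAS ever targets the root's \textit{op} field and you make explicit the apparent circularity between the $find$ post-condition $\mathit{size}(\mathit{path})\ge 2$ and the dummy-layer invariant, resolving it by the induction hypothesis; these are genuine clarifications rather than a different method.
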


\begin{lemm}
Children of a node with a \textit{Compress} $op$ will not be changed.
\end{lemm}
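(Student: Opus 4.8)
The plan is to prove the stronger statement that, from the moment an \textit{Internal} node $n$ acquires a \textit{Compress} op (necessarily as that op's $parent$ field), the four children of $n$ are all \textit{Empty} nodes and are never changed again; the lemma is the ``never changed again'' half of this. Let $T_0$ be the instant $n.op$ becomes a \textit{Compress} object $op$. Reading off the \texttt{compress} function, $op$ (with $op.parent=n$) is installed by a successful \texttt{helpFlag}$(n,pOp,op)$, and immediately before that CAS the guard \texttt{check}$(n)$ evaluated to true at some $T_c<T_0$ --- so all four children of $n$ are \textit{Empty} at $T_c$ --- while $pOp=n.op$ was read at some $T_r<T_c$. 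Because the CAS succeeds, $n.op=pOp$ at $T_0$, and since every unflag installs a fresh \textit{Clean} object and the only transitions into \textit{Clean} are unflags, $n.op$ equals $pOp$ throughout $[T_r,T_0)$; hence no \textit{flag} succeeds on $n$ during $[T_r,T_0)$. Combining this with the earlier lemmas that $n.op$ stays \textit{Compress} forever once it is \textit{Compress} and that $rflag$, $iflag$, $mflag$, $cflag$ succeed only when the target op is \textit{Clean}, the only \textit{flag} that ever succeeds on $n$ at a time $\ge T_r$ is the $cflag$ that installs $op$ at $T_0$ itself.

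Next I would induct over the \textit{replace} operations occurring at times $\ge T_c$, claiming that none of them changes a child of $n$; together with \texttt{check}$(n)$ at $T_c$ this yields that $n$'s children stay all-\textit{Empty} from $T_c$ onward. A \textit{replace} touching a child of $n$ has one of two forms. \textbf{Form A:} an $ireplace$, $rreplace$, or $mreplace$ belonging to a \textit{Substitute}/\textit{Move} op $op_i$ having $n$ among its parents. By the lemma ``$replace_i^k$ does not occur before the matching $flag_i^k$'', the successful flag installing $op_i$ on $n$ precedes this replace; being an $iflag$/$rflag$/$mflag$ rather than the $cflag$ at $T_0$, that flag must occur before $T_r$. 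But then $n.op=op_i\ne pOp$ right after it, so $n.op$ must leave $op_i$ before $T_r$, and the only exit from a \textit{Substitute}/\textit{Move} state is the corresponding unflag on $n$; by Property~6 of the $flag\to replace\to unflag$ transition (no unflag before the replace) the replace precedes that unflag, hence precedes $T_r<T_c$. Thus every Form-A replace on a child of $n$ is finished strictly before $T_c$. \textbf{Form B:} a $creplace$ belonging to a \textit{Compress} op $op_j$ with $n=op_j.grandparent$; this swings a child pointer of $n$ away from the \textit{Internal} node $op_j.parent$. At the time of this $creplace$ (which is $\ge T_c$) the inductive hypothesis says every child of $n$ is an \textit{Empty} node, so none equals the \textit{Internal} node $op_j.parent$, and \texttt{helpReplace} therefore finds no matching old child and returns without altering anything. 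In both forms no child of $n$ changes, closing the induction.

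The main obstacle is Form B: unlike a \textit{Substitute} or \textit{Move} replace, a $creplace$ affecting $n$ is guarded by a flag on a \emph{child} of $n$ (namely $op_j.parent$), not on $n$ itself, so the ``no successful flag on $n$'' machinery does not touch it directly; one must instead carry the invariant ``all children of $n$ are \textit{Empty}'' through the induction and lean on the \texttt{check}$(n)$ precondition together with the precise semantics of \texttt{helpReplace} (it acts only when it still sees the old child in place). A secondary subtlety is the window between \texttt{check}$(n)$ returning true and \texttt{helpFlag}$(n,\cdot,\cdot)$ succeeding: ruling out a child change sneaking into $[T_c,T_0]$ is exactly what the fact ``$n.op\equiv pOp$ on $[T_r,T_0)$'' buys, and it furnishes the base case of the induction. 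Everything else reduces to quoting the transition lemmas of the preceding subsection and the shapes of \texttt{compress} and \texttt{helpReplace}.
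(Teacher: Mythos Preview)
Your argument is correct and more careful than the paper's own treatment. The paper's proof consists of three sentences: it cites that $n.op$ never changes once set to \textit{Compress}, then asserts that the two transition lemmas (the $flag\to replace\to unflag$ and $flag\to replace$ lemmas) imply ``for a node flagged with \textit{Compress} $op$, only itself will be unlinked.'' That dispatches your Form~A cleanly---any $ireplace$/$rreplace$/$mreplace$ on a child of $n$ requires a \textit{Substitute} or \textit{Move} flag on $n$ itself, which cannot succeed while $n.op$ is \textit{Compress}---but it never explicitly treats your Form~B, where a $creplace$ belonging to some $op_j$ with $n=op_j.grandparent$ would alter a child of $n$ while flagging only $op_j.parent$, not $n$. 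Nothing in the $flag\to replace$ lemma, as stated, forbids that.

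Your move of strengthening the claim to ``all four children of $n$ are \textit{Empty} from $T_c$ onward'' and carrying it inductively over replace events is precisely what closes this hole: a Form-B $creplace$ needs an \textit{Internal} child of $n$ to match in \texttt{helpReplace}, and the invariant says none exists. In effect you are proving the paper's Lemmas~11 and~12 simultaneously, which sidesteps the near-circularity between them (the paper's separate proof of Lemma~12 merely notes that \texttt{check}$(n)$ is executed before the $cflag$, without arguing that the children remain \textit{Empty} through the CAS). One minor point worth tightening in a full writeup: \texttt{check}$(n)$ reads the four child pointers at four separate instants, so ``all children \textit{Empty} at $T_c$'' is not literally a single snapshot; but your Form-A conclusion already gives that no \textit{Substitute}/\textit{Move} replace touches $n$'s children during $[T_r,T_0)$, and $creplace$ only sends \textit{Internal} to \textit{Empty}, so each direction remains \textit{Empty} from its own read time through $T_0$, which furnishes a clean base for the induction.
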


\begin{lemm}
Only an \textit{Internal} node with all children \textit{Empty} could be attached with a \textit{Compress} $op$.
\end{lemm}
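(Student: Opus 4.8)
The plan is to trace the single program point at which a \textit{Compress} $op$ can be installed and then to show that at that very instant all four children of the flagged node are \textit{Empty}; that the flagged node is itself \textit{Internal} is immediate from the observation that a flag attaches an $op$ only on an \textit{Internal} node. A \textit{Compress} $op$ is put on a node only by a successful $cflag$, that is, a successful $helpFlag$ that installs a \textit{Compress} object; the only such call in the algorithm is $helpFlag(p, pOp, op)$ on line~\ref{quadboost: compress check} inside $compress$, where $op$ is the freshly built \textit{Compress} of line~\ref{quadboost: compress Compress}, so $op.parent = p$. Because of the short-circuiting disjunction on line~\ref{quadboost: compress check}, this $helpFlag$ runs only after $check(p)$ has returned \texttt{true}. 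Hence, for a successful installation there are times $T_{pOp} < T_{nw} < T_{ne} < T_{sw} < T_{se} < T_{cas}$ such that $pOp = p.op$ read on line~\ref{quadboost: compress p op} at $T_{pOp}$ is a \textit{Clean} object; for each $d \in \{nw, ne, sw, se\}$ the slot $p.d$ holds an \textit{Empty} node $e_{d}$ at the moment $T_{d}$ when $check$ reads it; and the CAS on $p.op$ that replaces $pOp$ by $op$ succeeds at $T_{cas}$. It remains to prove that $p.d = e_{d}$ at $T_{cas}$ for every $d$.

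First I would show that $p.op$ is \textit{Clean} throughout $[T_{pOp}, T_{cas})$: the successful CAS gives $p.op = pOp$ immediately before $T_{cas}$, and since also $p.op = pOp$ at $T_{pOp}$, the invariant that an \textit{Internal} node never reuses an $op$ (each \textit{Clean} being a distinct object) forces $p.op$ to equal $pOp$, hence to be \textit{Clean}, at every moment in between. Now fix $d$ and suppose for contradiction that some write to $p.d$ occurs after $T_{d}$ and strictly before $T_{cas}$; take the earliest such write, at time $T'$, so $T_{d} < T' < T_{cas}$ and $p.d = e_{d}$ just before $T'$. Every write to an \textit{Internal} node's child field is a $replace$ performed inside a $helpReplace$ issued by $helpSubstitute$, $helpCompress$ or $helpMove$; therefore this $replace$ belongs to some $op_{i}$ of class \textit{Substitute}, \textit{Move} or \textit{Compress}, and for it to write a child of $p$ exactly one of the following holds: (i) $op_{i}$ is \textit{Substitute} or \textit{Move} and is flagged on $p$, i.e.\ $p$ equals $op_{i}.parent$, or $op_{i}.iParent$, or $op_{i}.rParent$; or (ii) $op_{i}$ is \textit{Compress} with $op_{i}.grandparent = p$, the $replace$ being the $helpReplace$ inside $helpCompress$ (line~\ref{quadboost: helpCompress helpReplace}) that swings $op_{i}.grandparent$'s pointer away from $op_{i}.parent$.

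In case (i), combining the lemma that a $replace_{i}^{k}$ never precedes the corresponding successful $flag_{i}^{k}$ with the properties of the $flag \rightarrow replace \rightarrow unflag$ transition (no successful unflag occurs before $replace_{i}$) and the fact that the \textit{Substitute} and \textit{Move} states are exited only by an unflag, one gets $p.op = op_{i} \neq \textit{Clean}$ at time $T'$; but $T' \in (T_{pOp}, T_{cas})$, where $p.op$ is \textit{Clean} --- a contradiction. In case (ii), the successful $helpReplace$ writes slot $d$ of $p$ with expected old child $op_{i}.parent$, and since $p.d = e_{d}$ just before $T'$ this forces $op_{i}.parent = e_{d}$; but any \textit{Compress} $op_{i}$ that reaches $helpCompress$ has been installed on $op_{i}.parent$ by a $cflag$, so by the same observation $op_{i}.parent$ is \textit{Internal} --- contradicting that $e_{d}$ is \textit{Empty}. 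Hence $p.d$ is unchanged on $[T_{d}, T_{cas}]$, so $p.d = e_{d}$ (an \textit{Empty} node) at $T_{cas}$; applying this to all four directions finishes the proof.

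The main obstacle, and the reason the argument is not a one-liner, is case (ii) together with the fact that $check$ inspects the four children at four distinct instants rather than atomically: one cannot simply assert ``all children are \textit{Empty} at a common moment'', so one must rule out any overwrite of a child slot in the window between its inspection and the $cflag$. The genuinely delicate point is excluding a concurrent compression that uses $p$ as its grandparent, and this is exactly where it matters that an \textit{Empty} node can never be flagged --- hence can never carry a \textit{Compress} $op$ --- so that no $helpCompress$ can target an \textit{Empty}-valued child slot. The \textit{Substitute}/\textit{Move} cases, by contrast, reduce cleanly to the transition lemmas already proved and to the no-reuse invariant.
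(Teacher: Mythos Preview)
Your argument is correct and in fact strictly more rigorous than the paper's own proof. The paper dispatches the lemma in two lines: it observes that $check(p)$ is evaluated before $helpFlag$ on line~\ref{quadboost: compress check} and declares the lemma established. Read literally, that only shows each child of $p$ was \textit{Empty} at the separate instants $check$ inspected it, not at the instant of the successful $cflag$; the paper silently relies on the surrounding transition lemmas to close the window but never spells it out.

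You make this explicit. Your step (a)---pinning $p.op$ to the same \textit{Clean} object on all of $[T_{pOp},T_{cas})$ via the no-reuse invariant---is exactly what is needed to rule out case~(i), since any $ireplace$/$rreplace$/$mreplace$ touching a child of $p$ forces $p.op$ to be the corresponding \textit{Substitute}/\textit{Move} at that moment. Case~(ii) is the point the paper's one-liner genuinely skips: a concurrent $creplace$ with $op_i.grandparent = p$ would need the CAS's expected value $op_i.parent$ to coincide with the \textit{Empty} child currently sitting in $p.d$, but $op_i.parent$ is declared \textit{Internal} (and is the target of a $cflag$), so the CAS cannot succeed. Both cases are handled cleanly; the proof stands.
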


\begin{lemm}
An \textit{Internal} node whose $op$ is not \textit{Compress} is active.
\end{lemm}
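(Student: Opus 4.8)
The plan is to establish the statement as a global invariant preserved by every step of any execution, and prove it by induction over the sequence of successful CAS operations --- equivalently, over the snapshots $snapshot_{T_0}, snapshot_{T_1}, \dots$, where consecutive ones differ by exactly one successful flag or replace. It is convenient to prove the contrapositive: if an \textit{Internal} node $n$ that is, or has ever been, reachable from the root is inactive in $snapshot_{T_i}$, then $n$'s $op$ is \textit{Compress}. Since a \textit{Compress} $op$ is never changed (preceding lemma), once this holds for $n$ it holds forever; combined with the base case --- in $snapshot_{T_0}$ the only \textit{Internal} nodes are the two layers of dummy nodes, which are reachable and carry a \textit{Clean} $op$ by construction --- the claim follows.

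For the inductive step I would case on the single extra successful CAS that produces $snapshot_{T_{i+1}}$. A flag operation alters only an $op$ field, hence no child pointer and no node's reachability; moreover only $cflag$ installs a \textit{Compress} $op$, and it does so precisely on the node whose $op$ it sets, so the implication stays vacuously true there and is untouched elsewhere. A replace of type $ireplace$, $rreplace$, or $mreplace$ only swings a pointer away from a \textit{Leaf} or \textit{Empty} terminal (by the lemma describing $op$ contents, none of $op.oldChild$, $op.oldIChild$, $op.oldRChild$ is ever \textit{Internal}) and, when $createNode$ built a sub-tree, attaches fresh \textit{Internal} nodes that were created with a \textit{Clean} $op$ and become reachable exactly at this step; so no previously reachable \textit{Internal} node is detached and the invariant survives. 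The remaining case is a $creplace$ belonging to a \textit{Compress} $op_i$: it replaces the \textit{Internal} node $op_i.parent$ under $op_i.grandparent$ by a new \textit{Empty}; by the $flag\rightarrow replace$ transition lemma the $cflag$ that attached $op_i$ to $op_i.parent$ precedes this $creplace$, so at the moment $op_i.parent$ turns inactive its $op$ is already \textit{Compress}, and the invariant is not violated for it.

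The part I expect to require the most care is ruling out \emph{collateral} deactivation: a single $creplace$ detaches the entire sub-tree rooted at $op_i.parent$, and I must ensure that sub-tree contains no other \textit{Internal} node whose $op$ is not \textit{Compress}. Here I would invoke the two lemmas immediately preceding --- ``only an \textit{Internal} node with all children \textit{Empty} may be attached a \textit{Compress} $op$'' and ``children of a node with a \textit{Compress} $op$ are never changed'' --- to conclude that, when $creplace$ fires, the four children of $op_i.parent$ are all \textit{Empty} nodes, hence not \textit{Internal} and carrying no $op$; thus $op_i.parent$ is the only \textit{Internal} node removed from the tree, and it carries \textit{Compress}. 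Two smaller points close the argument: $op_i.grandparent$ need not be shown active, since if it is already inactive then $op_i.parent$ was already inactive and, carrying \textit{Compress}, contributes nothing new; and an \textit{Internal} node that $createNode$ has constructed but not yet linked in is outside the scope of the statement (it is not yet, and has never been, reachable), which is precisely why the invariant is read for \textit{Internal} nodes that are or have been reachable from the root.
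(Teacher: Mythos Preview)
Your proposal is correct and follows essentially the same approach as the paper: case on the type of replace operation, observe that $ireplace$, $rreplace$, and $mreplace$ detach only a \textit{Leaf} or \textit{Empty} node, and that $creplace$ detaches only $op.parent$, which already carries \textit{Compress} and whose four children are \textit{Empty} by the two preceding lemmas. The paper's proof is much terser and leaves implicit the inductive framework, the irrelevance of flag CASes, and the ``collateral deactivation'' and not-yet-linked edge cases you spell out, but the substance is the same.
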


\begin{lemm}
After the invocation of $find(keys)$ which reads $l^{k}$, there is a snapshot in which the path from the root to $l^{k}$ is $physical\_path(keys^{k})$.
\end{lemm}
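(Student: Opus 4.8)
The plan is to run a hand-over-hand argument along the edges of the recorded path and then exhibit a single instant at which all of them coexist. Write the recorded path together with its terminal as $(n_0, n_1, \dots, n_{m-1}, n_m)$, where $n_0$ is the root, $n_1, \dots, n_{m-1}$ are the \emph{Internal} nodes of $path(keys^k)$, and $n_m = l^k$ is a \emph{Leaf} or \emph{Empty} node; let $e_j = (n_{j-1}, n_j)$ denote the $j$-th edge, so $n_j = n_{j-1}.d_j$ for the direction $d_j$ that $getQuadrant$ assigns $keys^k$ at $n_{j-1}$ (a direction determined by $n_{j-1}$'s space fields $\langle x, y, w, h\rangle$, which never change). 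Because $find(keys)$ descends top-down, reading one child pointer per step --- and for $continueFind$, $findCommon$ and $continueFindCommon$ the retained prefix was read this way by an earlier descent while the re-descent runs strictly later --- edge $e_j$ is observed to hold at a time $t_j$ with $t_1 < t_2 < \dots < t_m$; this is a mild refinement of the post-conditions of the $find(keys)$ lemma. It then suffices to find one instant at which $e_1, \dots, e_m$ all hold, since at such an instant the physical path for $keys^k$ is exactly $(n_0, \dots, n_{m-1}, l^k)$.

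First I would pin down how a recorded edge can disappear. For $j < m$ the node $n_j$ is \emph{Internal}, whereas $ireplace$, $rreplace$ and $mreplace$ only overwrite a child that the corresponding $op$ records as a \emph{Leaf} (by the lemma on the contents of $op$s). Hence the only operation that can move $n_{j-1}.d_j$ off $n_j$ is a $creplace$ of a \emph{Compress} $op$ with $op.parent = n_j$; and since an \emph{Internal} node keeps a unique parent throughout its attached life (every $replace$ installs only a freshly created node or a fresh \emph{Empty} node, and never re-parents an existing \emph{Internal} node), that $op$ must have $op.grandparent = n_{j-1}$. By the lemmas ``only an \emph{Internal} node with all children \emph{Empty} can be attached a \emph{Compress} $op$'', ``children of a node with a \emph{Compress} $op$ are never changed'', and ``once an $op$ is \emph{Compress} it is never changed again'', such an $op$ can exist only if $n_j$ carried four \emph{Empty} children at some time $\tau$ preceding the $creplace$, after which $n_j$'s children are frozen and its $op$ is permanent. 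Moreover $n_j$ is never re-attached once unlinked, so each $e_j$ holds on exactly one time interval $[\alpha_j, \beta_j)$, with $t_j \in [\alpha_j, \beta_j)$. The lemma thus reduces to $\max_j \alpha_j < \min_j \beta_j =: T$, for then any instant in $[\max_j \alpha_j, T)$ is a valid snapshot.

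If $T > t_m$, then $\alpha_j \le t_j \le t_m < T$ for every $j$, so $t_m$ itself is the snapshot. Otherwise choose $j^*$ with $\beta_{j^*} = T \le t_m$. We cannot have $j^* = m$, because $e_m$ holds at $t_m$, so $\beta_m > t_m \ge T$. We cannot have $j^* < m-1$ either: the \emph{Compress} $op$ unlinking $e_{j^*}$ is created at some $\tau < T = \beta_{j^*} \le \beta_{j^*+1}$ with $n_{j^*}$ all-\emph{Empty} at $\tau$; freezing forces $\tau \ge \alpha_{j^*+1}$, so $\tau \in [\alpha_{j^*+1}, \beta_{j^*+1})$ and therefore $n_{j^*}.d_{j^*+1} = n_{j^*+1}$, an \emph{Internal} child --- contradicting that $n_{j^*}$ is all-\emph{Empty} at $\tau$. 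Hence $j^* = m-1$, and the same argument gives $\beta_j > T$ for every $j < m-1$. Let $\tau < T \le t_m$ be the creation time of the \emph{Compress} $op$ on $n_{m-1}$; from $\tau$ on, $n_{m-1}$'s children are the four frozen \emph{Empty} nodes, so the child $find$ reads at $t_m$ in direction $d_m$ is one of them: $l^k$ is \emph{Empty}, $\beta_m = \infty$, and $\alpha_m \le \tau < T$ (a change after $\tau$ would violate freezing). Together with $\alpha_j \le t_j \le t_{m-1} < T$ for $j \le m-1$, this yields $\max_j \alpha_j < T$, and the instant just before the unlinking $creplace$ is the desired snapshot.

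I expect this last case to be the main obstacle: it is exactly the situation in which the recorded path is already broken by the time $find$ returns, which occurs precisely when $l^k$ is an \emph{Empty} node that $n_{m-1}$ has already committed to removing. There one must realize that $t_m$ need not be a valid snapshot and instead retreat to the moment just before $n_{m-1}$ is unlinked, leaning on the permanence of a \emph{Compress} $op$ and the freezing of the subtree below it to argue simultaneously that $l^k$'s incoming edge was installed early enough ($\alpha_m < T$) and is never destroyed ($\beta_m = \infty$). A secondary point that wants care is the uniqueness of an \emph{Internal} node's parent --- needed so that the \emph{Compress} $op$ on $n_j$ genuinely unlinks $e_j$ rather than some unrelated edge --- which in turn rests on the facts that $replace$ operations install only freshly created nodes (or fresh \emph{Empty} nodes) and that $createNode$ returns a subtree whose root is the only node it ever attaches.
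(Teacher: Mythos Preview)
Your argument is correct, and it differs from the paper's. The paper proceeds by induction on the recorded path: assuming a snapshot exists in which $n_0,\dots,n_k$ is the physical path for $keys^k$, it argues that one exists for $n_0,\dots,n_{k+1}$ by a short case split on whether a \emph{replace} intervenes between the inductive snapshot and the read of $n_{k+1}$ (if the intervening replace unlinks $n_k$, the paper invokes the \emph{Compress}-freezing lemmas to keep the old snapshot; if it overwrites the child pointer to $n_{k+1}$, it moves to the post-replace snapshot). Your route is global rather than inductive: you associate to each recorded edge $e_j$ the unique interval $[\alpha_j,\beta_j)$ on which the pointer holds, and reduce the lemma to $\max_j \alpha_j < \min_j \beta_j$. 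The case analysis on the index $j^\ast$ achieving the minimum then isolates precisely the one delicate situation---$n_{m-1}$ has already been \emph{Compress}-flagged by the time $l^k$ is read---and handles it by retreating to just before the unlinking $creplace$, using permanence of \emph{Compress} and freezing of its children to bound both $\alpha_m$ and $\beta_m$.

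The two approaches lean on the same structural lemmas (only \emph{Compress} can detach an \emph{Internal} child; a \emph{Compress}-flagged node has all-\emph{Empty}, henceforth frozen children; nodes are never re-attached), but your interval formulation makes the dependence explicit and yields the snapshot constructively, whereas the paper's inductive step is terser and leaves the interaction between the inductive snapshot and a later \emph{Compress} flag somewhat implicit. One small remark: when you say $ireplace/rreplace/mreplace$ overwrite only a child recorded as a \emph{Leaf}, the $oldChild$ can also be \emph{Empty} (e.g.\ in \emph{insert}); this does not affect your argument, since the point is only that these replaces never unlink an \emph{Internal} $n_j$.
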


\begin{lemm}
After $ireplace$, $rreplace$, $mreplace$, and $creplace$, a quadtree's properties remain.
\end{lemm}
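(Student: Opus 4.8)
The plan is to establish the two clauses of the preceding Definition as an invariant of the whole execution and to verify in particular that each of $ireplace$, $rreplace$, $mreplace$, and $creplace$ preserves it; I would argue by induction over the successful \textit{replace} operations in the order in which they take effect. The base case is the initial tree --- the root, its four dummy \textit{Internal} children, and the sixteen \textit{Empty} nodes below them --- which satisfies both clauses by construction. Flag and unflag operations touch only $op$ fields, so they change neither child pointers nor the space information of any \textit{Internal} node, and hence it suffices to treat the four replaces. Clause~(1) (the two dummy layers are never changed) restates the earlier lemma, and none of the four replaces contradicts it: by Observation~1 they never alter an \textit{Internal} node's space information, and they never overwrite a dummy \textit{Internal} node, since the child overwritten by $ireplace$, $rreplace$, or $mreplace$ is always a \textit{Leaf} or an \textit{Empty} node (post-conditions of $find(keys)$) while the node overwritten by $creplace$ cannot be a child of the root because of the $gp == root$ guard in $compress$. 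The remaining work is to re-establish clause~(2) --- correct placement of the four children of every \textit{Internal} node --- in the snapshot immediately after each replace.

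Two of the replaces are easy. $rreplace$ swings the link of $op.parent$ on a recorded direction $d$ from a \textit{Leaf} to a new \textit{Empty} node; an \textit{Empty} node has no fields, so the direction constraint on that child is vacuous, and every other child of $op.parent$ and every other node of the tree is untouched, so clause~(2) survives. For $creplace$ I would combine the lemma ``only an \textit{Internal} node with all children \textit{Empty} can be attached with a \textit{Compress} $op$'' with the lemma ``children of a node with a \textit{Compress} $op$ are never changed'' to get that at the moment $creplace$ fires all four children of $op.parent$ are \textit{Empty}; replacing $op.parent$ by a new \textit{Empty} node in $op.grandparent$'s slot $d$ then again yields a vacuous constraint, renders the (entirely \textit{Empty}) subtree rooted at $op.parent$ inactive, and leaves $op.grandparent$'s other children --- and the rest of the tree --- unchanged.

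The substantive cases are $ireplace$ and $mreplace$. For $ireplace$, the $op$-creation lemma gives that at creation time $op.parent = p$ held $op.oldChild = l$, a \textit{Leaf}, on some direction $d$; $p$'s space information is immutable (Observation~1) and by the induction hypothesis $l.key$ lay in $p$'s $d$-subregion, so $p$'s $d$-subregion is exactly the region $l$ occupied; the $flag\rightarrow replace \rightarrow unflag$ lemma says $ireplace$ is the first, and necessarily successful, replace on $p$ belonging to $op$, so at the CAS $l$ is still $p$'s $d$-child. Then the $createNode$ lemma (with the implementation's guarantee that $createNode(l,p,\dots)$ produces its output by recursively subdividing exactly that region) gives that $newNode$ is either a \textit{Leaf} carrying $newKey$ --- which lies in $p$'s $d$-subregion because $find$ routed it to $l$ --- or the root of a subtree whose $\langle x,y,w,h\rangle$ equals that of $p$'s $d$-subregion and whose interior already satisfies clause~(2) (every \textit{Internal} it creates has four correctly placed children; $l.key$ and $newKey$ reach correctly placed \textit{Leaf} nodes; the remaining slots are \textit{Empty}). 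Hence after the CAS $p$'s $d$-child and the whole new subtree obey clause~(2). For $mreplace$ I would split exactly as the code does: when $il = rl$ or $il$ is \textit{Empty}, the installed node is a \textit{Leaf} with $newKey$ placed in the slot that $newKey$'s own $find$ returned, so $newKey$ lies in that slot's subregion by the induction hypothesis; when $il$ is a \textit{Leaf} with a third key I reuse the $ireplace$ argument with $createNode(il,ip,\dots)$; the companion replace on $rParent$ (present only when $il \ne rl$) installs a fresh \textit{Empty} and is vacuous. Since the two replaces of a single move are ordered (the within-$op$ ordering lemma), I would also check the intermediate snapshot in which $newKey$'s node is linked under $iParent$ while $oldKey$'s \textit{Leaf} is still linked under $rParent$: each of the two sits in its correct subregion, and clause~(2) imposes no global uniqueness condition, so this snapshot is fine too.

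The hard part will be the $mreplace$ case: pinning down, across both CASes of one move and under arbitrary concurrent interference, that the two slots being overwritten still hold $op.oldIChild$ and $op.oldRChild$ on their recorded directions at the moment of each CAS. This means chaining the $op$-creation lemma (which identifies $op.iParent$, $op.oldIChild$, $op.rParent$, $op.oldRChild$), the ``no \textit{Internal} node reuses a previous $op$'' lemma, the $flag\rightarrow replace$ lemma (the first successful replace per slot must succeed), and the $iFirst$ ordering, which guarantees both parents carry the \textit{Move} $op$ before any $mreplace$ occurs. A secondary subtlety is that the stated $createNode$ post-condition only asserts the output contains both keys under a common parent, so for the spatial argument I would either appeal to the implementation (as the paper already does for that lemma) or tacitly strengthen the post-condition to say the subtree's root occupies the region of the replaced slot; one should also confirm that the \textit{Leaf}-splitting inside $createNode$ never degenerates, e.g.\ when $il$ is a moved \textit{Leaf} whose key coincides with $newKey$. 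Once these are in hand, everything else reduces to a routine snapshot-before / snapshot-after check.
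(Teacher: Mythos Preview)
Your proposal is correct and follows essentially the same line as the paper's proof: clause~(1) is discharged by the earlier dummy-layers lemma, $creplace$ is handled via the ``all children \textit{Empty}'' and ``children of a \textit{Compress} node never change'' lemmas, and $ireplace$/$rreplace$/$mreplace$ are handled by combining the $flag\rightarrow replace\rightarrow unflag$ ordering with the $createNode$ post-condition (with the \textit{Empty} case being vacuous). Your write-up is considerably more detailed than the paper's---in particular your explicit induction on replace events, your separate treatment of the two $mreplace$ CASes and the intermediate snapshot, and your observation that the stated $createNode$ post-condition must be tacitly strengthened to a spatial statement---but these are refinements of the same argument rather than a different approach.
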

\begin{proof}
We shall prove that in any $snapshot_{T_{i}}$, a quadtree's properties remain.

First, Lemma 10 shows that two layers of dummy nodes remain in the tree. We have to consider other layers of nodes that are changed by replace operations.

Consider $creplace$ that replaces an \textit{Internal} node by an \textit{Empty} node. Because the \textit{Internal} node has been flagged on a \textit{Compress} $op$ before $creplace$ (Lemma 8), all of its children are \textit{Empty} and not changed (Lemma 11 and Lemma 12). Thus, $creplace$ does not affect the second claim of Definition 1. 

Consider $ireplace$, $rreplace$, or $mreplace$ that replaces a terminal node by an \textit{Empty} node, a \textit{Leaf} node, or a sub-tree. By Lemma 7, before $replace^{k}$, $op.parent^{k}$ is flagged with $op$ such that no successful replace operation could happen on $op.parent^{k}$. Therefore, if the new node is \textit{Empty}, it does not affect the tree property. If the new node is a \textit{Leaf} node or a sub-tree, based on the post-conditions of the createNode function (Lemma 3), after replace operations the second claim of Definition 1 still holds.
\end{proof}

\begin{claim}
A quadtree maintains its properties in every snapshot.
\end{claim}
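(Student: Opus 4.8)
The plan is a straightforward induction on the sequence of state-changing steps, i.e., on the snapshots $snapshot_{T_{0}}, snapshot_{T_{1}}, \dots$ ordered in time. For the base case, $snapshot_{T_{0}}$ is the initial configuration: two layers of dummy \textit{Internal} nodes with a layer of \textit{Empty} nodes at the terminals. Both claims of Definition 1 hold by construction --- there are no \textit{Leaf} children yet, and the dummy \textit{Internal} nodes are laid out in the four directions with halved width and height, so the coordinate constraints are satisfied vacuously for the \textit{Leaf} clauses and directly for the \textit{Internal} clauses.

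For the inductive step, assume Definition 1 holds in $snapshot_{T_{i}}$ and show it holds in $snapshot_{T_{i+1}}$, where $T_{i+1}$ is the time of the next successful CAS (reads and failed CASes do not alter the shared state, so only successful CASes move us between distinct snapshots). The fields named in Definition 1 are the child pointers $nw, ne, sw, se$ of \textit{Internal} nodes together with the spatial fields $\langle x, y, w, h\rangle$ of \textit{Internal} nodes and $\langle keyX, keyY\rangle$ of \textit{Leaf} nodes. By Observation 1 the spatial fields and \textit{Leaf} keys are immutable, and by Observation 3 a flag operation only writes the $op$ field of an \textit{Internal} node, which does not occur in Definition 1; the CAS that assigns a \textit{Move} $op$ to a \textit{Leaf} node (line~\ref{quadboost: helpMove assign op}) likewise touches only $op$. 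Hence the only CAS that can falsify Definition 1 is a \textit{replace} operation, i.e., a CAS on some child pointer.

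It remains to invoke the transition classification. By Claim 1 every successful \textit{replace} belongs to one of the three transition types $flag\rightarrow replace\rightarrow unflag$, $flag\rightarrow unflag$, or $flag\rightarrow replace$; unwinding the definitions, such a \textit{replace} is exactly one of $ireplace$, $rreplace$, $mreplace$, or $creplace$ (the $flag\rightarrow unflag$ type performs no \textit{replace} at all). The preceding lemma states precisely that after each of $ireplace$, $rreplace$, $mreplace$, and $creplace$ the quadtree's properties remain; when a \textit{Move} $op$ induces two successive replaces $replace^{0}, replace^{1}$, that lemma is applied to each in turn, so the transient snapshot between them is also covered. Therefore Definition 1 holds in $snapshot_{T_{i+1}}$, closing the induction.

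The genuinely delicate part is not this induction but the preceding lemma it rests on: showing that a \textit{replace} cannot violate the spatial layout requires knowing that the parent being modified is exactly the one identified during $find(keys)$, that it still held the old child when the governing $flag$ succeeded, and that no competing \textit{replace} can interleave on that parent while the $op$ is attached --- facts supplied by Lemmas~7 and~8 on the transition structure, by Lemma~14 (the existence of the $physical\_path$ snapshot), and by the post-conditions of $find(keys)$ and $createNode$ (Lemmas~1--3). Once those are in hand, the argument here only has to bookkeep which CASes touch which fields, so I expect the obstacle to be isolating exactly which earlier lemmas each of the four replace cases consumes rather than any new reasoning.
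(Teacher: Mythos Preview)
Your proposal is correct and matches the paper's approach: the paper states Claim~2 immediately after Lemma~15 without a separate proof, treating it as a direct corollary, and your argument makes that corollary explicit by inducting on snapshots and observing that only replace operations touch the fields named in Definition~1, then invoking Lemma~15 for each replace case. One cosmetic remark: the assignment at line~\ref{quadboost: helpMove assign op} is a plain write, not a CAS, but as you note it only writes the $op$ field and so is irrelevant to Definition~1 either way.
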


\subsection{Linearizability}
In this section, we define linearization points for basic operations. As the compress function is included in the move operation and the remove operation that return true, it does not affect the linearization points of these operations. If an algorithm is linearizable, its result be ordered equivalently as a sequential history by the linearization points. Since all modifications depend on $find(keys)$, we first point out its linearization point. For $find(keys)$, we define its linearization point at $T_{i}$ such that $l^{k}$ returned is on the $physical\_path(keys^{k})$ in $snapshot_{T_{i}}$. 

For the contain operation that returns true, we show that there is a corresponding snapshot in which $l^{k}$ in $physical\_path(keys^{k})$ is active. For the contain operation, the insert operation, the remove operation, or the move operation that returns false, we show there is a corresponding snapshot in which $l^{k}$ in $physical\_path(keys^{k})$ is inactive. For the insert, remove, and move operations that returns true, we define the linearization point to be its first successful replace operation---$replace_{i}^{0}$. To make a reasonable demonstration, we first show that $replace_{i}^{k}, 0 \leq k < \vert replace_{i} \vert$ belongs to each operation that creates $op_{i}$ and illustrate that $op$ is unique for each operation.

\begin{lemm}
For $find(keys)$ that returns tuples $\langle l^{k}$, $pOp^{k}$, $path^{k}\rangle$, there is a $snapshot_{T_{i}}$ such that $path^{k}$ returned with $l^{k}$ at the end is $physical\_path(key^{k})$ in $snapshot_{T_{i}}$.
\end{lemm}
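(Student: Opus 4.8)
The plan is to show that the stack $path^{k}=\langle n_{0},\dots,n_{m-1}\rangle$ recorded during the descent, together with the returned terminal $l^{k}=n_{m}$, coincides with $physical\_path(keys^{k})$ at one single instant, even though this stack is assembled by a sequence of non-atomic reads. First I fix notation: for each $j$ the descent does one atomic read of the child pointer $n_{j}.d_{j}$, where $d_{j}\in\{nw,ne,sw,se\}$ is the direction that \textit{getQuadrant} assigns to $keys^{k}$ at $n_{j}$ --- a function only of $n_{j}$'s immutable coordinates (Observation 1) --- obtaining $n_{j+1}$; write $\tau_{j}$ for the time of this read, so $\tau_{0}<\dots<\tau_{m-1}$, and the loop exits at the first $n_{m}$ that is a \textit{Leaf} or \textit{Empty}. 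Each of $n_{0},\dots,n_{m-1}$ was confirmed \textit{Internal} and node classes never change, so it suffices to exhibit a time $T_{i}$ at which $n_{j}.d_{j}=n_{j+1}$ holds \emph{simultaneously} for all $0\le j\le m-1$: because \textit{getQuadrant} is deterministic, the physical path of $keys^{k}$ in $snapshot_{T_{i}}$ is then forced to be exactly $\langle n_{0},\dots,n_{m}\rangle$.

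The core is a bottom-up argument resting on the decoupling invariants. The only replace operation that ever unlinks an \textit{Internal} child from its parent is $creplace$ --- $ireplace$, $rreplace$, and $mreplace$ only swap terminals --- and by Lemma 8 together with Lemma 12 and Lemma 11, such a $creplace$ is preceded by a successful $cflag$ on the node being removed, after which all four of its children are \textit{Empty} and frozen. Contrapositively, an \textit{Internal} node that at time $t$ still has a non-\textit{Empty} child has not been flagged \textit{Compress} by $t$, hence has not been $creplace$d by $t$; and since no replacer ever reinstalls an old \textit{Internal} node, an \textit{Internal} node's incoming link from its parent is fixed from the node's creation until its eventual $creplace$. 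Now choose $T_{i}$ as follows: if $n_{m-1}$ has not been flagged \textit{Compress} by $\tau_{m-1}$ --- automatic whenever $l^{k}$ is a \textit{Leaf}, since a \textit{Compress}-flagged node has only \textit{Empty} children --- set $T_{i}=\tau_{m-1}$; otherwise let $c$ be the moment $n_{m-1}$ received its \textit{Compress} $op$ and set $T_{i}=c$, at which the edge $n_{m-1}\to n_{m}$, read later at $\tau_{m-1}$, is already present by Lemma 11. In either case $T_{i}$ lies between the creation of $n_{m-1}$ and its $creplace$, so $n_{m-2}.d_{m-2}=n_{m-1}$ holds at $T_{i}$; then $n_{m-2}$ has a non-\textit{Empty} child at $T_{i}$, so it too is unflagged and un-$creplace$d, so the edge $n_{m-3}\to n_{m-2}$ holds at $T_{i}$ as well, and the climb continues upward until it reaches the two dummy \textit{Internal} layers, which Lemma 10 never compresses and which therefore retain their relevant links. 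Hence all $m$ edges hold at $T_{i}$, which proves the claim for $find$; the cases $continueFind$, $findCommon$, and $continueFindCommon$ reduce to this one, since each re-descends from a prefix of a previously recorded path whose restart node, being non-\textit{Compress}, is active (Lemma 13) and whose surviving edges are governed by the same ``non-\textit{Empty} child $\Rightarrow$ not yet compressed'' reasoning together with Claim 2.

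The step I expect to be the main obstacle is the \textit{Empty}-terminal case, and it is what forces the two-way choice of $T_{i}$ above: there the deepest recorded node $n_{m-1}$ may already be logically removed --- carrying a \textit{Compress} $op$, or even having been created and compressed entirely before the find began --- so $\tau_{m-1}$ need not be an instant at which the whole recorded stack is mutually consistent, and one must retreat to $c$ and certify that the entire prefix above $n_{m-1}$ is still structurally in place there. Making this rigorous is exactly where the monotonicity facts (an edge into an \textit{Internal} child never re-forms once broken; an \textit{Internal} node's parent link is fixed from creation to compression), Lemma 11/Lemma 12, and the invariant (Claim 2) that the tree keeps its shape after every replace are indispensable: they are what make a non-atomic sequence of pointer reads collapse onto a single tree state. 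This statement is the path-explicit strengthening of Lemma 14, and its proof subsumes that of Lemma 14.
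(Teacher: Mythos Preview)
Your argument is correct and genuinely different from the paper's. The paper proves this lemma by a one-line appeal to Lemma~14, whose proof is a \emph{top-down} induction: it maintains, for each prefix $n_{0},\dots,n_{k}$ of the descent, a snapshot $S_{k}$ in which that prefix is the physical path, and at each step either keeps $S_{k+1}=S_{k}$ or slides it forward past an intervening replace. You instead fix a single $T_{i}$ up front---either $\tau_{m-1}$ or the $cflag$-time $c$ of the deepest internal node---and then run a \emph{bottom-up} climb, using the contrapositive of Lemma~12 (``non-\textit{Empty} child $\Rightarrow$ not \textit{Compress}-flagged $\Rightarrow$ not yet $creplace$d'') together with the monotonicity of an \textit{Internal} node's incoming edge to propagate all $m$ edges back to the root. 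Your approach makes the witnessing instant explicit and the dependence on Lemmas~10--13 transparent; the paper's sliding-snapshot induction is less explicit about which instant ultimately witnesses the whole path but is a more routine pattern.

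One small refinement worth noting: in your Case~2 the time $c$ can precede the entire invocation of $find$ (you flag this yourself), which is harmless for the statement as written but matters when the lemma is used to place a linearization point inside the operation's interval. The fix is immediate: take $T_{i}=\max(c,\tau_{m-2})$. At $\tau_{m-2}$ the edge $n_{m-2}\to n_{m-1}$ is directly observed, and since $n_{m-1}$'s children are frozen after $c$, the edge $n_{m-1}\to n_{m}$ already holds at $\tau_{m-2}$ as well; the rest of your climb is unchanged. With that tweak your $T_{i}$ always lies within the descent, matching the appendix formulation of the lemma.
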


\begin{lemm}
If the insert operation, the remove operation, or the move operation returns true, the first successful replace operation occurs before returning, and it belongs to the $op$ created by the operation itself at the last iteration in the \textit{while} loop.
\end{lemm}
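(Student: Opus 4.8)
The plan is a case analysis on which of the three update operations returns \texttt{true} and on the branch of its \textit{while} loop that produces the return, tracking in each case the $op$ object the operation constructed in that iteration through the transition lemmas. Since \texttt{return true} exits the \textit{while} loop, the iteration containing it is automatically the last one, so the ``last iteration'' clause is free once the correct $op$ is identified; the real content is (i) that a successful replace belonging to that $op$ has occurred by the time control reaches \texttt{return true}, and (ii) that this replace is the \emph{first} successful replace belonging to that $op$, i.e.\ the $replace_{i}^{0}$ used later as the linearization point.

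First I would handle \textit{insert} and \textit{remove} together. In both, \texttt{return true} is reached only after \texttt{helpFlag(p, pOp, op)} returns \texttt{true}, where $op$ is the \textit{Substitute} object built in that iteration. By Observation 3 this \texttt{helpFlag} is a successful $iflag$ (resp.\ $rflag$) attaching $op$, so by Claim 1 and Lemma 7 the transition belonging to $op$ is a $flag\to replace\to unflag$ transition; in particular its first replace on $op.parent$ must succeed, call it $replace_{i}^{0}$, and no other replace on $op.parent$ intervenes between $flag_{i}$ and $replace_{i}^{0}$. Control then enters \texttt{helpSubstitute(op)}, which calls \texttt{helpReplace(op.parent, op.oldChild, op.newNode)} before returning; that call either performs $replace_{i}^{0}$ itself or fails only because a helper has already performed it. Either way $replace_{i}^{0}$ has occurred before \texttt{helpSubstitute} returns, hence before \texttt{return true}. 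The trailing \texttt{compress} call in \textit{remove} only flags or erases \textit{Internal} nodes strictly above $op.parent$ and never touches $op$, so it does not disturb the argument. Lemma 2 certifies that this $op$ records the right parent and child, and the uniqueness of the transition (Lemma 5, no $op$ reuse) identifies the replace found this way with $replace_{i}^{0}$.

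For \textit{move} I would split on whether the two parents coincide. If $rp\neq ip$, \texttt{return true} is reached only after the outer \texttt{helpFlag} on the first parent returns \texttt{true} \emph{and} \texttt{helpMove(op)} returns \texttt{true}, where $op$ is the \textit{Move} object built in that iteration (Lemma 2 fixes its recorded terminals, parents, and prior $op$s). I would then prove the key sub-claim: whenever \texttt{helpMove(op)} returns \texttt{true} to any caller, $replace_{i}^{0}$ has already completed. Its proof is the dichotomy on the caller's own \texttt{doCAS} read: if \texttt{doCAS} was \texttt{true}, the caller executes the \texttt{op.oldRChild.op = op} assignment and the \texttt{helpReplace} call(s) itself (so $replace_{i}^{0}$---the combined CAS when the two terminals coincide, otherwise the \texttt{iParent} replace, ordered first by Observation 5---is done before the caller leaves that block, either by the caller or by a helper that beat it); if \texttt{doCAS} was \texttt{false} yet \texttt{op.allFlag} is set (as it must be for the return value to be \texttt{true}, using monotonicity of \texttt{allFlag}, Observation 4), then the second parent carried $op$ earlier but not at the read, so it was already \textit{unflag}ged by the reverse-order block---and the first thread to run that \textit{unflag} necessarily observed \texttt{doCAS} true (the second parent carried $op$ immediately before being cleared) and hence had already run its \texttt{helpReplace}, so $replace_{i}^{0}$ completed before that \textit{unflag} and a fortiori before the current caller's \texttt{false} read. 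The case $rp=ip$ is the same argument with a single shared parent; here the guard $iOp\neq rOp \wedge ip=rp$ already forced $iOp=rOp$, and \texttt{helpMove(op)} is invoked directly. In all cases $replace_{i}^{0}$ precedes \texttt{return true}, and it belongs to the $op$ this very invocation created.

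The step I expect to be the main obstacle is precisely that \textit{move} sub-claim: because \texttt{helpMove} returns the shared flag \texttt{op.allFlag}, one must exclude the interleaving in which a helper sets \texttt{allFlag}, is descheduled before its \texttt{helpReplace}, and the caller reads \texttt{allFlag == true} and returns while no replace has yet happened. Ruling this out needs the careful ordering argument above---that the \emph{first} \textit{unflag} of the second parent is performed only by a thread whose \texttt{doCAS} was true, which by program order has already executed its \texttt{helpReplace}---and getting the relative order of the second flag, the \texttt{doCAS} read, the \texttt{allFlag} write, the replace(s), and the reverse-order \textit{unflag}s inside \texttt{helpMove} exactly right. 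Everything else reduces to direct appeals to Lemmas 2, 5, and 7, Claim 1, and Observations 3--5.
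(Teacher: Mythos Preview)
Your proposal is correct and follows essentially the same approach as the paper: a case split on the three operations, identifying the $op$ object built in the returning iteration, and then appealing to the transition lemma (Lemma~7 / Corollary ``first replace must succeed'') to get $replace_{i}^{0}$ before \texttt{return true}. The paper's proof in the appendix is in fact much terser than yours---for \textit{move} it simply asserts ``it returns true after successful $mflag$ and $mreplace$ using $op$ created at line~\ldots'' and invokes the corollary---so the \texttt{helpMove}/\texttt{allFlag} interleaving you single out as the main obstacle is exactly the detail the paper leaves implicit; your dichotomy on the caller's \texttt{doCAS} read, together with the observation that the first successful \textit{unflag} of the second parent is by a thread that necessarily saw \texttt{doCAS} true and hence already ran its \texttt{helpReplace}, is the right way to discharge it, and is more careful than what the paper writes.
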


\begin{lemm}
If the insert operation, the remove operation, or the move operation returns false, there is no successful $replace$ happens during the execution.
\end{lemm}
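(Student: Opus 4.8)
The plan is to proceed by cases on which basic operation returns false and at which line, reading the statement --- dually to the \texttt{return true} lemma (Lemma~17) --- as: no successful $replace$ belonging to an $op$ object created by this very invocation (across all iterations of its \texttt{while} loop) ever occurs, since $replace$s performed while \emph{helping} foreign $op$s are outside the scope and happen regardless. The setup is syntactic: \texttt{insert} and \texttt{remove} each have a single \texttt{return false}, at the \texttt{inTree}/\texttt{moved} test at the head of a loop iteration (lines~\ref{quadboost: insert intree} and~\ref{quadboost: remove intree}); \texttt{move} returns false only because \texttt{findCommon} or \texttt{continueFindCommon} returned false (lines~\ref{quadboost: move findCommon} and~\ref{quadboost: move continueFindCommon}); and \texttt{compress} --- the only producer of a $creplace$ (and of a \textit{Compress} $op$) in these routines --- is invoked only immediately before a \texttt{return true}. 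So it suffices to follow the \textit{Substitute}/\textit{Move} $op$s the invocation builds.

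For \texttt{insert} and \texttt{remove} the argument is immediate. Each builds an $op$ only inside the \texttt{if (pOp.class == Clean)} block, and there, \emph{if the subsequent \texttt{helpFlag} succeeds}, the routine calls \texttt{helpSubstitute} and returns \emph{true}; it cannot afterwards reach \texttt{return false}. Hence an invocation returning false never had a successful \texttt{helpFlag} on any $op$ it built, and since such a freshly created $op$'s only reference is the local handed to that failing \texttt{CAS}, it is never published on an \textit{Internal} node, hence never flagged by any thread. By Lemma~6, a $replace$ that belongs to an $op$ occurs only after a successful flag of that $op$; so none does.

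The \texttt{move} case is the delicate one: an invocation can successfully flag \emph{one} of the two parents with its \textit{Move} $op$ $m$ and still return false. The key claim I would establish is: \emph{if some $replace$ belonging to an $m$ built by this invocation succeeds, then $m.allFlag$ is set true, and then the \texttt{helpMove}$(m)$ call the invocation itself makes returns true}, so \texttt{move} returns true, a contradiction. The first half is by inspection: the only code performing $replace$s for a \textit{Move} $op$, the \texttt{if (doCAS)} branch of \texttt{helpMove}, first assigns \texttt{op.allFlag = true}. For the second half, \texttt{helpMove} returns \texttt{op.allFlag}, which is monotone (Observation~4), and one shows that if this invocation's own \texttt{helpMove}$(m)$ returned \emph{false} then $m.allFlag$ can never afterwards turn true. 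The reason: the internal \texttt{helpFlag} of \texttt{helpMove} on the second parent --- the gate of the \texttt{if (doCAS)} branch --- can succeed only while that parent's $op$ still equals the value recorded for it in $m$ (one of $m.oldROp$, $m.oldIOp$); once the parent's $op$ has left that value it never returns to it (Lemma~5), so a \emph{failed} such \texttt{helpFlag} --- which is necessarily why our \texttt{helpMove} returned false instead of entering the branch --- makes the branch permanently unreachable for $m$, by \emph{any} thread, including the only threads that can see $m$ at all, namely foreign helpers reading it off the one parent on which our invocation published it. Since \texttt{move} returns true exactly after one of its own \texttt{helpMove} calls returns true, an invocation that returns false had every one of its own \texttt{helpMove} calls return false, so no $m$ it built ever gets $allFlag$ set, so no $replace$ belonging to it occurs --- in particular none during the execution.

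I expect this \texttt{move} race to be the main obstacle: the argument must exclude a foreign helper finishing $m$ (running its two $replace$s) \emph{after} our own \texttt{helpMove}$(m)$ has returned false and we have already gone on to report false through \texttt{continueFindCommon}. Pinning that down requires a careful enumeration of the interleavings of the two \texttt{helpFlag} steps inside concurrent \texttt{helpMove}$(m)$ calls, using ``no $op$ reuse'' (Lemma~5), the monotonicity of $allFlag$ (Observation~4), and the flag-before-$replace$ ordering of Lemmas~6 and~7; by comparison, the \texttt{insert}/\texttt{remove} cases and the bookkeeping that neither \texttt{findCommon}/\texttt{continueFindCommon} nor a helped foreign $op$ contributes a $replace$ belonging to our invocation are routine.
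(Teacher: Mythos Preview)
Your proposal is correct and follows the same case split as the paper: \texttt{insert} and \texttt{remove} return false only at the head-of-loop \texttt{inTree}/\texttt{moved} test, before any locally built \textit{Substitute} $op$ could have been flagged (a successful \texttt{helpFlag} leads immediately to \texttt{return true}); \texttt{move} returns false only through \texttt{findCommon}/\texttt{continueFindCommon}, so either no \texttt{helpMove} was called or it returned false, and the work is to show that a false-returning \texttt{helpMove}$(m)$ precludes any $replace$ belonging to $m$.

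The only real difference is packaging in the \texttt{move} case. The paper's proof is one line: it invokes the $flag\to unflag$ transition lemma (Lemma~9), which already states that when \texttt{helpMove} fails the two parents are distinct, the first flag on the second parent fails, and no later flag on it succeeds---hence $doCAS$ is false for every caller and no $replace$ runs. You instead re-derive that closure property directly from Lemma~5 (no $op$ reuse) together with the monotonicity of $allFlag$ (Observation~4): once the second parent's $op$ has left its recorded value it never returns, so the $doCAS$ gate is permanently shut. Your route is more explicit about the foreign-helper race you flag as the main obstacle (a helper finishing $m$ after our own \texttt{helpMove} reports false), while the paper simply delegates that concern to Lemma~9; both are sound, and yours is arguably more self-contained.
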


The next lemma points out the \textbf{linearization points} of the contain operation.

\begin{lemm}
For the contain operation that returns true, there is a corresponding snapshot that $l^{k}$ in $physical\_path(keys^{k})$ is active in $snapshot_{T_{i}}$. For the contain operation that returns false, there is a corresponding snapshot that $l^{k}$ in $physical\_path(keys^{k})$ is inactive in $snapshot_{T_{i}}$.
\end{lemm}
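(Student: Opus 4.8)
The contain operation issues a single \textit{find} call, so there is only one key; write $l$, $pOp$, $path$ for $l^{0}$, $pOp^{0}$, $path^{0}$, and let $T_{c}$ be the instant at which contain evaluates its return test $\textit{inTree}(l,\cdot)\wedge\lnot\textit{moved}(l)$. By the post-conditions of $find(keys)$, $l$ is a \textit{Leaf} or an \textit{Empty} node. By the lemma that produces a physical-path snapshot for \textit{find} (whose witness is read off the traversal and hence can be taken no later than the return of \textit{find}), there is a $snapshot_{T_{i}}$ with $T_{i}\le T_{c}$ in which $path$ ending at $l$ is $physical\_path(keys^{0})$; in particular $l$ is reachable from the root at $T_{i}$. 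The plan is to exhibit a snapshot in $[T_{i},T_{c}]$ --- hence inside contain's execution --- whose active/inactive status of $l$ agrees with the returned Boolean; combined with the fact that, by Claim~2, every reachable \textit{Leaf} carrying key $keys^{0}$ is the terminal node of $physical\_path(keys^{0})$, this fixes the linearization point.

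The one quadboost-specific ingredient is the time behaviour of $\textit{moved}(l)$. The field $l.op$ is initially \texttt{null} (Observation~1); it is written only at $op.oldRChild.op = op$ inside \textit{helpMove} (line~\ref{quadboost: helpMove assign op}), and only by the unique \textit{Move} operation whose flag on $l$'s parent succeeds, so it is set at most once and thereafter equals a fixed $op$. For that $op$, $\textit{hasChild}(op.iParent,op.oldIChild)$ can only turn from true to false: the only operation that overwrites the child slot of $op.iParent$ holding $op.oldIChild$ is the replace that this move performs on $op.iParent$, and since every replace installs a freshly allocated node no old child is ever reinstalled. Hence $\textit{moved}(l)$ is monotone (false, then possibly true); by Observation~5 it becomes true exactly at the replace $\tau_{1}$ on $op.iParent$, which strictly precedes the replace $\tau_{2}$ on $op.rParent$ --- $l$'s parent --- that removes $l$ from the tree. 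Throughout $[\tau_{1},\tau_{2})$ both $op.iParent$ and $op.rParent$ still carry the (non-\textit{Compress}) \textit{Move} $op$, because $l.op$ is set only after both flags succeed and the unflags come after $\tau_{2}$; an \textit{Internal} node whose $op$ is not \textit{Compress} is active, so $op.rParent$ is reachable there and $l$ is still its child, whence $l$ stays reachable on the whole window up to $\tau_{2}$.

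Now split on the returned value. (i) contain returns true: at $T_{c}$ both $\textit{inTree}(l,\cdot)$ and $\lnot\textit{moved}(l)$ hold; $\textit{inTree}$ reads only the immutable \textit{class}, $keyX$, $keyY$ fields (Observation~1), so $l$ is forever a \textit{Leaf} carrying $keys^{0}$, and by monotonicity with $T_{i}\le T_{c}$ we get $\lnot\textit{moved}(l)$ at $T_{i}$; with reachability at $T_{i}$ this makes $l$ \emph{active} in $snapshot_{T_{i}}$. (ii) contain returns false with $\lnot\textit{inTree}(l,\cdot)$: in $snapshot_{T_{i}}$ the terminal of $physical\_path(keys^{0})$ is $l$, which is \textit{Empty} or a \textit{Leaf} with a different key, so by Claim~2 no reachable \textit{Leaf} carries $keys^{0}$, i.e.\ $keys^{0}$ is absent --- the witness this branch needs (when $l$ is \textit{Empty}, $l$ itself is the required inactive/empty terminal). (iii) contain returns false with $\textit{inTree}(l,\cdot)\wedge\textit{moved}(l)$: then $l$ is a \textit{Leaf} carrying $keys^{0}$ and $\textit{moved}(l)$ holds at $T_{c}$, so $\tau_{1}\le T_{c}$; moreover $T_{i}<\tau_{2}$ since $l$ is reachable at $T_{i}$, so $t:=\max(T_{i},\tau_{1})\in[T_{i},\tau_{2})\cap[T_{i},T_{c}]$. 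At $t$, $\textit{moved}(l)$ holds (monotone) and $l$ is reachable (by the preceding paragraph if $t=\tau_{1}$, by the physical-path lemma if $t=T_{i}$), so $l$ is an \emph{inactive} \textit{Leaf}; and it is the terminal of $physical\_path(keys^{0})$ either by that lemma or by Claim~2.

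The main obstacle is case (iii): making rigorous that throughout the window in which $l$ is simultaneously reachable and moved, $l$ is still the terminal of $physical\_path(keys^{0})$ --- which rests on $op.rParent$ remaining reachable and still holding $l$ precisely because it carries a \textit{Move} (not a \textit{Compress}) $op$ across the entire replace phase of that move, and on the fact that $\textit{moved}$ flips true only at the first of those two replaces --- and then aligning this window with the physical-path snapshot $T_{i}$ and the read time $T_{c}$ via monotonicity of $\textit{moved}$. A minor point is how to read ``$l^{k}$ is inactive'' in case (ii): when $l$ is a \textit{Leaf} with a different key it is itself active, and the statement should be understood as ``$snapshot_{T_{i}}$ contains no active \textit{Leaf} with $keys^{k}$,'' which is exactly what the surrounding linearizability argument consumes.
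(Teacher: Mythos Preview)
Your argument is correct and follows the same route as the paper's: both invoke the physical-path snapshot for $find$, split on the Boolean outcome, and in the moved case locate the witness at (or just after) the first $mreplace$ on $op.iParent$. Your treatment is in fact more careful than the paper's---abstracting the monotonicity of $\textit{moved}(l)$ and taking $t=\max(T_i,\tau_1)$ to keep the witness inside contain's execution interval---and you rightly note the imprecision in the statement's wording for case~(ii), where $l$ may itself be active yet carry the wrong key.
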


We list out the \textbf{linearization points} of other operations as follows:

\begin{itemize}
\item \textit{insert(key)}. The linearization point of the insert operation that returns false is $T_{i}$ after calling $find(key)$ at which $l$ at the end of $physical\_path(key)$ contains the key and is not moved. For the insert operation that returns true, the linearization point is at the first successful replace operation (line~\ref{quadboost: helpSubstitute helpReplace}).

\item \textit{remove(key)}. The linearization point of the remove operation that returns false is $T_{i}$ after calling $find(key)$ at which $l$ at the end of $physical\_path(key)$ does not contain the key or is moved. For the successful remove operation, we define the linearization point where the node with key is replaced by an \textit{Empty} node (line~\ref{quadboost: helpSubstitute helpReplace}).

\item \textit{move(oldKey, newKey)}. For the unsuccessful move operation, the linearization point depends on both $newKey$ and $oldKey$. If $rl$ at the end of $physical\_path(oldKey)$ does not contain $oldKey$ or is moved, the linearization point is $T_{i1}$ after calling $find(keys)$. Or else, if $il$ at the end of $physical\_path(newKey)$ contains $newKey$ and is not moved, the linearization point is at $T_{i2}$ after calling $find(keys)$.

For the successful move operation, the linearization point is the first successful replace operation. (line~\ref{quadboost: helpMove helpReplace first} or line~\ref{quadboost: helpMove helpReplace first first})
\end{itemize}

\begin{claim}
quadboost is linearizable.
\end{claim}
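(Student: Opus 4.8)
The plan is to prove linearizability in the standard two-step fashion: (i) show every basic operation has a linearization point lying between its invocation and its response, using exactly the points already prescribed in the excerpt, and (ii) show that ordering the operations by these points produces a legal sequential history of a dictionary over distinct two-dimensional keys. The bridge between the concurrent object and the sequential specification is the abstract state, which I would take to be the set of keys carried by \emph{active} \textit{Leaf} nodes; by the earlier claim that a quadtree keeps its structural properties in every snapshot, together with the lemma placing the returned $l^k$ on $physical\_path(keys^k)$, this equals $\{\,k : \text{the terminal of } physical\_path(k) \text{ is an active \textit{Leaf} with key } k\,\}$ in every snapshot. The only events that alter this set are successful \emph{replace} operations, so the induced history can only change state at replace steps, while $find(keys)$, already shown to be "linearizable" with its own linearization point, merely reads a consistent snapshot.

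First I would check that each prescribed point falls inside its operation's interval. For contain/insert/remove/move returning false, the point is an instant $T_i$ that occurs after the relevant $find(keys)$ call (hence after invocation) and before the return, which is immediate. For insert/remove/move returning true, the point is the first successful replace belonging to the $op$ the call created in its last loop iteration; the lemma asserting that this replace happens before the return and after the call's own successful flag places it in the interval, even when a helper thread physically performs it. Using the lemma that an \textit{Internal} node never reuses an $op$ together with the classification of successful transitions (flag$\to$replace$\to$unflag, flag$\to$replace, flag$\to$unflag), each update $op$ has exactly one "first successful replace", so distinct successful updates receive distinct linearization points; any ties can occur only among state-preserving operations and may be broken arbitrarily.

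Next I would verify legality operation by operation. For contain, the dedicated lemma already supplies a snapshot in which $l^k$ on $physical\_path(keys^k)$ is active (true case) or inactive (false case), so the answer matches the abstract state there. For a false insert, remove, or move, the chosen $T_i$ after $find(keys)$ is a snapshot where the terminal contains/does not contain the key, or ($oldKey$ absent)/($newKey$ present) for move, so the sequential specification also returns false. For a true insert or remove I would invoke the transition lemma for flag$\to$replace$\to$unflag: once $op.parent$ is flagged with $op$ no replace can touch it until $replace_i^0$, and $l$ was on $physical\_path$ when it was read, so immediately before $replace_i^0$ the key is absent (insert) or present (remove) and immediately after it is present or absent; thus the operation's entire effect is that single step. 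The compress phase that may run afterwards only swings \textit{Internal} nodes whose children are all \textit{Empty} to \textit{Empty} nodes and never changes the set of active \textit{Leaf} keys, so it is irrelevant to linearization.

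The main obstacle, and where I would be most careful, is the atomicity of a successful \textit{move}. Its linearization point is the first successful replace inside \texttt{helpMove}: either the single combined CAS when $oldKey$'s and $newKey$'s terminals coincide, or otherwise the replace that installs $newIChild$ under $op.iParent$. I would argue that at exactly this CAS both effects become visible: $op.oldRChild.op$ has already been set to $op$, but $moved(op.oldRChild)$ still returns false until $op.iParent$ loses $op.oldIChild$, so just before the CAS $oldKey$ is still active and $newKey$ is still absent (its terminal is an \textit{Empty} node or a \textit{Leaf} without $newKey$, still on $physical\_path$ by the flag argument), while just after it $newKey$'s new \textit{Leaf} is active and $moved(op.oldRChild)$ flips to true, so $oldKey$'s terminal becomes inactive. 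Hence the combined remove-then-insert appears instantaneous, and the subsequent replace of $op.oldRChild$ by an \textit{Empty} node, as well as any compression, leave the abstract state unchanged. Completing this requires dispatching the case analysis (same parent versus distinct parents, same terminal versus distinct terminals) and the possibility of several helper threads running \texttt{helpMove} on the same $op$, for which the no-$op$-reuse lemma and the "first successful replace" lemma guarantee that only one replace per side ever commits. Assembling these pieces—each operation linearized at its point, the induced sequential history legal—yields the claim.
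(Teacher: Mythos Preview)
Your proposal is correct and follows essentially the same route as the paper: you take the abstract state to be the set of keys at active \textit{Leaf} nodes, use the paper's prescribed linearization points (first successful replace for successful updates; a post-$find(keys)$ snapshot for failed updates and for contain), and then verify, via Lemmas~16--19 and the transition invariants, that each operation's effect on the abstract set agrees with the sequential specification at its point. Your treatment of the move atomicity---that $moved(op.oldRChild)$ flips precisely when the replace on $op.iParent$ removes $op.oldIChild$, so the insertion of $newKey$ and the logical removal of $oldKey$ become visible at the same CAS---is in fact more explicit than the paper's main-text sketch and matches the reasoning spelled out in its appendix.
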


\subsection{Non-blocking}
Finally, we prove that quadboost is non-blocking, which means that the system as a whole is making progress even if some threads are starving.

\begin{lemm}
A node with a \textit{Compress} $op$ will not be pushed into $path$ more than once.
\end{lemm}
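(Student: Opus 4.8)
The plan is to read the claim as a bound per basic-operation call --- each such call owns a single stack $path$ (or, for $\textit{move}$, the pair $rPath, iPath$) --- stating that, counting only pushes of a node $n$ that occur while $n.op = \textit{Compress}$, at most one such push happens. Because a \textit{Compress} $op$ is never overwritten (Lemma 4) and never re-installed on a node (Lemma 5), this window is a suffix of $n$'s lifetime, so it suffices to show that within one call the first counted push of $n$ is also the last. Pushes happen only inside $\textit{find}$, at the line $path.push(l)$, and a single invocation of $\textit{find}$ descends strictly along child pointers; since by Claim 2 the reachable part of the structure is a tree, $\textit{find}$ pushes $n$ at most once per invocation, so I only have to exclude a second invocation (in a later \textit{while}-loop iteration, possibly launched by $\textit{continueFind}$ or $\textit{continueFindCommon}$) that pushes $n$ again.

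The key step is to show that if $\textit{find}$ pushes $n$ at a moment when $n.op = \textit{Compress}$, then $n$ is the \emph{last} node that invocation of $\textit{find}$ pushes. Indeed, by Lemma 12 all four children of $n$ are \textit{Empty}, and by Lemma 11 they remain \textit{Empty}; so, right after pushing $n$, $\textit{find}$ reads $pOp = n.op = \textit{Compress}$, steps into an \textit{Empty} child, and exits its loop, leaving $n$ on top of $path$. Control returns to the $\textit{insert}$ or $\textit{remove}$ \textit{while} loop, which pops $p = n$, skips the \textit{Clean} branch (since $pOp$ is the \textit{Compress} $op$), and calls $help(pOp) = \textit{helpCompress}$; this forces the $creplace$ that swings $op.grandparent$'s link off $n$, and by Lemma 8 that $creplace$ is bound to occur. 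For $\textit{move}$ the same push either makes $oldKey$'s terminal an \textit{Empty} node --- so $\textit{findCommon}$ returns $false$ and the call ends --- or puts $n$ on top of $iPath$ --- so $\textit{continueFindCommon}$ hands $n$ to $help$ / $\textit{helpCompress}$ and the $creplace$ fires as above.

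It then remains to argue that, once this $creplace$ has executed, $n$ is unreachable from the root forever, so no later $\textit{find}$ in the same call --- regardless of whether $\textit{continueFind}$ / $\textit{continueFindCommon}$ restart from a parent, from the LCA, or from the root --- can traverse into $n$ and push it again. This uses three facts: (a) by Claim 2 the reachable structure is a tree, so $n$'s only in-link is $op.grandparent \rightarrow n$; (b) that link is not moved by $ireplace$, $rreplace$, or $mreplace$, which replace \textit{Leaf} terminals and not the \textit{Internal} node $n$, and not by a $creplace$ belonging to an $op' \neq n.op$, since such an $op'$ would require $op'.parent = n$ with $n.op = \textit{Clean}$ at its $cflag$, impossible because the \textit{Compress} $op$ is permanent (Lemma 4); and (c) after this $creplace$ the slot holds a fresh \textit{Empty}, and every subsequent replace installs only freshly created nodes ($createNode$ outputs, new \textit{Empty}s, new \textit{Leaf}s --- Lemma 3), never the old node $n$. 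Hence $n$ is pushed at most once while $n.op = \textit{Compress}$.

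I expect the $\textit{move}$ case to be the main obstacle: with two stacks and the three possible restart targets in $\textit{continueFindCommon}$ (pop up to the LCA, clear $iPath$, or restart $\textit{findCommon}$ outright) there are several branches to check, and I must confirm that a \textit{Compress} node lying on the common path, or equal to the LCA, cannot slip past the argument --- the crux being that it has only \textit{Empty} descendants (Lemma 12), so the $oldKey$ search through it produces an \textit{Empty} terminal and $\textit{findCommon}$ returns $false$ before $n$ can be re-pushed. The single-path operations $\textit{insert}$ and $\textit{remove}$ are then corollaries of the argument above, and $\textit{contain}$ is immediate since it invokes $\textit{find}$ only once.
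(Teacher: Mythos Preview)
Your argument is correct and in fact more complete than the paper's own proof. The paper's proof is quite terse: it simply observes that a \textit{Compress} $op$ is never unflagged and then points to the lines in \textit{continueFind} and \textit{continueFindCommon} where $pOp$ (resp.\ $iOp$, $rOp$) is read and checked before the next $\textit{find}$ is launched, concluding that a \textit{Compress} node is popped and not reused as a restart point. What the paper leaves implicit is precisely the step you make explicit: why a subsequent $\textit{find}$, restarting from an ancestor, cannot descend back into $n$ and push it a second time. You close that gap by arguing that (i) a \textit{Compress} node is necessarily the last node any single $\textit{find}$ invocation pushes (all children \textit{Empty}, Lemmas~11--12), (ii) the ensuing control flow forces $\textit{helpCompress}$ on $n$'s own $op$, and (iii) after that $creplace$ the node is permanently unreachable (no replace ever reinstalls an old \textit{Internal} node). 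The paper's proof and yours rest on the same invariants, but your decomposition is the one that actually justifies the conclusion; the paper's version reads more as a pointer to the relevant code than as a self-contained argument.

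Your handling of the $\textit{move}$ branches is the right idea but, as you anticipate, is where the bookkeeping is heaviest. One point worth tightening in the final write-up: when $n$ sits on the common path (or is the LCA), the $oldKey$ search through $n$ yields an \textit{Empty} terminal and $\textit{findCommon}$ returns \textit{false}, so the call terminates and no further push can occur --- you state this, but make sure to also cover the \textit{continueFindCommon} restart-from-LCA branch explicitly, since there the LCA is re-read from $rPath$ and its $op$ is re-checked before $\textit{findCommon}$ is invoked again. The paper's proof cites those specific line numbers for the $rOp$/$iOp$ checks; you should do the same to pin down each restart path.
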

\begin{lemm}
For $path(keys^{k})$, if $n_{t}$ is active in $snapshot_{T_{i}}$, then $n_{0}, ... , n_{t-1}$ pushed before $n_{t}$ are active.
\end{lemm}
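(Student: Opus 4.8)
The plan is to prove the logically equivalent \emph{single--step} statement and iterate it: for each $j$ with $0\le j<t$, \emph{if $n_j$ is inactive in $snapshot_{T_i}$ then $n_{j+1}$ is also inactive in $snapshot_{T_i}$}. Chaining this $t-j$ times yields ``$n_j$ inactive $\Rightarrow n_t$ inactive'' for every $j<t$, whose contrapositive is exactly the lemma. The single step rests on three observations that are immediate from earlier results. (i) Every node recorded in $path(keys^k)$ is an \textit{Internal} node, since the find routines push a node only after testing that it is \textit{Internal}; in particular $n_0,\dots,n_t$ are \textit{Internal}. (ii) By Lemma~1 (item~5) there is a time $\sigma_{j+1}<T_i$ at which $n_{j+1}$ is the $d$-child of $n_j$ for some $d\in\{nw,ne,sw,se\}$; since $n_{j+1}$ is \textit{Internal} (hence not \textit{Empty}), Lemmas~11 and~12 forbid $n_j$ from carrying a \textit{Compress} $op$ at that moment, so Lemma~13 makes $n_j$ active at $\sigma_{j+1}$. (iii) A child pointer that currently points to an \textit{Internal} node is overwritten only by a $creplace$, because $ireplace$, $rreplace$ and $mreplace$ overwrite pointers to \textit{Leaf} nodes only (their $op$'s $oldChild$/$oldIChild$/$oldRChild$ fields are \textit{Leaf} by Lemma~2); moreover, once an \textit{Internal} node gets a \textit{Compress} $op$ it keeps it forever (Lemma~4) and is never installed as a child afterwards (every replace installs a freshly created node or sub-tree, by Lemma~3 and the $createNode$ specification), so an \textit{Internal} node, once unreachable from the root, stays unreachable.

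The single step then proceeds as follows. Assume $n_j$ is inactive at $T_i$. By (ii) it was active at $\sigma_{j+1}<T_i$, so by (iii) there is a first instant $\tau_{\mathrm{end}}\in(\sigma_{j+1},T_i]$ at which $n_j$ becomes unreachable; a single pointer change is responsible, and by (iii) it must be a $creplace$ that compresses $n_j$ itself --- compressing a proper ancestor of $n_j$ is impossible, since Lemma~12 would already have emptied that ancestor's children strictly before $\tau_{\mathrm{end}}$, breaking the path through it earlier. Hence a $cflag$ set $n_j.op$ to \textit{Compress} at some $\tau_{\mathrm{cflag}}<\tau_{\mathrm{end}}$, and by Lemma~12 all four children of $n_j$ are \textit{Empty} at $\tau_{\mathrm{cflag}}$. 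So the pointer $n_j.d$, equal to $n_{j+1}$ at $\sigma_{j+1}$, has been replaced by an \textit{Empty} node at some $\tau\in(\sigma_{j+1},\tau_{\mathrm{cflag}})$; by (iii) this replacement is a $creplace$ with $op.grandparent=n_j$ and $op.parent=n_{j+1}$, i.e. it compresses $n_{j+1}$. Thus $n_{j+1}$ has carried a \textit{Compress} $op$ since before $\tau$, and at $\tau$ the (unique) parent pointing at it is redirected; by (iii) it is never reattached, so $n_{j+1}$ is inactive at $T_i$, completing the step.

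The one ingredient above that is not completely immediate from the stated results is the use of ``the unique parent'', needed to conclude that redirecting one pointer really detaches $n_{j+1}$. I would obtain this as a corollary of Claim~2 (the quadtree properties of Definition~1 hold in every snapshot): the regions along any root-to-node path form the unique chain of nested quadrants from the root's region down to the node's region --- the quadrant of a region containing a strictly smaller quadrant-aligned region is determined by coordinates --- so a downward induction along two such paths, using that Definition~1 pins each \textit{Internal} node's four children to the four quadrant directions, shows the paths coincide; hence a reachable node has a unique parent, and no two distinct \textit{Internal} nodes with the same region are simultaneously reachable. I expect this ``unique embedding'' fact, together with the bookkeeping behind observation (iii) (a compressed \textit{Internal} node is never reinserted), to be the only delicate part; the rest is the short chain of pointer-rewriting arguments above. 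As a sanity check, Lemma~16 offers an alternative anchor in place of the times $\sigma_{j+1}$: just after the find the recorded path \emph{is} the physical path, so $n_0,\dots,n_t$ are all active then, and the same ``a compressed node has only \textit{Empty} children'' step propagates inactivity upward from $n_j$ to $n_t$.
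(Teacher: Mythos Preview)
Your proof is correct, but it takes a considerably more elaborate route than the paper's. The paper dispatches this lemma in two lines by combining two earlier facts: (a) the contrapositive of Lemma~13, so that any inactive \textit{Internal} node must carry a \textit{Compress} $op$; and (b) an auxiliary lemma (stated only in the appendix, immediately before this one) asserting that a node in $path(keys^{k})$ carrying a \textit{Compress} $op$ can only sit at the very end of the path. Since $n_0,\dots,n_{t-1}$ are not terminal in the path, none of them can have a \textit{Compress} $op$, hence by Lemma~13 each is active. There is no explicit chasing of $creplace$ events, no ``first time $n_j$ becomes unreachable'', and no appeal to unique parents or non-reinsertion of compressed nodes.

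What you do differently is unfold that auxiliary lemma into a direct pointer-level argument: you prove the single step ``$n_j$ inactive $\Rightarrow n_{j+1}$ inactive'' by locating the $creplace$ that detaches $n_j$, backing up to the $cflag$, and then finding the earlier $creplace$ that must have detached $n_{j+1}$. This is more work but also more honest about timing: the paper's auxiliary lemma asserts that a \textit{Compress}-flagged node has \textit{Empty} children and hence cannot be followed by an \textit{Internal} node in the path, but glosses over the fact that the child link was read \emph{before} the $cflag$ --- exactly the gap your argument fills by tracking the intervening $creplace$ on $n_{j+1}$. Your identification of the ``unique parent / never reinserted'' facts as the only genuinely delicate ingredients is accurate; the paper leaves both implicit in its quadtree-structure invariant (Claim~2). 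In short, the paper's approach is shorter because it packages the same reasoning into a separate lemma and then invokes Lemma~13 once, whereas yours is self-contained and makes the temporal reasoning explicit.
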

\begin{lemm}
If $n$ is the LCA node in $snapshot_{T_{i}}$ on $physical\_path$ for $oldKey$ and $newKey$. Then at $T_{i1}, T_{i1} > T_{i}$, $n$ is still the LCA on $active\_path$ for $oldKey$ and $newKey$ if it is active.
\end{lemm}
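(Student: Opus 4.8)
The plan is to show that the three ingredients that make $n$ the LCA are time-invariant, so the only way for $n$ to stop being the LCA is for $n$ to become unreachable from the root --- which is exactly the case the hypothesis ``if it is active'' sets aside. Concretely, in $snapshot_{T_i}$ the hypothesis gives that $n$ is an \textit{Internal} node lying on the physical path of both $oldKey$ and $newKey$, that every ancestor of $n$ routes the two keys to the same child, and that at $n$ itself the two keys route to different children, i.e.\ $\mathrm{getQuadrant}(n,oldKey)\neq\mathrm{getQuadrant}(n,newKey)$ --- the two physical paths must diverge, since the two terminals are distinct nodes (keys are never duplicated), and $n$ is by definition the deepest node at which they still agree. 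By Observation~1 the class of $n$ and its space information $\langle x,y,w,h\rangle$ never change, so for all $T_{i1}>T_i$: $n$ is still \textit{Internal}; the region of $n$ still contains both coordinate points; and the divergence test at $n$ still evaluates the same way. Thus nothing ``local to $n$'' can break, and it remains to handle reachability of the path above $n$.

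Assume now $n$ is active at $T_{i1}$. By Claim~2 the tree at $T_{i1}$ is a valid quadtree, so by the region--nesting property the children of an \textit{Internal} node partition its region into the four dyadic quadrants, one per direction. Hence for the fixed dyadic region of $n$ there is at most one realizable chain of \textit{Internal} nodes from the root to $n$: at each scale exactly one quadrant direction contains $n$'s region, so the directions, and therefore the nodes, are forced. Since $n$ is active, this chain $root=m_0,m_1,\dots,m_k=n$ actually exists at $T_{i1}$. Moreover each intermediate $m_j$ ($j<k$) has the \textit{Internal} child $m_{j+1}$, hence does not have all-\textit{Empty} children, hence (by the lemma that only an all-\textit{Empty} \textit{Internal} node may carry a \textit{Compress} $op$) carries no \textit{Compress} $op$, hence (by the lemma that an \textit{Internal} node whose $op$ is not \textit{Compress} is active) is itself active; so the whole chain is an $active\_path$ prefix.

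It then remains to check that this chain is exactly the common routing path of the two keys and that $n$ is its lowest node. Running $\mathrm{find}$ for $oldKey$ from the root at $T_{i1}$, at each $m_j$ the routing picks the direction of the quadrant of $m_j$'s region that contains the point $oldKey$; since that quadrant contains the region of $n$ and, by validity, the child in that direction is $m_{j+1}$, the search moves to $m_{j+1}$. The identical argument applies to $newKey$. So both searches traverse $m_0,\dots,m_k=n$ and agree up to $n$, then split at $n$ by the preserved divergence test. Therefore $n$ is on the $active\_path$ of both keys and is their deepest common node, i.e.\ it is again their LCA on the $active\_path$ at $T_{i1}$, as required; and if $n$ is not active at $T_{i1}$ the statement is vacuous.

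I expect the main obstacle to be the second step: justifying that reachability of $n$ forces a unique root-to-$n$ chain of \textit{Internal} nodes and that this chain survives concurrent modifications. This is where one must lean on Claim~2 (the tree is a valid quadtree in every snapshot) together with the fact that the only operation that deletes an \textit{Internal} node is $creplace$, which fires only under a \textit{Compress} $op$ and hence only on an all-\textit{Empty} \textit{Internal} node; ruling that out for every node on the chain down to $n$ is the crux. The remaining parts are bookkeeping on top of the already-established immutability of node classes and of \textit{Internal} space information.
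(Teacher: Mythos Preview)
Your argument is correct and considerably more thorough than the paper's own proof. The paper dispatches the lemma in three lines: it invokes Corollary~1 (any two nodes share a common search path from the root) and Lemma~21 (if a node in $path(keys^k)$ is active in a snapshot, then so are all nodes pushed before it), and concludes that the root-to-$n$ subpath is active. It does not separately argue that $n$ remains the \emph{lowest} common ancestor---i.e.\ that the divergence of $oldKey$ and $newKey$ at $n$ persists---nor does it spell out why the ancestor chain at $T_{i1}$ must coincide with the one at $T_i$. Those facts are left implicit in the immutability of space information and the tree invariant; you make them explicit, which is a genuine improvement in rigor.

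One simplification: in your second step, once you have established that $n$ is reachable from the root at $T_{i1}$ via the chain $m_0,\dots,m_k$, every $m_j$ is trivially reachable and hence active by definition. The detour through ``$m_j$ has an \textit{Internal} child $\Rightarrow$ not all-\textit{Empty} $\Rightarrow$ no \textit{Compress} $op$ $\Rightarrow$ active'' is unnecessary for that conclusion. (It would be needed if you wanted to argue, conversely, that the chain \emph{survives} from $T_i$ to $T_{i1}$ without knowing $n$ is still reachable, but that is not the direction you need.) The paper's shortcut is precisely Lemma~21 applied at the later snapshot; your route is self-contained and avoids relying on that lemma's somewhat loose phrasing about ``nodes above.''
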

\begin{lemm}
$path^{k}$ returned by $find(keys)$ consists of finite number of keys.
\end{lemm}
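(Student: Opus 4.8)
The plan is to reduce the claim to the single structural fact that a quadtree has bounded height in every reachable snapshot, and then read off finiteness of the returned stack. First I would invoke the earlier characterisation of what a find routine returns: for any of $find$, $continueFind$, $findCommon$, $continueFindCommon$, the stack $path^{k}$ it returns, together with the terminal $l^{k}$ at its bottom, coincides with $physical\_path(keys^{k})$ in some $snapshot_{T_{i}}$. Hence it suffices to show every $physical\_path$ is finite, i.e.\ that the tree has finite depth at $T_{i}$. For that I would induct on the operations executed up to $T_{i}$: the initial configuration is the fixed two-layer structure of dummy \textit{Internal} nodes over a layer of \textit{Empty} nodes (Lemma~10), and the only subroutine that can deepen the tree is $createNode$, which by Lemma~3 returns either a single \textit{Leaf} or a finite subtree separating two distinct keys; since by time $T_{i}$ only finitely many basic operations have been invoked and the coordinate domain (\texttt{double}, restricted to the fixed root region, which never changes by Observation~1) is bounded, the accumulated depth is finite, so $path^{k}$ is finite.

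I would also record the self-contained version that actually drives the non-blocking argument, since it does not presuppose that the find routine returns at all. Inspect the descending loop of $find$ (and of $findCommon$): each iteration reads $l.class()$ and, when $l$ is \textit{Internal}, pushes $l$ and advances to a child via $getQuadrant$. By Observation~1 and Claim~2 (the $\langle x,y,w,h\rangle$ fields never change and the tree keeps its properties in every snapshot), the node visited at step $j$ is an \textit{Internal} node whose region has width $w_{0}/2^{j}$, where $w_{0}$ is the root width, and, because $getQuadrant$ always routes toward $\langle keyX, keyY\rangle$, that region contains the target; so the visited nodes strictly descend the region hierarchy and, once the side length drops below the separation available in the bounded domain, the next node encountered is \textit{Leaf} or \textit{Empty} and the loop halts. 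For $continueFind$ and $continueFindCommon$ I would add that they first pop a prefix of an already finite stack, performing one $helpCompress$ (a single \texttt{CAS}) per popped \textit{Compress} node, and then call $find$/$findCommon$ exactly once with no nested recursion; the lemma that a \textit{Compress} node is never pushed into $path$ more than once guarantees no node recurs, so these routines terminate with a finite $path^{k}$ as well.

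The step I expect to be the main obstacle is ruling out that a descending thread is ``outrun'' by concurrent insertions that keep extending its path faster than it moves down --- the height-at-a-snapshot bound alone does not formally forbid this. The resolution is that the bound is uniform, not merely snapshot-local: every subtree produced by $createNode$ has depth at most the number of subdivisions needed to separate two keys of the bounded coordinate domain within the fixed root region, so there is an absolute ceiling on how deep the tree can ever grow along a fixed routing direction; combined with the monotone, region-halving descent above, a find loop cannot exceed that ceiling regardless of interference. Everything remaining is bookkeeping: checking that the upward phases of $continueFind$ and $continueFindCommon$ only ever move toward the root (and never back above the LCA, respectively the root) before restarting a downward find, which is immediate from their code together with the no-re-push property of \textit{Compress} nodes.
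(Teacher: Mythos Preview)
Your core argument matches the paper's: both reduce to the fact that $path^{k}$ coincides with a $physical\_path$ in some snapshot (the paper cites Lemma~14, you cite the same characterisation), and both bound the depth of any physical path by combining the assumption of finitely many basic operations with the post-condition of $createNode$ (Lemma~3) that each invocation contributes only a finite subtree. The paper's proof is terser and does not spell out the ``outrun'' concern you raise; it simply observes that finitely many successful inserts/moves, each adding finitely many nodes, yields a finite tree, hence finite physical paths.

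Your second paragraph---the region-halving argument showing that the width seen at step $j$ is $w_{0}/2^{j}$ and must eventually undercut the minimum key separation---is additional and not present in the paper. It is a genuinely different route that avoids invoking the snapshot-path lemma and instead argues termination directly from the descent invariant; however, as you yourself note, it still needs the finite-operations assumption to extract a uniform separation bound, so it ultimately leans on the same hypothesis. The paper's version is shorter but leaves the uniformity step implicit; yours is more explicit about why a per-snapshot bound does not suffice and how the uniform bound closes the gap.
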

\begin{lemm}
There is a unique \textit{spatial order} among nodes in a quadtree in every snapshot.
\end{lemm}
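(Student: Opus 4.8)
The function \textit{getSpatialOrder} compares two \textit{Internal} nodes by the lexicographic order of the triple $\langle x, y, w\rangle$ of their (immutable) fields. Lexicographic comparison of real triples is automatically a total preorder on any set, so the entire content of the lemma is \emph{antisymmetry}: I must show that any two \textit{Internal} nodes $p,q$ present (reachable from the root) in $snapshot_{T_i}$ with $p.x=q.x$, $p.y=q.y$, $p.w=q.w$ are the same node. The spatial order is only ever queried on such nodes (the parents \textit{ip} and \textit{rp} in \textit{move}), so nothing needs to be said about \textit{Leaf} or \textit{Empty} nodes, nor about comparing across different snapshots.

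I would do this in two steps, both carried out inside the single fixed $snapshot_{T_i}$ using the fact (Claim~2) that the quadtree satisfies Definition~1 there, together with Observations~1--3. \textbf{Step 1: equal triple $\Rightarrow$ equal region.} By Definition~1 every child of an \textit{Internal} node $n$ has $w'=n.w/2$ and $h'=n.h/2$, so the ratio $w/h$ is invariant along any root-to-node path; by Observations~1 and~2 the root's dimensions, and more generally an \textit{Internal} node's $\langle x,y,w,h\rangle$, are immutable, so $w/h = w_{root}/h_{root}$ for every reachable \textit{Internal} node. Hence $w$ determines $h$, and $p.w=q.w$ forces $p.h=q.h$; thus $p$ and $q$ occupy the identical region $R=\langle x,y,w,h\rangle$. \textbf{Step 2: equal region $\Rightarrow$ equal node.} Iterating the quadrant formulas of Definition~1 along any root-to-node path $\mathrm{root}=n_0,n_1,\dots,n_k$ with $n_j=n_{j-1}.d_j$, the region of $n_j$ is the $d_1\cdots d_j$ sub-quadrant of the root region, hence a function of the direction word $(d_1,\dots,d_j)$ alone. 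This map from direction words to regions is injective (the standard dyadic-address fact: a depth-$k$ grid cell has $x = x_{root}+a\,w_{root}/2^{k}$, $y = y_{root}+b\,h_{root}/2^{k}$ with the base-$2$ digits of $a,b$ recovering the word). Therefore a reachable \textit{Internal} node with region $R$ must equal $\mathrm{root}.d_1.d_2.\cdots.d_k$, where $(d_1,\dots,d_k)$ is the unique preimage of $R$ and $k=\log_2(w_{root}/R.w)$ (a nonnegative integer by the halving rule); since each ``$.d_j$'' is a single field, this value is uniquely determined, so $p=q$. Combined with totality of the lexicographic preorder, \textit{getSpatialOrder} is a strict total order in every snapshot.

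The routine but genuinely computational core is the dyadic-address injectivity in Step~2 (the arithmetic relating grid coordinates, depth, and direction words), which I would spell out by an easy induction on $k$ from Definition~1's explicit child-coordinate equations. The part requiring the most care is not the geometry itself but keeping the argument honest about concurrency: every appeal to Definition~1, to the immutability of $\langle x,y,w,h\rangle$, and to ``a reachable non-root \textit{Internal} node has a reachable \textit{Internal} predecessor'' must be made about the one frozen $snapshot_{T_i}$, justified by Claim~2 and Observations~1--3; once that is pinned down, nothing about the concurrent modifications enters the proof.
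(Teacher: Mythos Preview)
Your proposal is correct and follows essentially the same line as the paper: reduce the claim to antisymmetry of the lexicographic order on $\langle x,y,w\rangle$, i.e., show that two reachable \textit{Internal} nodes with the same triple coincide. The only notable differences are in the level of detail. For Step~1 the paper simply states that only square partitions are used, so $w=h$ and the triple already determines the region; your ratio-invariance argument is a slightly more general way to reach the same conclusion. For Step~2 the paper merely asserts, appealing to the maintained quadtree properties, that two squares with the same upper-left corner and width cannot both occur, whereas you actually spell out why via the direction-word/dyadic-address injection. So your plan is the paper's argument with the gaps filled in more carefully.
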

\begin{lemm}
There are a finite number of successful $flag\rightarrow replace \rightarrow$, $flag\rightarrow replace$, $flag\rightarrow unflag$ transitions.
\end{lemm}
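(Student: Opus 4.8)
The plan is to run the non-blocking proof by contradiction, and this lemma is its technical core, so I read it inside that contradiction hypothesis (it cannot hold unconditionally, since infinitely many successful \textit{insert}s would give infinitely many successful $rreplace/ireplace$). Concretely, assume there is an infinite execution of quadboost in which only finitely many basic operations are linearized, and fix a time $T^{\star}$ past the last such linearization. Once we know only finitely many successful transitions occur, the contradiction is short: every loop iteration of \textit{insert}/\textit{remove}/\textit{move} either carries out part of a successful transition, or helps a pending $op$ through a help routine that terminates (by the lemma that a \textit{find}-path has finitely many keys together with the observation that \textit{help}, \textit{helpCompress}, \textit{helpMove}, \textit{helpSubstitute} are not mutually recursive), or performs a \textit{find}$(keys)$ over a finite path and retries; a thread taking infinitely many steps past $T^{\star}$ therefore must witness infinitely many successful transitions. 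So the goal is: only finitely many successful $flag\to replace\to unflag$, $flag\to replace$, and $flag\to unflag$ transitions occur.

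First I would bound the number of \emph{successful flags}. For an $iflag_i$ or $rflag_i$, a single successful flag suffices; by the finite-path lemma and the no-mutual-recursion observation, whoever next touches $op_i$ (the owner, or a helper) runs \textit{helpSubstitute} and performs $replace_i$ within $O(1)$ further steps, so the operation is linearized at its first replace shortly after the flag. Hence a successful $iflag$ or $rflag$ after $T^{\star}$ would force a linearization after $T^{\star}$, a contradiction, so only finitely many occur in the whole execution. The same reasoning applies to an $mflag_i$ that reaches $allFlag$: \textit{helpMove} then performs the (at most two) replaces. The flags not covered are (i) a first $mflag_i$ whose partner flag fails and (ii) the $cflag$s.

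For case (i): by the lemma characterizing the $flag\to unflag$ transition, it arises precisely from such an aborted move, and its partner flag fails only because the partner parent's $op$ changed between being read and the flag attempt, i.e.\ a \emph{distinct} successful flag intervened on that parent; using the unique-spatial-order lemma, the consistent flag ordering prevents these aborts from cascading, so each aborted move is chargeable to $O(1)$ other successful flags and produces $O(1)$ unflags and no replace. For case (ii): \textit{compress} is invoked only from the tail of a \textit{remove} or \textit{move} that returns true, hence from an operation that is linearized; it walks a finite path (again the finite-path lemma), issuing a bounded number of $cflag/creplace$ steps, while \textit{helpCompress} issues no fresh flags. Since only finitely many \textit{remove}/\textit{move} operations linearize, only finitely many $cflag$s and $creplace$s occur. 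Combining, the total number of successful flags is finite. Finally, by the lemmas describing the three transition shapes, each successful flag $op_i$ is followed by only a bounded number of replaces and unflags belonging to it (the first $replace$ on $op_i.parent^{k}$ that belongs to $op_i$ must succeed and later ones are suppressed; likewise for unflags), so the numbers of successful replaces and unflags are a constant times the number of successful flags, and the total number of successful transitions is finite.

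The main obstacle is case (i): bounding the $flag\to unflag$ transitions without circularity, since an aborted move can itself leave an $op$ that other threads help. The facts that the help routines generate no new flags and are not mutually recursive, together with the unique spatial order ensuring a global flagging order, are exactly what keep the amortized charge finite; making that charging argument airtight (so that each interference is billed to a distinct successful flag and no infinite regress is possible) is where the care is needed.
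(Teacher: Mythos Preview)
Your proposal is essentially correct and follows the same three-part decomposition as the paper: (1) $flag\to replace\to unflag$ transitions are in bijection with basic operations that return true (each such operation creates a unique $op$ at its last iteration), hence finite; (2) $flag\to replace$ transitions come only from \textit{compress}, which is invoked only by a successful \textit{remove}/\textit{move} and walks a finite path, hence finite; (3) $flag\to unflag$ transitions come only from aborted moves, and the spatial-order lemma is the key to bounding them.

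Two small divergences are worth noting. First, the paper does not set up the contradiction hypothesis the way you do; it simply \emph{assumes} a finite number of basic-operation invocations (stated as an observation in the appendix) and proves each transition count is finite outright. Your ``past $T^{\star}$, no more linearizations'' framing is fine and leads to the same conclusions, but you should be aware the paper's lemma is stated under a standing finiteness-of-invocations hypothesis, not unconditionally.

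Second, for part (3) the paper does not use a charging argument. Instead it argues: once the structure is stabilized (by parts (1) and (2)), build a directed dependency graph among the pending moves, with an edge from $move_i$ to $move_i'$ when $move_i$'s second flag fails because $move_i'$ holds that parent. The spatial order guarantees this graph is acyclic (a cycle would give $rp_i>ip_i$ and $ip_i>rp_i$ simultaneously), so some move is a sink with no out-edge; that move sets $doCAS$ true and performs a $flag\to replace\to unflag$ transition, contradicting stabilization. Your ``each aborted move is chargeable to $O(1)$ other successful flags'' is not quite this: the flag that blocks an aborted move may itself be the first flag of another aborted move, so a direct per-abort charge does not immediately work. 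What you correctly identify as ``the main obstacle'' is resolved in the paper precisely by this acyclicity/sink argument rather than by amortized charging; sharpening your sketch to that DAG argument would close the gap you already anticipated.
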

\begin{lemm}
If the help function returned at $T_{i}$, and $find(keys)$ at the prior iteration reads $p^{k}.op$ at $T_{i1} < T_{i}$, keys in $snapshot_{T_{i}}$ and $snapshot_{T_{i1}}$ are different.
\end{lemm}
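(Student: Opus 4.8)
I would prove this by a case analysis on the class of the operation object $op$ that the $help$ call dispatches on. When it is \textit{Clean} the statement is vacuous, since the retry loops of $insert$, $remove$, and $move$ only reach a $help$ call after they have read a non-\textit{Clean} $op$ on $p^{k}$ (either the one $find(keys)$ returned, or one re-read after a failed flag). So I would fix a non-\textit{Clean} $op$, of class \textit{Substitute}, \textit{Move}, or \textit{Compress}. By Observation~6 the routines $helpSubstitute$, $helpMove$, $helpCompress$ are not mutually recursive, hence each terminates, and by the ``first replace belonging to $op_{i}$ must succeed'' clauses of Lemma~7, Lemma~8, and Lemma~9 --- together with Claim~1 --- once the matching routine returns at $T_{i}$ every replace that belongs to that $op$ has already succeeded (performed by this thread or by a helper). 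So the remaining goal is to exhibit one \emph{key-changing} replace tied to that $op$ that is executed strictly after $T_{i1}$, the instant $find(keys)$ read $p^{k}.op$; with Claim~2 (the structure stays a valid quadtree) this yields $snapshot_{T_{i1}} \neq snapshot_{T_{i}}$ on the key set.

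The clean cases are $op$ of class \textit{Substitute} and $op$ of class \textit{Move} that eventually sets $allFlag$. If such an $op$ is the value $find(keys)$ returned, then $find(keys)$ traversed $p^{k}$ while $p^{k}.op = op$ and descended to the terminal $l^{k}$, so by the post-conditions of $find(keys)$ (Lemma~1) the affected child slot of $p^{k}$ still held the old node at $T_{i1}$; since the $ireplace$/$rreplace$/$mreplace$ of $op$ swings exactly that slot and --- by Lemma~5 (no $op$ is reused) and Lemma~7 --- is the first replace belonging to $op$, it occurs in $(T_{i1}, T_{i}]$, and an $ireplace$ adds $newKey$, an $rreplace$ deletes the key, and $mreplace$ does both, so the key sets differ. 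If instead $find(keys)$ returned a \textit{Clean} $op$ at $T_{i1}$ and the thread's own flag attempt failed and re-read a non-\textit{Clean} $op$, then a successful $iflag$/$rflag$/$mflag$ occurred on $p^{k}$ in $(T_{i1}, T_{i})$ and the same argument applies, using additionally that $find(keys)$ observed a \textit{Leaf} in the pertinent direction under $p^{k}$ at $T_{i1}$ that is then removed before $help$ returns.

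The main obstacle is the cases where helping the observed $op$ performs no key-changing replace: (i) $op$ of class \textit{Compress}, where $help(op)$ does only a $creplace$ that by Lemma~11 and Lemma~12 swaps an \textit{Internal} node with all children \textit{Empty} for one \textit{Empty} node; and (ii) a \textit{Move} $op$ whose second flag never succeeds (the failed $flag\to unflag$ situation of Lemma~9), where $helpMove$ returns without any replace. For (i) I would try to use the compress-cascade structure: a \textit{Compress} $op$ on $p^{k}$ is created only inside $compress(\cdot)$ under the $check$ guard, so $p^{k}$ lost all its \textit{Leaf} descendants through earlier $rreplace$/$mreplace$ steps, and I would attempt to place the triggering deletion (or a $find(keys)$ read of a \textit{Leaf} under $p^{k}$ that is subsequently removed) inside $(T_{i1}, T_{i}]$. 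For (ii) the failed second $mflag$ witnesses a competing $op$ on that parent, so a competing operation is in flight. If these refinements cannot be carried through exactly as stated, the fallback in the write-up is to restrict this lemma to the clean cases and discharge the rest inside the global non-blocking argument, charging each such iteration to a distinct $creplace$ (every \textit{Internal} node is compressed at most once and was produced by some earlier $insert$ or $move$) or to a competing $op$, and invoking Lemma~25 to conclude that unbounded looping would force unboundedly many completed operations.
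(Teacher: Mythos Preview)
Your case analysis is considerably more explicit than the paper's argument, which does not split on the class of $op$ at all. The paper argues by contradiction and leans almost entirely on Lemma~24 (the finiteness of successful $flag\!\to\!replace\!\to\!unflag$, $flag\!\to\!replace$, and $flag\!\to\!unflag$ transitions): it notes via Observation~6 that $help$ terminates, and then asserts that if a call to $help$ does not itself change the tree, the only way this can persist is through $flag\!\to\!unflag$ transitions, which Lemma~24 bounds globally. In other words, the paper's proof is already essentially your fallback---it absorbs the \textit{Compress} and failed-second-flag \textit{Move} cases into the global finiteness argument rather than discharging them pointwise. Your explicit identification of those two cases as the obstruction to a per-call ``keys differ'' claim is sharper than anything in the paper, which never cleanly separates ``snapshot changed'' from ``key set changed'' and does not mention the \textit{Compress} case by name.

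One tightening for your clean cases: when $find(keys)$ reads a \textit{Substitute} (or a \textit{Move} that ultimately sets $allFlag$) at $T_{i1}$, the replace belonging to that $op$ may already have executed before $T_{i1}$ (flag done, replace done, unflag still pending), so you cannot always place the key-changing replace in $(T_{i1},T_{i}]$ from that $op$ alone; you would need to argue separately that something else moved in that window. The paper's proof does not handle this either. Your proposed fallback---charge each non-key-changing $help$ to a distinct global event and invoke Lemma~24---is the honest route and is effectively what the paper does, only you state it more clearly.
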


\begin{claim}
quadboost is non-blocking.
\end{claim}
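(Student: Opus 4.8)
The plan is the standard progress argument for optimistic tree structures, carried out by contradiction. Suppose \emph{quadboost} is not non-blocking: then there is an execution with a suffix in which threads keep taking steps but no basic operation ever returns. Because the number of threads is finite, there is a time $T'$ after which a fixed set of threads each execute a single never-returning operation, taking infinitely many steps each. First I would eliminate \textit{contain}: by the finite-path lemma every \textit{find}/\textit{continueFind}/\textit{findCommon}/\textit{continueFindCommon} traverses a bounded number of \textit{Internal} nodes --- the inner loops only pop from a finite stack while helping \textit{Compress} states, and since a node with a \textit{Compress} $op$ is pushed at most once and keeps that $op$ forever, this helping is bounded --- hence \textit{contain} always returns. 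Thus every stuck operation is an \textit{insert}, \textit{remove}, or \textit{move} spinning in its \textbf{while} loop. Each iteration either returns (contradiction) or a flag CAS (line~\ref{quadboost: insert helpFlag}, line~\ref{quadboost: remove helpFlag}, line~\ref{quadboost: move helpFlag}, or the second flag inside \textit{helpMove}) fails and the thread re-descends; since re-descending terminates, each stuck operation fails flags infinitely often. A failed flag CAS on an \textit{Internal} node $n$ means some other thread performed a successful flag on $n$ between the read of $pOp$ and the CAS, so, as only finitely many nodes are involved after $T'$, infinitely many flag operations succeed after $T'$.

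This contradicts the lemma bounding the number of successful transitions, which I would establish as follows. After $T'$ no \textit{iflag} or \textit{rflag} can succeed, because its owner then unconditionally runs \textit{helpSubstitute} (and, for \textit{remove}, the bounded \textit{compress}) and returns, contradicting that it never returns; and a successful first \textit{mflag} of a move leads, through \textit{helpMove}, either to $allFlag$ becoming true and a \texttt{return true} by the owner, or to the second \textit{mflag} failing, which triggers a flag\,$\to$\,unflag transition that releases the first parent and leaves the tree unchanged. Consequently no \textit{ireplace}, \textit{rreplace}, or \textit{mreplace} occurs after $T'$ (a replace belongs to a successful flag), so after $T'$ no \textit{Leaf}, \textit{Empty} node, or sub-tree is ever swung into the tree; hence only the finitely many \textit{Internal} nodes already present can ever be flagged \textit{Compress}, each at most once, giving finitely many \textit{cflag}s and \textit{creplace}s as well. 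So only finitely many successful flag operations of any kind occur after $T'$ --- the required contradiction, which together with the reduction above proves the claim.

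The step I expect to be the main obstacle is exactly the bound on flag\,$\to$\,unflag transitions, i.e.\ ruling out a livelock among concurrently stuck \textit{move} operations: a move's first successful \textit{mflag} does not commit it to returning true, so two moves contending for the same pair of parents could in principle keep flagging, failing on the second parent, unflagging, and retrying forever. This is precisely what the unique \emph{spatial order} among \textit{Internal} nodes (the $x\!\to\!y\!\to\!w$ comparison in \textit{getSpatialOrder}) is designed to prevent, and I would exploit it with an extremal argument: among the finitely many stuck moves, take the one whose first-to-be-flagged parent is smallest in the spatial order; any competing operation touching that parent either helps this move complete or is itself forced to acquire an even smaller parent first, an impossible infinite descent, so this move's flags eventually stop being blocked and it completes, a contradiction. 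Tying up \textit{continueFindCommon} additionally requires the lemmas that the LCA remains the LCA on the \textit{active\_path} while it stays active, that a prefix of a recorded path is active whenever its last node is, and that each \textit{help}-then-retry reflects a genuine change in the tree's keys, so that re-descents are never vacuous. Assembling these pieces yields that \emph{quadboost} is non-blocking.
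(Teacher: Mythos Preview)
Your proposal is correct and follows essentially the same plan as the paper: a contradiction argument that bounds each traversal (finite path, a \textit{Compress} node is pushed at most once), reduces the remaining obstruction to livelock among concurrent \textit{move}s, and breaks that livelock via the total spatial order on \textit{Internal} nodes. The paper frames the contradiction slightly differently: instead of ``infinitely many failed flags $\Rightarrow$ infinitely many successful flags $\Rightarrow$ contradiction with the finite-transitions lemma,'' it invokes Lemma~26 (every \textit{help} between iterations changes the key set), so an infinite loop directly contradicts the finiteness of key changes; your counting formulation is a valid dual of this and arguably a bit more modular. One small point to tighten is your dichotomy ``each iteration either returns or a flag CAS fails'': an iteration can also skip the flag attempt entirely when $pOp$ is not \textit{Clean} (lines~\ref{quadboost: insert pop Clean check}, \ref{quadboost: remove pop Clean check}, \ref{quadboost: move rOp check}--\ref{quadboost: move ip rp same}); you already have the patch in your last paragraph (the \textit{help}-then-retry is never vacuous), which is exactly the paper's Lemma~26, so folding that case into the main counting step would make the argument complete as stated.
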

\begin{proof}
We have to prove that no process will execute loops infinitely without changing keys in a quadtree. First, we prove that $path$ is terminable. Next, we prove that $find(keys^{k})$ starts from an active node in $physical\_path(keys^{k})$ in $snapshot_{T_{i}}$ between $i_{th}$ iteration and $i+1_{th}$ iteration. Finally, some \textit{Leaf} nodes in $snapshot_{T_{i1}}$ at the returning of $find(keys)$ at $i_{th}$ iteration are different from $snapshot_{T_{i2}}$ at the returning of $find(keys)$ at $i+1_{th}$ iteration.

For the first part, we initially start from the root node. Therefore, $path$ is empty. Moreover, as Lemma 22 shows that $path^{k}$ consists of finite number of keys, we establish this part.

For the second part, the continueFind function and the continueFindCommon function pop all nodes with \textit{Compress} $op$ from $path$. For the insert operation and remove operation, since Lemma 13 shows that an \textit{Internal} nodes whose $op$ is not \textit{Compress} is active, and Lemma 21 shows that nodes above the active node are also active, there is a snapshot in which the top node of $path$ is still in $physical\_path(keys^{k})$. For the move operation, if either $rFail$ or $iFail$ is true, it is equivalent with the prior case. If $cFail$ is true, Lemma 22 illustrates that if the LCA node is active, it is in $physical\_path$ for both $oldKey$ and $newKey$. Thus, there is also a snapshot that the start node is in $physical\_path$.

The third part is proved by contradiction, assuming that a quadtree is stabilized at $T_{i}$, and all invocations after $T_{i}$ are looping infinitely without changing \textit{Leaf} nodes.

For the insert operation and the remove operation, before the invocation of $find(keys)$ at the next iteration, they must execute the help function at line~\ref{quadboost: insert help} and line~\ref{quadboost: remove help} accordingly. In both cases, the help function changes the snapshot (Lemma 26).

For the move operation, different situations of the continueFindCommon function are considered. If $rFail$ or $iFail$ is true, two situations arise: (1) $iOp$ or $rOp$ is \textit{Clean} but $mflag$ fails; or (2) $iOp$ or $rOp$ is not \textit{Clean}. In both cases, before the invocation of $find(keys^{k})$, the help function is performed at line~\ref{quadboost: continueFindCommon insert help} and line~\ref{quadboost: continueFindCommon remove help}. Thus by Lemma 26, the snapshot is changed between two iterations. Now consider if $cFail$ is true. If $ip$ and $rp$ are the same, it could result from the difference between $rOp$ and $iOp$. In this case, the snapshot might be changed between reading $rOp$ and $iOp$. It could also result from the failure of $mflag$. For the above cases, the help function at line~\ref{quadboost: continueFindCommon remove help common} would change the quadtree. If $ip$ and $rp$ are different, it results from either $iPath$ or $rPath$ that have popped the LCA node. We have proved the case in which either $iFail$ or $rFail$ is true and derives a contradiction.

From the above discussions, we prove that quadboost is non-blocking.
\end{proof}

\section{Evaluation}

We ran experiments on a machine with 64GB main memory and two 2.6GHZ Intel(R) Xeon(R) 8-core E5-2670 processors with hyper-threading enabled, rendering 32 hardware threads in total. We used the RedHat Enterprise Server 6.3 with Linux core 2.6.32, and all experiments were ran under Sun Java SE Runtime Environment (build 1.8.0\_65). To avoid significant run-time garbage collection cost, we set the initial heap size to 6GB.

For each experiment, we ran eight 1-second cases, where the first 3 cases were performed to warm up JVM, and the median of the last 5 cases was used as the real performance. Before the start of each case, we inserted half keys from the key set into a quadtree to guarantee that  the insert and the remove operation have equal success opportunity initially. 

We apply uniformly distributed key sets that contain two-dimensional points within a square. We use the $range$ to denote the border of a square. Thus, points are located inside a $range * range$ square. In our experiments, we used two different key sets: $10^{2}$ keys to measure the performance under high contention, and $10^{6}$ keys to measure the performance under low contention. For simplicity, we let the $range$ of the first category experiment be $10$, rendering $1-10^{2}$ consecutive keys for one-dimensional structure. For the second category experiment, we let the $range$ be $1000$, generating $1-10^{6}$ consecutive keys.

\begin{table}[htbp]
\caption{The concurrent quadtree algorithms with different optimization strategies.}
\label{tab:quadtree}
\centering
\tiny
\begin{tabular}{|c|c|c|c|}
\hline
 \textbf{type} & \textbf{insert} & \textbf{remove} & \textbf{move}\\
\hline
 \textbf{qc} & single CAS & single CAS & \textbf{not support}\\
\hline
 \textbf{qb-s} & \begin{tabular}{@{}c@{}} flag CAS, \\continuous find, \\stack \end{tabular}& \begin{tabular}{@{}c@{}}flag CAS, \\continuous find, \\decoupling stack, \\recursive compression\end{tabular} & \begin{tabular}{@{}c@{}}flag CAS, \\continuous find, \\decoupling stack, \\recursive compression\end{tabular}\\
\hline
 \textbf{qb-o} & \begin{tabular}{@{}c@{}}flag CAS,\\ continuous find  \end{tabular}& \begin{tabular}{@{}c@{}}flag CAS, \\continuous find, \\decoupling compression\end{tabular}& \begin{tabular}{@{}c@{}}flag CAS,\\continuous find, \\decoupling compression\end{tabular}\\
\hline
\end{tabular}
\end{table}

We evaluate quadboost algorithms via comparing with the state-of-the-art concurrent trees (kary, ctrie, and patricia) for throughput (Section~\ref{subsec:throughput}) and presenting the incremental effects of the optimization strategies proposed in this work (Section~\ref{subsec:analysis}). Table~\ref{tab:quadtree} lists the concurrent quadtree algorithms. \textbf{qb-o} (quadboost-one parent) is the one parent optimization based on \textbf{qb-s} (quadboost-stack) mentioned in Section 4.4. \textbf{qc} is the CAS quadtree introduced in Section 3.

\subsection{Throughput}
\label{subsec:throughput}
To the best of our knowledge, a formal concurrent quadtree has not been published yet. Hence, we compare our quadtrees with three one-dimensional non-blocking trees:
\begin{itemize}
\item \textbf{kary} is a non-blocking k-way search tree, where $k$ represents the number of branches maintained by an internal node. Like the non-blocking BST~\cite{ellen2010non}, keys are kept in leaf nodes. When $k=2$, the structure is similar as the non-blocking BST; when $k=4$, each internal node has four children, it has a similar structure to quadtree. However, kary's structure depends on the modification order, and it does not have a series of internal nodes representing the two-dimensional space hierarchy.
\item \textbf{ctrie} is a concurrent hash trie, where each node can store up to $2^{k}$ children. We use $k=2$ to make a 4-way hash trie that resembles quadtree. The hash trie also incorporates a compression mechanism to reduce unnecessary nodes. Different from quadtree, it uses a control node (\texttt{INODE} as the paper indicates) to coordinate concurrent updates. Hence, the search depth could be longer than quadtree.
\item \textbf{patricia} is a binary search tree, which adopts Ellen's BST techniques~\cite{ellen2010non}. As the author points out, it can be used as quadtree by interleaving the bits of $x$ and $y$. It also supports the move (replace) operation like quadboost. Unlike our LCA-based operation, it searches two positions separately without a continuous find mechanism. 
\end{itemize}

Since the above structures only store one-dimensional keys, we had to transform a two-dimensional key into a one-dimensional key for comparison. Though patricia could store two-dimensional keys using the above method, ctrie and kary cannot use it. Thus, we devised a general formula: $key^{1} = key^{2}_{x} * range + key^{2}_{y}$. Given a two-dimensional key-$key^{2}$, and $range$, we transformed it into a one-dimensional key-$key^{1}$. To refrain from trivial transformations by floating numbers, we only considered integer numbers in this section.

\begin{figure}[htbp]
\centering
\begin{subfigure}[b]{0.25\textwidth}
\centering
\includegraphics[width=\linewidth]{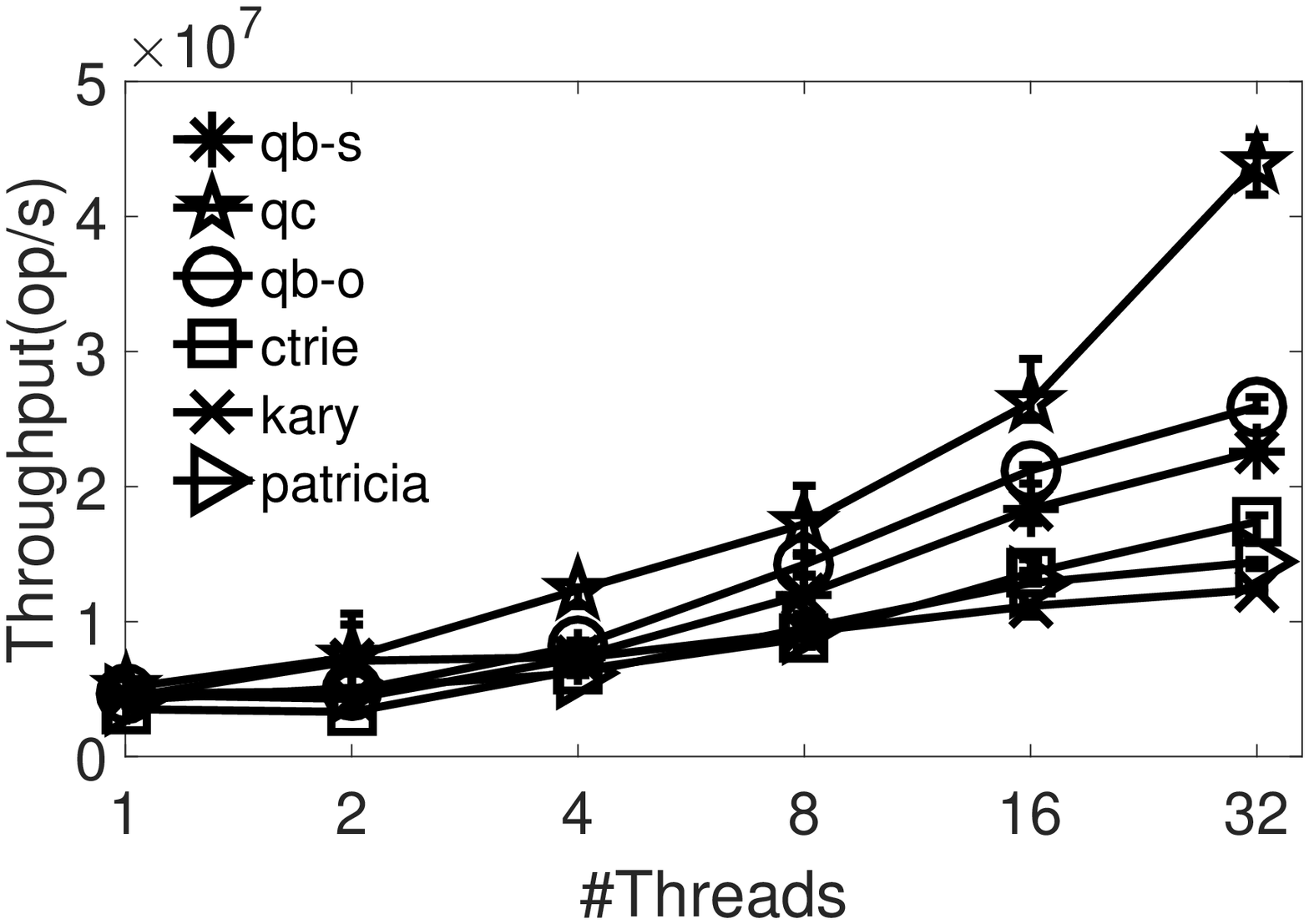}
\caption{$10^{2}$ keys}
\label{fig:hybrid_small}
\end{subfigure}%
\begin{subfigure}[b]{0.25\textwidth}
\centering
\includegraphics[width=\linewidth]{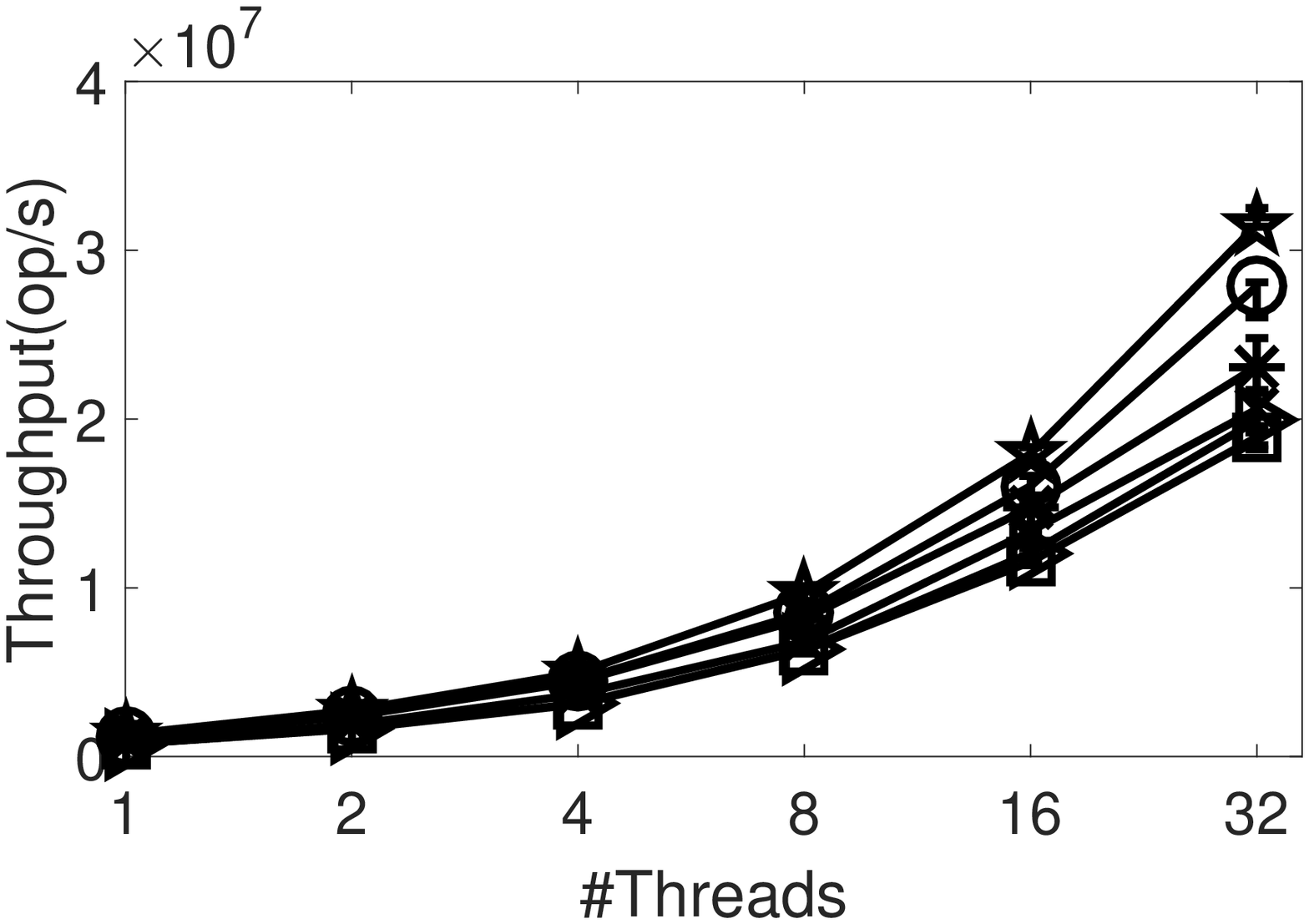}
\caption{$10^{6}$ keys}
\label{fig:hybrid_large}
\end{subfigure}
\caption{Throughput of different concurrent trees under both high and low contention (50\% \textit{insert}, 50\% \textit{remove}). }
\label{fig:hybrid}
\end{figure}

Due to the lack of the move operation in CAS quadtree (\textbf{qc}), we compare throughput with/without the move operation respectively. Figure~\ref{fig:hybrid} plots throughput without any move operation for the concurrent algorithms. It is not surprising to observe that \textbf{qc} achieves the highest throughput. To some extent, \textbf{qc} represents an upper bound of throughput because it maintains the hierarchy without physical removal, i.e., its remove operation only applies a CAS on edges to change links, which leads to less contention than other practical concurrent algorithms. However, both \textbf{qb-s} and \textbf{qb-o} can achieve comparable throughput when the key set becomes larger. This phenomenon occurs due to: (i) given the large key set, fewer thread interventions results in a lesser number of CAS failures on nodes; and (ii) both algorithms compress nodes and use the continuous find mechanism to reduce the length of traverse path.

As a comparison, \textbf{ctrie}, \textbf{kary}, and \textbf{patricia} show lower performances with the increasing number of threads. For instance, in Figure~\ref{fig:hybrid_small} at 32 threads, \textbf{qb-o} outperforms \textbf{ctrie} by 49\%, \textbf{patricia} by 79\%, and \textbf{kary} by 109\%. Note that \textbf{qb-o} and \textbf{qb-s} incorporate the continuous find mechanism to reduce the length of traverse path. Further, both \textbf{kary} and \textbf{patricia} flag the grandparent node in the remove operation, which allows less concurrency than \textbf{ctrie}, \textbf{qb-o} and \textbf{qb-s} with the decoupling approach shown in Figure~\ref{fig:decoupling}. \textbf{qb-s} is worse than \textbf{qb-o} due to its extra cost of recording elements and compressing nodes recursively. Figure~\ref{fig:hybrid_large} exhibits results from when the key set was large. There is less collision among threads but deeper depths of trees than the smaller key set. In the scenario, \textbf{qb-o} and \textbf{qb-s} show a similar performance as \textbf{qc} because of less number of CAS failures caused by thread interventions. \textbf{qb-o} is only 12\% worse than \textbf{qc} at 32 threads, but 47\% better than \textbf{ctrie}, 35\% better than \textbf{kary}, and 39\% better than \textbf{patricia} mainly for its shorter traversal paths caused by its static representation and the continuous find mechanism. As we discussed in the next section, \textbf{qb-s} and \textbf{qb-o} save a significant number of nodes as shown in Figure~\ref{fig:nodes}. It implies that \textbf{qb-s} and \textbf{qb-o} occupy less memory and result in a shorter path for traversal.

\begin{figure}[htbp]
\centering
\begin{subfigure}[t]{0.25\textwidth}
\includegraphics[width=\linewidth]{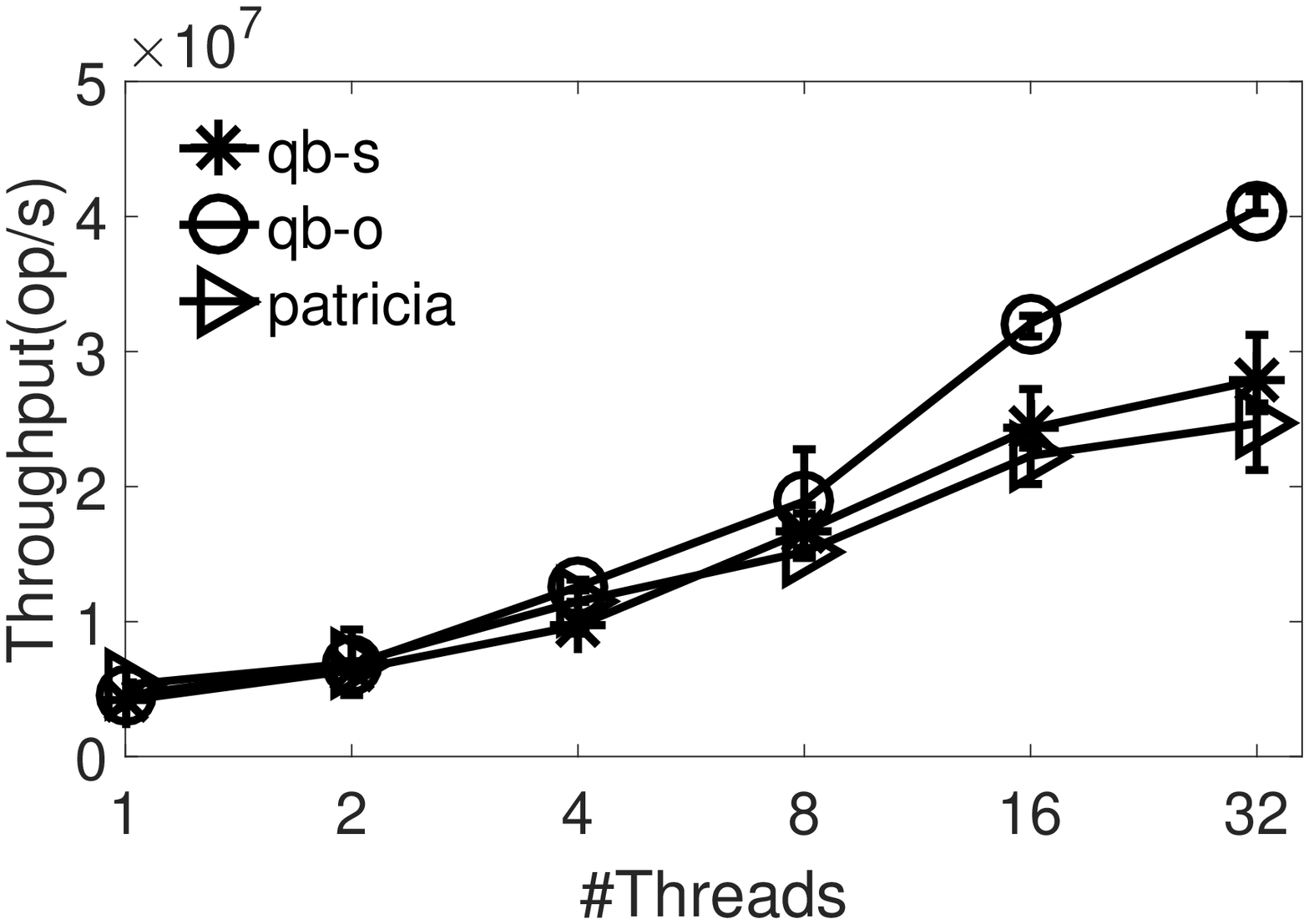}
\caption{$10^{2}$ keys}
\label{fig:move_small}
\end{subfigure}%
\begin{subfigure}[t]{0.25\textwidth}
\includegraphics[width=\linewidth]{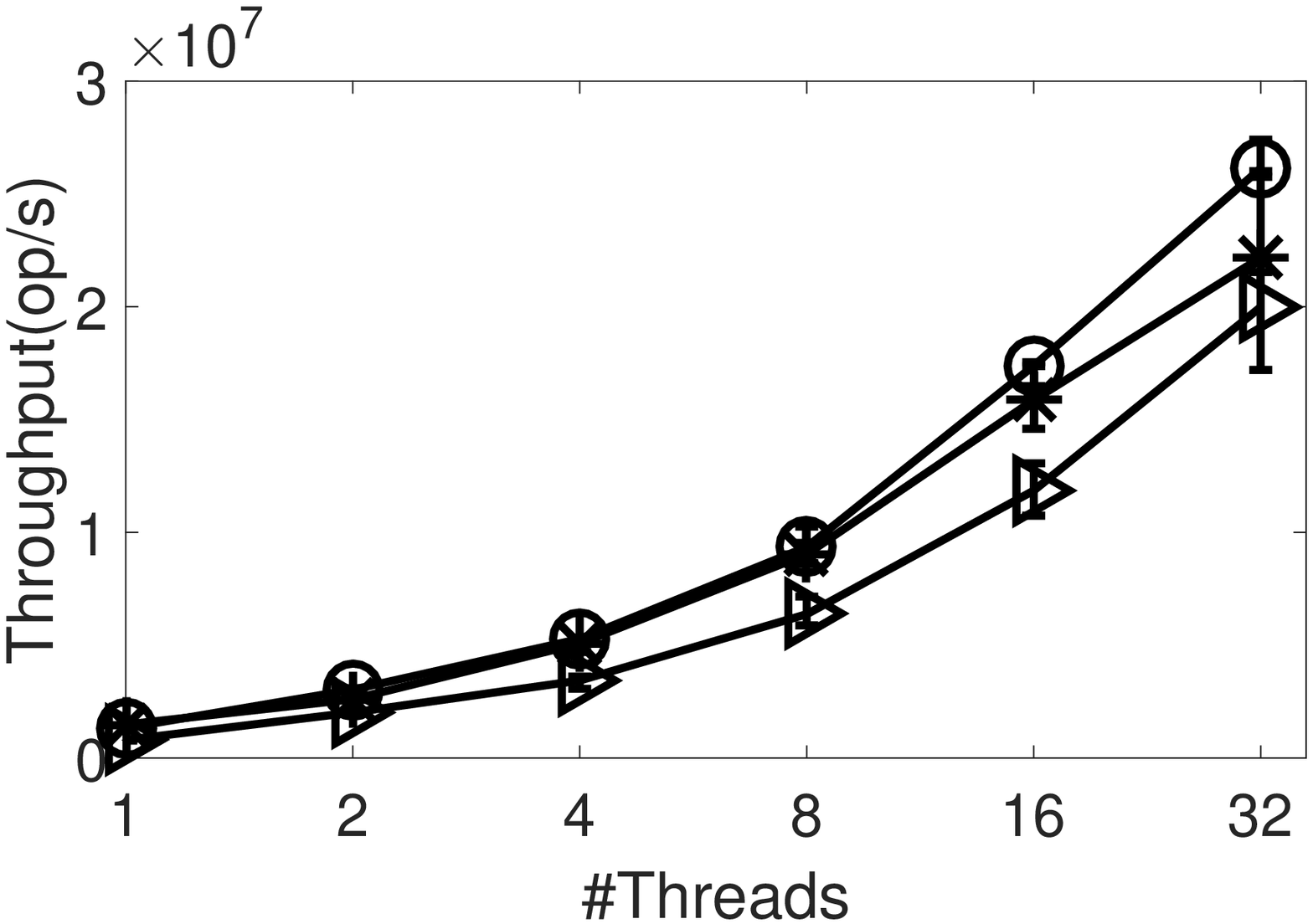}
\caption{$10^{6}$ keys}
\label{fig:move_large}
\end{subfigure}
\caption{Comparison of the move operation's throughput between quadboost and patricia in both small and large ranges (10\% \textit{insert}, 10\% \textit{remove}, 80\% \textit{move}). }
\label{fig:move}
\end{figure}

Figure~\ref{fig:move_large} demonstrates that quadboost has an efficient move operation. Using the small key set, where the depth is not a significant impact, Figure~\ref{fig:move_small} shows that \textbf{qb-o} is more efficient than \textbf{patricia} especially when contention is high. For example, it performs better than \textbf{patricia} by 47\% at 32 threads because it adopts the continuous find mechanism to traverse less path and decouples physical adjustment for higher concurrency. However, \textbf{qb-s} is similar to \textbf{patricia} since it has to maintain a stack and recursively compress nodes in a quadtree. Figure~\ref{fig:move_large} illustrates that \textbf{qb-s} and \textbf{qb-o} have a similar throughput for the large key set. \textbf{qb-o} outperforms \textbf{patricia} by 31\% at 32 threads. When the key set is large, the depth becomes a more significant factor due to less contention. Since each \textit{Internal} node in a quadtree maintains four children while \textbf{patricia} maintains two, the depth of \textbf{patricia} is deeper than quadboost. Further, the combination of the LCA node and the continuous find mechanism ensures that \textbf{qb-o} and \textbf{qb-s} do not need to restart from the root even if flags on two different nodes fail.
\subsection{Analysis}
\label{subsec:analysis}

To determine how quadboost algorithms improve the performance,  we devised two algorithms that incrementally use parts of techniques in \textbf{qb-o}:
\begin{itemize}
\item \textbf{qb-f} flags the parent of a terminal node in the move operation, the insert operation, and the remove operation. It restarts from the root without a continuous find mechanism. Further, it adopts the traditional remove mechanism mentioned in Figure~\ref{fig:BST's remove}.
\item \textbf{qb-d} decouples the physical adjustment in the remove operation based on \textbf{qb-f}.
\end{itemize}

$range$ here was set to $2^{32}-1$, and both $key_{x}$ and $key_{y}$ could be floating numbers. We used an insert dominated experiment and a remove dominated experiment to demonstrate the effects of different techniques. We used a remove dominated experiment to show the effect of decoupling, where {\em insert:remove} ratio was 1:9. In the insert dominated experiment, the {\em insert:remove} ratio was 9:1; hence, there were far more insert operations. Since fewer compress operations were induced, the experiment showed the effect of the continuous find mechanism.
\begin{figure}[htbp]
\centering
\begin{subfigure}[t]{0.25\textwidth}
\includegraphics[width=\linewidth]{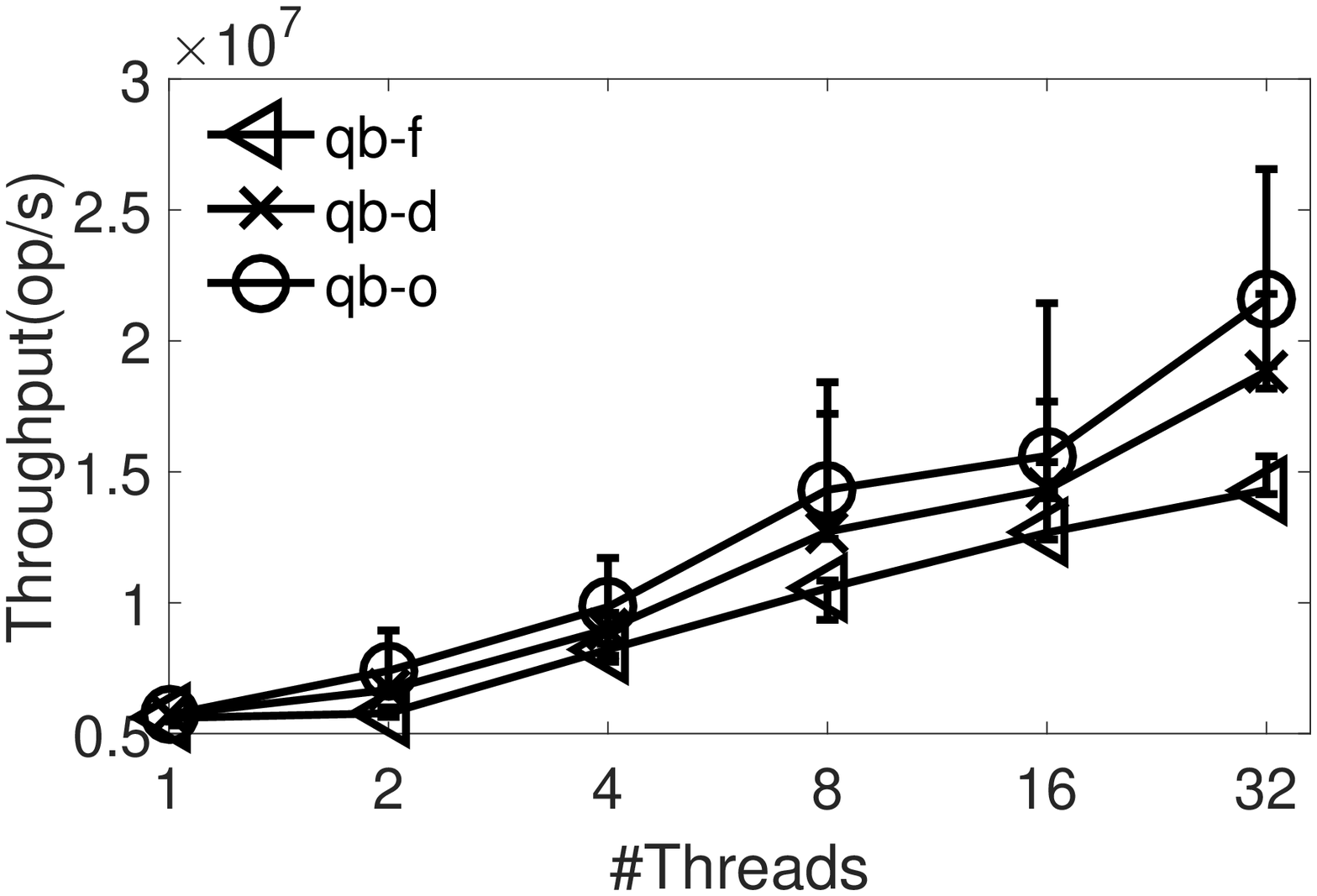}
\caption{10\% \textit{insert}, 90\% \textit{remove}}
\label{fig:dominate_remove}
\end{subfigure}%
\begin{subfigure}[t]{0.25\textwidth}
\includegraphics[width=\linewidth]{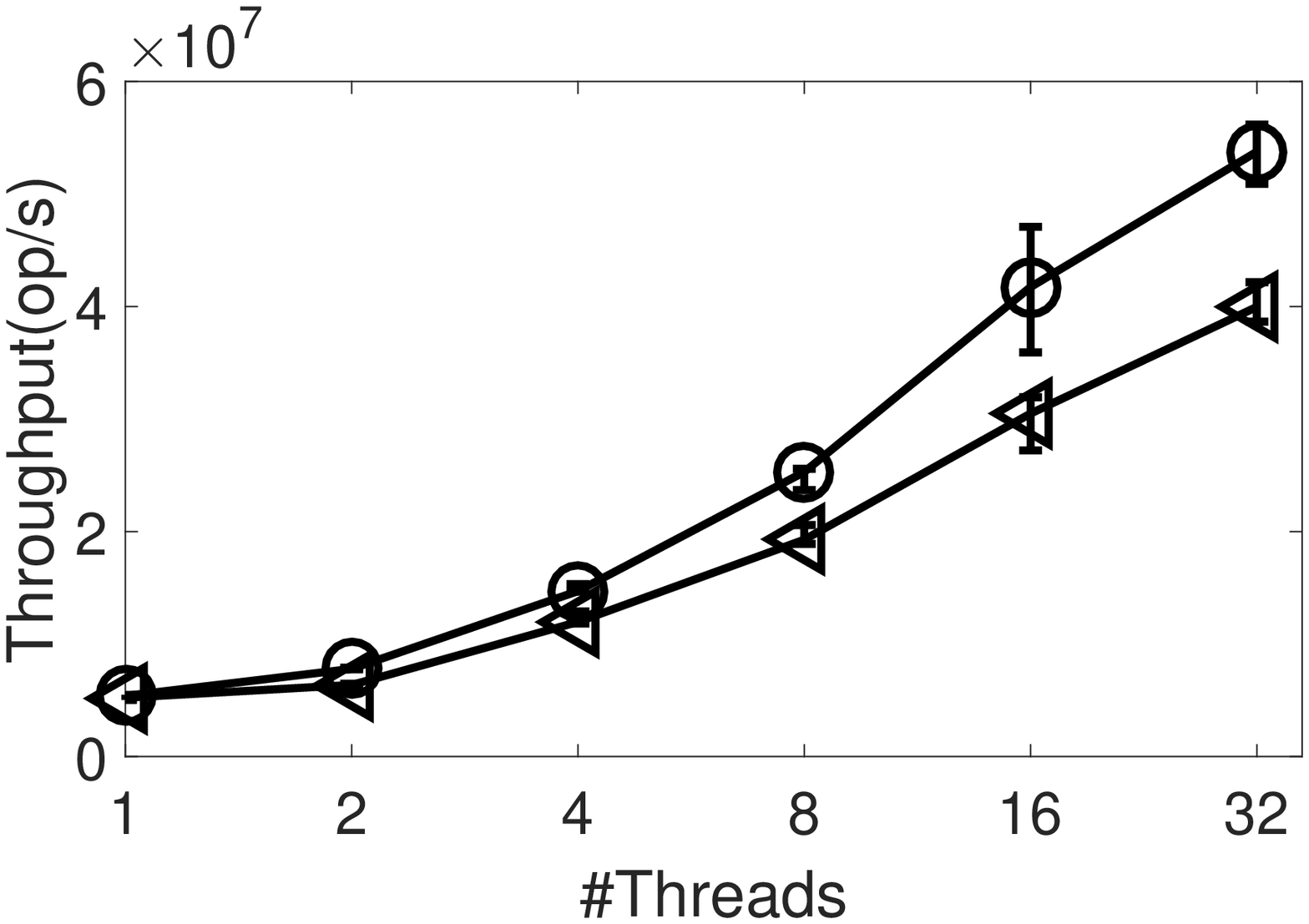}
\caption{90\% \textit{insert}, 10\% \textit{remove}}
\label{fig:dominate_insert}
\end{subfigure}
\caption{Throughput comparison in insert dominated and remove dominated cases ($10^{2}$ keys). }
\label{fig:dominate}
\end{figure}
Figure~\ref{fig:dominate_remove} illustrates that quadtrees with decoupling exhibit a higher throughput than \textbf{qb-f}, the basic flag concurrent quadtree. Besides, \textbf{qb-o} which incorporates the continuous find is more efficient than \textbf{qb-d}. Specifically, at 32 threads, \textbf{qb-o} performs 15\% better than \textbf{qb-d} and 51\% better than \textbf{qb-f}. From Figure~\ref{fig:dominate_insert}, we determine that \textbf{qb-o} outperforms \textbf{qb-f} by up to 35\%. Therefore, it demonstrates that the continuous find mechanism and the decoupling approach play a significant role in our algorithm.

\begin{figure}[htbp]
\centering
\begin{subfigure}[t]{0.25\textwidth}
\includegraphics[width=\linewidth]{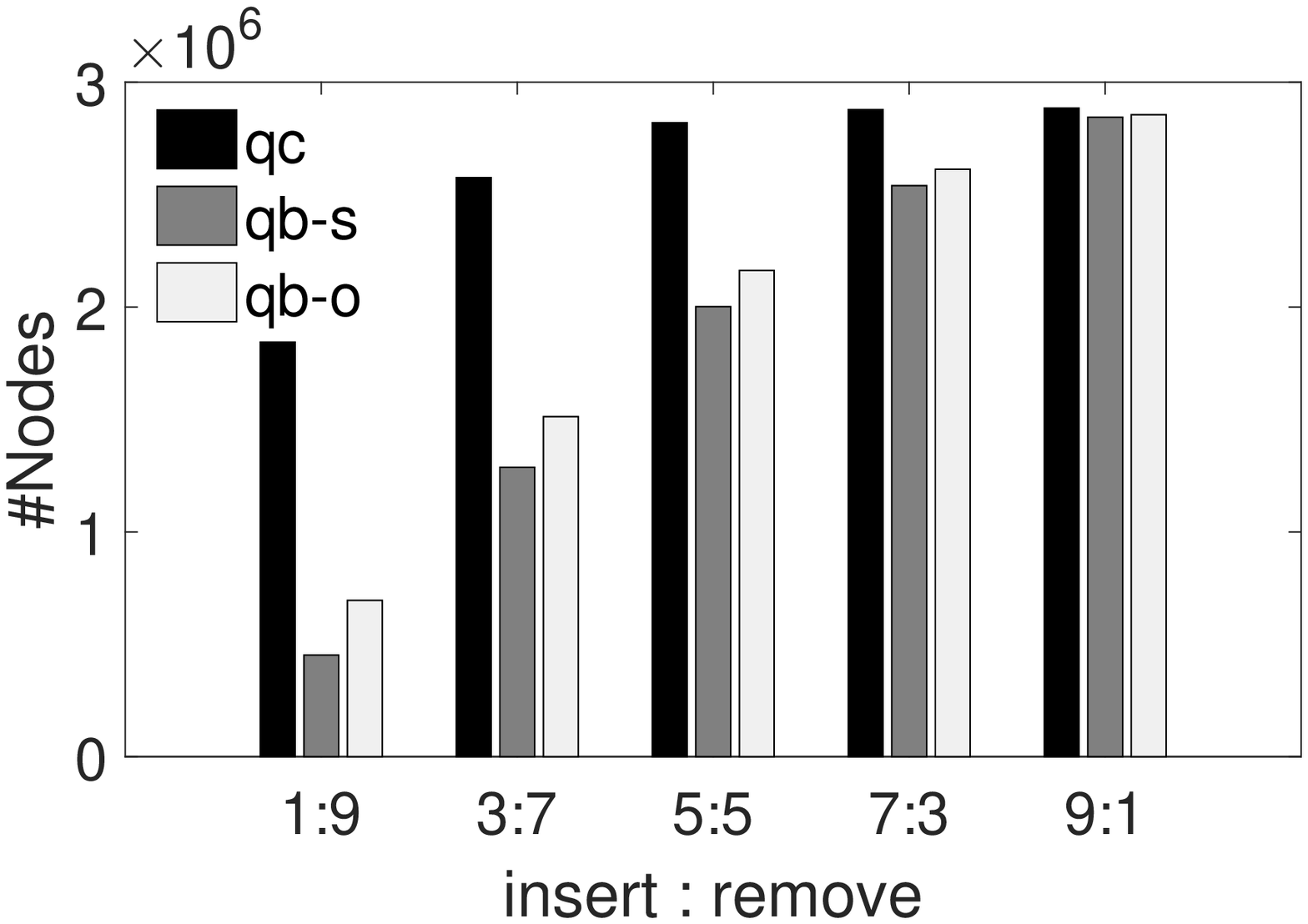}
\caption{Nodes}
\label{fig:nodes nodes}
\end{subfigure}%
\begin{subfigure}[t]{0.25\textwidth}
\includegraphics[width=\linewidth]{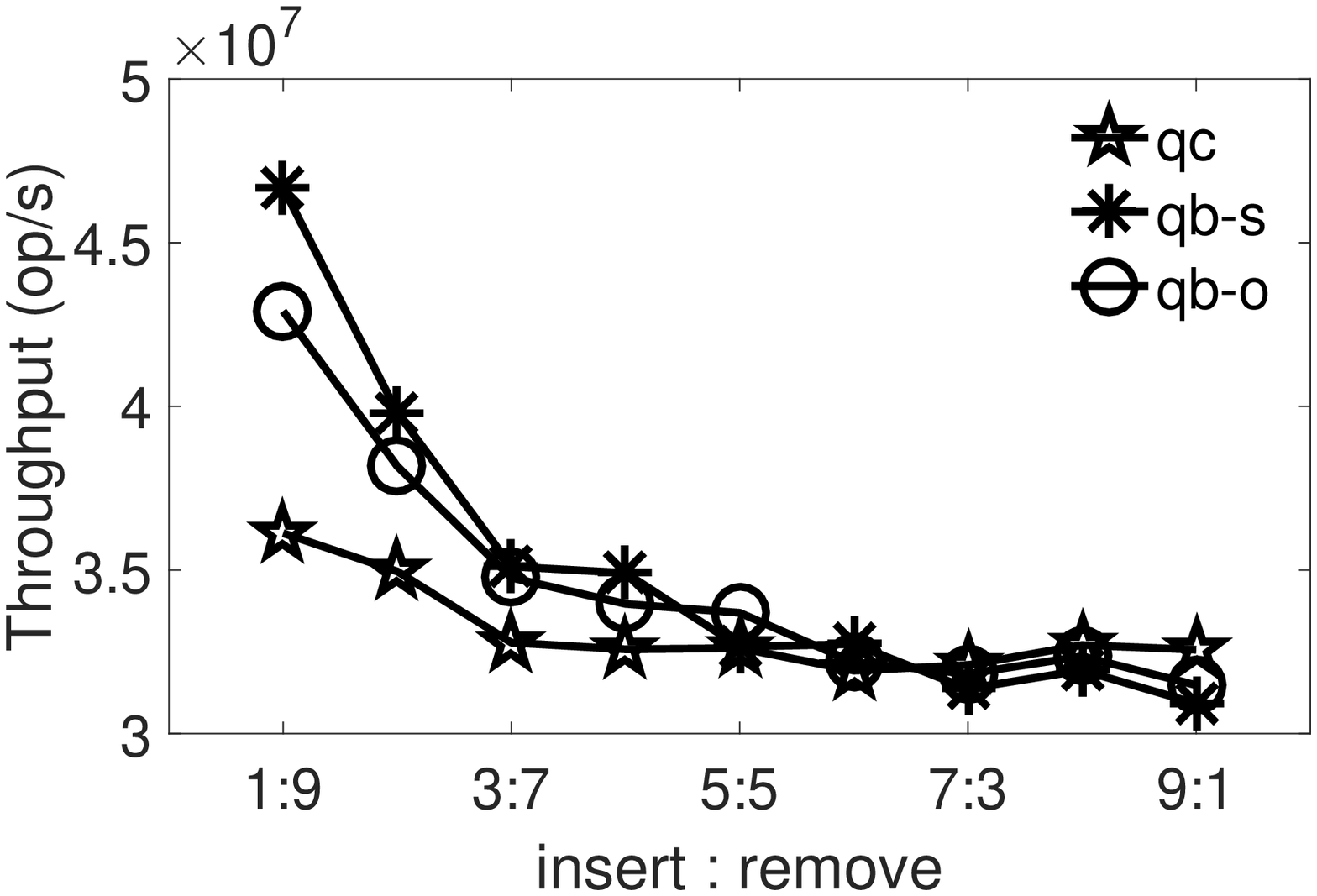}
\caption{Throughput}
\label{fig:nodes throughput}
\end{subfigure}
\caption{Number of nodes left and throughput under different ratios of insert:remove from 9:1 to 1:9, with fixed 90\% contain, under 32 threads and $10^{6}$ keys.}
\label{fig:nodes}
\end{figure}

Another advantage of quadboost results from the compression technique, which reduces the search path for each operation and the memory consumption. Figure~\ref{fig:nodes}\footnote{\footnotesize Unlike previous experiments, we run eight 3-second cases in the experiment to ensure stable amount of modifications} plots the number of nodes left and the throughput of each quadtree at different {\em insert:remove} ratios. As \textbf{qc} only replaces a terminal node with an \textit{Empty} node without compression, it results in the greatest number of nodes in the memory (Figure~\ref{fig:nodes nodes}). In contrast, \textbf{qb-s} and \textbf{qb-o} compress a quadtree if necessary. When \textbf{qc} contains much more nodes than other quadtrees, the remove operation dominates (the first groups bars to the left) and has three times the amount of nodes of \textbf{qb-s}. The result also indicates that \textbf{qb-o} contains a similar number of nodes to \textbf{qb-s} despite it only compresses one layer of nodes. Figure~\ref{fig:nodes throughput} illustrates the effectiveness of compression in the face of tremendous contain operations. \textbf{qb-s} outperforms \textbf{qc} by 30\% at 9:1 {\em insert:remove} ratio because \textbf{qb-s} adjusts the quadtree's structure by compression to reduce the length of the search path. With the increment of the insert ratio, \textbf{qb-s} performs similar to \textbf{qc} due to the extra cost of maintaining a stack and the recursive compression. However, \textbf{qb-o} achieves good balance between \textbf{qb-s} and \textbf{qc}, which compresses one layer of nodes without recording the whole traverse path.

\section{Related Works}
Because there are few formal works related to concurrent quadtrees, we present a roadmap to show the development of state-of-the-art concurrent trees in Figure~\ref{fig:trees roadmap}.

\begin{figure}[htbp]
\centering
\includegraphics[scale=0.4]{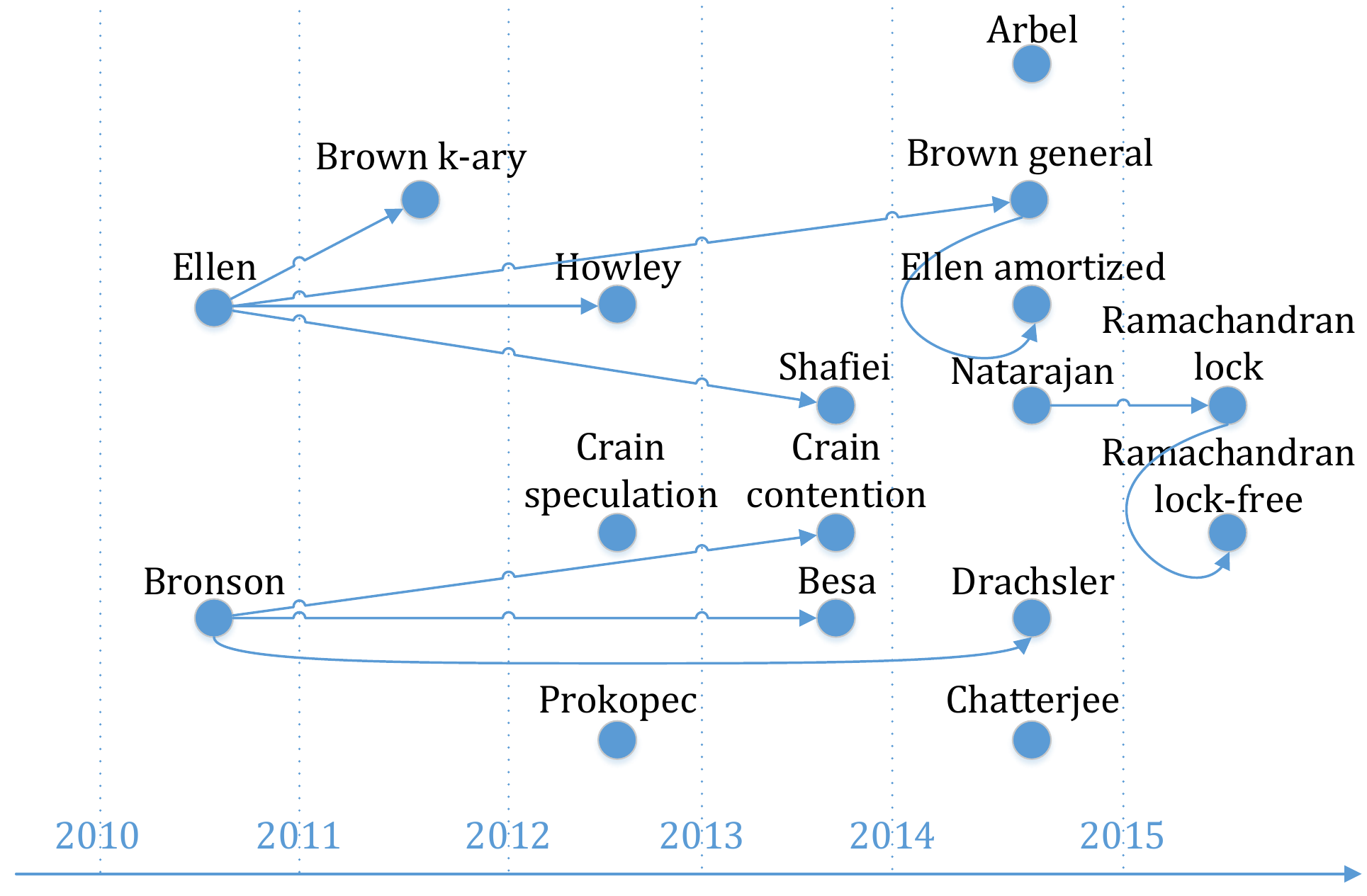}
\caption{Concurrent trees roadmap}
\label{fig:trees roadmap}
\end{figure}

Ellen~\cite{ellen2010non} provided the first non-blocking BST and proved it correct. This work is based on the cooperative method described in Turek~\cite{turek1992locking} and Barnes~\cite{barnes1993method}. Brown~\cite{brown2011non} used a similar approach for the concurrent k-ary tree. Shafiei~\cite{shafiei2013non} also applied the method for the concurrent patricia trie. It also showed how to design a concurrent operation where two pointers need to be changed. The above concurrent trees have an external structure, where only leaf nodes contain actual keys. Howley~\cite{howley2012non} designed the first internal BST by the cooperative method. Recently, Brown~\cite{brown2014general} presented a generalized template for all concurrent down-trees. Ellen~\cite{ellen2014amortized} exhibited how to incorporate a stack to reduce the original complexity from $O(ch)$ to $O(h+c)$. Our quadboost is a hybrid of these techniques. It uses a cooperative method for concurrent coordination, changes two different positions with atomicity, and devises a continuous find mechanism to reduce restart cost.

Different from the previously mentioned methods that apply flags on nodes, Natarajan~\cite{natarajan2014fast} illustrated how to apply flags on edges for a non-blocking external BST. Ramachandran~\cite{ramachandran2015castle} adopted CAS locks on edges to design a concurrent internal BST and later extended the work to non-blocking~\cite{ramachandran2015fast}. Their experiments showed that internal trees are more scalable than external trees with a large key range. On one hand, internal trees take up less memory; on the other hand, the remove operation is more complicated than that in external trees. Chatterjee~\cite{chatterjee2014efficient} provided a threaded-BST with edge flags and claimed it has a low theoretical complexity. Unlike the above trees that have to flag their edges before removal, our CAS quadtree uses a single CAS in both the insert operation and the remove operation.

The first balanced concurrent BST was proposed by Bronson~\cite{bronson2010practical} in which they used an optimistic and relaxed balance method to build an AVL tree. Besa~\cite{besa2013concurrent} employed a similar method for a red-black tree. Crain~\cite{crain2013contention} proposed a method that decouples physical adjustment from logical removal by a background thread. Drachsler~\cite{drachsler2014practical} mentioned an alternative technique called logical ordering, which uses the key order of the BST to optimize the contain operation. All of these works were constructed on fine-grained locks and are deadlock free. Based on  the special properties of quadtree, we also decoupled physical adjustment from logical removal and achieved a high throughput.

There are numerous studies on concurrent trees in addition to ours and the ones mentioned in this work. Crain~\cite{crain2012speculation} designed a concurrent AVL tree based on STM. Prokopec~\cite{prokopec2012concurrent} used a control node, which is similar to our \textit{Operation} object to develop a concurrent trie. Arbel~\cite{arbel2014concurrent} provided a balanced BST with RCU and fine-grained locks.

\section{Conclusions}
In this paper, we present a set of concurrent quadtree algorithms--{\em quadboost}, which support concurrent insert, remove, contain, and move operations. In the remove operation, we decouple physical updates from the logical removal to improve concurrency. The continuous find mechanism analyses flags on nodes to decide whether to move down or up. Further, our LCA-based move operation can modify two pointers with atomicity. The experimental results demonstrate that quadboost outperforms existing one-dimensional tree structures while maintaining a two-dimensional hierarchy. The quadboost algorithms are scalable with a variety of workloads and thread counts.

\bibliographystyle{IEEEtran}
% argument is your BibTeX string definitions and bibliography database(s)
\bibliography{IEEEabrv,ref}

% Generated by IEEEtran.bst, version: 1.13 (2008/09/30)
\begin{thebibliography}{10}
\providecommand{\url}[1]{#1}
\csname url@samestyle\endcsname
\providecommand{\newblock}{\relax}
\providecommand{\bibinfo}[2]{#2}
\providecommand{\BIBentrySTDinterwordspacing}{\spaceskip=0pt\relax}
\providecommand{\BIBentryALTinterwordstretchfactor}{4}
\providecommand{\BIBentryALTinterwordspacing}{\spaceskip=\fontdimen2\font plus
\BIBentryALTinterwordstretchfactor\fontdimen3\font minus
  \fontdimen4\font\relax}
\providecommand{\BIBforeignlanguage}[2]{{%
\expandafter\ifx\csname l@#1\endcsname\relax
\typeout{** WARNING: IEEEtran.bst: No hyphenation pattern has been}%
\typeout{** loaded for the language `#1'. Using the pattern for}%
\typeout{** the default language instead.}%
\else
\language=\csname l@#1\endcsname
\fi
#2}}
\providecommand{\BIBdecl}{\relax}
\BIBdecl

\bibitem{moir2007concurrent}
M.~Moir and N.~Shavit, ``Concurrent data structures,'' \emph{Handbook of Data
  Structures and Applications}, pp. 47--14, 2007.

\bibitem{shavit2011data}
N.~Shavit, ``Data structures in the multicore age,'' \emph{Communications of
  the ACM}, vol.~54, no.~3, pp. 76--84, 2011.

\bibitem{david2015asynchronized}
T.~David, R.~Guerraoui, and V.~Trigonakis, ``Asynchronized concurrency: The
  secret to scaling concurrent search data structures,'' in \emph{Proceedings
  of the Twentieth International Conference on Architectural Support for
  Programming Languages and Operating Systems}.\hskip 1em plus 0.5em minus
  0.4em\relax ACM, 2015, pp. 631--644.

\bibitem{ellen2010non}
F.~Ellen, P.~Fatourou, E.~Ruppert, and F.~van Breugel, ``Non-blocking binary
  search trees,'' in \emph{Proceedings of the 29th ACM SIGACT-SIGOPS symposium
  on Principles of distributed computing}.\hskip 1em plus 0.5em minus
  0.4em\relax ACM, 2010, pp. 131--140.

\bibitem{howley2012non}
S.~V. Howley and J.~Jones, ``A non-blocking internal binary search tree,'' in
  \emph{Proceedings of the twenty-fourth annual ACM symposium on Parallelism in
  algorithms and architectures}.\hskip 1em plus 0.5em minus 0.4em\relax ACM,
  2012, pp. 161--171.

\bibitem{natarajan2014fast}
A.~Natarajan and N.~Mittal, ``Fast concurrent lock-free binary search trees,''
  in \emph{ACM SIGPLAN Notices}, vol.~49, no.~8.\hskip 1em plus 0.5em minus
  0.4em\relax ACM, 2014, pp. 317--328.

\bibitem{ramachandran2015castle}
A.~Ramachandran and N.~Mittal, ``Castle: fast concurrent internal binary search
  tree using edge-based locking,'' in \emph{Proceedings of the 20th ACM SIGPLAN
  Symposium on Principles and Practice of Parallel Programming}.\hskip 1em plus
  0.5em minus 0.4em\relax ACM, 2015, pp. 281--282.

\bibitem{ramachandran2015fast}
------, ``A fast lock-free internal binary search tree,'' in \emph{Proceedings
  of the 2015 International Conference on Distributed Computing and
  Networking}.\hskip 1em plus 0.5em minus 0.4em\relax ACM, 2015, p.~37.

\bibitem{chatterjee2014efficient}
B.~Chatterjee, N.~Nguyen, and P.~Tsigas, ``Efficient lock-free binary search
  trees,'' in \emph{Proceedings of the 2014 ACM symposium on Principles of
  distributed computing}.\hskip 1em plus 0.5em minus 0.4em\relax ACM, 2014, pp.
  322--331.

\bibitem{drachsler2014practical}
D.~Drachsler, M.~Vechev, and E.~Yahav, ``Practical concurrent binary search
  trees via logical ordering,'' in \emph{ACM SIGPLAN Notices}, vol.~49,
  no.~8.\hskip 1em plus 0.5em minus 0.4em\relax ACM, 2014, pp. 343--356.

\bibitem{ng1993concurrent}
V.~Ng and T.~Kameda, ``Concurrent accesses to r-trees,'' in \emph{Advances in
  Spatial Databases}.\hskip 1em plus 0.5em minus 0.4em\relax Springer, 1993,
  pp. 142--161.

\bibitem{chen1997study}
J.~Chen, Y.-F. Huang, and Y.-H. Chin, ``A study of concurrent operations on
  r-trees,'' \emph{Information Sciences}, vol.~98, no.~1, pp. 263--300, 1997.

\bibitem{obe2011postgis}
R.~Obe and L.~Hsu, \emph{PostGIS in action}.\hskip 1em plus 0.5em minus
  0.4em\relax Manning Publications Co., 2011.

\bibitem{Qollide}
A.~C. Tassio~Knop, ``{Qollide - Quadtrees and Collisions},''
  \url{https://graphics.ethz.ch/~achapiro/gc.html}, 2010, [Online; accessed
  29-August-2015].

\bibitem{sullivan1994efficient}
G.~J. Sullivan and R.~L. Baker, ``Efficient quadtree coding of images and
  video,'' \emph{Image Processing, IEEE Transactions on}, vol.~3, no.~3, pp.
  327--331, 1994.

\bibitem{mehta2004handbook}
D.~P. Mehta and S.~Sahni, \emph{Handbook of data structures and
  applications}.\hskip 1em plus 0.5em minus 0.4em\relax CRC Press, 2004.

\bibitem{turek1992locking}
J.~Turek, D.~Shasha, and S.~Prakash, ``Locking without blocking: making lock
  based concurrent data structure algorithms nonblocking,'' in
  \emph{Proceedings of the eleventh ACM SIGACT-SIGMOD-SIGART symposium on
  Principles of database systems}.\hskip 1em plus 0.5em minus 0.4em\relax ACM,
  1992, pp. 212--222.

\bibitem{barnes1993method}
G.~Barnes, ``A method for implementing lock-free shared-data structures,'' in
  \emph{Proceedings of the fifth annual ACM symposium on Parallel algorithms
  and architectures}.\hskip 1em plus 0.5em minus 0.4em\relax ACM, 1993, pp.
  261--270.

\bibitem{brown2011non}
T.~Brown and J.~Helga, ``Non-blocking k-ary search trees,'' in \emph{Principles
  of Distributed Systems}.\hskip 1em plus 0.5em minus 0.4em\relax Springer,
  2011, pp. 207--221.

\bibitem{shafiei2013non}
N.~Shafiei, ``Non-blocking patricia tries with replace operations,'' in
  \emph{Distributed Computing Systems (ICDCS), 2013 IEEE 33rd International
  Conference on}.\hskip 1em plus 0.5em minus 0.4em\relax IEEE, 2013, pp.
  216--225.

\bibitem{brown2014general}
T.~Brown, F.~Ellen, and E.~Ruppert, ``A general technique for non-blocking
  trees,'' in \emph{ACM SIGPLAN Notices}, vol.~49, no.~8.\hskip 1em plus 0.5em
  minus 0.4em\relax ACM, 2014, pp. 329--342.

\bibitem{ellen2014amortized}
F.~Ellen, P.~Fatourou, J.~Helga, and E.~Ruppert, ``The amortized complexity of
  non-blocking binary search trees,'' in \emph{Proceedings of the 2014 ACM
  symposium on Principles of distributed computing}.\hskip 1em plus 0.5em minus
  0.4em\relax ACM, 2014, pp. 332--340.

\bibitem{bronson2010practical}
N.~G. Bronson, J.~Casper, H.~Chafi, and K.~Olukotun, ``A practical concurrent
  binary search tree,'' in \emph{ACM Sigplan Notices}, vol.~45, no.~5.\hskip
  1em plus 0.5em minus 0.4em\relax ACM, 2010, pp. 257--268.

\bibitem{besa2013concurrent}
J.~Besa and Y.~Eterovic, ``A concurrent red--black tree,'' \emph{Journal of
  Parallel and Distributed Computing}, vol.~73, no.~4, pp. 434--449, 2013.

\bibitem{crain2013contention}
T.~Crain, V.~Gramoli, and M.~Raynal, ``A contention-friendly binary search
  tree,'' in \emph{Euro-Par 2013 Parallel Processing}.\hskip 1em plus 0.5em
  minus 0.4em\relax Springer, 2013, pp. 229--240.

\bibitem{crain2012speculation}
------, ``A speculation-friendly binary search tree,'' \emph{Acm Sigplan
  Notices}, vol.~47, no.~8, pp. 161--170, 2012.

\bibitem{prokopec2012concurrent}
A.~Prokopec, N.~G. Bronson, P.~Bagwell, and M.~Odersky, ``Concurrent tries with
  efficient non-blocking snapshots,'' in \emph{Acm Sigplan Notices}, vol.~47,
  no.~8.\hskip 1em plus 0.5em minus 0.4em\relax ACM, 2012, pp. 151--160.

\bibitem{arbel2014concurrent}
M.~Arbel and H.~Attiya, ``Concurrent updates with rcu: Search tree as an
  example,'' in \emph{Proceedings of the 2014 ACM symposium on Principles of
  distributed computing}.\hskip 1em plus 0.5em minus 0.4em\relax ACM, 2014, pp.
  196--205.

\end{thebibliography}

\clearpage
%\balancecolumns
\appendix
\section{appendix}
\label{sec:appendix}

\subsection{Basic Invariants}
%Appendix A

%introduction
We provide a detailed proof of quadboost in this section. We follow the same naming convention in the paper. There are four kinds of \textit{basic operations} in quadboost, i.e. the insert operation, the remove operation, the move operation, and the contain operation. Other functions are called \textit{subroutines}, which are invoked by \textit{basic operations}. The insert operation and the remove operation only operate on one terminal node, whereas the the move operation operates two different terminals--one for inserting a node with $newKey$, the other is for removing a node with $oldKey$. We call them $newKey$'s terminal and $oldKey$'s terminal, and we call their parents $newKey$'s parent and $oldKey$'s parent accordingly. Moreover, we name a CAS that changes a node's $op$ a $flag$ operation and a CAS that changes a node's child a $replace$ operation. We define $snapshot_{T_{i}}$ as the state of our quadtree at some time $T_{i}$.

\subsubsection{Subroutines}

%observations
To begin with, we have following observations from quadboost.

\begin{obs}
\label{obs: leaf key}
The key field of a \textit{Leaf} node is never changed. The $op$ field of a \textit{Leaf} node is initially null.
\end{obs}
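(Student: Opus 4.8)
The plan is to establish both halves of the observation by a direct inspection of the \textit{Leaf} class declaration in Figure~\ref{fig:structure}, combined with an exhaustive scan of every line of the quadboost pseudocode that could conceivably write to a \textit{Leaf} node's fields. Because quadboost is an external tree in which a \textit{Leaf} is only ever installed or removed as a whole (never edited in place), I expect this scan to be short and entirely mechanical; there is no genuine subtlety, only the obligation to be complete.

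For the key field, I would first appeal to the language semantics: in Figure~\ref{fig:structure} the fields \textit{keyX} and \textit{keyY} of \textit{Leaf} are declared \texttt{final}, so each is assigned exactly once, inside the constructor, and can never be reassigned afterwards. I would then confirm that the only places a \textit{Leaf} is ever constructed are inside \textit{createNode} (invoked at line~\ref{quadboost: insert createNode} and line~\ref{quadboost: move createNode}) and the explicit \texttt{new Leaf(...)} at line~\ref{quadboost: move create begin}, and that in each case the key fields are initialized from the target key of the operation. No line of \textit{insert}, \textit{remove}, \textit{move}, \textit{contain}, \textit{find}, \textit{continueFind}, \textit{findCommon}, \textit{continueFindCommon}, \textit{helpSubstitute}, \textit{helpCompress}, \textit{helpMove}, \textit{help}, or \textit{compress} performs an assignment of the form \texttt{l.keyX = ...} or \texttt{l.keyY = ...}; the only mutations these routines make to the structure are child-pointer swings through \textit{helpReplace} and flag CASes on \textit{Internal} nodes (plus the single \textit{Leaf}-\textit{op} write discussed next). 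Hence the key field of a \textit{Leaf} node is never changed.

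For the \textit{op} field, I would note that \textit{Leaf} declares \texttt{volatile Move op;} with no initializer in Figure~\ref{fig:structure}, so a freshly allocated \textit{Leaf} has \texttt{op == null} by the default-value rule. I would then observe that the unique assignment to a \textit{Leaf}'s \textit{op} anywhere in the algorithms is \texttt{op.oldRChild.op = op} at line~\ref{quadboost: helpMove assign op}, which executes only inside \textit{helpMove} after both participating parents have been successfully flagged, and in particular never in a constructor. Therefore a \textit{Leaf} node's \textit{op} is \texttt{null} immediately after creation, i.e. it is initially null.

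The only point requiring care — the ``main obstacle,'' such as it is — is that the body of \textit{createNode} is not spelled out, so one must rule out \textit{createNode} secretly mutating the key of an \textit{already existing} \textit{Leaf}. This is precisely covered by the stated post-condition of \textit{createNode} (the \textit{createNode} lemma), which asserts it only \emph{returns} a fresh \textit{Leaf} or a fresh sub-tree rooted at a new \textit{Internal} node, hence touches no node already in the tree. Granting that post-condition, both clauses of the observation follow at once, and the later observations and lemmas of this section may freely rely on them.
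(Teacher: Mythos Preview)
Your proposal is correct. The paper itself offers no proof at all: it states this as an \emph{Observation}, i.e., something read off directly from the class declarations in Figure~\ref{fig:structure}, and moves on. Your argument is exactly the mechanical code inspection that justifies treating it as self-evident---the \texttt{final} modifier on \textit{keyX}/\textit{keyY}, the default-null initialization of the \textit{op} field, and the absence of any key-field assignment in the algorithms---so you have simply written out what the paper leaves implicit. The one caveat you flag about \textit{createNode} is fair, but note that the paper handles it the same way you do (by its stated post-condition), and in any case the \texttt{final} modifier already makes reassignment of an existing \textit{Leaf}'s key impossible regardless of what \textit{createNode} does internally.
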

\begin{obs}
\label{obs: space info}
The space information--$\langle x, y, w, h\rangle$ of an \textit{Internal} node is never changed.
\end{obs}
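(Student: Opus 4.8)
The plan is to derive the statement directly from the data-layout declarations in Figure~\ref{fig:structure} rather than from any concurrency reasoning. The fields $x$, $y$, $w$, and $h$ of the \textit{Internal} class are declared \texttt{final}, so in the implementation language they can be assigned at most once, and that single assignment necessarily happens inside the constructor at allocation time. Consequently it suffices to check that no line of any quadboost subroutine writes to these fields after the object has been constructed.

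To carry this out, I would first enumerate every site that allocates an \textit{Internal} node---the createNode function (called from \textit{insert} at line~\ref{quadboost: insert createNode} and from \textit{move} at line~\ref{quadboost: move createNode}) together with the initialization routine that builds the two layers of dummy nodes---and observe that each of them pins $\langle x, y, w, h\rangle$ once and for all to the sub-region coordinates dictated by the parent. Then I would scan the remaining subroutines that receive an \textit{Internal} node as a mutable argument---\textit{helpReplace}, \textit{helpFlag}, \textit{helpCompress}, \textit{helpSubstitute}, \textit{helpMove}, \textit{getQuadrant}, \textit{find}, \textit{continueFind}, \textit{findCommon}, and \textit{continueFindCommon}---and note that each of them only ever touches the child links \textit{nw}, \textit{ne}, \textit{sw}, \textit{se} or the \textit{op} field, never the spatial quadruple. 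Together these two passes close the case analysis.

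I expect no real obstacle here: because the property is a structural invariant of the node layout, it is independent of thread interleavings and requires only an exhaustive but mechanical inspection of the code listings. The only mild care needed is to make sure the enumeration of allocation sites and of subroutines that mutate \textit{Internal} nodes is genuinely complete; once that is verified, the observation can be cited freely in later arguments---for instance when establishing that \textit{getQuadrant} routes consistently over time and when proving the quadtree structural properties stated in Definition~1.
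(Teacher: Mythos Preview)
Your proposal is correct and matches the paper's reasoning: the paper states this as an observation without proof, relying implicitly on the \texttt{final} declaration of $x$, $y$, $w$, $h$ in Figure~\ref{fig:structure}, which you correctly identify as the decisive fact. Your additional code scan is sound but more than the paper requires---once the fields are \texttt{final}, the language semantics alone guarantee immutability, so the enumeration of subroutines is redundant rather than essential.
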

\begin{obs}
\label{obs: root}
The root node is never changed.
\end{obs}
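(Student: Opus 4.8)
The plan is to prove Observation~\ref{obs: root} purely by inspection of the pseudocode, together with one short structural fact about the two dummy layers. First I would note that the identifier \texttt{root} is used as an immutable reference: across Figures~\ref{fig:quadboost insert}, \ref{fig:quadboost remove}, \ref{fig:quadboost move}, and \ref{fig: quadboost findCommon and continueFindCommon}, \texttt{root} only ever occurs on the right of an assignment (\texttt{Node l = root}, \texttt{rl = root}, \ldots) or inside the test \texttt{gp == root}; it is never the target of an assignment or a CAS. So the object \texttt{root} denotes is fixed for the whole execution, and by Observation~\ref{obs: space info} its $\langle x,y,w,h\rangle$ fields do not change. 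It then remains to show that the root's four child pointers and its $op$ field are never CAS'd.

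Next I would enumerate the only sites that issue a \textit{flag} (a CAS on \texttt{node.op}) or a \textit{replace} (a CAS on a child pointer) and check that none targets the root. A \textit{flag} is issued only through \texttt{helpFlag}: on the parent of a terminal node in the insert and remove transitions, on \texttt{iParent}/\texttt{rParent} (again parents of terminals) inside \texttt{helpMove}, or on the node \texttt{p} about to be compressed inside \texttt{compress}; in the last case the loop first executes \texttt{gp = path.pop()} and returns immediately when \texttt{gp == root}, so no \textit{cflag} is ever applied to a child of the root. A \textit{replace} is issued only through \texttt{helpReplace}: from \texttt{helpSubstitute}/\texttt{helpMove} on the parent of a terminal node, or from \texttt{helpCompress} on \texttt{op.grandparent}; but a \textit{Compress} $op$ is constructed (\texttt{new Compress(gp, p)}) only after \texttt{compress} has popped \texttt{gp} and found \texttt{gp != root}, so \texttt{op.grandparent} is never the root. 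Hence the whole statement reduces to a single claim: the root is never the parent of a \textit{Leaf} or \textit{Empty} node, i.e. every terminal node's parent differs from the root.

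I would establish that remaining claim by strong induction over the sequence of successful CAS steps, jointly with the stronger invariant that the root's four children are the original level-one dummy \textit{Internal} nodes and each of those still has four \textit{Internal} children---this is precisely the two-dummy-layer invariant (Lemma~10), so in the write-up I would either cite Lemma~10 or fold this observation into its proof. The base case is the initial tree by construction; in the inductive step, the case analysis of the previous paragraph shows that any \textit{replace} which could alter a child of the root or of a level-one dummy would have to act on the parent of a terminal node lying inside that forbidden region, contradicting the invariant at the preceding step, while \textit{creplace} is excluded by the \texttt{gp != root} guard. I expect the only real subtlety---the main obstacle---to be exactly this circularity: showing the root is untouched needs its children to remain \textit{Internal}, which needs those children untouched, and so on down two levels. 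The clean way to discharge it is a single simultaneous induction over the two dummy layers (equivalently, to regard Observation~\ref{obs: root} as the top-level instance of Lemma~10), so that no step ever appeals to an invariant not yet established at the relevant time; everything else is routine code inspection and uses none of the concurrency machinery.
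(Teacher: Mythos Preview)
Your proposal is correct but far more elaborate than what the paper does. In the paper this statement is an \emph{Observation}: it is stated without proof, on the same footing as ``the key field of a \textit{Leaf} node is never changed.'' The intended reading is the weak one you start with---the identifier \texttt{root} is only read, never written or CAS'd---and the uses of the observation in the appendix (Lemmas~\ref{lem: find first} and~\ref{lem: findCommon first}) only need that \texttt{root} is a fixed reference to an \textit{Internal} node. The stronger fact you go on to establish, that the root's child pointers (and those of the first dummy layer) are never swung, is exactly the content of Lemma~10 (\textit{Two layers of dummy nodes are never changed}), which the paper proves separately and later.

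So the difference is one of scope and placement: the paper separates the trivial code-inspection fact (Observation~\ref{obs: root}) from the structural invariant (Lemma~10), whereas you fold both into a single joint induction and explicitly flag the circularity between them. Your packaging is arguably cleaner---you are right that the two statements are mutually supporting and are most naturally discharged together---but for the purpose of matching the paper, all that is required for Observation~\ref{obs: root} is your first paragraph; the rest belongs under Lemma~10.
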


Based on Observation 3, we derive a Corollary as follows:
\begin{cor}
\label{cor: common path}
Two nodes in quadtree must share a common search path starting from the root.
\end{cor}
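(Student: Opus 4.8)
The plan is to derive the corollary from two facts already in hand: the root is a fixed \textit{Internal} node (Observation~3), and each descent step taken by \textit{find} is a deterministic function of the current node and the key. First I would note that \textit{getQuadrant} chooses one of $\{nw, ne, sw, se\}$ using only the key $\langle keyX, keyY\rangle$ and the fields $\langle n.x, n.y, n.w, n.h\rangle$ of the current \textit{Internal} node $n$; since those fields never change (Observation~2), in a fixed $snapshot_{T_{i}}$ the child selected at $n$ depends only on $n$ and the key, never on which child pointers happen to have been swung by concurrent operations.

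Then I would run the searches for two keys $a$ and $b$ in lockstep, starting (by Observation~3) from the same root, and induct on depth: at each \textit{Internal} node reached by both, either they select the same direction -- so both follow the same child pointer and arrive at the same node one level down -- or they select different directions, in which case that node is precisely the lowest node common to the two search paths, i.e.\ the LCA of the two terminals. The \textit{while} loop of \textit{find} exits at the first non-\textit{Internal} node, so each search path is a finite sequence of nodes rooted at the root; the shared initial segment produced by the lockstep descent is therefore a finite, nonempty common path from the root, which is exactly the ``common search path'' of the statement (and whose last node is the LCA used by the move operation and the continuous find mechanism).

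The one point that needs care is that the two searches are not assumed to be simultaneous, so ``the search path'' is only meaningful relative to a single state of the structure; I would resolve this the same way the rest of the proof does, by reading the claim through $physical\_path(keys^{k})$ within one $snapshot_{T_{i}}$. Observation~2 is what makes this harmless: the routing decisions that define those physical paths read only immutable spatial data, so they are unaffected by concurrent \textit{replace} or \textit{compress} steps, and the lockstep argument above applies verbatim. I do not expect a real obstacle here --- the content of the corollary is just determinism of routing plus a fixed root --- so the only thing to get right is the snapshot bookkeeping just described.
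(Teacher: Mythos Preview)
Your argument is correct and in fact considerably more careful than the paper's own treatment. In the paper this corollary is stated immediately after Observation~3 with only the remark ``Based on Observation~3, we derive a Corollary as follows,'' i.e.\ it is taken as essentially immediate: since the root is fixed, every search path begins at the same node, so any two such paths share at least the singleton prefix $\{root\}$. No lockstep induction, no appeal to Observation~2, and no snapshot bookkeeping appears.

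What you add --- determinism of \textit{getQuadrant} via the immutability of $\langle x,y,w,h\rangle$, the inductive lockstep descent, and the snapshot caveat --- is really a proof that the shared prefix is well defined and that its last node is the LCA. That is the content actually used later (e.g.\ in Lemma~\ref{lem: lca active} and in the \textit{findCommon}/\textit{continueFindCommon} reasoning), so your version is arguably the ``right'' proof to give; the paper simply defers that reasoning and treats the corollary itself as a one-line consequence of the fixed root.
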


Using these observations, we prove that each subroutine satisfies their specific pre-conditions and post-conditions. Because all basic operations invoke the find function at line~\ref{quadboost: find} that returns $\langle l, pOp, path\rangle$ and the findCommon function at line~\ref{quadboost: findCommon} that returns $\langle il, iOp, iPath, rl, rOp, rPath\rangle$, we prove that the two functions satisfy their pre-conditions and post-conditions beforehand. We suppose that they execute from a $snapshot_{T_{i}}$ to derive these conditions.

\begin{defi}
\label{defi: find conditions}
The pre-condition of find(l, pOp, path, keyX, keyY) invoked at $T_{i}$:
\begin{enumerate}
\item If $path$ is not empty, $l$ was on the direction $d \in \{nw, ne, sw, se\}$ of the top node in $path$ at $T_{i1} \leq T_{i}$.
\end{enumerate}

The post-conditions of find(l, pOp, path, keyX, keyY) that returns a tuple $\langle l, pOp, path\rangle$ at $T_{i}$: 
\begin{enumerate}
\item $l$ is a $Leaf$ node or an $Empty$ node. 
\item At some $T_{i1} < T_{i}$, the top node in $path$ has contained $pOp$.
\item At some $T_{i1} < T_{i}$, the top node in $path$ has contained $l$.
\item If $pOp$ was read at $T_{i1}$, and $l$ was read at $T_{i2}$, then $T_{i1} < T_{i2} < T_{i}$.
\item For each node $n$ in $path$, $size(path) \ge 2$, $n_{t}$ is on the top of $n_{t-1}$, and $n_{t}$ was on the direction $d \in \{nw, ne, sw, se\}$ of $n_{t-1}$ at $T_{i1} leq T_{i}$. 
\end{enumerate}
\end{defi}

\begin{defi}
\label{defi: findCommon}
The pre-conditions of findCommon(il, rl, lca, rOp, iOp, iPath, rPath, oldKeyX, oldKeyY, newKeyX, newKeyY) invoked at $T_{i}$:
\begin{enumerate}
\item If $iPath$ is not empty, $il$ was on the direction $d \in \{nw, ne, sw, se\}$ of the top node in $iPath$ at $T_{i1} \leq T_{i}$.
\item If $rPath$ is not empty, $rl$ was on the direction $d \in \{nw, ne, sw, se\}$ of the top node in $rPath$ at $T_{i1} \leq T_{i}$.
\end{enumerate}

The post-conditions of findCommon(il, rl, lca, rOp, iOp, iPath, rPath, oldKeyX, oldKeyY, newKeyX, newKeyY) that returns two tuples $\langle il, iOp, iPath\rangle$ and $\langle rl, rOp, rPath\rangle$ at $T_{i}$:
\begin{enumerate}
\item $rl$ was a \textit{Leaf} or an \textit{Empty} node. 
\item $il$ was a \textit{Leaf} or an \textit{Empty} node. 
\item At some $T_{i1} < T_{i}$, the top node in $rPath$ has contained $rOp$.
\item At some $T_{i1} < T_{i}$, the top node in $iPath$ has contained $iOp$.
\item At some $T_{i1} < T_{i}$, the top node in $rPath$ has contained $rl$.
\item At some $T_{i1} < T_{i}$, the top node in $iPath$ has contained $il$.
\item If $iOp$ was read at $T_{i1}$, and $il$ was read at $T_{i2}$, then $T_{i1} < T_{i2} < T_{i}$.
\item If $rOp$ was read at $T_{i1}$, and $rl$ was read at $T_{i2}$, then $T_{i1} < T_{i2} < T_{i}$.
\item For each node $n$ in $rPath$, $size(rPath) \ge 2$, $n_{t}$ is on the top of $n_{t-1}$, and $n_{t}$ was on the direction $d \in \{nw, ne, sw, se\}$ of $n_{t-1}$ at $T_{i1} \leq T_{i}$.  
\item For each node $n$ in $iPath$, $size(iPath) \ge 2$, $n_{t}$ is on the top of $n_{t-1}$, and $n_{t}$ was on the direction $d \in \{nw, ne, sw, se\}$ of $n_{t-1}$ at $T_{i1} \leq T_{i}$. 
\end{enumerate}
\end{defi}
By observing the find function and the findCommon function, we have following lemmas. 
\begin{lem}
\label{lem: find first}
At the first time calling the find function, the contain operation, the insert operation, and the remove function start with $l$ as an \textit{Internal} node and an empty path.
\end{lem}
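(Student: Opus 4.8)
The plan is to prove this lemma purely by inspection of the pseudocode for the three operations, combined with structural facts already established. First I would examine the entry points of \textit{contain} (Figure~\ref{fig:quadboost insert}, lines~\ref{quadboost: contain path}--\ref{quadboost: contain find}), \textit{insert} (same figure, lines~\ref{quadboost: insert path}--\ref{quadboost: insert find}), and \textit{remove} (Figure~\ref{fig:quadboost remove}, lines~\ref{quadboost: remove path}--\ref{quadboost: remove find}). In each of these the local variable that holds the current node is assigned \texttt{root} immediately before \textit{find} is ever invoked, and the local stack \texttt{path} is declared without any \texttt{push}. Hence the ``empty path'' half of the claim follows directly from the control flow: no statement lies between the declaration of \texttt{path} and the first call to \textit{find} that would push onto it.

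For the ``$l$ is an \textit{Internal} node'' half, I would invoke the observation that the root node is never changed (Observation~\ref{obs: root}) together with the initialization described in Section~\ref{sec:Preliminary}, where the root is created as part of two dummy layers of \textit{Internal} nodes sitting above a layer of \textit{Empty} leaves. Since the root is an \textit{Internal} node at construction and is never overwritten, it is still an \textit{Internal} node at the instant \textit{find} is first called; therefore so is $l$, which equals \texttt{root} at that point.

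The only point that needs a little care is scoping the statement to the \emph{first} call to \textit{find}, i.e.\ the invocation of \textit{find} that lies syntactically outside the \texttt{while} loop (the $iteration_{0}$ call in the earlier terminology). Later invocations reached through \textit{continueFind} reset $l$ to the popped parent $p$ rather than to \texttt{root}, and by then \texttt{path} may be non-empty, so the claim as phrased holds only for that first call. I would therefore state explicitly which invocation is meant before concluding. I do not expect any real obstacle: this lemma is a base-case bookkeeping fact that exists so the subsequent pre-/post-condition lemmas for \textit{find} and \textit{findCommon} can discharge their pre-conditions for the initial iteration.
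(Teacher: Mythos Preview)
Your proposal is correct and follows essentially the same approach as the paper: inspect the initialization lines of \textit{contain}, \textit{insert}, and \textit{remove} to see that $l$ is set to \texttt{root} (an \textit{Internal} node by Observation~\ref{obs: root} and the two-layer dummy initialization) and that \texttt{path} is a freshly declared empty stack. Your extra remark about scoping the claim to the $iteration_{0}$ call is accurate and slightly more explicit than the paper, but not a different argument.
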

\begin{proof}
The contain operation at line~\ref{quadboost: contain root}, the insert operation at line~\ref{quadboost: insert root} and the remove operation  at line~\ref{quadboost: remove root} start with the root node by $l = root$, an \textit{Internal} node that is never changed (Observation~\ref{obs: root}).

Besides, the contain operation at line~\ref{quadboost: contain path}, the insert operation at line~\ref{quadboost: insert path}, and the remove operation at line~\ref{quadboost: remove path} start with an empty stack to record nodes.
\end{proof}

\begin{lem}
\label{lem: findCommon first}
At the first time the move operation calls the findCommon function, it starts with \textit{rl} as an \textit{Internal} node and $rPath$ and $iPath$ are empty.
\end{lem}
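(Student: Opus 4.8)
The plan is to mirror the argument used for Lemma~\ref{lem: find first}, simply reading off the initialization code of the move operation in Figure~\ref{fig:quadboost move} and invoking the immutability of the root.

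First I would observe that the move operation at line~\ref{quadboost: move root} performs the assignment \texttt{rl = root}. By Observation~\ref{obs: root}, the root node is never changed, and it is an \textit{Internal} node; hence when control reaches the first invocation of findCommon at line~\ref{quadboost: move findCommon}, the variable \textit{rl} still holds the (unchanged) root and is therefore an \textit{Internal} node. Second, I would note that line~\ref{quadboost: move two paths} declares two fresh stacks \textit{rPath} and \textit{iPath}, and that every statement between line~\ref{quadboost: move two paths} and line~\ref{quadboost: move findCommon} assigns only local scalar/boolean variables (the terminal- and parent-node handles, the \textit{op} handles, and the \textit{rFail}, \textit{iFail}, \textit{cFail} flags); none of them pushes onto either stack. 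Consequently both \textit{rPath} and \textit{iPath} are empty at the first call to findCommon. Combining these two points yields the lemma.

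There is essentially no obstacle here: the statement is a direct inspection of the code, and the only external fact required is Observation~\ref{obs: root}. The single point requiring care is to confirm that the control flow from line~\ref{quadboost: move two paths} to line~\ref{quadboost: move findCommon} is straight-line, with no loop iteration, recursive call, or helper routine that could mutate \textit{rl}, \textit{rPath}, or \textit{iPath} before findCommon executes; a quick scan of Figure~\ref{fig:quadboost move} confirms that the \textit{while} loop and all helper invocations occur only \emph{after} line~\ref{quadboost: move findCommon}, so this holds.
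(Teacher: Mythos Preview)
Your proposal is correct and takes essentially the same approach as the paper: both point to line~\ref{quadboost: move root} for $rl = root$ together with Observation~\ref{obs: root}, and to line~\ref{quadboost: move two paths} for the empty stacks. Your additional check that the intervening code is straight-line and does not touch $rl$, $rPath$, or $iPath$ is a harmless (indeed, slightly more careful) elaboration of the paper's two-sentence proof.
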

\begin{proof}
The move operation begins with the root node at line by $rl = root$ that is never changed (Observation~\ref{obs: root}) at line~\ref{quadboost: move root}. Both $rPath$ and $iPath$ are initialized as empty stacks at line~\ref{quadboost: move two paths}.
\end{proof}

\begin{lem}
\label{lem: find continueFind push nodes}
All nodes pushed by the find function (line~\ref{quadboost: find}) and the continueFind function (line~\ref{quadboost: continueFind}) are \textit{Internal} nodes.
\end{lem}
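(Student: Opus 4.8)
The plan is to proceed by a short, direct case analysis on the two functions, since neither contains recursion into code that writes to $path$. I would first dispose of the find function (line~\ref{quadboost: find}): its sole \texttt{push} statement, \texttt{path.push(l)}, sits in the body of the loop \texttt{while (l.class() == Internal)}, so whenever that statement runs the loop guard $l.\mathit{class}() = \textit{Internal}$ holds and the node being pushed is an \textit{Internal} node at that instant. Because a node is created once as an instance of \textit{Internal}, \textit{Leaf}, or \textit{Empty} and no \texttt{replace} or \texttt{flag} operation ever rewrites a node's dynamic class (consistent with Observation~1, which fixes the immutable fields of each node kind), the pushed node stays an \textit{Internal} node in every later snapshot. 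Hence every node that the find function pushes is an \textit{Internal} node.

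Next I would handle the continueFind function (line~\ref{quadboost: continueFind}). Inspecting its body, it never performs a \texttt{push} on $path$ itself: it only calls \texttt{path.isEmpty()} and \texttt{path.pop()}, reads \texttt{l.op}, invokes \texttt{helpCompress}, and finally calls the find function at line~\ref{quadboost: continueFind find}. I would then confirm that \texttt{helpCompress}, together with the \texttt{helpReplace} and \texttt{CAS} steps it invokes, only swings child pointers and never touches $path$; likewise \texttt{getQuadrant} (called from find) only updates the local reference $l$. Therefore the set of nodes that continueFind adds to $path$ is exactly the set added by its terminal call to the find function, which by the previous paragraph consists solely of \textit{Internal} nodes.

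The difficulty here is not conceptual but a matter of bookkeeping: one must be exhaustive about the call graph and check that none of the subroutines reachable from find and continueFind (notably \texttt{getQuadrant}, \texttt{helpCompress}, \texttt{helpReplace}) ever pushes onto $path$, and one leans on the mild fact that a node's dynamic class is fixed at creation time, so that ``\textit{Internal} at the moment of the push'' can be strengthened to ``\textit{Internal} in every snapshot.'' If a cleaner bound on the depth of this inspection is desired, Observation~6---which forbids mutual recursion among the \texttt{help} routines---can be cited, though here the call nesting is already shallow enough to enumerate by hand.
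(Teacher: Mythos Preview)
Your proposal is correct and follows essentially the same approach as the paper: the find function's \texttt{push} is guarded by the loop condition \texttt{l.class() == Internal}, and continueFind only adds to $path$ through its call to find. The paper's proof is much terser (two sentences), omitting your extra bookkeeping about the call graph and the immutability of a node's dynamic class, but the core argument is identical.
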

\begin{proof}
Before pushing into the stack, the find function at line~\ref{quadboost: find check} first check the class of a node. Also, the continueFind function calls the find function at line~\ref{quadboost: continueFind find} to push nodes into $path$. Thus, nodes other than \textit{Internal} cannot be pushed.
\end{proof}

\begin{lem}
\label{lem: continueFind continueFindCommon push nodes}
All nodes pushed by the findCommon function (line~\ref{quadboost: findCommon}) the continueFindCommon function (line~\ref{quadboost: continueFindCommon}) are \textit{Internal}.
\end{lem}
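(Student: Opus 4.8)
The plan is to mirror the argument used for Lemma~\ref{lem: find continueFind push nodes}, relying on the fact that every push performed while searching for the common path and the two individual paths is guarded by a class check, and that \textit{continueFindCommon} never adds a node to a stack except by delegating to \textit{find} or \textit{findCommon}.

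First I would inspect \textit{findCommon} on its own. Its only explicit push is \texttt{rPath.push(rl)} (line~\ref{quadboost: findCommon rl push}), which lies inside the loop guarded by \texttt{rl.class == Internal} (line~\ref{quadboost: findCommon check}); hence every node \textit{findCommon} pushes by itself is an \textit{Internal} node. The remaining pushes it causes are those performed by its two calls to \textit{find} (line~\ref{quadboost: findCommon find rl} and line~\ref{quadboost: findCommon find il}), and by Lemma~\ref{lem: find continueFind push nodes} those push only \textit{Internal} nodes. Therefore every node pushed by \textit{findCommon} is \textit{Internal}.

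Next I would examine \textit{continueFindCommon}: it contains no explicit \texttt{push}; its stack operations are pops, index resets (\texttt{setIndex}, \texttt{clear}), and calls to \textit{helpCompress}, none of which push. The only routines it invokes that do push are \textit{find} (lines~\ref{quadboost: continueFindCommon remove find} and~\ref{quadboost: continueFindCommon insert find}) and \textit{findCommon} (line~\ref{quadboost: continueFindCommon findCommon}). Combining this with the previous paragraph and Lemma~\ref{lem: find continueFind push nodes}, every node ever pushed through any of these calls is \textit{Internal}, which proves the lemma.

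There is no real obstacle here; the only point to verify is that \textit{continueFindCommon} calls \textit{findCommon}, which in turn calls \textit{find}, and that \textit{findCommon} does not call \textit{continueFindCommon} back, so this call chain is acyclic and the class guards alone settle the claim without any induction on recursion depth.
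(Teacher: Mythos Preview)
Your proposal is correct and follows essentially the same approach as the paper: both argue that \textit{findCommon}'s only explicit push is guarded by the class check at line~\ref{quadboost: findCommon check}, and that every other push occurs through delegated calls to \textit{find} or \textit{findCommon}, which are handled by Lemma~\ref{lem: find continueFind push nodes} and the first part of the argument, respectively. Your write-up is in fact slightly more thorough than the paper's, since you explicitly account for the two \textit{find} calls inside \textit{findCommon} (lines~\ref{quadboost: findCommon find rl} and~\ref{quadboost: findCommon find il}) and note the acyclicity of the call chain.
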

\begin{proof}
Before pushing into the stack, the findCommon function first checks the class of a node at line~\ref{quadboost: findCommon check}. Also, the continueFindCommon function calls the findCommon function at line~\ref{quadboost: continueFindCommon findCommon}. Or it calls the find function at line~\ref{quadboost: continueFindCommon insert find} and line~\ref{quadboost: continueFindCommon remove find} to push \textit{Internal} nodes according to Lemma~\ref{lem: find continueFind push nodes}. Thus, nodes other than \textit{Internal} cannot be pushed.
\end{proof}

\begin{lem}
\label{lem: find findCommon process}
For the loop in the find function at line~\ref{quadboost: find} and the findCommon function at line~\ref{quadboost: findCommon}, we suppose that $path$($rPath$) is empty and refer $l$($rl$) and $pOp$($rOp$) to each field updated by the loop. If the loop executes at least once and breaks at $T_{i}$, $l$($rl$), $pOp$($rOp$), and $path$($rPath$) satisfy following conditions:
\begin{enumerate}
\item $l$($rl$) is a \textit{Leaf} node or an \textit{Empty} node. 
\item At some $T_{i1} < T_{i}$, the top node in $path$($rPath$) has contained $pOp$($rOp$).
\item At some $T_{i1} < T_{i}$, the top node in $path$($rPath$) has contained $l$($rl$).
\item If $pOp$($rOp$) was read at $T_{i1}$, and $l$($rl$) was read at $T_{i2}$, then $T_{i1} < T_{i2} < T_{i}$.
\item For each node $n$ in $path$($rPath$), $size(path) \ge 2$, $n_{t}$ is on the top of $n_{t-1}$, and $n_{t}$ was on the direction $d \in \{nw, ne, sw, se\}$ of $n_{t-1}$ at $T_{i1} \leq T_{i}$. 
\end{enumerate}
\end{lem}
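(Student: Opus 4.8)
The plan is to reason about the \texttt{while} loop operationally. Number its iterations $1,\dots,m$ with $m\ge 1$ (the hypothesis that the loop runs at least once supplies a last iteration to analyse), write $l^{(j)}$ for the value of $l$ at the head of iteration~$j$ (so $l^{(1)}$ is the initial value), and record the two concurrency-sensitive reads done in iteration~$j$: the assignment $pOp\gets l^{(j)}.op$ at a time $t^{op}_j$, and the child-pointer read performed inside \textit{getQuadrant} at a time $t^{ch}_j$. Because the loop body executes these in that textual order we always have $t^{op}_j<t^{ch}_j$, and because \textit{getQuadrant} only inspects the immutable fields $\langle x,y,w,h\rangle$ (Observation~\ref{obs: space info}) before returning one of $l^{(j)}.nw/ne/sw/se$, the value $l^{(j+1)}$ it hands back is a node that sat on some direction $d\in\{nw,ne,sw,se\}$ of $l^{(j)}$ at time $t^{ch}_j$. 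The loop performs no \texttt{pop}, so after iteration~$m$ the stack is exactly $[l^{(1)},\dots,l^{(m)}]$ from bottom to top (all \textit{Internal} by Lemma~\ref{lem: find continueFind push nodes}), the returned values of $l$ and $pOp$ are $l^{(m+1)}$ and $l^{(m)}.op$ (read at $t^{op}_m$), and the loop exits at $T_i$ with $t^{ch}_m < T_i$.

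The five post-conditions then fall out of this bookkeeping. (1) The loop exits because the class of $l^{(m+1)}$ is not \textit{Internal}, and a node's class is fixed for its lifetime, so $l^{(m+1)}$ is a \textit{Leaf} or \textit{Empty} node. (2) At $t^{op}_m<T_i$ the top node $l^{(m)}$ held $pOp$. (3) At $t^{ch}_m<T_i$ the top node $l^{(m)}$ held the returned $l=l^{(m+1)}$. (4) With $T_{i1}=t^{op}_m$ and $T_{i2}=t^{ch}_m$ we have $T_{i1}<T_{i2}<T_i$. (5) For consecutive entries $l^{(j)}$ (lower) and $l^{(j+1)}$ (upper), iteration~$j$ showed $l^{(j+1)}$ on a direction of $l^{(j)}$ at $t^{ch}_j\le T_i$; and $size(path)=m\ge 2$, since an empty starting stack means we began at the root (Lemma~\ref{lem: find first}, Lemma~\ref{lem: findCommon first}), whose child pointers are never altered (Observation~\ref{obs: root}) and point to the original second dummy layer of \textit{Internal} nodes, so $l^{(2)}$ is \textit{Internal} and iteration~2 runs.

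For \textit{findCommon} the loop body is the same as \textit{find}'s except for the extra early exit through the failing \textit{sameDirection} test, which may leave $rl$ still \textit{Internal}. I dispose of this by noting that immediately afterwards \textit{findCommon} calls \textit{find}$(rl,\dots,rPath)$, whose precondition (that $rl$ sat on a direction of the top of $rPath$ at an earlier time) is exactly what the last \textit{getQuadrant} of the findCommon loop established; so the composition ``findCommon loop $+$ that \textit{find} call'' performs the very same push/op-read/child-read pattern as a single \textit{find} loop, and the argument above transfers verbatim to the combined traversal. The only real delicacy --- and the place I expect to spend the most care --- is the concurrency discipline: every claim must be stated as ``at time $T_{i1}$ node $X$ held $Y$'' for the specific read time, never as a statement about the current state, because other threads may be concurrently flagging or replacing; the immutability of classes and of $\langle keyX,keyY\rangle$ and $\langle x,y,w,h\rangle$ (Observations~\ref{obs: leaf key} and~\ref{obs: space info}) together with the \texttt{volatile} qualifier on the mutable fields are precisely what make these timestamped reads well defined. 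One caveat I will flag explicitly: the ``$size(path)\ge 2$'' half of~(5) genuinely needs the traversal to start at the root; in the re-entrant calls of \textit{find} where the stack is emptied but the start node is an interior node, only conditions (1)--(4) and the consecutive-child half of~(5) remain, which is all the downstream lemmas actually use.
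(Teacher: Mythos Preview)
Your proposal is correct and follows essentially the same operational reasoning as the paper: both arguments track the per-iteration sequence \emph{push $l'$ $\to$ read $l'.op$ $\to$ read a child of $l'$}, then read off conditions~(1)--(4) from the last iteration and condition~(5) by induction on iterations. Your explicit timestamp notation $t^{op}_j,t^{ch}_j$ and the stack contents $[l^{(1)},\dots,l^{(m)}]$ make the argument considerably more precise than the paper's short prose, which simply observes the read order and asserts the conclusions.

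Two small remarks. First, your justification of $size(path)\ge 2$ via the two dummy layers is correct in spirit, but Observation~\ref{obs: root} only asserts that the root \emph{reference} is stable, not that the root's child pointers are immutable; the latter is the content of the paper's later Lemma on dummy layers. The paper's own proof of this lemma in fact never isolates the $size\ge 2$ clause, so your caveat paragraph is already more careful than the original. Second, your handling of the early \textit{sameDirection} exit in \textit{findCommon}---by folding the subsequent \textit{find}$(rl,\dots)$ call into the same push/read pattern---is exactly how the paper uses this lemma downstream (in the proof that the \textit{find} post-conditions hold), even though the lemma as stated speaks only of ``the loop''; you have made explicit what the paper leaves implicit.
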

\begin{proof}
Apart from the terminate condition that the findCommon function judges whether two directions are the same or not, it performs in the same pattern as the find function. Both functions first push the previous \textit{Internal} node $l'$ into $path$, then get its $op$, and read a child pointer $l$ finally. 

The first condition always holds because by Lemma~\ref{lem: continueFind continueFindCommon push nodes} and Lemma~\ref{lem: find continueFind push nodes} \textit{Leaf} nodes and \textit{Empty} nodes will never be pushed into the $path$($rPath$). 

We then prove part 2-4. At the last iteration, $pOp$($rOp$) and $l$($rl$) are read from $l'$, which has already been pushed into the $path$($rPath$). Hence, part 2 and part 3 are correct. Besides, $pOp$($rOp$) is read before $l$($rl$) so that part 4 holds.

For the last part, we assume that at $T_{i1}$ the lemma holds, so at $T_{i2} > T_{i1}$ $l$ is read from $l'$ which has already been pushed into the $path$($rPath$). As $path$($rPath$) is initially empty, and $T_{i}$ is at the end of the last iteration, $l$ which on the top of $l'$ has been a child of it.
\end{proof}

Lemma~\ref{lem: find findCommon process} proves loop within the find function and the findCommon function satisfy post-conditions in Definition~\ref{defi: find conditions} and Definition~\ref{defi: findCommon}. Next parts are going to show that other statements do not change these properties. We consider the first invocation as the base case, and prove lemmas by induction.

\begin{lem}
\label{lem: find pre-condition post-conditions}
Every call to the find function satisfies its pre-condition and post-conditions.
\end{lem}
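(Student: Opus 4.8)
The plan is to argue by induction over the sequence of invocations of the find function (line~\ref{quadboost: find}) that occur in a given execution, ordered by their invocation times. For each invocation I must discharge two obligations: first, that the state in which find is entered satisfies the pre-condition of Definition~\ref{defi: find conditions}; and second, that, given that pre-condition, the five post-conditions hold at the return. The second obligation is almost exactly Lemma~\ref{lem: find findCommon process}, so the real work is establishing the pre-condition at every call site, plus a small generalisation of Lemma~\ref{lem: find findCommon process} to the case of a non-empty incoming $path$.

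For the base case, Lemma~\ref{lem: find first} (and Lemma~\ref{lem: findCommon first} for the findCommon-initiated calls) gives that the first find call in a contain, insert, or remove operation is entered with $l=root$, an \textit{Internal} node by Observation~\ref{obs: root}, and an empty $path$; hence the pre-condition is vacuous. Since $l$ is \textit{Internal}, the loop inside find runs at least once, so the post-conditions follow from Lemma~\ref{lem: find findCommon process} applied with an empty starting stack.

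For the inductive step I would enumerate the remaining call sites: the call inside continueFind (line~\ref{quadboost: continueFind find}), the two calls inside findCommon (lines~\ref{quadboost: findCommon find rl} and~\ref{quadboost: findCommon find il}), and the calls inside continueFindCommon (lines~\ref{quadboost: continueFindCommon remove find} and~\ref{quadboost: continueFindCommon insert find}). In every case the code immediately preceding the find call only (a) pops nodes off the relevant stack, (b) possibly runs helpCompress, which does not touch the stack, or (c) executes the prefix loop of findCommon, which pushes a node and then reads one of its children into the terminal variable. By the inductive hypothesis the stack produced by the previous find call satisfies post-condition~5 of Definition~\ref{defi: find conditions}, i.e. its entries form a parent--child chain with the required ``on direction $d\in\{nw,ne,sw,se\}$ at an earlier time'' property, and by Lemmas~\ref{lem: find continueFind push nodes} and~\ref{lem: continueFind continueFindCommon push nodes} every stack entry is an \textit{Internal} node. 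Popping entries preserves that chain property for the survivors, so whenever continueFind or continueFindCommon stops and sets the terminal variable $l$ to the last popped node, $l$ was on a direction of the current top of the stack at a time no later than this new invocation of find -- which is precisely the pre-condition. If the stack is emptied during the popping the pre-condition is vacuous, and the findCommon-prefix-loop case is settled by the same observation already used in Lemma~\ref{lem: find findCommon process}: the value left in the terminal variable is a child pointer read out of the node currently on top of the stack.

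Finally I would close the post-condition obligation. When find is entered with a non-empty $path$, the pre-condition says $l$ is (at some earlier time) a child on direction $d$ of the current top; the internal loop of find then repeatedly pushes the current node and descends into one of its children, so each newly pushed entry is, when pushed, a child of the previous top. Splicing this new segment onto the chain that $path$ already carried by the inductive hypothesis gives post-condition~5 for the whole stack, while post-conditions~1--4 refer only to the final iteration and follow verbatim from the argument of Lemma~\ref{lem: find findCommon process}; the degenerate case in which the internal loop runs zero times (the terminal variable already holds a \textit{Leaf} or \textit{Empty} node on entry) is a routine check that $l$, $pOp$ and $path$ are returned unchanged and already satisfy the conclusion by the caller's setup. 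I expect the main obstacle to be the careful bookkeeping in the \textit{Compress} branch of continueFind and continueFindCommon, where helpCompress is interleaved with the pops: one has to be precise that $path$ is a thread-local stack that no concurrent thread mutates, that an arbitrary number of pops leaves the chain property of the remainder intact, and that it is exactly the ``$T_{i1}\le T_i$'' slack built into the pre-condition of Definition~\ref{defi: find conditions} that lets us keep reusing the timestamps recorded when the chain was first constructed instead of demanding fresh reads.
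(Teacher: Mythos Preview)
Your proposal is correct and follows essentially the same approach as the paper: induction over the sequence of find invocations, with the base case supplied by Lemmas~\ref{lem: find first} and~\ref{lem: findCommon first} and the post-conditions delegated to Lemma~\ref{lem: find findCommon process}, and the inductive step handled by a call-site enumeration (continueFind, the two finds inside findCommon, and the two finds inside continueFindCommon) showing that popping preserves the parent--child chain of post-condition~5. Your treatment is in fact slightly more explicit than the paper's in two places---the zero-iteration degenerate case where $l$ is already a \textit{Leaf}/\textit{Empty} on entry, and the observation that $path$ is thread-local so concurrent helpers cannot disturb it---but these are refinements of the same argument, not a different route.
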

\begin{proof}
Now we prove the base case. We prove that initially the lemma holds. 

The pre-condition:

By Lemma~\ref{lem: find first}, the contain operation, the insert operation, and the remove operation start with an empty stack to record nodes.

For the move operation, by Lemma~\ref{lem: findCommon first} it starts with $rl$ as an \textit{Internal} node and empty paths. Therefore it executes the loop more than once to push \textit{Internal} nodes into $path$. In the last iteration $rl$ is pushed at line~\ref{quadboost: findCommon rl push}, and the parent of it has already been push at the prior iteration. Then, $rl$ is read from at line~\ref{quadboost: findCommon lca top}. Hence, $rl$ at line~\ref{quadboost: findCommon find rl} is a child of the top node in $rPath$ if it is not empty. $iPath$ is empty since it starts with an empty stack.

Therefore, initially the pre-conditions are satisfied.

The post-conditions:

The first condition always satisfies as the class of a node is checked at line~\ref{quadboost: find check}. 

By Lemma~\ref{lem: find first}, the contain operation, the insert operation, and the remove operation start with an empty stack. Thu Lemma~\ref{lem: find findCommon process} indicates that post-conditions are satisfied. 

For the find function within the findCommon function, by Lemma~\ref{lem: find findCommon process} it satisfies post-conditions when it breaks out.
Hence, at line~\ref{quadboost: findCommon find il}, post-conditions are satisfied because $iPath$ is empty. At line~\ref{quadboost: findCommon find rl}, if $rl$ starts as a \textit{Leaf} node, the post-conditions are also satisfied. Otherwise, there are two parts of $rPath$ at the time the loop breaks out, where we have proved each of them satisfy post-conditions by Lemma~\ref{lem: find findCommon process}. Moreover, by the pre-condition, $rl$ has been a child of the top node in $rPath$ of the find function. Thus, all nodes pushed later also satisfy the post-conditions.

We have proved the base case. Then we assume that for invocations $find_{0}, find_{1}, find_{2}...find_{k}$, the first $k$ invocations satisfy the pre-conditions and post-conditions. We should prove that $find_{k+1}$ also satisfy the conditions. We have to consider all places where the find function is invoked. 

The pre-condition:

First we consider the contain operation, the insert operation, and the move operation.

At line~\ref{quadboost: contain find}, line~\ref{quadboost: insert find}, line~\ref{quadboost: remove find}, the invocations follow the base case which we have proved satisfy the pre-condition.
 
At line~\ref{quadboost: continueFind find}, there are two scenarios $l$ is read. $l$ could either be read from the $path$ at line~\ref{quadboost: continueFind path pop} or be assigned to its parent node at line~\ref{quadboost: continueFind p}. In the latter case, $p$ was at the top of $path$ at $T_{i1} < T_{i}$ (line~\ref{quadboost: insert pop} and line~\ref{quadboost: remove pop}) by the hypothesis. Otherwise $l$ is assigned to a node in $path$. As there's no other push operations, $n_{t}$ on the top of $n_{t-1}$ must be its child at some $T_{i1} < T_{i}$ by the hypothesis.

Then, we discuss the move operation.

At line~\ref{quadboost: findCommon find rl}, the find function is wrapped by the findCommon function. Since Lemma~\ref{lem: find findCommon process} shows that the loop set $rl$ as a child of the top node in $rPath$, we have to prove that either it is an \textit{Internal} node or an \textit{Leaf} node that was a child of the top node before entering the loop. At line~\ref{quadboost: move findCommon}, the findCommon function follows the base case. At line~\ref{quadboost: continueFindCommon findCommon}, Corollary~\ref{cor: common path} shows that the root node will always be a LCA node. Hence, after setting $rPath$ to its lca index at line~\ref{quadboost: continueFindCommon rPath set lca index}, it contains at least one node. By Lemma~\ref{lem: find continueFind push nodes} and Lemma~\ref{lem: continueFind continueFindCommon push nodes}, $rl$ is an \textit{Internal} node popped from $rPath$ at line~\ref{quadboost: continueFindCommon rl pop common}. 

At line~\ref{quadboost: findCommon find il}, the find function is also wrapped by the findCommon function. We could prove that $iPath$ is empty. At line~\ref{quadboost: move findCommon}, the findCommon function follows the base case. At line~\ref{quadboost: continueFindCommon findCommon}, $iPath$ is set to empty before the invocation (line~\ref{quadboost: continueFindCommon iPath clear}).

At line~\ref{quadboost: continueFindCommon remove find}, $rl$ is either assigned to $rp$ (line~\ref{quadboost: continueFindCommon remove rl rp}) or a node popped from $rPath$ (line~\ref{quadboost: continueFindCommon remove rl pop}). If it is popped from $rPath$, our hypothesis ensures that it was a child of the top node from $rPath$. If it is read from $rp$, which is an \textit{Internal} node popped from $rPath$ at line~\ref{quadboost: move rp pop}, line~\ref{quadboost: continueFindCommon rp pop common}, or line~\ref{quadboost: continueFindCommon rp pop}, it also satisfies the claim.

At line~\ref{quadboost: continueFindCommon insert find}, likewise, $il$ is either assigned to $ip$ (line~\ref{quadboost: continueFindCommon insert il ip}) or a node popped from $iPath$ (line~\ref{quadboost: continueFindCommon insert il pop}). If it is popped from $iPath$, our hypothesis ensures that $il$ was a child node of the top node from $iPath$ if it is not empty. If it is read from $ip$, which is an \textit{Internal} node popped from $iPath$ at line~\ref{quadboost: move ip pop}, line~\ref{quadboost: continueFindCommon ip pop common}, or line~\ref{quadboost: continueFindCommon ip pop}, it also satisfies the claim.

The post-conditions:

The first condition always holds. Because by Lemma~\ref{lem: find continueFind push nodes}, if $l$ is an \textit{Internal} node, it will be pushed into $path$.

We then prove other conditions.

First we consider the contain operation, the insert operation, and the move operation.

At line~\ref{quadboost: contain find}, line~\ref{quadboost: insert find}, line~\ref{quadboost: remove find}, the invocations follow the base case which we have proved satisfy the post-conditions.
 
At line~\ref{quadboost: continueFind find}, there are two scenarios $l$ is read. $l$ could either be read from the $path$ at line~\ref{quadboost: continueFind path pop} or be assigned to its parent node at line~\ref{quadboost: continueFind p}. In both cases, $l$ was an \textit{Internal} node in $path$. Therefore Lemma~\ref{lem: find findCommon process} indicates that all nodes followed by $l$ satisfy the post-conditions.

Then, we discuss the move operation.

At line~\ref{quadboost: findCommon find rl}, the find function is wrapped by the findCommon function. The pre-condition indicates $rl$ was a child of $rPath$ if it is not empty. Hence, it suffices to show that the post-conditions are satisfied by Lemma~\ref{lem: find findCommon process}.

At line~\ref{quadboost: findCommon find il}, the find function is also wrapped by the findCommon function. The pre-condition indicates $iPath$ is an empty. In this way, by Lemma~\ref{lem: find findCommon process} we prove the post-conditions. 

At line~\ref{quadboost: continueFindCommon remove find}, likewise, the pre-condition suggests that $rl$ was an \textit{Internal} node in $rPath$. By Lemma~\ref{lem: find findCommon process} we also prove the post-conditions. 

At line~\ref{quadboost: continueFindCommon insert find}, the pre-condition suggests that $il$ was an \textit{Internal} node in $iPath$. Therefore, by Lemma~\ref{lem: find findCommon process} we prove the post-conditions. 
\end{proof}

After showing the pre-condition and post-conditions of the find function, we the prove that the findCommon function that returns $\langle il, iOp, iPath, rl, rOp, rPath\rangle$ satisfies its pre-conditions and post-conditions.
\begin{lem}
\label{lem: findCommon pre-condition post-conditions}
Every call to the findCommon function satisfies its pre-conditions and post-conditions.
\end{lem}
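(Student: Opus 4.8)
The statement is the \texttt{findCommon} analogue of Lemma~\ref{lem: find pre-condition post-conditions}, so I would prove it the same way: by induction on the order in which \texttt{findCommon} is invoked during an execution, with the first invocation as the base case and every subsequent invocation (which can only come from \texttt{continueFindCommon} at line~\ref{quadboost: continueFindCommon findCommon}) as the inductive step. The two workhorses are Lemma~\ref{lem: find findCommon process}, which already establishes that the initial \emph{while} loop of \texttt{findCommon} (line~\ref{quadboost: findCommon check}) leaves $rl$, $rOp$, $rPath$ satisfying post-conditions 1--5 of Definition~\ref{defi: find conditions}, and Lemma~\ref{lem: find pre-condition post-conditions}, which gives post-conditions for the two wrapped \texttt{find} calls at lines~\ref{quadboost: findCommon find rl} and~\ref{quadboost: findCommon find il}. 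Lemmas~\ref{lem: find continueFind push nodes} and~\ref{lem: continueFind continueFindCommon push nodes} (only \textit{Internal} nodes are ever pushed) and Corollary~\ref{cor: common path} (the root is always a common ancestor, hence a valid LCA) are the auxiliary facts.

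\textbf{Base case.} The first call is at line~\ref{quadboost: move findCommon}. By Lemma~\ref{lem: findCommon first}, $rl=\mathit{root}$ is \textit{Internal} and $rPath,iPath$ are empty, so both pre-conditions of Definition~\ref{defi: findCommon} hold vacuously. For the post-conditions I would trace the body: the loop at line~\ref{quadboost: findCommon check} pushes only \textit{Internal} nodes, terminates with $rl$ sitting on some direction of $rPath.\mathit{top}()$, and sets $lca=rPath.\mathit{top}()$ at line~\ref{quadboost: findCommon lca top}; then $rl$ at line~\ref{quadboost: findCommon find rl} is a child of the top of $rPath$ (or already a \textit{Leaf}/\textit{Empty} when the loop broke immediately), so the wrapped \texttt{find} meets its pre-condition and, by Lemma~\ref{lem: find pre-condition post-conditions} together with Lemma~\ref{lem: find findCommon process} applied to each of the at most two segments of $rPath$, returns $\langle rl,rOp,rPath\rangle$ satisfying post-conditions 1, 3, 5, 8, 9. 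Since $il=lca$ and $iPath$ is still empty at line~\ref{quadboost: findCommon find il}, the second wrapped \texttt{find} likewise yields post-conditions 2, 4, 6, 7, 10. The key bookkeeping point is that nodes pushed later in $rPath$ are children of nodes pushed earlier, so the per-pair statement in Lemma~\ref{lem: find findCommon process} extends to the concatenated stack.

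\textbf{Inductive step.} A later invocation occurs only at line~\ref{quadboost: continueFindCommon findCommon}, inside the $cFail$ branch. Before that call, $iPath$ is cleared at line~\ref{quadboost: continueFindCommon iPath clear}, so its pre-condition is vacuous; $rPath$ is truncated to $\mathit{indexOf}(lca)$ at line~\ref{quadboost: continueFindCommon rPath set lca index} and then nodes whose $op$ is \textit{Compress} are popped at line~\ref{quadboost: continueFindCommon rl pop common}, with $rl$ set to the last popped node. By Corollary~\ref{cor: common path} the root is a common ancestor, so after truncation $rPath$ is non-empty and $rl$ is well defined; by Lemmas~\ref{lem: find continueFind push nodes} and~\ref{lem: continueFind continueFindCommon push nodes} $rl$ is \textit{Internal}, and the inductive hypothesis (via the stack invariant of the previous \texttt{findCommon}/\texttt{continueFindCommon}) gives that $rl$ was a child of the new $rPath.\mathit{top}()$ at some earlier time, establishing \texttt{findCommon}'s second pre-condition. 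With both pre-conditions in hand the body analysis of the base case applies verbatim to give the post-conditions, and I would also note that the final pops at lines~\ref{quadboost: continueFindCommon rp pop common} and~\ref{quadboost: continueFindCommon ip pop common} only remove the top of each returned path, which is consistent with the post-conditions stated for the callers.

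\textbf{Main obstacle.} The routine parts (class checks forcing post-condition~1--2, temporal ordering of reads forcing 7--8) go through mechanically. The genuinely delicate part is the stack surgery in \texttt{continueFindCommon}: I must show that the $\mathit{setIndex}/\mathit{indexOf}(lca)$ truncation plus the subsequent \textit{Compress}-popping loop preserve the invariant ``the node currently on top of $rPath$ did, at some earlier instant, have $rl$ as a child on a fixed direction,'' and that this instant can be taken consistently with the ``top node has contained $rOp$'' claims. This is where Corollary~\ref{cor: common path} is essential to rule out an empty $rPath$, and where I would lean on the earlier structural lemmas about \textit{Compress} nodes rather than attempting a direct argument.
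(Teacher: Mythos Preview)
Your proposal is correct and follows essentially the same approach as the paper: induction on the invocation order of \texttt{findCommon}, with the base case handled via Lemma~\ref{lem: findCommon first} (empty paths make the pre-conditions vacuous), the inductive step handled by tracing the $cFail$ branch of \texttt{continueFindCommon} using Corollary~\ref{cor: common path} and Lemmas~\ref{lem: find continueFind push nodes}--\ref{lem: continueFind continueFindCommon push nodes}, and the post-conditions delegated to the wrapped \texttt{find} calls via Lemma~\ref{lem: find pre-condition post-conditions}. Your treatment is in fact more explicit than the paper's---in particular, your identification of the ``stack surgery'' (that the truncation-plus-pop sequence preserves the child-of-top invariant needed for the second pre-condition) is a point the paper glosses over by simply asserting that $rl$ is an \textit{Internal} node popped from $rPath$.
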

\begin{proof}
The pre-condition:

We prove the lemma by induction. First we prove the base case.

At line~\ref{quadboost: move findCommon}, the findCommon function is initially invoked, and both $iPath$ and $rPath$ are empty (line~\ref{quadboost: move two paths}). 

Then we assume that for invocations $findCommon_{0},$ $findCommon_{1},$ $findCommon_{2}$ ... $findCommon_{k}$, the first $k$ invocations satisfy the pre-conditions. We prove that $findCommon_{k+1}$ also satisfies the conditions. 

There are two places the findCommon function is invoked. At line~\ref{quadboost: move findCommon}, the base case proves the post-conditions. At line~\ref{quadboost: continueFindCommon findCommon}, Corollary~\ref{cor: common path} shows that the root node will always be a LCA node. Hence, after setting $rPath$ to its lca index at line~\ref{quadboost: continueFindCommon rPath set lca index}, it contains at least one node. By Lemma~\ref{lem: find continueFind push nodes} and Lemma~\ref{lem: continueFind continueFindCommon push nodes}, $rl$ is an \textit{Internal} node popped from $rPath$ at line~\ref{quadboost: continueFindCommon rl pop common}. Hence, we prove the pre-condition.

The post-conditions:

The findCommon function wraps two find functions at line~\ref{quadboost: findCommon find il} and line~\ref{quadboost: findCommon find rl}. Because $\langle rl, rOp, rPath\rangle$ that passed from the find function for $rl$ and $\langle il, iOp, iPath\rangle$ that passed from the find function for $il$ hold the post-conditions, the lemma is true.
\end{proof}

%continueFind, ContinueFindCommon
After proving the pre-conditions and post-conditions of the find function and the findCommon function, we shall prove that the continuous find mechanism (i.e. the continueFind function and the continueFindCommon function) satisfies its pre-conditions and post-conditions. 

\begin{defi}
The pre-conditions of continueFind(pOp, path, l, p):
\begin{enumerate}
	\item $l$ was child of $p$.
	\item $p$ was child of the top node in $path$ if it is not empty.
\end{enumerate} 
The post-conditions are the same as the find function.
\end{defi}

\begin{defi}
The pre-conditions of continueFindCommon(il, rl, lca, rOp, iOp, iPath, rPath, ip, rp, oldKeyX, oldKeyY, newKeyX, newKeyY, iFail, rFail, cFail):
\begin{enumerate}
	\item At least one of $iFail$, $rFail$ or $cFail$ is true.
	\item $il$ was a child of $ip$.
	\item $ip$ was a child of the top node in $iPath$ if it is not empty.
	\item $rl$ was a child of $rp$.
	\item $rp$ was a child of the top node in $rPath$ if it is not empty.
\end{enumerate} 
The post-conditions are the same as the findCommon function.
\end{defi}

We now prove that the continueFind function satisfies its pre-conditions and post-conditions. 

\begin{lem}
\label{lem: continueFind pre-condition post-conditions}
Every call to the continueFind function satisfies its pre-condition and post-conditions.
\end{lem}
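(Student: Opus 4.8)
The plan is to prove the lemma by induction on the sequence of all invocations of the \textit{continueFind} function, using Lemma~\ref{lem: find pre-condition post-conditions} as a black box for every internal call to \textit{find}. The only two call sites are line~\ref{quadboost: insert continueFind} of the insert operation and line~\ref{quadboost: remove continueFind} of the remove operation, so only these need to be inspected.

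First I would establish the pre-conditions. At both call sites the arguments $l$, $p$, $path$ arise as follows: $l$ was last written by a \textit{find} (the initial one, or the one inside the previous \textit{continueFind}), whose post-conditions hold by Lemma~\ref{lem: find pre-condition post-conditions} and the induction hypothesis; in particular, by parts~3 and~5 of Definition~\ref{defi: find conditions}, at some earlier time the top of $path$ contained $l$ and $path$ was a parent--child chain descending from the root. Between that \textit{find} and the \textit{continueFind} call the only statement touching $path$ is \texttt{p = path.pop()} (line~\ref{quadboost: insert pop} / line~\ref{quadboost: remove pop}); hence $p$ was the old top of $path$, so $l$ was a child of $p$, and by part~5, $p$ was a child of the new top of $path$ whenever $path$ is non-empty. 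This gives pre-conditions~1 and~2.

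For the post-conditions I would use that \textit{continueFind} terminates by calling \textit{find}$(l, pOp, path, \ldots)$ at line~\ref{quadboost: continueFind find}, so by Lemma~\ref{lem: find pre-condition post-conditions} it suffices to check \textit{find}'s pre-condition at that point. If $pOp$ is not \textit{Compress}, the code sets $l = p$ (line~\ref{quadboost: continueFind p}), and pre-condition~2 already says $p$ was a child of the top of $path$ whenever $path$ is non-empty, so \textit{find}'s pre-condition holds. If $pOp$ is \textit{Compress}, the loop repeatedly pops a node into $l$ (line~\ref{quadboost: continueFind path pop}), calls \texttt{helpCompress} while $l$'s $op$ is still \textit{Compress}, and otherwise breaks; chaining pre-condition~2 with the parent--child structure of $path$ (part~5) across the successive pops shows that the $l$ on which the loop breaks was a child of the then-current top of $path$, which is exactly \textit{find}'s pre-condition. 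Note that these ``was a child of'' statements refer to past times and so survive the intervening \texttt{helpCompress} calls.

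The delicate point, where I expect the main obstacle, is that the \textit{Compress} loop must not pop $path$ empty, which could violate the $\mathrm{size}(path)\ge 2$ clause of \textit{find}'s post-conditions. I would close this with the two-dummy-layer invariant: the root is never changed (Observation~\ref{obs: root}), and no child of the root is ever flagged with a \textit{Compress} $op$, because the \textit{compress} function returns without flagging as soon as the grandparent equals the root (line~\ref{quadboost: compress root}); combined with the fact that every $path$ returned by \textit{find} has the root as its bottom element together with at least one of the root's \textit{Internal} children above it, the loop, which pops from the top and breaks at the first non-\textit{Compress} node, must break at or before reaching a child of the root, leaving $path$ non-empty. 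Once that is in place, the remainder is a routine propagation of the pre/post conditions through the induction, plus the appeal to Lemma~\ref{lem: find pre-condition post-conditions}.
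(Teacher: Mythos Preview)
Your approach is essentially the same as the paper's: establish the pre-conditions at the two call sites from the post-conditions of the preceding \textit{find}/\textit{continueFind} plus the \texttt{pop}, and obtain the post-conditions by delegating to Lemma~\ref{lem: find pre-condition post-conditions} for the terminating call to \textit{find} at line~\ref{quadboost: continueFind find}. One remark: your case analysis verifying \textit{find}'s pre-condition at line~\ref{quadboost: continueFind find} is work the paper has already folded into the inductive proof of Lemma~\ref{lem: find pre-condition post-conditions} itself (that lemma explicitly treats the call site line~\ref{quadboost: continueFind find}), so here the paper simply cites that lemma and moves on; your extra care about the $\mathit{path}$ not being popped empty via the two-dummy-layer invariant is a point the paper does not spell out in this lemma, but it is consistent with Lemma~\ref{lem: dummy not change} and the guard at line~\ref{quadboost: compress root}.
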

\begin{proof}
The post-conditions:

At line~\ref{quadboost: continueFind find} of the continueFind function, it executes the find function. Because the find operation satisfies its post-conditions by Lemma~\ref{lem: find pre-condition post-conditions}, the continueFind function also satisfies the same conditions.

The pre-condition:

The continueFind function is invoked at line~\ref{quadboost: insert continueFind} and line~\ref{quadboost: remove continueFind}. 

Initially, the continueFind function follows the find function at line~\ref{quadboost: insert find} and line~\ref{quadboost: remove find}. For the first part, the post-condition of the find function shows that $l$ was a child of $p$ which popped at line~\ref{quadboost: insert pop} or line~\ref{quadboost: remove pop}(Lemma~\ref{lem: find findCommon process}). For the second part, after popping $p$, it becomes a child of the top node in $path$ if it is not empty. 

Otherwise, it reads results from the continueFind invocation at the prior iteration. As the post-conditions of the continueFind function is the same as the find function, we prove the lemma.
\end{proof}

We now prove that the continueFindCommon function satisfies its pre-conditions and post-conditions.

\begin{lem}
\label{lem: continueFindCommon pre-conditions post-conditions}
Every call to the continueFindCommon function satisfies its pre-conditions and post-conditions
\end{lem}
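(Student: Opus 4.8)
The plan is to follow the template of the proof of Lemma~\ref{lem: continueFind pre-condition post-conditions}: first discharge the post-conditions by reducing them to the already-established post-conditions of \textit{find} (Lemma~\ref{lem: find pre-condition post-conditions}) and \textit{findCommon} (Lemma~\ref{lem: findCommon pre-condition post-conditions}), and then establish the pre-conditions by induction on the order in which \textit{continueFindCommon} is invoked, using the fact that it is called exactly once per iteration of the \textit{while} loop in \textit{move} (line~\ref{quadboost: move continueFindCommon}).

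For the post-conditions, I would note that on every control path through \textit{continueFindCommon} that returns \texttt{true}, the final tuples $\langle il, iOp, iPath\rangle$ and $\langle rl, rOp, rPath\rangle$ are the output of a \textit{find} call---line~\ref{quadboost: continueFindCommon remove find} for $oldKey$'s terminal and line~\ref{quadboost: continueFindCommon insert find} for $newKey$'s terminal---or, in the \texttt{cFail} branch, of the \textit{findCommon} call at line~\ref{quadboost: continueFindCommon findCommon}. Each of these inner calls meets its own pre-condition (this is the bulk of the case analysis described below), hence delivers its post-conditions; the trailing \texttt{rp = rPath.pop()} and \texttt{ip = iPath.pop()} only pop a node that was already the top, so by the path-structure post-conditions of \textit{find}/\textit{findCommon} the returned tuples still satisfy every post-condition. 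In the two one-sided branches (the \texttt{rFail} branch never touches $il,ip,iPath$; the \texttt{iFail} branch never touches $rl,rp,rPath$) the untouched side simply inherits what it satisfied on entry.

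For the pre-conditions I would argue by induction. \emph{Base case}: the first invocation follows \textit{findCommon} at line~\ref{quadboost: move findCommon} and the pops of $ip$, $rp$ (lines~\ref{quadboost: move ip pop}--\ref{quadboost: move rp pop}); by the post-conditions of \textit{findCommon} the former top of $iPath$ contained $il$, so after popping $ip$ is that node, $il$ was its child, and $ip$ is a child of the new top of $iPath$ if nonempty---pre-conditions 2--3---and symmetrically pre-conditions 4--5 for $rl,rp,rPath$. Pre-condition 1 (at least one of \texttt{iFail}, \texttt{rFail}, \texttt{cFail} true) holds because the loop body can reach line~\ref{quadboost: move continueFindCommon} only by having set one of the three at lines~\ref{quadboost: move rOp check}--\ref{quadboost: move ip rp same} or by a flag/\textit{helpMove} failure at line~\ref{quadboost: rFail iFail true}, \ref{quadboost: iFail true}, \ref{quadboost: rFail true}, or \ref{quadboost: cFail true}; every successful pass through the \texttt{if (!cFail \&\& !iFail \&\& !rFail)} block returns \texttt{true} and never reaches that line. \emph{Inductive step}: the $(k+1)$-th invocation receives $il,rl,ip,rp,iPath,rPath$ from the $k$-th, whose post-conditions coincide with those of \textit{findCommon}, together with the in-function pops, so the same argument (popping a node that was already the top) yields pre-conditions 2--5 (the one-sided branches carrying over the unchanged side from the earlier call), and pre-condition 1 holds for the same structural reason since \texttt{cFail = iFail = rFail = false} is reset immediately after line~\ref{quadboost: move continueFindCommon}.

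The hard part will be the internal case analysis that discharges the pre-conditions of the three \textit{find}/\textit{findCommon} calls \textit{continueFindCommon} performs. One must check: in the \texttt{rFail} branch, that the $rl$ entering \textit{find} at line~\ref{quadboost: continueFindCommon remove find} is either an \textit{Internal} node popped from $rPath$ or is $rp$ (itself previously popped from $rPath$, hence a child of the current top by incoming pre-condition 5), and symmetrically in the \texttt{iFail} branch; and that the \texttt{rFail} branch can cascade into \texttt{cFail} (line~\ref{quadboost: continueFindCommon remove lca}), so the \texttt{cFail} branch must be shown to begin from a well-defined $rPath$/$iPath$ no matter how it is entered---after \texttt{iPath.clear()} (line~\ref{quadboost: continueFindCommon iPath clear}) the \textit{findCommon} pre-condition on $iPath$ is vacuous, and \texttt{rPath.setIndex(indexOf(lca))} (line~\ref{quadboost: continueFindCommon rPath set lca index}) leaves $rPath$ nonempty because the root is always a common ancestor (Corollary~\ref{cor: common path}), exactly as in the proof of Lemma~\ref{lem: findCommon pre-condition post-conditions}. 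These checks are routine given the post-conditions already in hand, but there are many of them.
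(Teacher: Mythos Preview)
Your proposal is correct and follows essentially the same approach as the paper: a three-way case analysis on \texttt{rFail}/\texttt{iFail}/\texttt{cFail} that reduces the post-conditions to those of the inner \textit{find} and \textit{findCommon} calls, together with an inductive argument for the pre-conditions using the base case from \textit{findCommon} at line~\ref{quadboost: move findCommon} and the inductive step from the previous iteration's post-conditions. The ``hard part'' you anticipate---verifying the pre-conditions of the inner \textit{find}/\textit{findCommon} calls---is in fact already discharged in the paper by Lemmas~\ref{lem: find pre-condition post-conditions} and~\ref{lem: findCommon pre-condition post-conditions}, which prove that \emph{every} call to those functions (including the ones inside \textit{continueFindCommon}) satisfies its pre-condition, so you need not redo that case analysis here.
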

\begin{proof}

The post-conditions:

There are three cases, the first part of the pre-condition shows that it must enter one of the loop. We consider each case by the program execution order. 

\begin{enumerate}
	\item If $rFail$ is true and $cFail$ is false, it executes the case starts from line~\ref{quadboost: rFail true}. If $cFail$ is not set to true, it executes the find function at line~\ref{quadboost: continueFindCommon remove find}. Because the find function satisfies its post-conditions (Lemma~\ref{lem: find pre-condition post-conditions}), the lemma holds. If $iFail$ is true, it comes to the second part of the proof. If $cFail$ is set to true, it comes to the third part of the proof. 
	\item If $iFail$ is true and $cFail$ is false, it executes the case starts from line~\ref{quadboost: iFail true}. If $cFail$ is not set to true, it executes the find function at line~\ref{quadboost: continueFindCommon insert find}. Because the find function satisfies its post-conditions, part 1 and Lemma~\ref{lem: find pre-condition post-conditions} indicate the Lemma holds. If $cFail$ is true, it comes to the third part of the proof.
	\item If $cFail$ is true, it executes the case starts from line~\ref{quadboost: cFail true}. Because the findCommon function satisfies its post-conditions (Lemma~\ref{lem: findCommon pre-condition post-conditions}), it establishes this part of the lemma.
\end{enumerate}

The pre-condition:

The continueFindCommon function is invoked at line~\ref{quadboost: move continueFindCommon}.

For the first condition, line~\ref{quadboost: move rOp check}, line~\ref{quadboost: move iOp check}, line~\ref{quadboost: move ip rp same}, line~\ref{quadboost: rFail iFail true}, line~\ref{quadboost: iFail true}, line~\ref{quadboost: rFail true}, line~\ref{quadboost: cFail true} indicate that at least on of three flags is set to true.

For the next conditions, initially the continueFindCommon follows the findCommon function at line~\ref{quadboost: move findCommon}. As $ip$ and $rp$ are popped at line~\ref{quadboost: move ip pop} and line~\ref{quadboost: move rp pop}, $il$ was a child of $ip$ and $rl$ was a child of $rp$. Further, $ip$ and $rp$ was a child of the top node in $iPath$ and $rPath$. All these claims are based on Lemma~\ref{lem: continueFind pre-condition post-conditions}. Otherwise, the continueFindCommon function reads results from  its invocation at the prior iteration. In such a case, our post-conditions prove the pre-conditions.

\end{proof}

We observe that the find function, the continueFind function, the findCommon function,  and the continueFindCommon function return a similar pattern of results. The find function and the continueFind function return a tuple $\langle l, pOp, path\rangle$. The findCommon function and the continueFindCommon function return two such tuples. Hence We use $find(keys)$ to denote such a set of find operations for searching keys. Further, we number $find(keys)$ by their invocation orders. Invocations at line~\ref{quadboost: insert find}, line~\ref{quadboost: remove find}, and line~\ref{quadboost: move findCommon} are at $iteration_{0}$. Invocations at line~\ref{quadboost: insert find}, line~\ref{quadboost: remove find}, and line~\ref{quadboost: move findCommon} are at $iteration_{i}, 0 < i$. 

\begin{defi}
For $compress(path, p)$ invoked at $T_{i}$, it has following pre-conditions:
\begin{enumerate}
\item $p$ is an \textit{Internal} node.
\item $path$ is read from the result of $find(keys)$ at the prior iteration.
\item If $path$ is not empty, $p$ was a child node of the top node in $path$ at $T_{i1} < T_{i}$.
\end{enumerate}
\end{defi}

\begin{lem}
\label{lem: compress pre-conditions}
Every call to the compress function satisfies its pre-conditions.
\end{lem}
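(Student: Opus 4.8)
The plan is to prove the lemma by a complete case analysis of the sites at which \texttt{compress} is invoked, supplemented by a loop invariant that covers the iterated (tail) call realized inside \texttt{compress} itself. There are exactly two external call sites --- line~\ref{quadboost: remove compress} in the remove operation and line~\ref{quadboost: move compress} in the move operation --- plus the continuation \texttt{p = gp} with \texttt{gp = path.pop()} at the head of the \texttt{while} loop, which I would treat separately as the inductive step of a loop invariant rather than as a fresh call. For each case the three pre-conditions (\texttt{p} is \textit{Internal}; \texttt{path} comes from the most recent $find(keys)$; and \texttt{p} was a child of the top of \texttt{path} when the latter is nonempty) reduce to a direct appeal to the post-conditions of the find routines plus the lemmas that say those routines push only \textit{Internal} nodes.

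First I would handle the call at line~\ref{quadboost: remove compress}. The stack \texttt{path} passed in is the one returned by $find$ at line~\ref{quadboost: remove find} (in $iteration_0$) or by $continueFind$ at line~\ref{quadboost: remove continueFind} (in some later iteration); by Lemma~\ref{lem: continueFind pre-condition post-conditions} either way it satisfies the post-conditions of Definition~\ref{defi: find conditions}, which yields pre-condition~2. The node \texttt{p} is \texttt{path.pop()} at line~\ref{quadboost: remove pop}, and by Lemma~\ref{lem: find continueFind push nodes} every node pushed by $find$/$continueFind$ is \textit{Internal}, giving pre-condition~1. Post-condition~5 of Definition~\ref{defi: find conditions} states that the nodes of $path$ form a parent--child chain with every node on the top being a child of the one below it at some earlier time, so after \texttt{p} is popped the new top node had \texttt{p} as one of its children, giving pre-condition~3. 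The call in the move operation (line~\ref{quadboost: move compress}) is argued identically, now using \texttt{rPath}: its contents satisfy the $findCommon$ post-conditions by Lemma~\ref{lem: continueFindCommon pre-conditions post-conditions}, the popped node \texttt{rp} (from line~\ref{quadboost: move rp pop}, line~\ref{quadboost: continueFindCommon rp pop}, or line~\ref{quadboost: continueFindCommon rp pop common}) is \textit{Internal} by Lemma~\ref{lem: continueFind continueFindCommon push nodes}, and post-condition~9 of Definition~\ref{defi: findCommon} supplies the parent--child relation.

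For the iterated call I would state the loop invariant: at the top of every iteration of the \texttt{while} loop inside \texttt{compress}, \texttt{p} is an \textit{Internal} node, \texttt{path} is the $find(keys)$ stack of the prior iteration with some prefix of elements popped off (so post-condition~5/9 still applies to what remains), and \texttt{p} was a child of the top of \texttt{path} at some earlier time whenever \texttt{path} is nonempty. The base case is the initial call, already established above. For the inductive step, the only mutation of \texttt{path} in the body is \texttt{gp = path.pop()} followed by \texttt{p = gp}; by the invariant the old \texttt{p} was pushed immediately above \texttt{gp}, so \texttt{gp} was \texttt{p}'s parent and is \textit{Internal}, and after this pop the new top of \texttt{path} is \texttt{gp}'s parent, re-establishing the invariant. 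Since \texttt{path} strictly shrinks, the loop terminates, which is all that is needed.

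The main obstacle --- really the only delicate point --- is pinning the loop invariant down precisely: one must verify that nothing between successive \texttt{path.pop()} calls (the \texttt{check(p)} test, the \texttt{helpFlag}, the \texttt{helpCompress}) ever touches \texttt{path}, and one must read pre-condition~2 (``\texttt{path} is the result of $find(keys)$ at the prior iteration'') up to a prefix of pops so that the post-conditions of the find routines stay applicable to the residual stack. Once that reading is fixed, the rest is a mechanical appeal to Lemmas~\ref{lem: find continueFind push nodes}, \ref{lem: continueFind continueFindCommon push nodes}, \ref{lem: continueFind pre-condition post-conditions}, and~\ref{lem: continueFindCommon pre-conditions post-conditions} together with post-conditions~5 and~9 of Definitions~\ref{defi: find conditions} and~\ref{defi: findCommon}.
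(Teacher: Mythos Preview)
Your proposal is correct and follows the same approach as the paper: a case analysis over the two external call sites (line~\ref{quadboost: remove compress} and line~\ref{quadboost: move compress}), tracing \texttt{p} back to a pop from the relevant stack and then appealing to the post-conditions of the $find$/$findCommon$/$continueFind$/$continueFindCommon$ routines (Lemmas~\ref{lem: find pre-condition post-conditions}, \ref{lem: findCommon pre-condition post-conditions}, \ref{lem: continueFind pre-condition post-conditions}, \ref{lem: continueFindCommon pre-conditions post-conditions}) together with the ``only \textit{Internal} nodes are pushed'' lemmas. Your treatment is in fact more careful than the paper's: the paper does not separately verify the loop invariant for the iterated \texttt{p = gp} step inside \texttt{compress}, since the stated pre-conditions attach only to the initial invocation, so that part of your argument is extra but harmless.
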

\begin{proof}
The compress function is called at line~\ref{quadboost: remove compress} or line~\ref{quadboost: move compress}. At line~\ref{quadboost: remove compress}, $p$ is a node popped from $path$ (line~\ref{quadboost: remove pop}. At line~\ref{quadboost: move compress}, $rp$ is an \textit{Internal} node popped from $rPath$ (line~\ref{quadboost: move rp pop}, line~\ref{quadboost: continueFindCommon rp pop} or line~\ref{quadboost: continueFindCommon rp pop common}. This proves the first part.

For last two parts, since the compress function reads result following $find(keys)$, it suffices to prove the post-conditions of $find(keys)$ satisfy the preconditions. By Lemma~\ref{lem: continueFind pre-condition post-conditions}, Lemma~\ref{lem: continueFindCommon pre-conditions post-conditions}, Lemma~\ref{lem: find pre-condition post-conditions}, and Lemma~\ref{lem: findCommon pre-condition post-conditions}, the post-conditions of the find function, the findCommon function, the continueFind function, and the continueFindCommon function establish the lemma.
\end{proof}

\begin{cor}
\label{cor: check pre-conditions}
Every call to the check function satisfies its pre-conditions that $p$ is an \textit{Internal} node.
\end{cor}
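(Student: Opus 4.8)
The plan is to observe that the \texttt{check} function has exactly one call site---line~\ref{quadboost: compress check} inside the \texttt{compress} function, where it appears as \texttt{check(p)}---so it suffices to track every value the local variable $p$ can hold when that line executes.

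First I would invoke Lemma~\ref{lem: compress pre-conditions}: every call to the \texttt{compress} function is made with an \textit{Internal} node as its pointer argument $p$. This discharges the claim on the first pass through the \textit{while} loop of \texttt{compress}, where $p$ still equals the value received from the caller.

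For the remaining iterations I would use a short induction on the loop body. The only statement that mutates $p$ is \texttt{p = gp}, and it is reached only after \texttt{gp = path.pop()}; moreover $path$ is exactly the stack returned by $find(keys)$ at the prior iteration of the enclosing \texttt{remove} or \texttt{move} operation, and is thereafter only popped. By Lemma~\ref{lem: find continueFind push nodes} and Lemma~\ref{lem: continueFind continueFindCommon push nodes}, every node ever pushed into such a stack is an \textit{Internal} node, so $gp$---and hence the updated $p$---is \textit{Internal}. With induction hypothesis ``$p$ is an \textit{Internal} node at the top of the loop body'', it follows that $p$ is an \textit{Internal} node whenever line~\ref{quadboost: compress check} runs, which is exactly the pre-condition of \texttt{check}.

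I do not anticipate a genuine obstacle. The one point that needs a little care is evaluation order: since \texttt{check(p)} is the left operand of the short-circuit \texttt{||} in the guard \texttt{!check(p) || !helpFlag(...)}, it is evaluated \emph{before} $gp$ is ever promoted into $p$, so the argument tested is always either the original parameter or a node popped from $path$ on an earlier iteration---never an \textit{Empty} or \textit{Leaf} node, and never a raw \textit{getQuadrant} result. Phrasing the hypothesis about the value of $p$ at the loop top, as above, keeps this bookkeeping routine.
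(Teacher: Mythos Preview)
Your proposal is correct and follows the natural line of reasoning the paper intends: the corollary is stated in the paper immediately after Lemma~\ref{lem: compress pre-conditions} with no explicit proof, so the justification is precisely the unwinding you give---the initial $p$ is \textit{Internal} by that lemma, and each subsequent value of $p$ comes from \texttt{path.pop()}, which by Lemmas~\ref{lem: find continueFind push nodes} and~\ref{lem: continueFind continueFindCommon push nodes} yields only \textit{Internal} nodes. Your treatment is simply more explicit than the paper's, which leaves the loop induction to the reader.
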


\begin{defi}
For $moved(node)$ invoked at $T_{i}$, it has following the pre-conditions that $node$ is a \textit{Leaf} node or an \textit{Empty} node.
\end{defi}

\begin{lem}
\label{lem: moved pre-condition}
Every call to the moved function satisfies its pre-condition.
\end{lem}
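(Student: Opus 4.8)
The plan is to enumerate the finitely many call sites of the \textit{moved} function and, for each one, trace where its argument was last assigned, showing that it is always a terminal node returned by one of the find-family routines; the conclusion then follows from the post-conditions already established for those routines. The function \textit{moved} is invoked only in the contain operation (the guard just after line~\ref{quadboost: contain find}), the insert operation (line~\ref{quadboost: insert intree}), the remove operation (line~\ref{quadboost: remove intree}), the findCommon function (lines~\ref{quadboost: findCommon intree rl} and~\ref{quadboost: findCommon intree il}), and the continueFindCommon function (lines~\ref{quadboost: continueFindCommon intree rl} and~\ref{quadboost: continueFindCommon intree il}). In every case the argument is one of the terminal variables $l$, $rl$, or $il$.

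First I would dispatch the straightforward call sites. In the contain operation, $l$ is written exactly once, by the find function at line~\ref{quadboost: contain find}, so the first post-condition of the find function (Lemma~\ref{lem: find pre-condition post-conditions}) yields that $l$ is a \textit{Leaf} node or an \textit{Empty} node when \textit{moved}(l) runs. In the findCommon function, the call \textit{moved}(rl) at line~\ref{quadboost: findCommon intree rl} is immediately preceded by the find call that (re)writes $rl$ at line~\ref{quadboost: findCommon find rl}, and \textit{moved}(il) at line~\ref{quadboost: findCommon intree il} by the find call that writes $il$ at line~\ref{quadboost: findCommon find il}; Lemma~\ref{lem: find pre-condition post-conditions} again applies directly. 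The two calls inside continueFindCommon are handled the same way: \textit{moved}(rl) follows the find call at line~\ref{quadboost: continueFindCommon remove find} and \textit{moved}(il) follows the find call at line~\ref{quadboost: continueFindCommon insert find}. Equivalently, one may appeal to Lemma~\ref{lem: findCommon pre-condition post-conditions} and Lemma~\ref{lem: continueFindCommon pre-conditions post-conditions}, noting that the \textit{moved} calls occurring inside findCommon itself have already been discharged.

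The slightly more delicate case, which I would treat last, is the insert and remove operations, where \textit{moved}(l) sits at the head of a \textit{while} loop. There $l$ is written by the find function on entry (line~\ref{quadboost: insert find}, line~\ref{quadboost: remove find}) and rewritten by the continueFind function at the bottom of each iteration (line~\ref{quadboost: insert continueFind}, line~\ref{quadboost: remove continueFind}); the key bookkeeping observation is that $l$ is assigned nowhere else inside the loop body---in particular the pops at line~\ref{quadboost: insert pop} and line~\ref{quadboost: remove pop} write $p$, not $l$. Hence on every evaluation of \textit{moved}(l) the value of $l$ was last produced either by the find function or by the continueFind function, and since the post-conditions of continueFind coincide with those of find (Lemma~\ref{lem: continueFind pre-condition post-conditions}), in both cases $l$ is a \textit{Leaf} node or an \textit{Empty} node. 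I expect this ``match the enumeration of call sites against the enumeration of assignment sites'' step to be the only real obstacle; once it is in place the lemma is immediate, with no fresh induction required. As a side remark, \textit{moved} dereferences $l.op.iParent$ only behind the short-circuited guard that $l$ is a \textit{Leaf} node with a non-\textit{null} $op$, so its own field accesses are well defined whenever $l$ is a terminal node.
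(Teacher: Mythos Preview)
Your proposal is correct and follows essentially the same approach as the paper: enumerate the call sites of \textit{moved}, trace each argument back to the immediately preceding find/continueFind call, and invoke the first post-condition of Lemma~\ref{lem: find pre-condition post-conditions} (and Lemma~\ref{lem: continueFind pre-condition post-conditions}). Your version is in fact slightly more complete, since you also cover the call in the contain operation and make explicit the bookkeeping observation that $l$ is not reassigned inside the insert/remove loop bodies; the paper's proof omits the contain case and leaves that observation implicit.
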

\begin{proof}
The moved function is called at line~\ref{quadboost: insert intree}, line~\ref{quadboost: remove intree}, line~\ref{quadboost: findCommon intree il}, line~\ref{quadboost: findCommon intree rl}, line~\ref{quadboost: continueFindCommon intree il}, line~\ref{quadboost: continueFindCommon intree rl}. 

At line~\ref{quadboost: insert intree}, it reads $l$ from line~\ref{quadboost: insert find} or line~\ref{quadboost: insert continueFind}. Hence, according to Lemma~\ref{lem: find pre-condition post-conditions}, $l$ is a \textit{Leaf} node or an \textit{Empty} node.

At line~\ref{quadboost: remove intree}, it reads $l$ from line~\ref{quadboost: remove find} or line~\ref{quadboost: remove continueFind}. Hence, according to Lemma~\ref{lem: find pre-condition post-conditions}, $l$ is a \textit{Leaf} node or an \textit{Empty} node.

At line~\ref{quadboost: findCommon intree il} and line~\ref{quadboost: findCommon intree rl}, it reads $il$ from line~\ref{quadboost: findCommon find il} and $rl$ from line~\ref{quadboost: findCommon find rl}. Therefore, according to Lemma~\ref{lem: find pre-condition post-conditions}, $il$ and $rl$ are \textit{Leaf} node or \textit{Empty} nodes.

At line~\ref{quadboost: continueFindCommon intree il}, it reads $il$ from line~\ref{quadboost: continueFindCommon insert find}. Therefore, according to Lemma~\ref{lem: find pre-condition post-conditions}, $il$ is a \textit{Leaf} node or an \textit{Empty} node.

At line~\ref{quadboost: continueFindCommon intree rl}, it reads $rl$ from line~\ref{quadboost: continueFindCommon remove find}. Therefore, according to Lemma~\ref{lem: find pre-condition post-conditions}, $rl$ is a \textit{Leaf} node or an \textit{Empty} node.
\end{proof}

%Operaiton
Based on these post-conditions, we show that each $op$ created at $T_{i}$ store their corresponding information.

\begin{lem}
\label{lem: op create}
For $op$ created at $T_{i}$:
\begin{enumerate}
\item If $op$ is a \textit{Substitute} object, $op.parent$ has contained $op.oldChild$, a \textit{Leaf} node, at $T_{i1} < T_{i}$ from the result of $find(keys)$ at the previous iteration.
\item If $op$ is a \textit{Compress} object, $op.grandparent$ has contained $op.parent$, an \textit{Internal} node, at $T_{i1} < T_{i}$ from the result of $find(keys)$ at the previous iteration. 
\item If $op$ is a \textit{Move} object, $op.iParent$ has contained $op.oldIChild$, a \textit{Leaf} node, $op.rParent$ has contained $op.oldRChild$, a \textit{Leaf} node, $op.iParent$ has contained $op.oldIOp$, and $op.rParent$ has contained $op.oldROp$ before $T_{i}$ from the result of $find(keys)$ at the previous iteration.
\end{enumerate}
\end{lem}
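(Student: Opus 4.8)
The plan is to prove all three parts by a short case analysis over the handful of program points at which an \textit{Operation} object is instantiated, tracing every constructor argument back to the tuple(s) returned by the most recent $find(keys)$ call and then quoting the post-conditions already established for those routines in Lemmas~\ref{lem: find pre-condition post-conditions}, \ref{lem: findCommon pre-condition post-conditions}, \ref{lem: continueFind pre-condition post-conditions}, and~\ref{lem: continueFindCommon pre-conditions post-conditions}, together with the pre-condition of \texttt{compress} from Lemma~\ref{lem: compress pre-conditions}. The only preliminary point is the iteration bookkeeping: whenever an $op$ is built inside the \textit{while} loop of \texttt{insert}, \texttt{remove}, or \texttt{move}, the stacks, terminal node(s) and parent $op$(s) in scope are exactly those produced by the $find(keys)$ of the previous iteration (or of $iteration_{0}$ on the first pass), because nothing between that invocation and the constructor call mutates them; this is precisely the hypothesis those post-conditions are about.

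For part~1, a \textit{Substitute} is created only at line~\ref{quadboost: insert Substitute} (in \texttt{insert}) and line~\ref{quadboost: remove Substitute} (in \texttt{remove}), each time with $op.parent = p$ where \texttt{p = path.pop()} (line~\ref{quadboost: insert pop} or~\ref{quadboost: remove pop}) and $op.oldChild = l$, the terminal returned by the preceding $find(keys)$. By the find post-condition that the top node of \texttt{path} had contained \texttt{l} at some $T_{i1}<T_i$ (Lemma~\ref{lem: find pre-condition post-conditions}), after the pop $p$ is that top node, so $op.parent$ had contained $op.oldChild$ at $T_{i1}<T_i$. Moreover $l$ is a \textit{Leaf}: for \texttt{remove} the guard at line~\ref{quadboost: remove intree} forces \texttt{inTree(l,...)}, and for \texttt{insert} the \textit{Leaf}-or-\textit{Empty} alternative comes directly from the find post-conditions. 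Part~2 is analogous: a \textit{Compress} is created only at line~\ref{quadboost: compress Compress}, with $op.grandparent = gp$ for \texttt{gp = path.pop()} and $op.parent = p$; Lemma~\ref{lem: compress pre-conditions} already supplies that \texttt{path} is a $find(keys)$ result of the previous iteration, that $p$ is an \textit{Internal} node, and that $p$ was a child of the top of \texttt{path} at some $T_{i1}<T_i$, which after the pop is $gp$. (Lemmas~\ref{lem: find continueFind push nodes} and~\ref{lem: continueFind continueFindCommon push nodes} also show directly that anything popped off a find-stack is \textit{Internal}.)

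For part~3, a \textit{Move} is created only at line~\ref{quadboost: move create end}, recording $op.iParent = ip$, $op.rParent = rp$, $op.oldIChild = il$, $op.oldRChild = rl$, $op.oldIOp = iOp$ and $op.oldROp = rOp$, where \texttt{ip} and \texttt{rp} are popped off \texttt{iPath} and \texttt{rPath} (lines~\ref{quadboost: move ip pop}/\ref{quadboost: move rp pop} on the first pass, or the corresponding pops inside \texttt{continueFindCommon} afterwards) and \texttt{il}, \texttt{iOp}, \texttt{rl}, \texttt{rOp} are the two tuples returned by \texttt{findCommon}/\texttt{continueFindCommon}. By the post-conditions of those routines (Lemmas~\ref{lem: findCommon pre-condition post-conditions} and~\ref{lem: continueFindCommon pre-conditions post-conditions}), the top of \texttt{iPath} --- i.e. \texttt{ip} before the pop --- had contained both \texttt{il} and \texttt{iOp} at times before $T_i$, and symmetrically \texttt{rp} had contained \texttt{rl} and \texttt{rOp}; this is exactly the assertion about $op.iParent, op.oldIChild, op.oldIOp$ and $op.rParent, op.oldRChild, op.oldROp$. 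The node types follow as in part~1: the guards in \texttt{findCommon}/\texttt{continueFindCommon} (e.g. line~\ref{quadboost: findCommon intree rl}) make the move abort unless \texttt{oldKey}'s terminal is a non-moved \textit{Leaf}, so $op.oldRChild$ is a \textit{Leaf}, while $op.oldIChild$ is a \textit{Leaf} or an \textit{Empty} node.

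The main obstacle is not any single deep step but the provenance bookkeeping, which is most delicate in the \texttt{move} case: one must check that across the $cFail$ branch of \texttt{continueFindCommon}, where \texttt{rPath} is truncated back to the LCA at line~\ref{quadboost: continueFindCommon rPath set lca index} and \texttt{iPath} is cleared at line~\ref{quadboost: continueFindCommon iPath clear} before \texttt{findCommon} is re-run, the \texttt{ip} and \texttt{rp} ultimately passed to the \textit{Move} constructor are still the freshly re-popped parents of the freshly re-found terminals, not stale leftovers from the previous pass. Once the inductions behind Lemmas~\ref{lem: continueFind pre-condition post-conditions} and~\ref{lem: continueFindCommon pre-conditions post-conditions} are taken as given, each of the three parts reduces to a one-line unwinding of the matching constructor call.
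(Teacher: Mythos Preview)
Your proposal is correct and follows essentially the same approach as the paper: enumerate the program points where each kind of $op$ is constructed, trace the constructor arguments back to the tuple(s) returned by the preceding $find(keys)$ invocation, and invoke the post-conditions of Lemmas~\ref{lem: find pre-condition post-conditions}, \ref{lem: findCommon pre-condition post-conditions}, \ref{lem: continueFind pre-condition post-conditions}, \ref{lem: continueFindCommon pre-conditions post-conditions} (plus Lemma~\ref{lem: compress pre-conditions} for part~2). Your write-up is in fact more careful than the paper's about the iteration bookkeeping and the node-type side conditions (you correctly flag that $op.oldIChild$ and, in \texttt{insert}, $op.oldChild$ may actually be \textit{Empty} rather than \textit{Leaf}).
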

\begin{proof}
\begin{enumerate}
\item A \textit{Substitute} object is created at line~\ref{quadboost: insert Substitute} or line~\ref{quadboost: remove Substitute} by assigning $\langle p, l, newNode\rangle$. Since $p$ read from the top node in $path$ at line~\ref{quadboost: insert pop} or line~\ref{quadboost: remove pop}, and $l$ is returned by the find function or findCommon function, the post-conditions of them establish the claim according to Lemma~\ref{lem: find pre-condition post-conditions}, Lemma~\ref{lem: findCommon pre-condition post-conditions}, Lemma~\ref{lem: continueFindCommon pre-conditions post-conditions}, and Lemma~\ref{lem: continueFind pre-condition post-conditions}.

\item A \textit{Compress} object is created at line~\ref{quadboost: compress Compress} by assigning $\langle path, p\rangle$. $p$ is passed from the top of $path$ at line~\ref{quadboost: remove pop} or line~\ref{quadboost: move rp pop}, the post-conditions of $find(keys)$ and the property of $path$ establishes the claim (Lemma~\ref{lem: find pre-condition post-conditions}, Lemma~\ref{lem: findCommon pre-condition post-conditions}, Lemma~\ref{lem: continueFindCommon pre-conditions post-conditions}, Lemma~\ref{lem: continueFind pre-condition post-conditions}, Lemma~\ref{lem: find continueFind push nodes}, and Lemma~\ref{lem: continueFind continueFindCommon push nodes}).

\item A \textit{Move} object is created at line~\ref{quadboost: move create end} by assigning $\langle ip, rp, il, rl, newNode, iOldOp, rOldOp\rangle$. $ip$ is assigned to the top node of $iPath$ at line~\ref{quadboost: move ip pop}, $rp$ is assigned to the top node of $rPath$ at line~\ref{quadboost: move rp pop}, $il$ was a child of the top node of $iPath$, $rl$ was a child of the top node of $rPath$, and $iOp$ and $rOp$ are read from $ip$ and $rp$ by the post-conditions of the findCommon function and the continueFindCommon function (Lemma~\ref{lem: continueFindCommon pre-conditions post-conditions} and Lemma~\ref{lem: findCommon pre-condition post-conditions}.
\end{enumerate}
\end{proof}

%other subroutines
Then we prove the pre-conditions of other functions which use the result of above functions to modify quadtree.

The following lemmas satisfy a basic pre-condition that their arguments are the same type as indicated in algorithms.
\begin{lem} 
\label{lem: same type}
\begin{enumerate}
\item Every call to the help function satisfies its pre-conditions.
\item Every call to the helpSubstitute function satisfies its pre-conditions.
\item Every call to the helpCompress function satisfies its pre-conditions.
\item Every call to the helpMove function satisfies its pre-conditions.
\end{enumerate}
\end{lem}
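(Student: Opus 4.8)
The plan is to prove all four parts at once by a finite case analysis over the call sites of \texttt{help}, \texttt{helpSubstitute}, \texttt{helpCompress}, and \texttt{helpMove} (the pre-condition in each case being only that the argument has the declared static type). Two facts carry the argument. First, every value ever stored in the \textit{op} field of an \textit{Internal} node is an \textit{Operation} object: such a node is created with a \textit{Clean} \textit{op} (line~\ref{quadboost: structure Internal op Clean}) and, as observed, the only subsequent writes to that field are flag operations, each of which CASes in an \textit{Operation} subclass; similarly, by Observation~\ref{obs: leaf key} and the single write at line~\ref{quadboost: helpMove assign op}, a \textit{Leaf} node's \textit{op} is either \texttt{null} or a \textit{Move} object. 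Second, since \texttt{help}, \texttt{helpSubstitute}, \texttt{helpCompress} and \texttt{helpMove} are never invoked in a mutual way, the induced call graph among them is acyclic, so the case analysis terminates.

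First I would handle \texttt{help}. Its argument is read directly from a node's \textit{op} field at each call site---line~\ref{quadboost: insert help}, line~\ref{quadboost: remove help}, and the four sites inside \texttt{continueFindCommon} (lines~\ref{quadboost: continueFindCommon remove help}, \ref{quadboost: continueFindCommon insert help}, \ref{quadboost: continueFindCommon remove help common}, \ref{quadboost: continueFindCommon insert help common}), all of which return an \textit{op} previously fetched from an \textit{Internal} node---so by the first fact above the argument is an \textit{Operation}, which is all \texttt{help} requires. Moreover, inside \texttt{help} the dynamic class of the argument is tested before it is forwarded, so the calls \texttt{help} itself issues to \texttt{helpCompress}, \texttt{helpSubstitute}, and \texttt{helpMove} (lines~\ref{quadboost: help helpCompress}, \ref{quadboost: help helpSubstitute}, \ref{quadboost: help helpMove}) receive an argument of exactly the required subclass. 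This discharges those three indirect call sites.

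For the remaining call sites of the three typed helpers, the argument is produced in one of two ways, each giving the correct type. Either it is a freshly constructed object of the required class---a \textit{Substitute} created at line~\ref{quadboost: insert Substitute} or~\ref{quadboost: remove Substitute} and passed to \texttt{helpSubstitute} at line~\ref{quadboost: insert helpSubstitute} or~\ref{quadboost: remove helpSubstitute}; a \textit{Compress} created at line~\ref{quadboost: compress Compress} and passed to \texttt{helpCompress} at line~\ref{quadboost: remove helpCompress}; a \textit{Move} created at line~\ref{quadboost: move create end} and passed to \texttt{helpMove} at line~\ref{quadboost: move helpMove first} or~\ref{quadboost: move helpMove second}---or it is an \textit{op} read from a node whose class has just been checked to be \textit{Compress} immediately before \texttt{helpCompress} is called (inside \texttt{continueFind} of Figure~\ref{fig:quadboost insert}, and at lines~\ref{quadboost: continueFindCommon remove helpCompress}, \ref{quadboost: continueFindCommon insert helpCompress}, and~\ref{quadboost: continueFindCommon helpCompress common} of Figure~\ref{fig: quadboost findCommon and continueFindCommon}). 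In every such case the declared-type pre-condition holds.

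This argument is essentially bookkeeping; I do not expect a genuinely hard step. The only points that need care are completeness of the call-site enumeration---in particular the \texttt{helpCompress} calls buried inside \texttt{continueFind}/\texttt{continueFindCommon} and the three dispatched calls inside \texttt{help}---and the justification that an \textit{op} harvested from a node's field is never something other than an \textit{Operation} (for \textit{Internal} nodes) or a \textit{Move}/\texttt{null} (for \textit{Leaf} nodes), which is exactly what the initialization clauses of Figure~\ref{fig:structure}, Observation~\ref{obs: leaf key}, the restriction that flag operations write only \textit{Operation} subclasses, and Observation~\ref{obs: root} together provide. The lemma is needed only so that the subsequent lemmas about the post-conditions of \texttt{help}, \texttt{helpSubstitute}, \texttt{helpCompress}, and \texttt{helpMove} may assume well-typed inputs.
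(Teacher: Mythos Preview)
Your proposal is correct and follows essentially the same approach as the paper: a finite case analysis over all call sites of the four helpers, showing at each that the argument has the required dynamic type, either because it was freshly constructed with that class or because a class test immediately precedes the call. If anything, your enumeration is slightly more complete than the paper's (you catch the \texttt{helpCompress} invocation buried in \texttt{continueFind}, which the paper's own list omits), and your use of the initialization invariant on \textit{op} fields to justify that reads return an \textit{Operation} is a cleaner justification than the paper's appeal to the post-conditions of $find(keys)$.
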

\begin{proof}
\begin{enumerate}
\item 
The help function is called at line~\ref{quadboost: insert help}, line~\ref{quadboost: remove help}, line~\ref{quadboost: continueFindCommon insert help}, line~\ref{quadboost: continueFindCommon remove help}, line~\ref{quadboost: continueFindCommon insert help common}, and line~\ref{quadboost: continueFindCommon remove help common}.

At line~\ref{quadboost: insert help} and line~\ref{quadboost: remove help},  $pOp$ might be obtained from the result of the find function and the continueFind function. Hence, the post-conditions of them ensures that $pOp$ is an \textit{Operation} object read from $p$. Otherwise, $pOp$ might be obtained from line~\ref{quadboost: insert p op}, or line~\ref{quadboost: remove p op} by reading a node's $op$. 

At line~\ref{quadboost: continueFindCommon insert help} and line~\ref{quadboost: continueFindCommon remove help}, $rOp$ and $iOp$ are passed the move operation. The post-conditions of the continueFindCommon function and the findCommon function (Lemma~\ref{lem: findCommon pre-condition post-conditions} and Lemma~\ref{lem: continueFindCommon pre-conditions post-conditions}), with line~\ref{quadboost: iFail true}, line~\ref{quadboost: rFail true}, and line~\ref{quadboost: cFail true} guarantee that arguments' type are the same. 

At line~\ref{quadboost: continueFindCommon insert help common} and line~\ref{quadboost: continueFindCommon remove help common}, $rOp$ and $iOp$ could be passed from the move function or read from $ip$ and $rp$ in the continueFindCommon function at lines that have been mentioned.

\item The helpSubstitute function is called at line~\ref{quadboost: insert helpSubstitute}, line~\ref{quadboost: remove helpSubstitute}, and line~\ref{quadboost: help helpSubstitute}. At line~\ref{quadboost: insert helpSubstitute} and line~\ref{quadboost: remove helpSubstitute}, $op$ is created at line~\ref{quadboost: insert Substitute} or line~\ref{quadboost: remove Substitute} respectively. At line~\ref{quadboost: help helpSubstitute}, it checks whether $op$ is \textit{Substitute} before calling the helpSubstitute function.

\item The helpCompress function is called at line~\ref{quadboost: compress Compress}, line~\ref{quadboost: continueFindCommon insert helpCompress}, line~\ref{quadboost: continueFindCommon remove helpCompress}, and line~\ref{quadboost: continueFindCommon helpCompress common}. For each invocation, it checks whether $op$ is \textit{Compress} beforehand.

\item The helpMove function is invoked at line~\ref{quadboost: move helpMove first}, line~\ref{quadboost: move helpMove second}, and line~\ref{quadboost: help helpMove}. At line~\ref{quadboost: move helpMove first} and line~\ref{quadboost: move helpMove second}, $op$ is created at line~\ref{quadboost: move create end}. At line~\ref{quadboost: help helpMove}, it checks whether $op$ is \textit{Move} before the invocation.

\end{enumerate}
\end{proof}

\begin{cor}
\label{cor: help pOp}
For $help(op)$, $op$ is read from an \textit{Internal} node.
\end{cor}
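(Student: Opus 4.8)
The plan is to run the same case analysis used in the proof of Lemma~\ref{lem: same type}(1), but to track not merely the \emph{type} of the argument handed to \texttt{help} but its \emph{provenance}: in every case the argument is obtained by reading the $op$ field of some node, and that node is always \textit{Internal}. The \texttt{help} function is invoked at line~\ref{quadboost: insert help}, line~\ref{quadboost: remove help}, line~\ref{quadboost: continueFindCommon insert help}, line~\ref{quadboost: continueFindCommon remove help}, line~\ref{quadboost: continueFindCommon insert help common}, and line~\ref{quadboost: continueFindCommon remove help common}, so I would dispatch on these finitely many call sites.

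For the insert and remove operations (line~\ref{quadboost: insert help} and line~\ref{quadboost: remove help}), the argument $pOp$ is either the value returned by $find(keys)$ at the current iteration or the value assigned at line~\ref{quadboost: insert p op} / line~\ref{quadboost: remove p op}. In the first case, the post-conditions of the find and continueFind functions (Lemma~\ref{lem: find pre-condition post-conditions} and Lemma~\ref{lem: continueFind pre-condition post-conditions}) guarantee that $pOp$ was read from the top node of $path$, and by Lemma~\ref{lem: find continueFind push nodes} every node pushed onto $path$ is \textit{Internal}. In the second case $pOp$ is literally $p.op$, where $p$ was popped from $path$ at line~\ref{quadboost: insert pop} / line~\ref{quadboost: remove pop}, hence again \textit{Internal}.

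For the move operation, the arguments $rOp$ and $iOp$ passed to \texttt{help} at lines~\ref{quadboost: continueFindCommon insert help},~\ref{quadboost: continueFindCommon remove help},~\ref{quadboost: continueFindCommon insert help common}, and~\ref{quadboost: continueFindCommon remove help common} originate either from the findCommon / continueFindCommon functions or from the reassignments at line~\ref{quadboost: iFail true}, line~\ref{quadboost: rFail true}, and line~\ref{quadboost: cFail true}. In the former case their post-conditions (Lemma~\ref{lem: findCommon pre-condition post-conditions} and Lemma~\ref{lem: continueFindCommon pre-conditions post-conditions}) state that they were read from the top node of $iPath$ resp.\ $rPath$, which is \textit{Internal} by Lemma~\ref{lem: continueFind continueFindCommon push nodes}. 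In the latter case they equal $ip.op$ or $rp.op$, and $ip$, $rp$ are nodes popped from $iPath$ / $rPath$ at line~\ref{quadboost: move ip pop}, line~\ref{quadboost: move rp pop}, line~\ref{quadboost: continueFindCommon ip pop}, line~\ref{quadboost: continueFindCommon rp pop}, line~\ref{quadboost: continueFindCommon ip pop common}, or line~\ref{quadboost: continueFindCommon rp pop common}, hence \textit{Internal}.

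The bookkeeping is routine; the only delicate point is completeness---ensuring that every syntactic path by which a value flows into a \texttt{help} call has been accounted for, particularly in the move operation, where $rOp$ and $iOp$ are threaded through several iterations of the \textit{while} loop and through the continueFindCommon function. Once the finitely many assignment sites for these variables are enumerated and each is shown either to be a direct read of an \textit{Internal} node's $op$ field or to inherit the property transitively from a function whose post-conditions have already been established, the corollary follows.
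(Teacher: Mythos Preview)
Your proposal is correct and is essentially the same approach as the paper's: the paper states this as a corollary of Lemma~\ref{lem: same type}(1) without a separate proof, and your argument simply makes explicit the provenance tracking that is already implicit in that lemma's case analysis on the call sites of \texttt{help}. The enumeration of assignment sites for $pOp$, $rOp$, $iOp$ and the appeal to Lemmas~\ref{lem: find continueFind push nodes} and~\ref{lem: continueFind continueFindCommon push nodes} to conclude the relevant nodes are \textit{Internal} is exactly what the paper intends.
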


\begin{defi}
For $helpFlag(p, oldOp, newOp)$ invoked at $T_{i}$, it has the pre-conditions:
\begin{enumerate}
\item $p$ is an \textit{Internal} node.
\item For $cflag$, $p$ has contained $pOp$ at $T_{i1} < T_{i}$ after reading $p$; otherwise, $p$ has contained $pOp$ at $T_{i1} < T_{i}$ from the result of $find(keys)$ at the prior iteration.
\end{enumerate}
\end{defi}

\begin{lem}
\label{lem: helpFlag}
Every call to the helpFlag function satisfies its pre-conditions. 
\end{lem}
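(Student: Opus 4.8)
The plan is to prove the lemma by enumerating the call sites of \texttt{helpFlag} and checking both pre-conditions at each. Syntactically there are seven groups: the \textit{iflag} at line~\ref{quadboost: insert helpFlag} in insert; the \textit{rflag} at line~\ref{quadboost: remove helpFlag} in remove; the \textit{cflag} at line~\ref{quadboost: compress check} in compress; the \textit{unflag} at line~\ref{quadboost: helpSubstitute helpFlag} in helpSubstitute; the \textit{mflag} at line~\ref{quadboost: move helpFlag} in move; the \textit{mflag} issued inside helpMove (lines~\ref{quadboost: helpMove flag begin}--\ref{quadboost: helpMove flag end}); and the \textit{unflag}s issued inside helpMove (lines~\ref{quadboost: helpMove reverse begin}--\ref{quadboost: helpMove reverse end}). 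The tools I would invoke are the already-proved post-conditions of $find(keys)$ (Lemmas~\ref{lem: find pre-condition post-conditions}, \ref{lem: findCommon pre-condition post-conditions}, \ref{lem: continueFind pre-condition post-conditions}, \ref{lem: continueFindCommon pre-conditions post-conditions}), Lemma~\ref{lem: op create} on the contents of freshly created $op$s, Lemma~\ref{lem: same type} together with Corollary~\ref{cor: help pOp} on the arguments received by the $help$-family functions, and Lemmas~\ref{lem: find continueFind push nodes} and~\ref{lem: continueFind continueFindCommon push nodes} on what is pushed onto the search stacks. No induction is needed here; this is a plain case analysis, since those post-conditions do not themselves depend on \texttt{helpFlag}'s pre-conditions.

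First I would dispatch pre-condition~1. In the iflag, rflag, cflag and move-\textit{mflag} groups the node flagged is $p$, $ip$ or $rp$, each read as a \texttt{pop} from \textit{path}, \textit{iPath} or \textit{rPath}, hence \textit{Internal} by Lemmas~\ref{lem: find continueFind push nodes} and~\ref{lem: continueFind continueFindCommon push nodes}; in the helpSubstitute/helpMove groups the node is the \texttt{parent}, \texttt{iParent} or \texttt{rParent} field of a \textit{Substitute}/\textit{Move} $op$, which is \textit{Internal} by Lemma~\ref{lem: op create} (indeed it is statically typed \texttt{Internal}). Then I would handle pre-condition~2 for the ``flag'' kinds. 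For \textit{iflag}, \textit{rflag} and \textit{mflag} the \texttt{oldOp} argument is the $pOp$/$iOp$/$rOp$ that the most recent $find(keys)$ read from the top of the relevant stack --- along the \texttt{while} loop these variables are overwritten only by $continueFind$/$continueFindCommon$, which funnel through $find$/$findCommon$, so at the call site \texttt{oldOp} really is that returned value --- and since the node flagged is exactly that stack top (just popped), the relevant post-condition of $find(keys)$ (``the top node has contained $pOp$ at some $T_{i1}<T_i$'') is precisely what we need; for the helpMove variant, where \texttt{oldOp} is \texttt{op.oldIOp}/\texttt{op.oldROp}, part~3 of Lemma~\ref{lem: op create} gives the same conclusion. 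For \textit{cflag}, \texttt{oldOp} is read as \texttt{p.op} at line~\ref{quadboost: compress p op} immediately before the call, which is exactly the ``for $cflag$'' clause of the pre-condition.

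The delicate part, and the one I expect to be the main obstacle, is the $unflag$ groups, where \texttt{oldOp} is the \textit{Substitute}/\textit{Move} $op$ being completed and we must show the node unflagged contained that exact $op$ at some $T_{i1}<T_i$. When \texttt{helpSubstitute}/\texttt{helpMove} is entered directly from insert/remove/move, that occurs only after a successful \texttt{helpFlag} put $op$ onto its designated parent, so that parent held $op$ at the CAS instant; when it is entered via \texttt{help} (line~\ref{quadboost: help helpSubstitute} or~\ref{quadboost: help helpMove}), Corollary~\ref{cor: help pOp} says $op$ was read from some \textit{Internal} node, and since a \textit{Substitute}/\textit{Move} $op$ is created with its \texttt{parent}/\texttt{iParent}/\texttt{rParent} fields pointing to the very nodes it will be flagged onto (Lemma~\ref{lem: op create}), is only ever used to flag those nodes, and is immutable, the node it was read from must be one of those fields. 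For \texttt{helpMove} one further point needs care: the unflag of the ``second'' parent is guarded by \texttt{op.allFlag}, which is set (line~\ref{quadboost: helpMove allFlag}) only after that parent has been flagged with $op$, whereas the unflag of the ``first'' parent is safe because whichever \texttt{move} created $op$ flagged that parent before ever invoking \texttt{helpMove}. So the crux is the bookkeeping of which of the two \textit{Move} parents is flagged at which moment, underpinned by the fact that an $op$ is never re-attached to a different \textit{Internal} node; with that in hand every remaining case reduces to a direct appeal to the post-conditions established earlier.
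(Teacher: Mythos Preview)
Your approach is essentially the paper's: a call-site case analysis appealing to the post-conditions of $find(keys)$, Lemma~\ref{lem: op create}, and Lemma~\ref{lem: same type}. You are in fact more thorough than the paper, which omits the \texttt{helpSubstitute} unflag at line~\ref{quadboost: helpSubstitute helpFlag} from its enumeration and lumps the \texttt{helpMove} unflags together with the flags via Lemma~\ref{lem: op create}; you correctly single out the unflag groups as the delicate part and supply the missing argument that the $op$ itself was attached to the relevant parent before any unflag is attempted.
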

\begin{proof}
The helpFlag function is invoked at line~\ref{quadboost: insert helpFlag}, line~\ref{quadboost: remove helpFlag}, line~\ref{quadboost: compress check}, line~\ref{quadboost: move helpFlag}, line~\ref{quadboost: helpMove flag begin}-line~\ref{quadboost: helpMove flag end}, and line~\ref{quadboost: helpMove reverse begin}-line~\ref{quadboost: helpMove reverse end}.

At line~\ref{quadboost: insert helpFlag} and line~\ref{quadboost: remove helpFlag}, the post-conditions of the find function and the continueFind function (Lemma~\ref{lem: find pre-condition post-conditions} and Lemma~\ref{lem: continueFind pre-condition post-conditions} imply that $p$ is an \textit{Internal} node popped at line~\ref{quadboost: insert pop} and has contained $pOp$.

At line~\ref{quadboost: compress check}, the pre-condition of the compress function (Lemma~\ref{lem: compress pre-conditions} shows that $p$ and nodes from $path$ are read from $find(keys)$ at the prior iteration, and $pOp$ is read at line~\ref{quadboost: compress p op}. Thus, $p$ has $pOp$ at $T_{i1} < T_{i}$ after reading $p$.

At line~\ref{quadboost: move helpFlag}, $ip$ and $rp$ are popped from $iPath$ at line~\ref{quadboost: move ip pop} and line~\ref{quadboost: move rp pop} or read from the continueFindCommon function. By Lemma~\ref{lem: findCommon pre-condition post-conditions} and Lemma~\ref{lem: continueFindCommon pre-conditions post-conditions}, either $op.iParent$ has contained $op.iOp$ or $op.rParent$ has contained $op.rOp$.

At line~\ref{quadboost: helpMove flag begin}-line~\ref{quadboost: helpMove flag end} and line~\ref{quadboost: helpMove reverse begin}-line~\ref{quadboost: helpMove reverse end}, by the pre-condition of the helpMove function~\ref{lem: same type}, $op$ is a \textit{Move} object. Therefore, according to Lemma~\ref{lem: op create}, either $op.iParent$ has contained $op.iOp$ or $op.rParent$ has contained $op.rOp$.
\end{proof}

\begin{lem}
Every call to the hasChild function satisfies its pre-conditions that $parent$ is an \textit{Internal} node. 
\end{lem}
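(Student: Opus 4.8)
The plan is to observe that \texttt{hasChild} is invoked in exactly one place in the whole algorithm: inside the \texttt{moved} function, via the call \texttt{hasChild(l.op.iParent, l.op.oldIChild)}. Hence it suffices to show that whenever control reaches that call, \texttt{l.op.iParent} denotes an \textit{Internal} node. First I would exploit the short-circuit semantics of \texttt{\&\&} in the body of \texttt{moved}: the argument expression \texttt{l.op.iParent} is evaluated only after \texttt{l.class == Leaf} and \texttt{l.op != null} have both been established. Combined with Lemma~\ref{lem: moved pre-condition} (every call to \texttt{moved} receives a \textit{Leaf} or an \textit{Empty} node), this tells us that at the moment of the call, \texttt{l} is a \textit{Leaf} node whose \texttt{op} field is non-null.

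Next I would argue about the provenance of a \textit{Leaf} node's \texttt{op} field. By Observation~\ref{obs: leaf key} this field is initially \texttt{null}, so if it is non-null it was written by some statement; an exhaustive scan of the pseudocode shows the only assignment to the \texttt{op} field of a \textit{Leaf} node is \texttt{op.oldRChild.op = op} inside \texttt{helpMove}, where \texttt{op} is the \textit{Move} object that \texttt{helpMove} received. Therefore \texttt{l.op} is a \textit{Move} object. Then, by Lemma~\ref{lem: op create} applied to that \textit{Move} object, \texttt{op.iParent} is a node that was returned from $find(keys)$ (popped from \texttt{iPath}), and by Lemma~\ref{lem: continueFind continueFindCommon push nodes} and Lemma~\ref{lem: find continueFind push nodes} every node pushed onto such a path is an \textit{Internal} node. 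Hence \texttt{l.op.iParent} is an \textit{Internal} node, which is precisely the pre-condition of \texttt{hasChild}.

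The main obstacle I anticipate is the step that closes the provenance chain: it requires an exhaustive inspection confirming that \texttt{helpMove} is the \emph{sole} writer of a \textit{Leaf} node's \texttt{op} field, and it must be coupled with Lemma~\ref{lem: op create} to ensure the stored \textit{Move} object was itself constructed with a genuine \textit{Internal} \texttt{iParent} rather than a stale or \texttt{null} value. Once that chain is in place, everything else follows immediately from short-circuit evaluation and the pre-condition of \texttt{moved}, with no further case analysis needed.
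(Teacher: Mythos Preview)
Your proposal is correct and follows essentially the same route as the paper: identify that \texttt{hasChild} is reached only from \texttt{moved} after the short-circuit checks, use Observation~\ref{obs: leaf key} to note a \textit{Leaf}'s \texttt{op} starts \texttt{null}, argue that the sole writer of a \textit{Leaf}'s \texttt{op} is the assignment inside \texttt{helpMove}, and then invoke Lemma~\ref{lem: op create} to conclude \texttt{iParent} is \textit{Internal}. The only cosmetic difference is that the paper pins down uniqueness of the writer by appealing to Lemma~\ref{lem: helpFlag} (flag CASes act only on \textit{Internal} nodes), whereas you do it by direct code inspection, and the paper cites Lemma~\ref{lem: op create} alone for \texttt{iParent} being \textit{Internal} while you additionally unpack it via Lemmas~\ref{lem: find continueFind push nodes} and~\ref{lem: continueFind continueFindCommon push nodes}; neither change alters the argument.
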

\begin{proof}
According to Lemma~\ref{lem: moved pre-condition}, $node$ is a \textit{Leaf} node or an \textit{Empty} node. According to Lemma~\ref{lem: helpFlag}, flag operations only perform on \textit{Internal} nodes. Therefore line is the only place that set a \textit{Move} $op$ on a \textit{Leaf} node. By Observation~\ref{obs: leaf key}, $op$ is initially null. After checking the condition, $op$ is assigned to a \textit{Leaf} node where $op.iParent$ is an \textit{Internal} node by Lemma~\ref{lem: op create}
\end{proof}

\begin{defi}
For $helpReplace(p, oldChild, newChild)$ invoked at $T_{i}$, it has the pre-conditions:
\begin{enumerate}
\item $p$, $oldChild$, and $newChild$ are read from the same $op$.
\item $newChild$ is a node that has not been in quadtree.
\item $p$ is an $Internal$ node that has contained $oldChild$ at $T_{i1} < T_{i}$.
\end{enumerate}
\end{defi}
\begin{lem}
\label{lem: helpReplace}
Every call to the helpReplace function satisfies its pre-conditions.
\end{lem}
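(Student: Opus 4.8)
The plan is to prove the lemma by a case analysis over the call sites of \texttt{helpReplace} inside \textit{quadboost}, of which there are exactly five: the call in \texttt{helpSubstitute} (line~\ref{quadboost: helpSubstitute helpReplace}), the call in \texttt{helpCompress} (line~\ref{quadboost: helpCompress helpReplace}), and the three calls in \texttt{helpMove} (lines~\ref{quadboost: helpMove helpReplace first}, \ref{quadboost: helpMove helpReplace first first}, and~\ref{quadboost: helpMove helpReplace first second}). In every case the three arguments are supplied either as fields of the single \textit{Operation} object $op$ handed to the enclosing \texttt{helpSubstitute}/\texttt{helpCompress}/\texttt{helpMove} call, or as a freshly allocated \textit{Empty} node created inside the helper. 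Combined with Lemma~\ref{lem: same type}, which tells us each of those callers receives an $op$ of exactly the expected subclass, this discharges pre-condition~1, reading a locally created \texttt{new Empty()} as belonging to the same $op$.

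For pre-condition~3 I would invoke Lemma~\ref{lem: op create}. In the \texttt{helpSubstitute} case $op$ is a \textit{Substitute}, so part~1 of that lemma gives that $op.parent$ is an \textit{Internal} node that contained $op.oldChild$ (a \textit{Leaf}) at some $T_{i1}<T_i$; in the \texttt{helpCompress} case $op$ is a \textit{Compress}, so part~2 gives that $op.grandparent$ contained $op.parent$ at some $T_{i1}<T_i$ (that $op.grandparent$ is itself \textit{Internal} follows both because it contains a child and because it was popped from a path, every node of which is \textit{Internal} by Lemma~\ref{lem: find continueFind push nodes}); and in the \texttt{helpMove} cases $op$ is a \textit{Move}, so part~3 gives the corresponding containments for $(op.iParent, op.oldIChild)$ and $(op.rParent, op.oldRChild)$, both with a \textit{Leaf} child. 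Since in each call the first argument of \texttt{helpReplace} is precisely that parent/grandparent and the second argument is precisely that old child, and since $T_{i1}$ precedes the creation of $op$, which in turn precedes the current invocation $T_i$, pre-condition~3 follows by transitivity of the time order.

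For pre-condition~2 I would split on the form of the third argument. When it is \texttt{new Empty<V>()} / \texttt{new Empty()} (the \texttt{helpCompress} call and the third \texttt{helpMove} call) the node is allocated inside the helper and is trivially unreachable from the root. When it is $op.newNode$ (in \texttt{helpSubstitute}), I would trace it to its creation site: $op.newNode$ is either the \texttt{new Empty()} built at the start of \texttt{remove} or the result of \texttt{createNode} at line~\ref{quadboost: insert createNode}. When it is $op.newIChild$ (in the first two \texttt{helpMove} calls), it is either a \texttt{new Leaf} at line~\ref{quadboost: move create begin} or a \texttt{createNode} result at line~\ref{quadboost: move createNode}. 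In all the non-\textit{Empty} cases I would appeal to the post-condition of \texttt{createNode}: the returned node is a fresh \textit{Leaf} or a fresh sub-tree that was never linked into the quadtree.

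The main obstacle is the interaction with helping: \texttt{helpReplace} may be run by a thread other than the one that created $op$, and by then the ``new'' node may already have been linked in by an earlier successful \texttt{helpReplace} for the same $op$. I would handle this by reading pre-condition~2 as a statement about the node's status \textit{before any replace operation belonging to $op$}, and combine the earlier fact that an \textit{Internal} node never reuses an $op$ (so each new node is bound to exactly one $op$) with the transition lemmas already established --- which show the first replace of $op$ succeeds and any later replace on the same parent that belongs to $op$ is a no-op --- to conclude that no spurious second linking of the new node can occur. This keeps pre-condition~2 both meaningful and checkable in the concurrent setting, and completes the case analysis.
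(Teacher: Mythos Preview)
Your proposal is correct and follows essentially the same route as the paper: a case analysis over the five \texttt{helpReplace} call sites, using Lemma~\ref{lem: same type} to fix the subclass of $op$ for pre-condition~1, tracing the creation sites of \texttt{newNode}/\texttt{newIChild}/\texttt{new Empty()} for pre-condition~2, and invoking Lemma~\ref{lem: op create} for pre-condition~3. The paper's own proof does exactly this, only more tersely.

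The one substantive difference is your final paragraph about helping. The paper's proof simply points to the creation sites (line~\ref{quadboost: insert createNode}, line~\ref{quadboost: remove root}, line~\ref{quadboost: move create begin}, and the inline \texttt{new Empty()} allocations) and stops there; it does not engage with the question of a helper thread invoking \texttt{helpReplace} after a prior successful replace for the same $op$ has already linked the node. In effect the paper reads pre-condition~2 as ``$newChild$ was fresh when $op$ was created,'' not ``$newChild$ is unreachable at the moment of this particular invocation.'' Your reading is stricter, and you then lean on the later transition lemmas (only the first replace for $op$ succeeds, subsequent ones are no-ops) to rescue it. That is defensible, and arguably more honest about the concurrency, but it introduces a forward dependency on Lemmas~\ref{lem: flag replace unflag transition} and~\ref{lem: flag replace transition} that the paper avoids by taking the looser reading. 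If you want to match the paper exactly, drop that last paragraph; if you keep it, be aware you are proving something slightly stronger than what the paper actually uses downstream.
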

\begin{proof}
For the first condition:

The helpReplace function is invoked at line~\ref{quadboost: helpSubstitute helpReplace}, line~\ref{quadboost: helpCompress helpReplace}, line~\ref{quadboost: helpMove two cas begin}-\ref{quadboost: helpMove two cas end}.

By Lemma~\ref{lem: same type}, at line~\ref{quadboost: helpCompress helpReplace}, it reads a \textit{Compress} object; at line~\ref{quadboost: helpSubstitute helpReplace}, it reads a \textit{Substitute} object; at line~\ref{quadboost: helpMove two cas begin}-\ref{quadboost: helpMove two cas end}. it reads a \textit{Move} object.

For the second condition:

Since part 1 illustrates that $p$, $oldChild$, and $newChild$ are read from the same $op$, if $op$ is a \textit{Substitute} object, $op.newNode$ is created at line~\ref{quadboost: insert createNode} or line~\ref{quadboost: remove root}; if $op$ is a \textit{Compress} object, each time the helpCompress function creates a new Empty node at line~\ref{quadboost: helpCompress helpReplace}; if $op$ is a \textit{Move} object, $op.newIChild$ is created at line~\ref{quadboost: move create begin}, and an empty node is created at line~\ref{quadboost: helpSubstitute helpReplace}.

For the third condition:

Since part 1 illustrates that $p$, $oldChild$, and $newChild$ are read from the same $op$, Lemma~\ref{lem: op create} shows that $p$ has contained $oldChild$.
\end{proof}

\begin{defi}
For $createNode(l, p, newKeyX, newKeyY, value)$ that returns a $newNode$ invoked at $T_{i}$, it has the pre-conditions:
\begin{enumerate}
\item $p$ is an \textit{Internal} node.
\item $p$ has contained $l$ at $T_{i1} < T_{i}$ from the result of $find(keys)$ at the prior iteration.
\end{enumerate}

post-conditions:

The $newNode$ returned is either a \textit{Leaf} node with $newKey$ and $value$, or a sub-tree that contains both $l.key$ node and $newKey$ node with the same parent.

\end{defi}

\begin{lem}
\label{lem: createNode pre-conditions and post-condition}
Every call to the createNode function satisfies its pre-conditions and post-condition.
\end{lem}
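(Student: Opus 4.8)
The plan is to dispatch the two pre-conditions by inspecting every call site of \textit{createNode}, and then to establish the post-condition from the recursive subdivision the function performs, mirroring the sequential insertion rule of Section~\ref{sec:Preliminary}. There are exactly two invocations: line~\ref{quadboost: insert createNode} in the insert operation and line~\ref{quadboost: move createNode} in the move operation.

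First I would handle the pre-conditions. At line~\ref{quadboost: insert createNode}, the argument $p$ is the node popped from $path$ at line~\ref{quadboost: insert pop}; since every node pushed into $path$ is \textit{Internal} (Lemma~\ref{lem: find continueFind push nodes}), pre-condition~1 holds, and the post-conditions of $find(keys)$ (Lemma~\ref{lem: find pre-condition post-conditions}, Lemma~\ref{lem: continueFind pre-condition post-conditions}) state that the top node of $path$ contained the returned terminal $l$ at an earlier time $T_{i1} < T_{i}$, giving pre-condition~2. At line~\ref{quadboost: move createNode}, $ip$ is the node popped from $iPath$ at line~\ref{quadboost: move ip pop} (or obtained from the continueFindCommon results), which is \textit{Internal} by Lemma~\ref{lem: continueFind continueFindCommon push nodes}; moreover this branch is reached only when the guard at line~\ref{quadboost: move create begin} fails, so $il$ is neither \textit{Empty} nor equal to $rl$, and, being a \textit{Leaf} or \textit{Empty} node by the post-conditions of findCommon and continueFindCommon (Lemma~\ref{lem: findCommon pre-condition post-conditions}, Lemma~\ref{lem: continueFindCommon pre-conditions post-conditions}), it must be a \textit{Leaf} node; the same post-conditions give that $ip$ contained $il$ at some $T_{i1} < T_{i}$. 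Hence both pre-conditions hold at both sites.

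For the post-condition I would describe the construction explicitly. By pre-condition~2 together with Observation~\ref{obs: space info}, $p$ points to the position of $l$ along some fixed direction $d$, which pins down a region $\langle x, y, w, h\rangle$ that the new subtree must occupy; both $l$ (when it is a \textit{Leaf}) and the point $newKey$ lie inside this region. Starting from this region, \textit{createNode} repeatedly halves it in the direction chosen by \textit{getQuadrant}: while $l.key$ (if present) and $newKey$ fall into the same sub-quadrant it creates one \textit{Internal} node with three \textit{Empty} children and recurses into that sub-quadrant; once the two points separate, or immediately if $l$ is \textit{Empty}, it installs a \textit{Leaf} holding $newKey$ and $value$ and, in the non-\textit{Empty} case, also a \textit{Leaf} holding $l.key$, filling the remaining slots with \textit{Empty} nodes. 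The returned root is therefore either a single \textit{Leaf} carrying $newKey$ and $value$ (the \textit{Empty}-$l$ case; the $l = rl$ case of the move operation is already split off at line~\ref{quadboost: move create begin}) or a subtree containing both the $l.key$ \textit{Leaf} and the $newKey$ \textit{Leaf} under a common parent, which is exactly the claimed post-condition.

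The delicate point — and the step I expect to take the most care — is termination and region correctness of this recursion. Since $\langle keyX, keyY\rangle$ pairs are never duplicated we have $l.key \neq newKey$, and each descent strictly halves $w$ and $h$, so after finitely many steps the two distinct points lie in different quadrants and the recursion stops; and because each step follows \textit{getQuadrant}, every node created receives precisely the $\langle x, y, w, h\rangle$ mandated for its direction, so the root of the returned subtree occupies exactly the region formerly held by $l$ and the quadtree's structural properties are preserved once it replaces $l$. I would make this precise by an induction on the region size, but past that the argument is routine and follows directly from the sequential subdivision rule already described.
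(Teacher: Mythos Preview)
Your handling of the pre-conditions matches the paper's proof: both inspect the two call sites (line~\ref{quadboost: insert createNode} and line~\ref{quadboost: move createNode}) and invoke the post-conditions of find/continueFind and findCommon/continueFindCommon to conclude that $p$ is \textit{Internal} and has contained $l$. Your extra observations (that the guard at line~\ref{quadboost: move create begin} forces $il$ to be a non-\textit{Empty} \textit{Leaf}, etc.) are correct refinements the paper omits.

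For the post-condition you go considerably further than the paper, which simply writes ``Though we have not presented the details of the createNode function, our implementation guarantee[s] that the post-condition must be satisfied'' and offers no argument at all. Your constructive description of the recursive quadrant subdivision is a reasonable reconstruction of what the hidden implementation must do, and is more informative than what the paper provides.

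There is, however, a gap in your termination step. You assert $l.key \neq newKey$ ``since $\langle keyX, keyY\rangle$ pairs are never duplicated'', but that is a tree-level invariant about active keys, not a guarantee about the arguments passed to createNode. At both call sites the preceding guard is of the form \emph{inTree}$(l,\ldots)\ \wedge\ \neg$\emph{moved}$(l)$, so when $l$ is a \textit{Leaf} with $l.key = newKey$ that happens to be \emph{moved} (its \textit{Move} $op$ has already completed the $iParent$ replace but not yet the $rParent$ replace), execution falls through to createNode with two identical points, and your halving argument no longer forces a separating quadrant. A faithful reconstruction must treat this case explicitly---presumably returning a single \textit{Leaf}, which is consistent with the first disjunct of the stated post-condition. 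Since the paper declines to argue the post-condition at all, this is a gap in the extra material you supply rather than a divergence from the paper's own proof.
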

\begin{proof}
The createNode function is invoked at line~\ref{quadboost: insert createNode} and line~\ref{quadboost: move createNode}. 

At line~\ref{quadboost: insert createNode}, the post-conditions of the find function and the continueFind function ensure that $l$ has been a child of $p$ that popped from $path$ at line~\ref{quadboost: insert pop} (Lemma~\ref{lem: find pre-condition post-conditions} and Lemma~\ref{lem: continueFind pre-condition post-conditions}).

At line~\ref{quadboost: move createNode}, the post-conditions of the findCommon function and the continueFindCommon function ensure that $l$ has been a child of $p$ popped at line~\ref{quadboost: move ip pop}. (Lemma~\ref{lem: findCommon pre-condition post-conditions}, Lemma~\ref{lem: continueFindCommon pre-conditions post-conditions}).

Though we have not presented the details of the createNode function, our implementation guarantee that the post-condition must be satisfied.
\end{proof}

%flag operation
\subsubsection{Flag and replace operations}
We argue that each successful CAS operation occurs in a correct order. At outset, we argue that CAS behaviours for each helpFlag function. A flag operation involves three arguments: $node$, $oldOp$, $newOp$. A replace operation involves three arguments: $p$, $oldChild$, $newChild$. From Figure~\ref{fig:State transition diagram}, we denote each flag operation accordingly. $iflag$ occurs in the insert operation, $rflag$ occurs in the remove operation, and $mflag$ occurs in the move operation. Moreover, we specify a flag operation which attaches a \textit{Clean} $op$ on a node as an unflag operation. We call $ireplace$ as the replace operation occurs in the insert operation, $rreplace$ as the replace operation occurs in the remove operation, $mreplace$ as the replace operation occurs in the move operation, and $creplace$ as the replace operation occurs in the compress operation. The next lemmas describe the behaviours of the helpFlag function according to the state transition diagram. 

\begin{obs}
$op$ is only attached on an \textit{Internal} node.
\end{obs}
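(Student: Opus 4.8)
The plan is to read ``attached'' in the strict sense fixed earlier in the paper --- ``we instantiate the attachment behavior as a CAS and call it a \emph{flag} operation'' --- so that the claim is precisely: every CAS that writes an $op$ field of a node writes it onto an \textit{Internal} node. First I would observe that the only primitive that performs such a CAS is the body of \texttt{helpFlag}, which is literally \texttt{CAS(node.op, oldOp, newOp)}; a direct scan of the listings shows no other occurrence of \texttt{CAS($\cdot$.op,$\,\dots$)}. The only remaining write to an $op$ field anywhere in the code is the plain (non-CAS) assignment \texttt{op.oldRChild.op = op} at line~\ref{quadboost: helpMove assign op} inside \texttt{helpMove}; that write publishes a \textit{Move} object onto $op.oldRChild$, a \textit{Leaf} node, and is exactly the atomic-set step that linearizes the move operation --- it is not a flag operation and hence falls outside the notion of ``attach'' used in this subsection. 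I would also note in passing that an \textit{Empty} node has no $op$ field at all, so it trivially cannot carry one.

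Having reduced the statement to ``every invocation of \texttt{helpFlag} passes an \textit{Internal} node as its first argument,'' I would enumerate the call sites of \texttt{helpFlag}: line~\ref{quadboost: insert helpFlag} in \texttt{insert}, line~\ref{quadboost: remove helpFlag} in \texttt{remove}, line~\ref{quadboost: compress check} in \texttt{compress}, line~\ref{quadboost: move helpFlag} in \texttt{move}, and inside \texttt{helpMove} both the forward flagging block (lines~\ref{quadboost: helpMove flag begin}--\ref{quadboost: helpMove flag end}) and the reverse unflagging block (lines~\ref{quadboost: helpMove reverse begin}--\ref{quadboost: helpMove reverse end}). By Lemma~\ref{lem: helpFlag}, every one of these calls satisfies its pre-conditions, and the first pre-condition of \texttt{helpFlag} states exactly that the node argument is an \textit{Internal} node. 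Consequently each flag operation attaches its $op$ on an \textit{Internal} node, which is the observation.

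I expect the ``hard part'' here to be purely bookkeeping rather than mathematics: one must be careful to (i) certify that \texttt{helpFlag} is genuinely the unique site issuing a CAS on an $op$ field --- immediate from the algorithm listings --- and (ii) keep the \texttt{helpMove} direct write at line~\ref{quadboost: helpMove assign op} conceptually separate, since it does put an $op$ onto a non-\textit{Internal} node, but does so by a plain assignment and only onto a \textit{Leaf} node, and is accounted for by the separate observation about moved \textit{Leaf} nodes. Modulo these two caveats, the statement is an immediate corollary of Lemma~\ref{lem: helpFlag}.
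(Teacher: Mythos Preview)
Your argument is correct and is, in fact, more than the paper provides: in the paper this statement is a bare \textit{Observation} with no proof attached, both in the main proof sketch and in the appendix. Your derivation of it as an immediate corollary of Lemma~\ref{lem: helpFlag} (the first pre-condition of \texttt{helpFlag} is that the node argument is \textit{Internal}) is exactly the intended justification, and your care in separating out the plain assignment at line~\ref{quadboost: helpMove assign op} as a non-flag write to a \textit{Leaf} node is appropriate.

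One small bookkeeping slip: your enumeration of \texttt{helpFlag} call sites omits the call inside \texttt{helpSubstitute} at line~\ref{quadboost: helpSubstitute helpFlag} (the unflag that restores the parent to \textit{Clean}). This does not break your argument, since you ultimately defer to Lemma~\ref{lem: helpFlag}, which is supposed to cover every call site; but since you explicitly frame the ``hard part'' as certifying the complete list of call sites, you should add this one. Incidentally, the paper's own proof of Lemma~\ref{lem: helpFlag} makes the same omission, so you are in good company.
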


\begin{lem}
\label{lem: op properties}
For a new node $n$:
\begin{enumerate}
\item It is created with a \textit{Clean} $op$.
\item $rflag$, $iflag$, $mflag$ or $cflag$ succeeds only if $n$'s $op$ is \textit{Clean}.
\item $unflag$ succeeds only if $n$'s $op$ is \textit{Substitute} or \textit{Move}.
\item Once $n$'s $op$ is \textit{Compress}, its $op$ will never be changed.
\end{enumerate}
\end{lem}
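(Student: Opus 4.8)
The plan is to establish the four claims one at a time, essentially by static inspection of the code, relying on the single structural fact that the only way to mutate an \textit{Internal} node's $op$ field is through the \texttt{helpFlag} routine, which performs one \texttt{CAS(node.op, oldOp, newOp)}. Since, by the preceding Observation, an $op$ is attached only on \textit{Internal} nodes, the node $n$ in the statement is a freshly created \textit{Internal} node. Claim~(1) is then immediate: the $op$ field of an \textit{Internal} node is declared with the initializer \texttt{new Clean()} in Figure~\ref{fig:structure} (line~\ref{quadboost: structure Internal op Clean}), so that is the value $n$ carries at creation time.

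For claim~(2) I would enumerate the \texttt{helpFlag} call sites that constitute a flag, i.e.\ whose \texttt{newOp} is not a \textit{Clean} object: $iflag$ at line~\ref{quadboost: insert helpFlag}, $rflag$ at line~\ref{quadboost: remove helpFlag}, $cflag$ at line~\ref{quadboost: compress check}, and $mflag$ at line~\ref{quadboost: move helpFlag} together with the second flag inside \texttt{helpMove} (line~\ref{quadboost: helpMove flag begin}). In each case the \texttt{oldOp} argument is an \textit{Operation} that the surrounding code has already tested to have class \textit{Clean}: the guards at line~\ref{quadboost: insert pop Clean check}, line~\ref{quadboost: remove pop Clean check}, line~\ref{quadboost: compress clean}, and, for the two \textit{Move} flags, the guard that all three of $rFail$, $iFail$, $cFail$ are false which controls the construction at line~\ref{quadboost: move create end}, forcing (via lines~\ref{quadboost: move rOp check}--\ref{quadboost: move iOp check}) the $iOp$, $rOp$ stored as $op.oldIOp$, $op.oldROp$ to have class \textit{Clean}. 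Since a \texttt{CAS} succeeds only when the current value equals \texttt{oldOp}, a successful flag forces $n.op$ to be that \textit{Clean} object. For claim~(3) I would symmetrically list every \texttt{helpFlag} whose \texttt{newOp} is a fresh \textit{Clean} object --- the unflag in \texttt{helpSubstitute} (line~\ref{quadboost: helpSubstitute helpFlag}) and the unflags in \texttt{helpMove} (lines~\ref{quadboost: helpMove reverse begin}--\ref{quadboost: helpMove reverse end}) --- observing via Lemma~\ref{lem: same type} that the \texttt{oldOp} argument is the \textit{Substitute} argument in the first case and the \textit{Move} argument in the rest, and that \texttt{helpCompress} performs no flag. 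The same \texttt{CAS} semantics then give that a successful unflag forces $n.op$ to be \textit{Substitute} or \textit{Move}.

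Claim~(4) follows as a corollary: $n.op$ can change only through a successful \texttt{helpFlag}, which by claim~(2) can fire only when $n.op$ is \textit{Clean} and by claim~(3) only when $n.op$ is \textit{Substitute} or \textit{Move}; in neither case is the current value \textit{Compress}, so once $n.op$ becomes \textit{Compress} no subsequent \texttt{CAS} on it can succeed and it is frozen. The only place any care is needed is in making the enumeration of \texttt{helpFlag} call sites exhaustive, classifying each as a flag or an unflag, and pinning down the runtime class of its \texttt{oldOp}; this is routine bookkeeping and is exactly where Lemma~\ref{lem: same type} does the work, so no deeper argument is required.
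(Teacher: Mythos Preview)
Your proposal is correct and follows essentially the same approach as the paper: a static enumeration of the \texttt{helpFlag} call sites, pinning down the runtime class of \texttt{oldOp} via the surrounding guards (lines~\ref{quadboost: insert pop Clean check}, \ref{quadboost: remove pop Clean check}, \ref{quadboost: compress clean}, \ref{quadboost: move rOp check}--\ref{quadboost: move iOp check}) and Lemma~\ref{lem: same type}. Your derivation of claim~(4) from both~(2) and~(3) is in fact slightly more complete than the paper's, which only cites~(3).
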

\begin{proof}
\begin{enumerate}
\item As shown at line~\ref{quadboost: structure Internal op Clean}, \textit{Internal} nodes are assigned a \textit{Clean} $op$ when they are created.

\item Before $iflag$, $op$ is checked at line~\ref{quadboost: insert pop Clean check}; before $rflag$, $op$ is checked at line~\ref{quadboost: remove pop Clean check}; before $cflag$, $op$ is checked at line~\ref{quadboost: compress clean}. For the move operation, before $mflag$ on $oldKey$'s parent and $newKey$'s parent, they are checked at line~\ref{quadboost: move iOp check} and line~\ref{quadboost: move rOp check}.

\item $unflag$ is called at line~\ref{quadboost: helpSubstitute helpFlag} and line~\ref{quadboost: helpMove reverse begin}-line~\ref{quadboost: helpMove reverse end}, where the pre-conditions specify that $op$ is \textit{Move} or \textit{Substitute} according to Lemma~\ref{lem: same type}.

\item By part 3, there's no unflag operation happens on an  \textit{Internal} node with a \textit{Compress} $op$.
\end{enumerate}
\end{proof}

Next we illustrate that there's no ABA problem on any $op$. That is to say, in terms of an \textit{Internal} node $n$, its $op_{i}$ at $T_{i}$ has not appeared at $T_{i1} < T_{i}$, $op_{i1} \neq op_{i}$. We prove the following lemma:

\begin{lem}
\label{lem: op ABA}
For an \textit{Internal} node $n$, it never reuse an $op$ that has been set previously. 
\end{lem}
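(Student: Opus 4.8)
The plan is to argue by strong induction over the totally ordered sequence of writes to the $op$ field of \textit{Internal} nodes. These writes are of exactly two kinds: the constructor initializer \texttt{op = new Clean()} executed when an \textit{Internal} node is created (line~\ref{quadboost: structure Internal op Clean}), and the successful \texttt{CAS}'s performed inside helpFlag; by inspection of the code these are the only statements that alter an \textit{Internal} node's $op$ — the sole remaining assignment to an $op$ field, line~\ref{quadboost: helpMove assign op}, targets $op.oldRChild$, which is a \textit{Leaf} by Lemma~\ref{lem: op create}. The induction hypothesis is the statement of the lemma restricted to all earlier writes: up to the write under consideration, no \textit{Internal} node has ever stored an $op$ value it had stored before. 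I then show the current write preserves this.

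First I would dispatch the easy case: writes whose installed value is allocated at the call site. Enumerating the helpFlag call sites, $iflag$ (line~\ref{quadboost: insert helpFlag}) and $rflag$ (line~\ref{quadboost: remove helpFlag}) install a freshly created \textit{Substitute} (lines~\ref{quadboost: insert Substitute}, \ref{quadboost: remove Substitute}); $cflag$ (line~\ref{quadboost: compress check}) installs a freshly created \textit{Compress} (line~\ref{quadboost: compress Compress}); the $mflag$ issued by the thread that owns the move (line~\ref{quadboost: move helpFlag}) installs the \textit{Move} object just allocated at line~\ref{quadboost: move create end}; every unflag (line~\ref{quadboost: helpSubstitute helpFlag}, lines~\ref{quadboost: helpMove reverse begin}--\ref{quadboost: helpMove reverse end}) and the constructor install a fresh \texttt{new Clean()}. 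In each of these cases the installed object is distinct from the value of every earlier write (a freshly allocated object differs from everything previously allocated, hence from everything previously written), so no \textit{Internal} node could have held it before and the write cannot be a reuse.

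The only remaining case, and the crux, is the ``forward'' flag inside helpMove, namely $\texttt{helpFlag}(op.rParent, op.oldROp, op)$ or its symmetric counterpart $\texttt{helpFlag}(op.iParent, op.oldIOp, op)$ (lines~\ref{quadboost: helpMove flag begin}--\ref{quadboost: helpMove flag end}), where $op$ is a \textit{Move} object $M$ that may have been allocated much earlier and is shared by all helping threads. I would first observe that $M$ is only ever written to the two nodes $M.iParent$ and $M.rParent$: the owner's $mflag$ targets one of them, helpMove's forward flag targets the other, and no code path ever passes $M$ to helpFlag with any other first argument. Take, without loss of generality, a write of $M$ to $n := M.rParent$; it is a successful $\texttt{CAS}(n.op, M.oldROp, M)$ occurring at some time $t$. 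By Lemma~\ref{lem: op create}, $M.oldROp$ was the value of $n.op$ at some time $T_{i1} < t$ that precedes the creation of $M$; in particular $M$ did not yet exist at $T_{i1}$, so $n$ had never held $M$ at or before $T_{i1}$. For the \texttt{CAS} to succeed at $t$ we need $n.op = M.oldROp$ at $t$; but if $n.op$ had departed from $M.oldROp$ and returned between $T_{i1}$ and $t$, that would be a reuse of the $op$ value $M.oldROp$ on the \textit{Internal} node $n$ strictly before $t$, contradicting the induction hypothesis. Hence $n.op$ equalled $M.oldROp \neq M$ throughout $[T_{i1}, t)$, so $n$ has never held $M$ before $t$, and this write is not a reuse either. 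This completes the induction.

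The main obstacle is precisely this last case: it is the only place an \textit{Internal} node is \texttt{CAS}'d to a value that is not manifestly brand new, and excluding a re-installation of the shared \textit{Move} op genuinely needs the inductive (no-ABA) hypothesis together with the guarantee from Lemma~\ref{lem: op create} that the \texttt{CAS}'s expected value ($op.oldIOp$ for $op.iParent$, $op.oldROp$ for $op.rParent$) was that very node's $op$ at an instant preceding $M$'s allocation. A few bookkeeping points still need care: that the owner's initial $mflag$ really installs a freshly allocated \textit{Move} (immediate from line~\ref{quadboost: move create end}), that helpMove never writes $M$ to a node outside $\{op.iParent, op.rParent\}$, and that Lemma~\ref{lem: op properties} is available if one wishes to state the cleaner corollary that every $op$ object is, over the whole run, the $op$ of at most one \textit{Internal} node during at most one contiguous interval.
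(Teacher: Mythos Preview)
Your proof is correct and in fact more careful than the paper's own argument. The paper's proof simply enumerates the flag operations and, for each, observes that the installed $op$ is freshly allocated: for $iflag$, $rflag$, $cflag$ the new \textit{Substitute}/\textit{Compress} is created just before the call, for $mflag$ the \textit{Move} is created at line~\ref{quadboost: move create end}, and for $unflag$ a new \textit{Clean} is created; hence ``it has not been set before.'' That is the whole proof.

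What you do differently is to single out the forward flag inside helpMove (lines~\ref{quadboost: helpMove flag begin}--\ref{quadboost: helpMove flag end}) as a genuinely distinct case: there a \emph{helping} thread attempts to install a \textit{Move} object $M$ that it did not allocate and that may already have been stored on that very node earlier. The paper's ``newly created'' argument does not literally cover this case, since $M$ is not fresh at the helper's call site. Your inductive treatment closes this gap cleanly: the CAS's expected value $M.oldROp$ (resp.\ $M.oldIOp$) was the node's $op$ at a time preceding $M$'s allocation (Lemma~\ref{lem: op create}), so a second successful install of $M$ would force that expected value to recur, contradicting the induction hypothesis on earlier writes. This is the right way to break the apparent circularity, and it buys you an argument that stands on its own rather than relying on the reader to notice that the helper's CAS can only succeed once.
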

\begin{proof}
If a node's $op$ is set to $op_{i}$ at $T_{i}$, it has never been appeared before. We prove the lemma by discussing different types of $flag$ operations.

For $iflag$, a \textit{Substitute} $op$ is created at line~\ref{quadboost: insert Substitute} before its invocation. For $rflag$, a \textit{Substitute} $op$ is created at line~\ref{quadboost: remove Substitute} before its invocation. For a $cflag$, a Compress $op$ is created at line~\ref{quadboost: compress Compress} before its invocation. For $mflag$, a \textit{Move} $op$ is created at line~\ref{quadboost: move create end} before its invocation. Because each $newOp$ is newly created, it has not been set before. For $unflag$, each time it creates a new \textit{Clean} $op$ to replace the prior \textit{Move} $op$ or \textit{Substitute} $op$ by Lemma~\ref{lem: op properties}. Therefore, the new \textit{Clean} $op$ has never appeared before. 
\end{proof}
We have shown that for each successful flag operation, it never set an $op$ that has been used before. Every \textit{Internal} node is initialized with \textit{Clean} $op$ by Lemma~\ref{lem: op properties} and we use a flag operation to change $op$ at the beginning of each CAS transition. We say successive replace operations and unflag operations that read the $op$ belongs to it. According to Figure~\ref{fig:State transition diagram}, there are threes categories of successful CAS transitions. We denote them as $flag\rightarrow unflag$, $flag\rightarrow replace$, and $flag\rightarrow  replace\rightarrow unflag$. 

Let $flag_{0}$, $flag_{1}$, ..., $flag_{n}$ be a sequence of successful flag operations; let $unflag_{0}$, $unflag_{1}$, ..., $unflag_{n}$ be a sequence of successful unflag operations. $flag_{i}$ attaches $op_{i}$, and $replace_{i}$ and $unflag_{i}$ read $op_{i}$ and come after it. Therefore we say $flag_{i}$, $replace_{i}$, and $unflag_{i}$ belong to the same $op$. In addition, if there are more than one replace operations belong to the same $op_{i}$, we denote them as $replace_{i}^{0}$, $replace_{i}^{1}$, ..., $replace_{i}^{k}$ ordered by their successful sequence. Similarly, if there are more than one flag operations that belong to $op_{i}$ on different nodes, we denote them as $flag_{i}^{0}$, $flag_{i}^{1}$, ..., $flag_{i}^{k}$. A similar notation is used for $unflag_{i}$. The following lemmas prove the correct ordering of three different transitions.

\begin{lem}
\label{lem: replace Substitute op before}
\begin{enumerate}
\item $ireplace$ will not be done before the success of $iflag$ which belongs to the same $op$.
\item $rreplace$ will not be done before the success of $rflag$ which belongs to the same $op$.
\end{enumerate}
\end{lem}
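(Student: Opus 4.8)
The plan is to reduce both parts to one fact: a \textit{Substitute} $op$ reaches a replace operation only through \texttt{helpSubstitute}, and \texttt{helpSubstitute}$(op)$ is entered only after a successful flag has already installed $op$ on $op.parent$. First I would collect the ingredients: by Lemma~\ref{lem: op properties}(1)--(2) an \textit{Internal} node starts with a \textit{Clean} $op$ and its $op$ field changes only through successful flag operations, and among those only $iflag$ (in the insert operation) and $rflag$ (in the remove operation) install \textit{Substitute} objects; moreover, by Lemma~\ref{lem: op ABA} no $op$ is ever reinstalled on a node, and a freshly built \textit{Substitute} $op$ is CAS-ed only onto its own $op.parent$ (created at line~\ref{quadboost: insert Substitute}/\ref{quadboost: remove Substitute}, then a single \texttt{helpFlag} attempt). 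Hence for the \textit{Substitute} $op$ in question there is a \emph{unique} successful flag; it is an $iflag$ if $op$ was created inside the insert operation and an $rflag$ if it was created inside the remove operation.

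Next I would enumerate every way an $ireplace$ or $rreplace$ can occur. By the naming convention these are exactly the \texttt{helpReplace} calls that dereference a \textit{Substitute} $op$, and the only such call is line~\ref{quadboost: helpSubstitute helpReplace} inside \texttt{helpSubstitute} (\texttt{helpCompress} issues only $creplace$, \texttt{helpMove} only $mreplace$, and neither of them calls \texttt{helpSubstitute}). So it suffices to show that every invocation of \texttt{helpSubstitute}$(op)$ is preceded by the unique successful flag of $op$. There are two cases. Case (a): \texttt{helpSubstitute} is called at line~\ref{quadboost: insert helpSubstitute} (resp.\ line~\ref{quadboost: remove helpSubstitute}) immediately after \texttt{helpFlag}$(p, pOp, op)$ returned true; that successful \texttt{helpFlag} \emph{is} the $iflag$ (resp.\ $rflag$) belonging to $op$, and it precedes the \texttt{helpSubstitute} call in the same thread's program order, hence precedes the $ireplace$ (resp.\ $rreplace$). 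Case (b): \texttt{helpSubstitute} is called from \texttt{help}$(op)$ at line~\ref{quadboost: help helpSubstitute}; then \texttt{help} was invoked (at line~\ref{quadboost: insert help}, line~\ref{quadboost: remove help}, or the help sites inside \texttt{continueFindCommon}---the \texttt{help}/\texttt{helpSubstitute}/\texttt{helpCompress}/\texttt{helpMove} functions are not invoked mutually, so these are the only sources) with an argument that had been \emph{read} out of some \textit{Internal} node's $op$ field, either inside a \texttt{find}-type subroutine or via the re-read $pOp = p.op$ after a failed flag (line~\ref{quadboost: insert p op}/\ref{quadboost: remove p op}). A \textit{Substitute} value can appear in an $op$ field only because its unique successful flag wrote it there; that write happens-before the read, the read happens-before \texttt{helpReplace} dereferences $op.parent$, $op.oldChild$ and $op.newNode$, and chaining these gives that the $iflag$ (resp.\ $rflag$) of $op$ precedes the $ireplace$ (resp.\ $rreplace$).

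Parts (1) and (2) are then the same argument, differing only in whether $op$ was created in \texttt{insert} (line~\ref{quadboost: insert Substitute}) or in \texttt{remove} (line~\ref{quadboost: remove Substitute})---equivalently, in the name of its installing flag---so one write-up settles both. I expect the main obstacle to be making Case (b) airtight: one has to argue that \emph{every} \textit{Substitute} object ever passed to \texttt{help} or \texttt{helpSubstitute} was, at an earlier time, the value of $op.parent.op$; in particular that the failed-flag branch at line~\ref{quadboost: insert p op}/\ref{quadboost: remove p op} forwards only the re-read $p.op$ and silently discards the locally built $op$ whose \texttt{helpFlag} failed (such an $op$ is never installed anywhere and never consumed by a replace). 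I would package this as a short auxiliary claim---``any \textit{Substitute} $op$ argument of \texttt{help} or \texttt{helpSubstitute} has previously been held in $op.parent.op$''---which is the linchpin that closes the happens-before chain.
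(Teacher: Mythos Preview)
Your proposal is correct and follows essentially the same approach as the paper: both arguments observe that $ireplace$/$rreplace$ occur only via \texttt{helpReplace} inside \texttt{helpSubstitute}, then enumerate the three call sites of \texttt{helpSubstitute}---the two direct calls (line~\ref{quadboost: insert helpSubstitute}/\ref{quadboost: remove helpSubstitute}) are preceded by a successful flag in program order, and the call via \texttt{help} (line~\ref{quadboost: help helpSubstitute}) has its argument read from an \textit{Internal} node's $op$ field, which can hold a \textit{Substitute} only after a successful $iflag$/$rflag$. The paper dispatches case~(b) in one line via Corollary~\ref{cor: help pOp} and Lemma~\ref{lem: op properties}, whereas you trace the provenance of the argument through all \texttt{help} call sites and package it as an explicit auxiliary claim; your version is more detailed (and the uniqueness-of-flag remark via Lemma~\ref{lem: op ABA} is a nice addition), but the skeleton is the same.
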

\begin{proof}
For the insert operation and the remove operation, at line~\ref{quadboost: helpSubstitute helpReplace} the helpReplace function is wrapped by the helpSubstitute function. The helpSubstitute function is called at line~\ref{quadboost: insert helpSubstitute}, line~\ref{quadboost: remove helpSubstitute}, and line~\ref{quadboost: help helpSubstitute}. At line~\ref{quadboost: insert helpSubstitute}, it is called after the success of $iflag$ at line~\ref{quadboost: insert helpFlag}. At line~\ref{quadboost: remove helpSubstitute}, it is called after the success of $rflag$ at line~\ref{quadboost: remove helpFlag}. At line~\ref{quadboost: help helpSubstitute}, by Corollary~\ref{cor: help pOp} and Lemma~\ref{lem: op properties}, there must have been $iflag$ or $rflag$ that change the node's $op$ from \textit{Clean} to \textit{Substitute}.
\end{proof}

\begin{lem}
\label{lem: replace Compress op before}
$creplace$ will not be done before the success of $cflag$ which belongs to the same $op$.
\end{lem}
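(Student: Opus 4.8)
The plan is to mirror the structure of Lemma~\ref{lem: replace Substitute op before}: enumerate every site at which a $creplace$ can be issued and, for each, exhibit a successful $cflag$ belonging to the same $op$ that precedes it. First I would note that $creplace$ — the replace operation performed inside the compress operation — is produced only by the helpReplace call wrapped inside helpCompress (line~\ref{quadboost: helpCompress helpReplace}). So it suffices to inspect every invocation of helpCompress and argue that, by the time it runs with a \textit{Compress} object $op$, a $cflag$ attaching exactly that $op$ has already succeeded.

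I would then split the call sites into two kinds. The direct site is inside the compress function (line~\ref{quadboost: remove helpCompress}): helpCompress$(op)$ is reached there only along the branch in which both check$(p)$ and helpFlag$(p, pOp, op)$ succeeded at line~\ref{quadboost: compress check}; by the state transition diagram (Figure~\ref{fig:State transition diagram}) and the fact that $op$ was just built as a \textit{Compress} object, that successful helpFlag is the $cflag$ attaching $op$, and it plainly precedes the helpCompress call. The remaining sites are indirect: inside help (line~\ref{quadboost: help helpCompress}), inside continueFind (line~\ref{quadboost: continueFind}), and inside continueFindCommon at lines~\ref{quadboost: continueFindCommon remove helpCompress}, \ref{quadboost: continueFindCommon insert helpCompress}, and~\ref{quadboost: continueFindCommon helpCompress common}. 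In every one of these, $op$ is obtained by reading the $op$ field of an \textit{Internal} node (Corollary~\ref{cor: help pOp} for help, and the reads $pOp = l.op$, $rOp = rl.op$, $iOp = il.op$ for the others) and is checked to have class \textit{Compress} before helpCompress is entered (so the pre-condition of Lemma~\ref{lem: same type} is met). For these I would invoke Lemma~\ref{lem: op properties}, which says a node acquires a \textit{Compress} $op$ only through a flag operation and then never changes again, together with the observation that among $iflag$, $rflag$, $mflag$, $cflag$ and unflag, only $cflag$ attaches a \textit{Compress} object — the others attach \textit{Substitute}, \textit{Move}, or \textit{Clean}. Combined with Lemma~\ref{lem: op ABA}, which forbids reusing an $op$, the particular \textit{Compress} object read from the node was placed there by a unique, already-completed $cflag$; since the only $creplace$ tied to this $op$ is the one performed inside this very helpCompress call, it cannot be done before that $cflag$.

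The hard part will be the indirect-site argument — making precise that "the node's $op$ is \textit{Compress}" forces the prior existence of a $cflag$ belonging to that same $op$ object. This rests entirely on Lemma~\ref{lem: op properties} and the no-ABA guarantee of Lemma~\ref{lem: op ABA}, so once those are cited the reasoning is short; the enumeration of the direct site and the bookkeeping over call sites is routine.
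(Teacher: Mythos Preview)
Your proposal is correct and follows essentially the same approach as the paper: enumerate the call sites of helpCompress, observe that the direct site inside the compress function is guarded by a successful helpFlag at line~\ref{quadboost: compress check}, and for the indirect sites argue via Lemma~\ref{lem: op properties} that the \textit{Compress} $op$ read from an \textit{Internal} node must already have been attached by a prior $cflag$. Your enumeration is in fact slightly more complete than the paper's (you explicitly list the helpCompress call inside continueFind, which the paper's proof omits), and your appeal to Lemma~\ref{lem: op ABA} to pin down uniqueness of the attaching $cflag$ is a harmless strengthening.
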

\begin{proof}
The helpCompress function which wraps $creplace$ is called at line~\ref{quadboost: compress Compress}, line~\ref{quadboost: continueFindCommon insert helpCompress}, line~\ref{quadboost: continueFindCommon remove helpCompress}, and line~\ref{quadboost: continueFindCommon helpCompress common}, and line~\ref{quadboost: help helpCompress}. At line~\ref{quadboost: compress Compress}, the helpCompress function follows the successful helpFlag function at line~\ref{quadboost: compress check}. At line~\ref{quadboost: continueFindCommon insert helpCompress}, line~\ref{quadboost: continueFindCommon remove helpCompress}, and line~\ref{quadboost: continueFindCommon helpCompress common}, a $pOp$ is read from \textit{Internal} nodes. At line~\ref{quadboost: help helpCompress}, Lemma~\ref{cor: help pOp} shows that $op$ is read from an \textit{Internal} node.  Hence, by Lemma~\ref{lem: op properties}, successful $cflag$ must have changed the node's $op$ from \textit{Clean} to \textit{Compress}.
\end{proof}

\begin{lem}
\label{lem: replace Move op before}
$mreplace$ will not be done before the success of $mflag$ which belongs to the same $op$.
\end{lem}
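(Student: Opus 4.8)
The plan is to mirror the arguments used for Lemma~\ref{lem: replace Substitute op before} and Lemma~\ref{lem: replace Compress op before}: locate every \textit{helpReplace} call that performs an $mreplace$, exhibit the guard that forces the relevant parent to already carry the \textit{Move} $op$, and then apply Lemma~\ref{lem: op properties} and Lemma~\ref{lem: op ABA} to conclude that a successful $mflag$ attaching exactly that $op$ must have completed earlier.

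First I would note that every \textit{helpReplace} call producing an $mreplace$ sits inside \textit{helpMove}, namely at line~\ref{quadboost: helpMove helpReplace first}, line~\ref{quadboost: helpMove helpReplace first first}, and line~\ref{quadboost: helpMove helpReplace first second}, and that all three lie inside the \textit{if (doCAS)} block, where \textit{doCAS} is read at line~\ref{quadboost: helpMove flag end} as $op.rParent.op = op$ when \textit{iFirst} holds and $op.iParent.op = op$ otherwise. Hence, whenever an $mreplace$ belonging to $op$ is executed, the parent inspected by \textit{doCAS} held $op$ at the moment \textit{doCAS} was read. Since a \textit{Move} $op$ can be installed on an \textit{Internal} node only by a successful $mflag$ (Lemma~\ref{lem: op properties}: $iflag$, $rflag$, $mflag$, $cflag$ succeed only from \textit{Clean}, and $unflag$ installs only \textit{Clean} $op$s), and since by Lemma~\ref{lem: op ABA} no $op$ is ever reused, that parent was made to hold $op$ by the unique successful $mflag$ carrying $op$; this $mflag$ therefore completed before \textit{doCAS} was read, hence before the $mreplace$.

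Next I would pin down the $mflag$ on the \emph{other} parent, by walking through the three invocation sites of \textit{helpMove}. At line~\ref{quadboost: move helpMove first}, \textit{move} has just performed a successful \textit{helpFlag} at line~\ref{quadboost: move helpFlag} (the \textit{hasFlag} branch), which is an $mflag$ attaching $op$ to the first parent, while \textit{helpMove} itself issues the complementary $mflag$ at line~\ref{quadboost: helpMove flag begin} before reading \textit{doCAS}. At line~\ref{quadboost: move helpMove second}, \textit{iParent} and \textit{rParent} coincide, and the single $mflag$ at line~\ref{quadboost: helpMove flag begin} is exactly the one checked by \textit{doCAS}. At line~\ref{quadboost: help helpMove}, \textit{help} was entered with $op$ read from an \textit{Internal} node (Corollary~\ref{cor: help pOp}), so that node already carried $op$---i.e.\ the first $mflag$ of $op$ had succeeded---and \textit{helpMove} again issues the complementary $mflag$ at line~\ref{quadboost: helpMove flag begin}. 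In all cases, both $mflag$s belonging to $op$ (or the single one in the equal-parent case) complete before \textit{doCAS} is read, hence before any $mreplace$ of $op$.

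The step I expect to be the main obstacle is the bookkeeping forced by the two-parent structure of a \textit{Move}: \textit{doCAS} literally inspects only one parent, so certifying that the \emph{other} parent also carries $op$ requires gluing together the flag issued inside \textit{helpMove} with the flag witnessed by the caller (the successful \textit{helpFlag} of \textit{move} at line~\ref{quadboost: move helpFlag}, or the implicit flag recorded by the \textit{help} entry), and verifying that the \textit{iFirst}-true and \textit{iFirst}-false branches are genuinely symmetric. A minor wrinkle is that a \textit{Move} $op$ may subsequently be removed by the reverse \textit{unflag}s at lines~\ref{quadboost: helpMove reverse begin}--\ref{quadboost: helpMove reverse end}; but such an $unflag$ can only run after some thread observed \textit{doCAS} true and completed the replaces, so it does not disturb the ordering claimed here---the precise interleaving of the two $mflag$s with $mreplace$ is exactly what the following transition lemma will settle.
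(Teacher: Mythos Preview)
Your proposal is correct and follows essentially the same route as the paper: a case split over the three call sites of \textit{helpMove} (lines~\ref{quadboost: move helpMove first}, \ref{quadboost: move helpMove second}, \ref{quadboost: help helpMove}), combined with Lemma~\ref{lem: op properties} and Corollary~\ref{cor: help pOp} to conclude that the \textit{Move} $op$ was installed by a prior $mflag$.

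Where you differ is in thoroughness, and the difference is in your favour. The paper's proof is a three-line sketch that never mentions the \textit{doCAS} guard and is silent on the two-parent bookkeeping; it effectively argues only that \emph{some} $mflag$ for $op$ has succeeded before \textit{helpMove} is entered (and even that is hazy at line~\ref{quadboost: move helpMove second}, where no flag has yet been attempted by the caller). You correctly observe that the $mreplace$s live inside the \textit{if (doCAS)} block, use \textit{doCAS} to certify the $mflag$ on the parent it inspects, and then separately pin down the $mflag$ on the other parent via the call-site analysis. This extra care is exactly what Corollary~\ref{cor: flag before replace} and the matching-index claim in Lemma~\ref{lem: flag replace unflag transition} actually require, so your version closes a gap the paper leaves implicit.
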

\begin{proof}
The helpMove function which wraps two $mreplace$s is called at line~\ref{quadboost: move helpMove first}, line~\ref{quadboost: move helpMove second}, and line~\ref{quadboost: help helpMove}. At line~\ref{quadboost: move helpMove first} and line~\ref{quadboost: move helpMove second}, the \textit{Move} $op$ is created at line~\ref{quadboost: helpMove reverse end}. Otherwise, at line~\ref{quadboost: help helpMove}, by Corollary~\ref{cor: help pOp}, $op$ is read from an \textit{Internal} node. Hence, Lemma~\ref{lem: op properties} demonstrates that a successful $mflag$ must have changed the node's $op$ from \textit{Clean} to \textit{Move}.
\end{proof}

From the above lemmas, we have the following corollary:
\begin{cor}
\label{cor: flag before replace}
A successful replace operation which belongs to an $op$ will not succeed before it's flag operations have been done.
\end{cor}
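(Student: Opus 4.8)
The plan is to prove Corollary~\ref{cor: flag before replace} by an exhaustive case analysis on which kind of replace operation is under consideration, reducing it directly to Lemmas~\ref{lem: replace Substitute op before}, \ref{lem: replace Compress op before}, and \ref{lem: replace Move op before}. First I would observe that every successful replace operation in quadboost is one of $ireplace$, $rreplace$, $creplace$, or $mreplace$: the helpReplace function is only ever invoked from inside the helpSubstitute function (line~\ref{quadboost: helpSubstitute helpReplace}), the helpCompress function (line~\ref{quadboost: helpCompress helpReplace}), or the helpMove function (lines~\ref{quadboost: helpMove two cas begin}--\ref{quadboost: helpMove two cas end}), and — by the pre-conditions established in Lemma~\ref{lem: same type}, Corollary~\ref{cor: help pOp}, and Lemma~\ref{lem: op properties} — each of these wrappers is reached only after the corresponding \textit{Substitute}, \textit{Compress}, or \textit{Move} $op$ has been read from an \textit{Internal} node. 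Hence the four cases are genuinely exhaustive, and it suffices to treat them one at a time.

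Next I would discharge the four cases. For $ireplace$ and $rreplace$, Lemma~\ref{lem: replace Substitute op before} already states that the replace does not complete before the corresponding $iflag$ (resp.\ $rflag$) belonging to the same $op$ has succeeded; for $creplace$ the same conclusion is Lemma~\ref{lem: replace Compress op before} with $cflag$; and for $mreplace$ it is Lemma~\ref{lem: replace Move op before} with $mflag$. In each case the cited lemma delivers exactly the ordering claimed by the corollary, namely that the flag operation(s) that installed $op$ have been done before the given replace operation succeeds, so the corollary follows by combining the cases.

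The only point that needs a little care — and it is bookkeeping rather than a real obstacle — is the plural "flag operations": a \textit{Move} $op$ is installed by two flag operations ($mflag$ on $op.iParent$ and on $op.rParent$), and several replace operations may belong to the same $op$. I would note that Lemma~\ref{lem: replace Move op before} is stated for the $mflag$ that belongs to the same $op$, and that the helpMove code performs both $mflag$s and sets \textit{allFlag} (line~\ref{quadboost: helpMove allFlag}) before issuing any $mreplace$, so in fact both flag operations have succeeded before the first $mreplace$; the finer question of how multiple $replace_i^k$ belonging to one $op$ are ordered relative to one another is pinned down later and is not needed here. With this remark in place, no genuine difficulty remains: the corollary is an immediate consolidation of the three preceding lemmas together with the exhaustiveness of the four replace types.
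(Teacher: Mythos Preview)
Your proposal is correct and follows essentially the same approach as the paper: the paper presents this corollary with no proof beyond the sentence ``From the above lemmas, we have the following corollary,'' and your case analysis over $ireplace$/$rreplace$, $creplace$, and $mreplace$ reducing to Lemmas~\ref{lem: replace Substitute op before}, \ref{lem: replace Compress op before}, and \ref{lem: replace Move op before} is exactly the intended consolidation. Your extra care about exhaustiveness and the plural ``flag operations'' for the \textit{Move} case goes somewhat beyond what the paper spells out, but it is consistent with the paper's reasoning and does no harm.
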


We discuss each CAS transition accordingly. We begin by discussing the $flag \rightarrow replace \rightarrow unflag$ transition. We refer $\langle oldIOp, oldIChild, newIChild\rangle$ and $\langle oldROp, oldRChild, newRChild\rangle$ to different tuples of $\langle oldOp, oldChild, newChild\rangle$.

\begin{lem}
\label{lem: flag replace unflag transition}
$flag\rightarrow replace \rightarrow unflag$ occurs when $rflag_{i}$, $iflag_{i}$ or $mflag_{i}$ succeeds, and it has following properties:
\begin{enumerate}
	\item $replace_{i}$ never occurs before $flag_{i}$.
	\item $flag_{i}^{k}, 0 \leq k < \vert flag_{i} \vert$ is the first successful flag operation on $op_{i}.parent^{k}$ after $T_{i1}$ when $pOp_{i}$ is read.
	\item $replace_{i}^{k}, 0 \leq k < \vert replace_{i} \vert$ is the first successful replace operation on $op_{i}.parent^{k}$ after $T_{i2}$ when $op_{i}.oldChild^{k}$ is read.
	\item $replace_{i}^{k}, 0 \leq k < \vert replace_{i} \vert$ is the first successful replace operation on $op_{i}.parent^{k}$ that belongs to $op_{i}$.
	\item $unflag_{i}^{k}, 0 \leq k < \vert unflag_{i} \vert$ is the first successful unflag operation on $op_{i}.parent^{k}$ after $flag_{i}^{k}$.
	\item There is no successful unflag operation occurs before $replace_{i}$.
	\item The first replace operation on $op_{i}.parent^{k}$ that belongs to $op_{i}$ must succeed. 
\end{enumerate} 
\end{lem}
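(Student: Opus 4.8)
The plan is to prove all seven claims by a single induction on the index of successful \texttt{CAS} steps in the execution's total order of atomic steps. When reasoning about a particular $op_i$ I may use all seven claims for every $op_j$ whose flag succeeded strictly before $flag_i^0$, and, for $op_i$ itself, the ``structural'' claims (Parts~1, 5, 6, which only compare events inside a single call of \texttt{helpSubstitute}/\texttt{helpMove}) before the ``global'' ones (Parts~2--4, 7). The three tools I will lean on are: a fresh \textit{Internal} node starts \textit{Clean} and its $op$ can only cycle \textit{Clean} $\to$ \{\textit{Substitute}, \textit{Compress}, \textit{Move}\} $\to$ \textit{Clean}, with \textit{Compress} absorbing (Lemma~\ref{lem: op properties}); no node ever re-uses an $op$ value, so $op$ fields are ABA-free (Lemma~\ref{lem: op ABA}); and a replace belonging to an $op$ never succeeds before that $op$'s flag (Corollary~\ref{cor: flag before replace}). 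Part~1 is then immediate. For Part~2, since $flag_i^k$ CASes $op_i.parent^k.op$ from the value $pOp_i$ that $find(keys)$ read at $T_{i1}$, ABA-freedom forces that field to hold $pOp_i$ continuously over $[T_{i1}, flag_i^k]$ (any intervening successful flag would store a value that could never return); hence $flag_i^k$ is the first successful flag on that node after $T_{i1}$, and by Lemma~\ref{lem: op properties}(2) $pOp_i$ is in fact \textit{Clean}.

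The core is an ``ownership window'' argument for Parts~3, 4 and 7. Fix $k$ and let $T_{i2}$ be the time $op_i.oldChild^k$ was read; by Lemma~\ref{lem: op create} $op_i.parent^k$ held $op_i.oldChild^k$ in some direction $d$ at $T_{i2}$, and $T_{i1} < T_{i2} < flag_i^k$. I claim the $d$-slot of $op_i.parent^k$ is unchanged on $[T_{i2}, replace_i^k]$. Over $[T_{i1}, flag_i^k]$ that node's $op$ is the \textit{Clean} object $pOp_i$ (Part~2), and over $(flag_i^k, replace_i^k]$ it is still $op_i$ (it cannot jump to another non-\textit{Clean} $op$ by Lemma~\ref{lem: op properties}, and the unflag removing $op_i$ is $unflag_i^k$, which by Part~6 does not precede $replace_i^k$). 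Suppose the $d$-slot first changes at $t^\ast \in (T_{i2}, replace_i^k]$ via a replace $\rho$ belonging to $op_j$; just before $t^\ast$ the slot still holds the \textit{Leaf} $op_i.oldChild^k$, so $\rho$ cannot be a $creplace$ (whose $oldChild$ is \textit{Internal}), hence $\rho \in \{ireplace, rreplace, mreplace\}$ and, by Corollary~\ref{cor: flag before replace}, $flag_j$ on $op_i.parent^k$ succeeded earlier. That $flag_j$ cannot lie in $[T_{i1}, flag_i^k]$ (the field is $pOp_i$ there, and an $unflag_j$ before $flag_i^k$ would install a fresh \textit{Clean} object $\neq pOp_i$, contradicting the success of $flag_i^k$) nor in $(flag_i^k, t^\ast)$ (the field is $op_i\neq$\textit{Clean} there); so $flag_j < T_{i1}$, whence an $unflag_j$ occurs before $T_{i1}$, and by Part~6 for $op_j$ (induction) $\rho < unflag_j < T_{i1} < T_{i2}$, contradicting $t^\ast > T_{i2}$. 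Thus $op_i.parent^k$ still holds $op_i.oldChild^k$ when the first replace belonging to $op_i$ runs, so that \texttt{CAS} (from $op_i.oldChild^k$) succeeds; it is simultaneously the first successful replace on $op_i.parent^k$ after $T_{i2}$ and the first belonging to $op_i$, which is Parts~3, 4 and 7.

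Parts~5 and 6 will come from the code layout together with ABA-freedom: in \texttt{helpSubstitute} (and in \texttt{helpMove}'s reverse-order unflag block) the unflag of a parent is issued only after its \texttt{helpReplace}, and every path through \texttt{help} reaches the unflag only after running that \texttt{helpReplace}. Hence whenever some thread performs a successful unflag belonging to $op_i$ on $op_i.parent^k$, a replace \texttt{CAS} for $op_i$ on that node has already executed; since (by the window argument, which for this purpose needs only Corollary~\ref{cor: flag before replace} and Part~2) the first such \texttt{CAS} succeeds, we obtain Part~6; combining this with Lemma~\ref{lem: op properties}(3) (an unflag needs a \textit{Substitute}/\textit{Move} $op$) and Lemma~\ref{lem: op ABA}, the first successful unflag belonging to $op_i$ is the unique one that moves the field off $op_i$---the ``first unflag after $flag_i^k$'' of Part~5.

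The hard part will be the bookkeeping buried in ``$flag_i$''/``$replace_i$'' for the move operation. Unlike $iflag$ and $rflag$ (one parent, one child, one replace), $mflag$ flags two parents in the $iFirst$ order and \texttt{helpMove} performs either one combined \texttt{helpReplace} (when $oldKey$'s and $newKey$'s terminals share a parent, so $\vert replace_i\vert = 1$) or two of them ($\vert replace_i\vert = 2$); I must check that the window argument applies verbatim to each pair $(op_i.parent^k, op_i.oldChild^k)$, that $flag_i^0$ and $flag_i^1$ are ordered identically by every helper (so ``the first flag on $op_i.parent^k$'' is well defined and Part~2 unambiguous), and that the reverse-order unflagging preserves Parts~5--6. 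A second delicate point is keeping the induction honest---the ``structural'' Parts~1, 5, 6 for $op_i$ must be derived from the code before they are invoked inside the ``global'' argument for Parts~3, 4, 7---together with the separate ruling-out of an interfering $creplace$, which rests on $op_i.oldChild^k$ being a \textit{Leaf} (Lemma~\ref{lem: op create}) while a compressed node is \textit{Internal}, so a same-direction $creplace$ simply fails.
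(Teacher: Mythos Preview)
Your proposal is correct and, in fact, more careful than the paper's own proof.  The paper argues the seven parts largely independently by direct contradiction: Part~2 from ABA-freedom of $op$ (an intervening flag would make $flag_i^k$ fail), Part~3 from freshness of $newChild$, Part~5 from ABA-freedom again, Part~6 from the program order inside \texttt{helpSubstitute}/\texttt{helpMove}, and Part~7 by placing a hypothetical interfering flag into an impossible window.  There is no explicit induction, and the parts are invoked freely inside each other.

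Your organisation differs in two useful ways.  First, you set up an explicit induction on the CAS order and separate the ``structural'' Parts~1,~5,~6 (derivable from code layout alone) from the ``global'' Parts~2--4,~7, thereby breaking the apparent circularity between Part~6 (no unflag before $replace_i$) and Part~7 (first replace succeeds) that the paper simply ignores; your observation that the first replace \emph{attempt} precedes the first unflag \emph{attempt} by program order is exactly what is needed.  Second, your ownership-window argument explicitly rules out an interfering $creplace$ via the type mismatch ($op_i.oldChild^k$ is a \textit{Leaf} by Lemma~\ref{lem: op create}, whereas a $creplace$ expects an \textit{Internal} old child).  The paper's Part~7 tacitly assumes the interfering replace's flag sits on $op_i.parent^k$, which is true for $ireplace/rreplace/mreplace$ but not for $creplace$ (whose $cflag$ lands on the child); it also never addresses the case of an interfering flag that succeeded \emph{before} $T_{i1}$, which you handle by induction on $op_j$.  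Your attention to the two-parent bookkeeping for $mflag$ is likewise absent from the paper, which treats all three flag types uniformly without comment.
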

\begin{proof}
	\begin{enumerate}
		\item Corollary~\ref{cor: flag before replace} proves the claims.			
		\item By Lemma~\ref{lem: helpFlag}, for each flag operation, $p$ has contained $pOp$ at some time $T_{i1}$. Suppose that there's another $flag^{k}$ that occurs after $T_{i1}$ but before $flag_{i}^{k}$, then $flag_{i}^{k}$ fails because Lemma~\ref{lem: op ABA} shows that $op$ is not reused. Therefore, it contradicts to the definition that $flag_{i}^{k}$ is a successful flag operation.
		\item By Lemma~\ref{lem: helpReplace}, for each replace operation, $p$ has contained $oldChild$ at some time $T_{i1}$. Suppose that there's another $replace$ that occurs after $T_{i1}$ but before $replace_{i}^{k}$, then $replace_{i}^{k}$ fails because Lemma~\ref{lem: helpReplace} shows that $newChild$ is never a node in the tree. Therefore, it contradicts to the definition that $replace_{i}^{k}$ is a successful flag operation.
		\item Suppose there's $replace^{k}$ that belongs to $op_{i}$ and occurs before $creplace_{i}^{k}$. It must follow $flag^{k}$ according to Corollary~\ref{cor: flag before replace}. If $flag^{k}$ happens before $flag_{i}^{k}$, it fails because $fail_{i}^{k}$ will fail by consequence (Lemma~\ref{lem: op ABA}). If $flag^{k}$ happens after $flag_{i}^{k}$, it contradicts Lemma~\ref{lem: op properties} that a flag operation starts Because $flag^{k}$ is done after reading $op_{i}.oldChild^{k}$, $replace^{k}$ is also after reading it. Therefore, if $flag^{k}$ succeeds, it contradicts part 3 that $replace_{i}^{k}$ is the first successful replace operation after reading $op_{i}.oldChild^{k}$.
		\item Based on Lemma~\ref{lem: helpFlag}, if there is another $unflag$ changes $p.op$, $unflag_{i}^{k}$ that use $op_{i}$ as the $oldOp$ will fail. This contradicts to the definition that $unflag_{i}^{k}$ is a successful unflag operation that reads $op_{i}$.
		\item We consider each operation separately.
		
		For the insert operation and the remove operation, $unflag$ follows $replace$ immediately at line~\ref{quadboost: helpSubstitute helpFlag}. Assuming that $replace_{i}^{k}$ occurs after $unflag_{i}^{k}$, there must be another $replace$ reads $op_{i}$ and changes the link. It contradicts part 3 of the proof that $replace_{i}^{k}$ is the first successful replace operation that belongs to $op_{i}$. 
		
		For the move operation, because we assume that replace operations  exist, $doCAS$ is set to true before performing replace operations. An unflag operation follows replace operations from line~\ref{quadboost: helpMove reverse begin}-\ref{quadboost: helpMove reverse end}. Consider if a $replace_{i}^{k}, 0 \leq k < \vert replace_{i} \vert$ happens after any unflag operation, then another $replace$ which use the same $op$ must have succeed as the program order specifies that $replace$ execute before $unflag$ for a process. Hence, $replace_{i}^{k}$ fails and contradicts the definition.
		
		Therefore all successful unflag operations occur after $replace$.
		
		\item According to Corollary~\ref{cor: flag before replace}, $replace_{i}^{k}$ follows $flag_{i}^{k}$ belongs to $op_{i}$. Suppose there exists $replace$ after reading $op_{i}.parent^{k}$ and before $replace_{i}^{k}$. Then, there must be $flag$ that precedes $replace$ by Corollary~\ref{cor: flag before replace}. By part 1, $flag_{i}$ is the first successful flag operation after reading $op_{i}$. By Lemma~\ref{lem: helpFlag}, $op_{i}$ is read before $op_{i}.l$. Hence, there does not exist any replace operation between $T_{i1}$ reading $op_{i}.oldChild^{k}$ and $flag_{i}^{k}$. If $flag$ happens between $flag_{i}^{k}$ and $replace_{i}^{k}$, it will also fail because $op_{i}.parent^{k}.op$ is not \textit{Clean}. Then, the first replace operation that belongs to $op_{i}$ must succeed. 
	\end{enumerate}
\end{proof}

We then discuss the ordering of the $flag\rightarrow replace$ transition.  

\begin{lem}
\label{lem: flag replace transition}
$flag\rightarrow replace$ occurs only when $cflag_{i}$ succeeds, it has following properties:
\begin{enumerate}
	\item $creplace_{i}$ never occurs before $cflag_{i}$.
	\item $cflag_{i}$ is the first successful flag operation on $op_{i}.parent$ after $T_{i1}$ when $rOp_{i}$ is read.
	\item $creplace_{i}$ is the first successful replace operation on $op_{i}.grandparent$ after $T_{i1}$ when $op_{i}.parent$ is read.
	\item $creplace_{i}$ is the first successful replace operation on $op_{i}.grandparent$ that belongs to $op_{i}$.
	\item There is no unflag operation after $creplace_{i}$.
	\item The first replace operation that belongs to $op_{i}$ must succeed.
\end{enumerate}
\end{lem}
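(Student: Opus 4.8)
The plan is to reuse the template of the proof of Lemma~\ref{lem: flag replace unflag transition}, since the $flag\rightarrow replace$ transition is the ``compress'' counterpart of the $flag\rightarrow replace\rightarrow unflag$ transition, with two simplifications: a \textit{Compress} $op$ is attached by exactly one flag operation --- it is created at line~\ref{quadboost: compress Compress} and installed only through the \textit{helpFlag} at line~\ref{quadboost: compress check}, and is never reused as a $newOp$ anywhere else --- and, by Lemma~\ref{lem: op properties}(4), a node carrying a \textit{Compress} $op$ never has its $op$ changed again, so no unflag can belong to $op_i$. First I would note that $creplace_i$ is wrapped by \textit{helpCompress} (line~\ref{quadboost: helpCompress helpReplace}), which is reached only after a successful $cflag_i$; this is exactly Lemma~\ref{lem: replace Compress op before} (equivalently Corollary~\ref{cor: flag before replace}), and it gives part~1 immediately.

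For parts~2 and~3 I would copy the contradiction arguments of Lemma~\ref{lem: flag replace unflag transition}(2)--(3). For part~2: by Lemma~\ref{lem: helpFlag}, the $cflag_i$ instance reads $op_i.parent$'s $op$ --- the value $rOp_i$ of the statement, i.e. the $pOp$ read at line~\ref{quadboost: compress p op} --- at some $T_{i1}$ before it executes; if some other flag succeeded on $op_i.parent$ in $(T_{i1},cflag_i)$ then, since no $op$ is ever reused (Lemma~\ref{lem: op ABA}), $op_i.parent.op\neq rOp_i$ when $cflag_i$ fires, so $cflag_i$ fails --- contradiction. For part~3: by Lemma~\ref{lem: op create}(2), $op_i.grandparent$ held $op_i.parent$ in one of its child slots at some $T_{i1}<T_i$; any successful replace on $op_i.grandparent$ in $(T_{i1},creplace_i)$ changes that slot away from $op_i.parent$, and since the fresh \textit{Empty} node created at line~\ref{quadboost: helpCompress helpReplace} has never been a child of anything, $creplace_i$'s CAS would then fail --- contradiction. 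Part~4 follows by combining part~3 with the uniqueness of $cflag_i$: every replace belonging to $op_i$ is a \textit{helpCompress} replace that runs only after $cflag_i$, and $cflag_i$ itself runs only after $op_i.parent$ has been read as a child of $op_i.grandparent$ (at $T_{i1}$); hence any $op_i$-replace occurs after $T_{i1}$, and part~3 says the first successful replace after $T_{i1}$ is $creplace_i$.

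Part~5 is essentially definitional: $op_i$ is attached only to $op_i.parent$ (it is never used as a $newOp$ on any other node), and by Lemma~\ref{lem: op properties}(4) $op_i.parent.op$ stays \textit{Compress} forever once $cflag_i$ succeeds, so no unflag operation ever reads $op_i$ --- in particular none occurs after $creplace_i$. For part~6 I would show the link $op_i.grandparent\to op_i.parent$ is stable from $T_{i1}$ (when it is read, per Lemma~\ref{lem: op create}(2)) until the first replace belonging to $op_i$: the only replace operations that could overwrite that specific child slot with something other than $op_i.parent$ are (i) an $ireplace$/$rreplace$/$mreplace$ whose target slot holds a \textit{Leaf}, impossible since the slot holds the \textit{Internal} node $op_i.parent$, and (ii) a $creplace$ for a \textit{Compress} $op$ installed on $op_i.parent$ with grandparent $op_i.grandparent$; but $op_i.parent.op$ is locked to $op_i$ (Lemma~\ref{lem: op properties}(4)), so the only such $creplace$ is $creplace_i$ itself. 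Hence the slot is unchanged when the first $op_i$-replace fires, so that CAS succeeds.

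The main obstacle is part~6: making the ``the slot cannot change for any other reason'' argument airtight requires enumerating every way a child pointer of an \textit{Internal} node can be modified (the four replace flavours) and ruling each out, leaning on the facts that $creplace$ only ever removes an \textit{Internal} child carrying a \textit{Compress} $op$ and that $op_i.parent$'s $op$ is immutably $op_i$. One also has to observe that reachability is irrelevant here --- the CAS acts on the field regardless of whether $op_i.grandparent$ is still in the tree --- so it genuinely suffices to track the value of that one field between $T_{i1}$ and the first $op_i$-replace.
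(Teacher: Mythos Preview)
Your proposal is correct and follows essentially the same approach as the paper's proof: parts~1--5 are handled identically (via Corollary~\ref{cor: flag before replace}, the no-reuse Lemma~\ref{lem: op ABA}, and Lemma~\ref{lem: op properties}(4)), and for part~6 you case-split on the \emph{type of the intervening replace} while the paper case-splits on the \emph{timing and type of the preceding flag}, but the two analyses unwind to the same facts --- that $ireplace/rreplace/mreplace$ have a Leaf/Empty as $oldChild$ so cannot swap out the Internal $op_i.parent$, and that any $creplace$ removing $op_i.parent$ needs a Compress $op$ on it, which must be $op_i$ since $cflag_i$ succeeded and Compress is permanent. One small point worth making explicit in your (ii): the ``locked to $op_i$'' claim also needs the observation that no \emph{earlier} $cflag$ on $op_i.parent$ can have succeeded (else $cflag_i$ would fail, contradicting part~1), which is what closes the window before $cflag_i$.
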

\begin{proof}
	\begin{enumerate}
		\item Corollary~\ref{cor: flag before replace} proves the claims.			
		\item By Lemma~\ref{lem: helpFlag}, for each flag operation, $p$ has contained $pOp$ at some time $T_{i1}$. Suppose that there's another $cflag$ that occurs after $T_{i1}$ but before $cflag_{i}$, then $cflag_{i}$ fails because Lemma~\ref{lem: op ABA} shows that $op$ is not reused. Therefore, it contradicts to the definition that $cflag_{i}$ is a successful flag operation.
		\item By Lemma~\ref{lem: helpReplace}, for each replace operation, $p$ has contained $oldChild$ at some time $T_{i1}$. Suppose that there's another $creplace$ that occurs after $T_{i1}$ but before $replace_{i}$, then $creplace_{i}$ fails because $oldChild$ remains the same but the link has been changed to another node. Therefore, it contradicts to the definition that $creplace_{i}$ is a successful flag operation.
		\item Suppose there's a replace operation that belongs to $op_{i}$ and occurs before $creplace_{i}$. It must follow $cflag$ according to Corollary~\ref{cor: flag before replace}. If $cflag$ happens before $cflag_{i}$, it fails because $cfail_{i}$ will fail by consequence (Lemma~\ref{lem: op ABA}. If $cflag$ happens after $cflag_{i}$, it contradicts Lemma~\ref{lem: op properties} that a flag operation starts with a \textit{Clean} $op$. Therefore, $cflag$ is the same as $cflag_{i}$. Because $cflag$ is done after reading $op_{i}.parent$, $creplace$ is also after reading it. Therefore, if $cflag$ succeeds, it contradicts part 3 that $creplace_{i}$ is the first successful replace operation after reading $op_{i}.parent$.
		\item By Lemma~\ref{lem: op properties}, once a node's $op$ is \textit{Compress}, it will never be changed.
		\item Suppose the first replace operation that belongs to $op_{i}$ fails, there must exist some successful $replace$ after reading $op_{i}.parent$ and before $creplace_{i}$. Further, there is $flag$ that precedes $replace$ by Corollary~\ref{cor: flag before replace}. If $flag$ happens between $creplace_{i}$ and $cflag_{i}$, $flag$ will fail by Lemma~\ref{lem: op properties}. If $flag$ happens between reading $op_{i}$ and $cflag_{i}$, it contradicts part 1 that $cflag_{i}$ is the first successful flag operation after reading $op_{i}$. Because $op_{i}$ is read after $op_{i}.parent$, we also have to consider if $flag$ happens before reading $op_{i}$. Consider if $cflag$ happens, $flag_{i}$ will fail because $op_{i}.parent.op$ will not be set back to \textit{Clean} according to the fourth part. Consider if $mflag$, $iflag$ or $rflag$ happens, it will not change the link from $op_{i}.grandparent$ to $op_{i}.p$ according to Lemma~\ref{lem: flag replace unflag transition}.
	\end{enumerate}
\end{proof}

\begin{cor}
\label{cor: first replace succeed}
The first replace operation belongs to $op_{i}$ must succeed.
\end{cor}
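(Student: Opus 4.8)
The plan is to derive this corollary directly from the classification of successful CAS transitions together with the three transition lemmas just established, by a short case analysis on which type of transition $op_{i}$ drives. Since $op_{i}$ is attached by a successful flag operation, it belongs to exactly one of the three categories identified earlier: a $flag\rightarrow replace\rightarrow unflag$ transition, a $flag\rightarrow replace$ transition, or a $flag\rightarrow unflag$ transition.

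First I would dispatch the $flag\rightarrow replace\rightarrow unflag$ case: here $op_{i}$ is a \textit{Substitute} or \textit{Move} object created by an $iflag_{i}$, $rflag_{i}$, or $mflag_{i}$, and part~7 of Lemma~\ref{lem: flag replace unflag transition} already states that the first replace operation on $op_{i}.parent^{k}$ belonging to $op_{i}$ must succeed, which is exactly the claim for each target $op_{i}.parent^{k}$. Next, for the $flag\rightarrow replace$ case, $op_{i}$ is a \textit{Compress} object produced by a successful $cflag_{i}$, and part~6 of Lemma~\ref{lem: flag replace transition} gives that the first replace operation belonging to $op_{i}$ (namely $creplace_{i}$ on $op_{i}.grandparent$) must succeed. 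Finally, in the $flag\rightarrow unflag$ case no replace operation belongs to $op_{i}$ at all --- the transition consists only of a flag followed by an unflag because the $mflag$ on the other \textit{Move} parent failed --- so the statement holds vacuously.

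Assembling the three cases yields the corollary. I expect essentially no obstacle here: the real work has already been discharged inside Lemma~\ref{lem: flag replace unflag transition} and Lemma~\ref{lem: flag replace transition}, and this corollary merely repackages their per-transition conclusions into a single uniform statement. The only point requiring a little care is that, when $op_{i}$ is a \textit{Move} object with two flag targets, ``the first replace operation that belongs to $op_{i}$'' must be read per target node, consistent with the indexing $replace_{i}^{k}$ used in the lemmas; once that reading is fixed, the cited properties apply verbatim, and exhaustiveness of the case split is guaranteed by the earlier classification of successful transitions.
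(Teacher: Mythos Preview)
Your proposal is correct and matches the paper's intent: the paper states this as a bare corollary with no proof, immediately following the $flag\rightarrow replace\rightarrow unflag$ and $flag\rightarrow replace$ lemmas, and it is meant to repackage part~7 of the former and part~6 of the latter exactly as you do. One small point on ordering: in the paper the $flag\rightarrow unflag$ lemma (Lemma~\ref{lem: flag unflag transition}) appears \emph{after} this corollary and actually invokes it in its own proof, so you cannot cite it as ``just established''; however, since you dispatch that third case vacuously rather than by appealing to its conclusions, no circularity arises and your argument stands.
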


%flag, unflag
Next we discuss the $flag\rightarrow unflag$ transition, where replace operation occurs between a successful flag operation and an unflag operation that belong to the same $op$. We suppose that $op.iFirst$ is true. The claim for $op.iFirst = false$ is symmetric. 

\begin{obs}
\label{obs: allFlag iFirst}
$allFlag$ and $iFirst$ in a \textit{Move} $op$ are initially false, and they will never be set false after assigning to true. 
\end{obs}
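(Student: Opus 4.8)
The plan is to prove the statement by a direct inspection of every write to the $allFlag$ and $iFirst$ fields of a \textit{Move} object. First I would appeal to the declaration of the \textit{Move} class in Figure~\ref{fig:structure}, where both fields are given the initializer \texttt{false}; this disposes of the ``initially false'' half of the claim.

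For $iFirst$, the only site at which a \textit{Move} object is constructed is line~\ref{quadboost: move create end}, and there the value passed for $iFirst$ is exactly the boolean returned by \texttt{getSpatialOrder} at line~\ref{quadboost: move getSpatialOrder}. I would then walk through the subroutines that ever receive a \textit{Move} object --- \texttt{helpMove} (and, transitively, \texttt{help} and the branches of \texttt{continueFindCommon} that pass $rOp$/$iOp$) --- and check that $iFirst$ occurs only in read positions (the branch guards inside \texttt{helpMove}) and never on the left-hand side of an assignment. Hence $iFirst$ is frozen at construction time and a fortiori is never reset to \texttt{false} after being \texttt{true}.

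For $allFlag$, I would again enumerate all writes to the field. The only assignment other than the class initializer is $op.allFlag = true$ at line~\ref{quadboost: helpMove allFlag} inside \texttt{helpMove}; no line of the algorithm ever stores \texttt{false} into $allFlag$. Consequently, once $allFlag$ is set it remains set, which is the monotonicity assertion. Because both fields are declared \texttt{volatile}, these writes are observed consistently by all threads, so the statement holds throughout any concurrent execution.

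The argument is essentially bookkeeping, and the only point requiring care is completeness of the case analysis: one must confirm that \texttt{helpMove} is the sole subroutine that mutates any mutable field of a \textit{Move} op, so that no auxiliary call reached during a move (a \texttt{helpFlag}, a \texttt{helpReplace}, or a nested \texttt{helpCompress} on a grandparent) can touch $allFlag$ or $iFirst$. This is immediate from reading the bodies in Figures~\ref{fig:quadboost move}, \ref{fig: quadboost findCommon and continueFindCommon}, and \ref{fig:quadboost insert}, and it is the main ``obstacle'' only in the sense that overlooking one such write is the single thing that would invalidate the observation.
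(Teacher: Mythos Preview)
Your proposal is correct and is exactly the kind of code-inspection argument the paper intends: the statement is labeled an \emph{Observation} and is given no proof in either the body or the appendix, so the implicit justification is precisely the enumeration of writes that you carry out. Your additional remark that $iFirst$ is fixed at construction (via the value from \texttt{getSpatialOrder}) and thereafter only read is a useful sharpening of the paper's slightly loose ``initially false'' wording.
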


\begin{lem}
\label{lem: flag unflag transition}
For $flag\rightarrow unflag$ circle, it only results from a \textit{Move} object $op_{i}$ such that:
\begin{enumerate}
	\item $unflag_{i}$ is the first successful unflag operation on $op_{i}.rParent$.	
	\item The first flag operation on $op_{i}.iParent$ must fail, and no later flag operation will succeed.
	\item $op_{i}.iParent$ and $op_{i}.rParent$ are different.
\end{enumerate}
\end{lem}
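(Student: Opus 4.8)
The plan is to argue by elimination on the kind of $op$ to which the transition belongs. Recall that a $flag\rightarrow unflag$ transition consists of a successful flag operation attaching some $op$ to an \textit{Internal} node together with a successful $unflag$ of that node reading the same $op$, with no $replace$ belonging to that $op$ ever occurring in between. I would first rule out \textit{Substitute} and \textit{Compress} $op$s, and then analyze \texttt{helpMove}, the only remaining source, to identify the code path that produces such a transition and to read the three claimed properties off it.

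For the elimination: by Lemma~\ref{lem: op properties}, once a node's $op$ is \textit{Compress} it is never changed, so a $cflag$ is never followed by any $unflag$. For a \textit{Substitute} $op$ attached by $iflag_i$ or $rflag_i$, Lemma~\ref{lem: flag replace unflag transition} shows that the first $replace$ belonging to $op_i$ succeeds and that no successful $unflag$ precedes it, so every $unflag$ of a \textit{Substitute} $op$ is of the form $flag\rightarrow replace\rightarrow unflag$. Hence a $flag\rightarrow unflag$ transition can only arise from a \textit{Move} $op$, which is the first clause of the lemma.

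For the \textit{Move} case I would assume, without loss of generality, that $op_i.iFirst$ is false (the other case being symmetric, with $iParent$ and $rParent$ exchanged): then $op_i.rParent$ is flagged first (line~\ref{quadboost: move helpFlag}) and the second flag, $op_i.oldIOp\to op_i$ on $op_i.iParent$, is attempted inside \texttt{helpMove} (line~\ref{quadboost: helpMove flag begin}). The key point is that the \texttt{if (doCAS)} block of \texttt{helpMove} is entered exactly when $op_i.allFlag$ is set, that entering it is---by Observation~\ref{obs: allFlag iFirst}---the only way $op_i.allFlag$ becomes \texttt{true}, and that this block contains the only $replace$s belonging to $op_i$; so a $flag\rightarrow unflag$ transition for $op_i$ occurs precisely when $op_i.allFlag$ stays \texttt{false} throughout, equivalently when the second flag never succeeds. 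Granting this, the unflag phase of \texttt{helpMove} (lines~\ref{quadboost: helpMove reverse begin}--\ref{quadboost: helpMove reverse end}) with $op_i.allFlag$ \texttt{false} unflags only $op_i.rParent$, and only under the guard $op_i.iParent\ne op_i.rParent$: this gives property~3 and identifies the $unflag$ of the transition as the one on $op_i.rParent$. Since $op_i.rParent.op$ equals $op_i$ from the $mflag$ at line~\ref{quadboost: move helpFlag} until that $unflag$ (using Lemma~\ref{lem: op ABA}, no reuse of $op$s), and since by Lemma~\ref{lem: op properties} no flag operation succeeds on a node whose $op$ is not \textit{Clean}, this $unflag_i$ is the first successful unflag on $op_i.rParent$---property~1. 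Finally, since the second flag never succeeds, its first attempt in particular fails, and no later attempt succeeds either---property~2.

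The step I expect to be the main obstacle is justifying the equivalence ``$flag\rightarrow unflag$ transition for $op_i$ $\iff$ the second flag on $op_i.iParent$ never succeeds'' under genuine concurrency: several threads may execute \texttt{helpMove} for the same $op_i$, and one may read \texttt{doCAS} as \texttt{false} and enter the unflag phase while another is between a successful second flag and its write of $op_i.allFlag$. To close this I would argue that once the second flag on $op_i.iParent$ succeeds, $op_i.iParent.op$ remains $op_i$ until some $unflag$, and that $unflag$ itself requires $op_i.allFlag$ to be \texttt{true}, hence requires \texttt{doCAS} to have been read \texttt{true} earlier; therefore the next \texttt{helpMove} invocation to read \texttt{doCAS} still sees $op_i.iParent.op=op_i$, sets $op_i.allFlag$, and issues the $replace$s (the first of which succeeds by Corollary~\ref{cor: first replace succeed}). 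Making this rigorous with the \texttt{volatile} semantics of the $op$ and $allFlag$ fields is the technical heart of the argument; with it in hand, properties~1--3 reduce to inspecting the \texttt{helpMove} branches as above.
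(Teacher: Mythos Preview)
Your proposal is correct and follows essentially the same line as the paper's proof. The paper does not include your elimination step for \textit{Substitute} and \textit{Compress} (it takes the ``only from \textit{Move}'' clause as given), and it proves the three properties one by one by contradiction rather than first isolating the equivalence ``$flag\rightarrow unflag$ for $op_i$ iff the second flag never succeeds''; but the substance is identical. One remark: the concurrency worry you flag resolves more directly than your sketch suggests---the thread whose CAS on $op_i.iParent$ succeeds reads, in program order, $op_i.iParent.op=op_i$ at the \texttt{doCAS} line and therefore itself enters the replace block, so Corollary~\ref{cor: first replace succeed} applies immediately; this is exactly the paper's ``the first process must set $doCAS$ to true'' step.
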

\begin{proof}
	\begin{enumerate}
		\item We prove it by contradiction. Suppose there's another unflag operation that reads $op_{i}$ and succeeds before $unflag_{i}$ on $op_{i}.rParent$. It must have changed $op_{i}.rParent.op$ to a new $op$, by Lemma~\ref{lem: op ABA} $unflag_{i}$ will fail. This contradicts the definition of $unflag_{i}$.
		\item 
		For the former part of the claim, we prove it by contradiction. Assuming that $op_{i}.iParent$ is flagged on $op_{i}$, and will be unflagged without performing replace operations. However, unflag operations occur after the judgement at line~\ref{quadboost: helpMove two cas begin}. The first process must set $doCAS$ to true. In the next step, it executes line~\ref{quadboost: helpMove two cas begin}-\ref{quadboost: helpMove two cas end}. According to Corollary~\ref{cor: first replace succeed}, the first $replace_{i}^{t}$ must succeed. Therefore it contradicts the definition that there's no replace operation before an unflag operation.
		For the latter part, by Lemma~\ref{lem: op create} $op_{i}.iParent$ has contained $op_{i}.oldIOp$. Because the first flag operation on $op_{i}.iParent$ fails, latter flag operations read the $op_{i}.oldIOp$ also fail.
		\item Assuming $op_{i}.iParent$ and $op_{i}.rParent$ are the same. By part 2, flag operations on $op_{i}.iParent$ must fail, hence $doCAS$ will never be true. $op_{i}.allFlag$ is always false by by Observation~\ref{obs: allFlag iFirst}. In this way, line~\ref{quadboost: helpMove reverse begin}-\ref{quadboost: helpMove reverse end} prevents any unflag operation. This derives a contradiction.
	\end{enumerate}
\end{proof}

\subsubsection{Quadtree properties}
In the Section, we use above lemmas to show that quadtree's properties are maintained during concurrent modifications. 

\begin{defi}
\label{defi: quadtree structure}
Our quadtree has these properties:
\begin{enumerate}
\item Two layers of dummy \textit{Internal} nodes are never changed.
\item An \textit{Internal} node $n$ has four children, which locate in the direction $d \in \{nw, ne, sw, se\}$ respectively according to their $\langle x, y, w, h\rangle$, or $\langle keyX, keyY\rangle$. 

For \textit{Internal} nodes reside on four directions:
\begin{itemize}
\item $n.nw.x = n.x$, $n.nw.y = n.y$; 
\item $n.ne.x = n.x + w/2$, $n.ne.y = n.y$;
\item $n.sw.x = n.x$, $n.sw.y = n.y + n.h/2$;
\item $n.se.x = n.x + w/2$, $n.se.y = n.y+h/2$,
\end{itemize}
and all children have their $w' = n.w / 2$, $h'=n.h/2$. 

For \textit{Leaf} nodes reside on four directions:
\begin{itemize}
\item 
$n.x \leq n.nw.keyX < n.x + n.w/2$, $n.y \leq n.nw.keyY < n.y + n.h / 2$; 
\item
$n.x + n.w / 2 \leq n.ne.keyX < n.x + n.w$, $n.y \leq n.ne.keyY < n.y + n.h / 2$;
\item
$n.x \leq n.sw.keyX < n.x + n.w/2$, $n.y + n.h/2 \leq n.sw.keyY < n.y + n.h$;
\item 
$n.x + n.w / 2 \leq n.se.keyX < n.x + n.w$, $n.y + n.h / 2 \leq n.se.keyY < n.y + n.h$.
\end{itemize}
\end{enumerate}
\end{defi}

\begin{lem}
\label{lem: dummy not change}
Two layers of dummy nodes are never changed.
\end{lem}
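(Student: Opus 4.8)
The plan is to read the ``two layers of dummy nodes'' as the root together with its four children --- call the latter the \emph{second-layer} nodes --- and to reduce the claim to two facts: the root is never changed, and no second-layer node is ever unlinked from the tree nor ever receives a \textit{Compress} $op$. The first fact is immediate from Observation~\ref{obs: root}: since none of the root's fields changes, its four child pointers point, for all time, to the four \textit{Internal} nodes created at initialization, whose $\langle x,y,w,h\rangle$ fields are likewise frozen by Observation~\ref{obs: space info}. So the only ways a second-layer node could be ``changed'' are (a) some $replace$ swings a pointer away from it, or (b) it is flagged \textit{Compress} and hence frozen in a half-removed state; I would dispose of (a) and (b) in turn.

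For (a): I would first observe that a second-layer node is installed only as a child of the root and nowhere else. Every $newChild$ passed to \texttt{helpReplace} is a node not yet in the tree (Lemma~\ref{lem: helpReplace}), and when it is a subtree it is the fresh output of \texttt{createNode} (Lemma~\ref{lem: createNode pre-conditions and post-condition}), whose \textit{Internal} nodes are newly allocated; hence no second-layer node is ever re-inserted below another node, and the root remains its unique, permanent parent. A second-layer node could therefore be unlinked only by a $replace$ acting on the root, which Observation~\ref{obs: root} forbids. Enumerating the $replace$ variants: for $ireplace$, $rreplace$ and $mreplace$ the old child is a \textit{Leaf} by Lemma~\ref{lem: op create}, so it is never a second-layer \textit{Internal} node; the $creplace$ case is settled by (b).

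For (b): a \textit{Compress} $op$ is created only inside \texttt{compress}, and only after the guard \texttt{gp == root} is found false, i.e. after \texttt{path.pop()} yields a node other than the root. Using the $find(keys)$ post-conditions on $path$ (Lemma~\ref{lem: find pre-condition post-conditions}, together with Corollary~\ref{cor: common path} that the root heads every search path: the traversal pushes the root first and thereafter descends one child at a time), I would argue that the node sitting immediately beneath any second-layer node in $path$ is the root. Consequently, whenever \texttt{compress} reaches a second-layer node as its current $p$, the next \texttt{path.pop()} returns the root and the routine returns before issuing any \texttt{cflag}; no second-layer node is ever \textit{Compress}-flagged, and by Lemma~\ref{lem: op properties} it never enters any other irreversible state. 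Putting (a), (b) and Observation~\ref{obs: root} together, each node of the two dummy layers stays an \textit{Internal} node in its original place forever, which is exactly part~1 of Definition~\ref{defi: quadtree structure}.

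The hard part will be the stack-discipline step in (b): proving, without circularity, that a second-layer node always sits directly above the root in $path$. I would break the apparent circle by invoking the first fact --- the root's children are permanently the original second-layer nodes --- so that the \texttt{find} loop, beginning at the root and pushing each \textit{Internal} node it descends through, necessarily pushes one of those four nodes exactly once and immediately after the root; the remainder is routine bookkeeping on how \texttt{compress} pops $gp$ and then reassigns $p \gets gp$.
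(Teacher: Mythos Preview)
Your proposal is correct and follows essentially the same route as the paper: a case split over $ireplace/rreplace/mreplace$ (which, via Lemma~\ref{lem: op create}, can only swap out \textit{Leaf}/\textit{Empty} nodes and hence never touch an \textit{Internal} dummy) versus $creplace$ (which is blocked by the \texttt{gp == root} guard in \texttt{compress}). The paper packages this as an induction on the sequence of successful replace operations, whereas you argue directly, but the two key observations are identical. Your treatment is in fact more careful on one point the paper glosses over: you explicitly justify the stack discipline ensuring that when \texttt{compress} has a second-layer node as its current $p$, the next \texttt{path.pop()} necessarily yields the root; the paper simply cites the guard line without arguing that $gp$ is indeed the root at that moment.
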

\begin{proof}
We prove the lemma by induction. Only replace operations will affect the structure of quadtree as we define in Definition~\ref{defi: quadtree structure}.

Initially the property holds as we initialize quadtree with two layers of dummy nodes and a layer of \textit{Empty} nodes.

Suppose that after $replace_{i}$ the lemma holds, we shall prove that after $replace_{i+1}$ the lemma is still true. For $ireplace$, $rrepalce$, and $mreplace$, the pre-condition~\ref{lem: helpReplace} ensures that three nodes are from the same $op$. Thus, as Lemma~\ref{lem: op create} further ensures that they only swing \textit{Leaf} nodes and \textit{Empty} node, dummy node are not changed. For $creplace$, since at $replace_{i}$ the property is true, the root is connected to a layer of dummy nodes. Lemma~\ref{lem: op create} shows that $op.oldChild$ is an \textit{Internal} node, and line~\ref{quadboost: compress check} shows that the root node's child pointer will never be changed. Therefore, $replace_{i+1}$ will not occur on the root node. 
\end{proof}

\begin{lem}
\label{lem: all children not changed}
Children of a node with \textit{Compress} $op$ will not be changed.
\end{lem}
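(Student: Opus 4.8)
The plan is to fix an \textit{Internal} node $n$ together with the instant $T_0$ at which $n.op$ first becomes \textit{Compress}, and to show that from $T_0$ onward none of $n$'s four child pointers is ever altered. The reduction is routine: the only CAS that modifies a child pointer is a \textit{replace} operation, and by Lemma~\ref{lem: op create} these split into (i) $ireplace$, $rreplace$, $mreplace$, whose $oldChild$ is a \textit{Leaf} node and which act with $n$ as the \emph{parent}, and (ii) $creplace$, whose $oldChild$ is an \textit{Internal} node and which acts with $n$ as the \emph{grandparent}. So it suffices to rule out each family occurring at or after $T_0$.

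For family (i) I would argue as follows. By Corollary~\ref{cor: flag before replace} and Lemma~\ref{lem: flag replace unflag transition}, any such replace on $n$ is preceded by a successful $iflag$/$rflag$/$mflag$ that set $n.op$ to \textit{Substitute} or \textit{Move}, and by Lemma~\ref{lem: op properties} such a flag succeeds only when $n.op$ is \textit{Clean}, while once $n.op$ is \textit{Compress} it never changes again. Since the $cflag$ that produces the \textit{Compress} state at $T_0$ reads a \textit{Clean} $op$, that earlier \textit{Substitute}/\textit{Move} $op$ must already have been $unflag$ged before $T_0$; combining with the ordering in Lemma~\ref{lem: flag replace unflag transition} (the successful replace precedes its $unflag$, it is the first replace belonging to that $op$, and later ``lagging'' $helpReplace$ attempts simply fail their CAS), the replace itself occurs strictly before $T_0$. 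Hence no $ireplace$/$rreplace$/$mreplace$ changes a child of $n$ at or after $T_0$.

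For family (ii) I would first establish the auxiliary fact that at $T_0$ all four children of $n$ are \textit{Empty} nodes. In the compress function, $check(n)$ (which verifies exactly this) is evaluated immediately before the successful $cflag$, and between the read of $n.op$ and that $cflag$ the value $n.op$ is continuously \textit{Clean} (otherwise, since $op$s are never reused by Lemma~\ref{lem: op ABA}, the $cflag$'s CAS would fail); during such a window no replace can touch $n$'s children, by Corollary~\ref{cor: flag before replace} and the transition lemmas (a parent-replace would need $n.op$ to be the relevant \textit{Substitute}/\textit{Move}, and a grandparent-$creplace$ would need an \textit{Internal} child that $check(n)$ has just certified absent). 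Then I would run an induction over the sequence of successful CAS operations starting at $T_0$ with the invariant ``$n.op$ is \textit{Compress} and every child pointer of $n$ still holds the \textit{Empty} node it held at $T_0$''. The inductive step is: a flag on $n$ cannot succeed (its $op$ is \textit{Compress}, never reverting to \textit{Clean} --- Lemma~\ref{lem: op properties}); an $ireplace$/$rreplace$/$mreplace$ on $n$ cannot occur (previous paragraph); and a $creplace$ with grandparent $n$ cannot succeed, because its $oldChild$ is an \textit{Internal} node (Lemma~\ref{lem: op create}) whereas by the induction hypothesis every child of $n$ is an \textit{Empty} node, so the $helpReplace$ finds no matching child pointer and does nothing.

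The main obstacle is precisely the $creplace$ case: a $creplace$ on $n$ is guarded not by $n$'s own $op$ but by the $op$ of the child of $n$ being compressed, so the clean ``no flag on $n$ succeeds'' argument does not apply directly, and one is forced through the all-children-\textit{Empty} invariant at $T_0$ and its maintenance. This is exactly the content that is later packaged as the ``only an \textit{Internal} node with all children \textit{Empty} can receive a \textit{Compress} $op$'' lemma; establishing it inline here keeps the argument non-circular. The remaining care is bookkeeping --- checking that slow helpers executing a stale $helpReplace$ genuinely cannot change anything, and that the window argument for the auxiliary fact correctly excludes a grandparent-$creplace$ on $n$ between $check(n)$ and $cflag$.
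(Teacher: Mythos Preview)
Your proof is correct and considerably more thorough than the paper's own two-line argument, which simply invokes Lemma~\ref{lem: op properties} together with Lemmas~\ref{lem: flag replace transition} and~\ref{lem: flag replace unflag transition} to conclude that ``for a node flagged with \textit{Compress} $op$, only itself will be unlinked.'' The paper's argument covers your family~(i) cleanly (since $ireplace$/$rreplace$/$mreplace$ on $n$ require $n.op$ to be \textit{Substitute}/\textit{Move}, impossible once $n.op$ is permanently \textit{Compress}) but is essentially silent on your family~(ii): a $creplace$ with $n$ as \emph{grandparent} is guarded by the child's $op$, not by $n.op$, so the permanence of $n$'s \textit{Compress} flag does not by itself exclude it.

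You close this gap by establishing inline that all four children of $n$ are \textit{Empty} at $T_0$ and then maintaining that invariant, so that any later grandparent-$creplace$ on $n$ finds no matching \textit{Internal} child and its CAS fails. This is exactly the content the paper defers to the \emph{next} lemma (Lemma~\ref{lem: all children empty}); by proving it here from first principles---using the window in which $n.op$ is continuously \textit{Clean} between the read of $pOp$ and the successful $cflag$, during which no parent-replace can install a non-\textit{Empty} child---you keep the dependency order non-circular. Your treatment of stale helpers (lagging $helpReplace$ attempts failing their CAS because the unique successful replace for that $op$ already consumed the $oldChild$) is also something the paper leaves implicit. In short, both proofs follow the same underlying idea, but yours actually discharges the $creplace$-as-grandparent case that the paper glosses over.
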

\begin{proof}
Based on Lemma~\ref{lem: op properties}, the node's $op$ will never be changed after being set to \textit{Compress}. Lemma~\ref{lem: flag replace transition} and Lemma~\ref{lem: flag replace unflag transition} indicate that for a node flagged with \textit{Compress} $op$, only itself will be unlinked. This establishes the Lemma.
\end{proof}

\begin{lem}
\label{lem: all children empty}
Only an \textit{Internal} node with all children \textit{Empty} could be attached with a \textit{Compress} object.
\end{lem}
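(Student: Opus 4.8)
The plan is to chase the only code path that installs a \textit{Compress} $op$ and then to rule out concurrent interference on that path. A \textit{Compress} object is created only at line~\ref{quadboost: compress Compress}, and the sole flag operation that ever uses it as a new value is the $cflag$ at line~\ref{quadboost: compress check}; since flag operations run only on \textit{Internal} nodes (pre-condition of \texttt{helpFlag}, Lemma~\ref{lem: helpFlag}), the node $p$ being flagged is automatically \textit{Internal}. Because line~\ref{quadboost: compress check} short-circuits through \texttt{!check(p)}, a successful $cflag$ implies \texttt{check(p)} returned \texttt{true}, i.e., at some times \emph{after} $T_0$ (the instant \texttt{pOp} is read at line~\ref{quadboost: compress p op}) and \emph{before} the $cflag$ linearizes, each of $p$'s four child slots was observed to hold an \textit{Empty} node. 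The real content of the lemma is then: none of those four slots gets repopulated between the moment \texttt{check} inspects it and the moment the \textit{Compress} $op$ is attached.

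First I would fix the interval $[T_0,T_2]$, with $T_0$ as above and $T_2$ the linearization point of the successful $cflag$, and show $p.op$ is constant on it. Line~\ref{quadboost: compress clean} guarantees \texttt{pOp} is the \textit{Clean} object, and $cflag$ expects \texttt{pOp} as the old value, so $p.op = \texttt{pOp}$ at $T_2$; by Lemma~\ref{lem: op ABA} a node never reuses an $op$, so had $p.op$ ever left \texttt{pOp} within $(T_0,T_2)$ it could not be \texttt{pOp} again at $T_2$. Hence $p.op$ equals the \textit{Clean} object \texttt{pOp} throughout $[T_0,T_2]$, and in particular the four reads inside \texttt{check(p)} all occur inside this interval.

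Next I would argue slot by slot. A child pointer of $p$ is written only by a \texttt{CAS} inside \texttt{helpReplace}, i.e., by a replace operation on $p$, of which there are only two kinds. An $ireplace$/$rreplace$/$mreplace$ on $p$ belongs to a \textit{Substitute} or \textit{Move} $op$ that was first flagged onto $p$, and by Lemma~\ref{lem: flag replace unflag transition} (together with Lemma~\ref{lem: op properties} and Lemma~\ref{lem: op ABA}, which localise the whole transition to the window during which that $op$ actually sits on $p$) such a replace can occur only while $p.op$ equals that non-\textit{Clean} $op$ --- impossible on $[T_0,T_2]$, where $p.op$ is \textit{Clean}. A $creplace$ on $p$ always writes a freshly created \textit{Empty} node (Lemma~\ref{lem: helpReplace}), so it never turns an \textit{Empty} slot into a non-\textit{Empty} one. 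Therefore every slot read as \textit{Empty} by \texttt{check} remains \textit{Empty} up to $T_2$, so all four children of $p$ are \textit{Empty} exactly when the \textit{Compress} $op$ is attached; Lemma~\ref{lem: all children not changed} then keeps them \textit{Empty} thereafter.

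The hard part is precisely this interference step: I must forbid a racing insert or move from repopulating a sibling slot of $p$ after \texttt{check} has walked past it, and the only leverage available is the no-ABA property of $op$s plus the fact, extracted from the transition lemmas, that any substitution on $p$ happens strictly inside the interval in which the corresponding non-\textit{Clean} $op$ is installed on $p$. Everything else --- that \texttt{check} was reached at all, that $p$ is \textit{Internal}, that compressions only ever write \textit{Empty} --- is immediate from the code and the earlier lemmas, so the write-up should spend its effort on pinning down the constancy of $p.op$ on $[T_0,T_2]$ and on the case split over replace types.
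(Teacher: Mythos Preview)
Your argument is correct and in fact more careful than the paper's own proof. The paper dispatches the lemma in two sentences: it notes that the pre-conditions of \texttt{check} are met (Corollary~\ref{cor: check pre-conditions}) and that \texttt{check(p)} is evaluated before the \texttt{helpFlag} at line~\ref{quadboost: compress check}; it draws no explicit conclusion about what happens to the four child slots between the reads inside \texttt{check} and the instant the $cflag$ linearizes. You, by contrast, explicitly close that race window: you pin $p.op$ to the single \textit{Clean} object \texttt{pOp} on the whole interval $[T_0,T_2]$ via the no-ABA property (Lemma~\ref{lem: op ABA}), then use the transition lemmas to argue that any \textit{Substitute}/\textit{Move} replace on $p$ would force $p.op$ to be non-\textit{Clean} during the replace, and finally observe that a $creplace$ with $p$ as grandparent can only write \textit{Empty}. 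This is genuinely stronger, and it is exactly the strength needed when the lemma is later combined with Lemma~\ref{lem: all children not changed} in the proof of Lemma~\ref{lem: replace properties remain}: there one needs the children to be \textit{Empty} \emph{at the moment} the \textit{Compress} $op$ is attached, not merely at the (earlier) moments \texttt{check} happened to sample them. The paper's proof leaves that gap implicit; your decomposition names and resolves it.
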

\begin{proof}
Corollary~\ref{cor: check pre-conditions} ensures that the pre-conditions of the check function is satisfied. It checks whether all children are \textit{Empty} before calling the helpFlag function at line~\ref{quadboost: compress check}. This establishes the Lemma.
\end{proof}

\begin{lem}
\label{lem: op Compress reachable}
An \textit{Internal} node whose $op$ is not \textit{Compress} is reachable from the root.
\end{lem}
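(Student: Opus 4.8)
The plan is to prove this by induction over the global sequence of successful replace operations $replace_{0}, replace_{1}, \dots$, since a node's reachability from the root changes only when a child pointer is swung, i.e.\ only at a successful replace --- $flag$ and $unflag$ operations touch only $op$ fields. Two facts make an induction over replace events enough. First, by Lemma~\ref{lem: op properties} an $op$ never leaves the \textit{Compress} state once it enters it, so any node that is \textit{Internal} with $op \neq Compress$ in some snapshot was also \textit{Internal} with $op \neq Compress$ in every earlier snapshot in which it existed (a newly created node is born \textit{Clean}). Second, a sub-tree returned by createNode that is never installed by a successful replace never appears in any snapshot's reachable set; I therefore read the claim as ranging over \textit{Internal} nodes that are, or have ever been, reachable from the root, and use the induction hypothesis IH$(i)$: in $snapshot_{T_{i}}$, every such \textit{Internal} node with $op \neq Compress$ is reachable from the root.

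For the base case, $snapshot_{T_{0}}$ is the initialized quadtree --- two layers of dummy \textit{Internal} nodes over a layer of \textit{Empty} nodes (cf.\ Lemma~\ref{lem: dummy not change}); every dummy is reachable and \textit{Clean}, so IH$(0)$ holds. For the inductive step, assume IH$(i)$ and let $replace_{i+1}$ swing the child pointer of $p$ from $oldChild$ to $newChild$, and let $m$ be \textit{Internal} with $m.op \neq Compress$ in $snapshot_{T_{i+1}}$. If $m$ belongs to the freshly installed sub-tree $newChild$ (possible only when $replace_{i+1}$ is an $ireplace$ or $mreplace$ whose argument is a sub-tree), then by Lemma~\ref{lem: createNode pre-conditions and post-condition} that sub-tree is a connected little quadtree all of whose nodes hang below $newChild$, so $m$ is reachable from $newChild$, hence from the root. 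Otherwise $m$ has appeared before; if $m$ were not reachable at $T_{i}$ then IH$(i)$ (contrapositive) forces $m.op = Compress$ at $T_{i}$, hence at $T_{i+1}$ by Lemma~\ref{lem: op properties}, contradicting $m.op \neq Compress$. So $m$ is reachable at $T_{i}$, and the only way $replace_{i+1}$ can orphan $m$ is for every root-to-$m$ path in $snapshot_{T_{i}}$ to traverse the edge $p \to oldChild$ --- any other such path survives into $snapshot_{T_{i+1}}$ --- so $m$ would lie in the sub-tree hanging below $oldChild$ in $snapshot_{T_{i}}$ (allowing $m = oldChild$).

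It remains to rule this out by cases on $replace_{i+1}$. For $ireplace$, $rreplace$, or $mreplace$, Lemma~\ref{lem: op create} makes $oldChild$ a \textit{Leaf} node (each of the two replaces inside a single move removes a \textit{Leaf}), and a \textit{Leaf} has no children, so the sub-tree below $oldChild$ contains no \textit{Internal} node. For $creplace$, Lemma~\ref{lem: op create} makes $oldChild = op.parent$ an \textit{Internal} node that carries a \textit{Compress} $op$ at $T_{i+1}$ (Lemma~\ref{lem: flag replace transition} together with Lemma~\ref{lem: op properties}), and by Lemma~\ref{lem: all children empty} and Lemma~\ref{lem: all children not changed} its four children are \textit{Empty} and unchanged, so the only \textit{Internal} node below $oldChild$ is $oldChild$ itself, whose $op$ is \textit{Compress}. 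Either way $m$ cannot be in that sub-tree, a contradiction; hence no in-scope $m$ becomes unreachable and IH$(i+1)$ holds.

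The step I expect to be the main obstacle is the reachability bookkeeping in the inductive step: making rigorous that a single pointer CAS orphans only nodes lying strictly below the replaced edge, and fixing the scope of the claim (dummies plus nodes ever installed by a successful replace) carefully enough that never-installed createNode garbage is not mistaken for a counterexample. Once that framework is in place, the argument reduces to citing the structural lemmas about which node each replace flavour can unlink (Lemma~\ref{lem: op create}) and about the shape of a node flagged \textit{Compress} (Lemmas~\ref{lem: all children empty}, \ref{lem: all children not changed}, and~\ref{lem: flag replace transition}).
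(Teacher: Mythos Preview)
Your proposal is correct and takes essentially the same approach as the paper: a case analysis on the type of successful replace, showing that $ireplace$/$rreplace$/$mreplace$ only swap out terminal (\textit{Leaf}/\textit{Empty}) nodes while $creplace$ only detaches an \textit{Internal} node that already carries a \textit{Compress} $op$ and whose children are all \textit{Empty}. The paper's proof leaves the induction over the sequence of replaces implicit and omits the bookkeeping about newly installed sub-trees and the scope of the claim; your version just makes that structure explicit and more rigorous.
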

\begin{proof}
We have to consider all kinds of replace operations.

For $ireplace$, $rrepalce$, and $mreplace$, by Lemma~\ref{lem: helpReplace} and Lemma~\ref{lem: op create} they use a new node to replace a \textit{Leaf} node or an \textit{Empty} node. Thus, \textit{Internal} nodes are still reachable from the root.

For $creplace$, by Lemma~\ref{lem: helpReplace} and Lemma~\ref{lem: op create}, it replaces \textit{Internal} nodes. Lemma~\ref{lem: flag replace transition} shows that it is preceded by $cflag$. Lemma~\ref{lem: all children empty} implies that all children are \textit{Empty} nodes. Therefore, an \textit{Internal} node with a  \textit{Compress} $op$ and its children are not reachable from the root only after the success of $creplace$.
\end{proof}

Next we define \textit{active} set and \textit{inactive} set for different kinds of nodes. We say a node is \textit{moved} if the moved function turns true. For an \textit{Internal} node or an \textit{Empty} node, if it is reachable from the root in $snapshot_{T_{i}}$, it is active; otherwise, it is \textit{inactive}. For a \textit{Leaf} node, if it is reachable from the root in $snapshot_{T_{i}}$ and not moved, it is active; otherwise, it is \textit{inactive}. We denote $path(keys^{k}), 0 \leq k < \vert replace_{i} \vert$ as a stack of nodes pushed by $find(keys)$ in a snapshot. We define $physical\_path(keys^{k})$ to be the path for $keys^{k}$ in $snapshot_{T_{i}}$, consisting of a sequence of \textit{Internal} nodes with a \textit{Leaf} node or an \textit{Empty} node at the end, which is the actual path in the snapshot. We say a subpath of $path(keys^{k})$ is an $active\_path$ if all nodes from the root to node $n \in path(keys^{k})$ are active due to their pushing order. Hence a physical path with a \textit{Leaf} node is active only if $path(keys^{k})$ is an active path and the \textit{Leaf} node is active.

\begin{lem}
\label{lem: path Compress end}
There's at most a node with \textit{Compress} $op$ in the subpath of $path(keys^{k})$ that is active, and it resides in the end of the path if exists.
\end{lem}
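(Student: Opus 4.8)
The plan is to argue that a \textit{Compress}-flagged node can occupy at most the deepest position of the active portion of $path(keys^{k})$; since that portion has exactly one deepest node, the two assertions (``at most one'' and ``at the end'') collapse into this single claim.

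First I would isolate the structural ingredient that makes the argument work: by Lemma~\ref{lem: all children empty} an \textit{Internal} node is flagged with a \textit{Compress} $op$ only when its four children are all \textit{Empty}, by Lemma~\ref{lem: all children not changed} those children are never changed afterwards, and by the last clause of Lemma~\ref{lem: op properties} a \textit{Compress} $op$ is permanent; hence at \emph{every} time after its flagging, every child of an \textit{Internal} node carrying a \textit{Compress} $op$ is an \textit{Empty} node. On the other side, by Lemmas~\ref{lem: find continueFind push nodes} and~\ref{lem: continueFind continueFindCommon push nodes} every node stored in $path(keys^{k})$ is an \textit{Internal} node, and by the post-conditions of $find(keys)$ (Lemma~\ref{lem: find pre-condition post-conditions}) a node $n_{t}$ sitting immediately on top of $n_{t-1}$ in the stack was, at the moment it was pushed, the child of $n_{t-1}$ on the unique search direction $d \in \{nw, ne, sw, se\}$ determined by $keys^{k}$ (unique because the region fields of $n_{t-1}$ never change, Observation~\ref{obs: space info}).

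Now I would unwind the definition of $active\_path$: the active subpath $n_{0}, n_{1}, \dots, n_{m}$ of $path(keys^{k})$ in $snapshot_{T_{i}}$ is a genuine root-rooted chain in that snapshot, so whenever $n_{t+1}$ belongs to it, $n_{t+1}$ is \emph{currently} the direction-$d$ child of $n_{t}$. Suppose some $n_{t}$ with $t < m$ carried a \textit{Compress} $op$. Then its direction-$d$ child at $T_{i}$ is an \textit{Empty} node by the ingredient above, whereas $n_{t+1}$, which must be exactly that child, is an \textit{Internal} node --- a contradiction. Therefore a \textit{Compress}-flagged node can occur on the active subpath only as its deepest node $n_{m}$, i.e.\ at ``the end of the path'', and being the unique deepest node there can be at most one such.

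The main obstacle I anticipate is the step where ``$n_{t+1}$ was a child of $n_{t}$ when pushed'' must be upgraded to ``$n_{t+1}$ is a child of $n_{t}$ now'': the post-conditions of $find(keys)$ only record a parent--child relationship at some past instant, so this upgrade has to be extracted from the meaning of an active node (reachability from the root in $snapshot_{T_{i}}$) together with the facts that $path(keys^{k})$ stores only \textit{Internal} nodes and that a node once detached from the tree is never re-linked (no $replace$ ever installs a node that has previously been in the quadtree, by the pre-conditions of $helpReplace$, Lemma~\ref{lem: helpReplace}). Everything else reduces to direct citations of the lemmas above.
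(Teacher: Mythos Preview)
Your proposal is correct and follows the same core argument as the paper: a \textit{Compress}-flagged \textit{Internal} node has only \textit{Empty} children (Lemma~\ref{lem: all children empty}), so it cannot have an \textit{Internal} successor in the path; hence it can only be the last node of the active subpath, which simultaneously yields uniqueness. The paper's own proof is a two-line contradiction that cites only Lemma~\ref{lem: all children empty} and leaves the ``was a child when pushed'' versus ``is a child in $snapshot_{T_i}$'' step entirely implicit; your version is strictly more careful on this point, and the extra lemmas you invoke (Lemmas~\ref{lem: all children not changed}, \ref{lem: op properties}, \ref{lem: helpReplace}) are exactly what is needed to close that gap.
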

\begin{proof}
Assume that there are two nodes $n$ and $n'$ reside in the $active\_path(keys^{k})$ with \textit{Compress} $op$. We prove the lemma by contradiction. Because $n$ and $n'$ are active with \textit{Compress} $op$, all children of $n$ and $n'$ are \textit{Empty} (Lemma~\ref{lem: all children empty}. Since $n$ and $n'$ are \textit{Internal} nodes in the same $path$, it derives a contradiction. 

For the second part of the proof, if $n$ resides in other places in  $path$, its children should be \textit{Empty} by Lemma~\ref{lem: all children empty}. Therefore, only if $n$ is on the end of path, conditions are satisfied.
\end{proof}

\begin{lem}
\label{lem: nodes above active}
For $path(keys^{k})$, if $n_{t}$ is active in $snapshot_{T_{i}}$, then $n_{0}, ... , n_{t-1}$ above $n_{t}$ are active.
\end{lem}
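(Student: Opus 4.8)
The plan is to prove the statement by reverse induction along the recorded stack. Writing the path as $n_{0}$ (pushed first, the root) through $n_{t}$ (pushed last, nearest the terminal), the base case is the hypothesis that $n_{t}$ is active in $snapshot_{T_{i}}$, and $n_{0}=root$ is always active (Observation~\ref{obs: root}). So the entire content reduces to a single inductive step: \emph{if a path node $n_{j+1}$ is active in $snapshot_{T_{i}}$, then its immediate predecessor $n_{j}$ in $path(keys^{k})$ is active in $snapshot_{T_{i}}$.} Here $n_{j+1}$ is, by the post-conditions of $find(keys)$ (Lemma~\ref{lem: find findCommon process} via Lemma~\ref{lem: find pre-condition post-conditions}, clause~5 of Definition~\ref{defi: find conditions}), on some direction $d\in\{nw,ne,sw,se\}$ of $n_{j}$ at a time $T_{i1}\le T_{i}$, and both $n_{j}$ and $n_{j+1}$ are \textit{Internal} nodes (Lemma~\ref{lem: find continueFind push nodes} and Lemma~\ref{lem: continueFind continueFindCommon push nodes}).

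I would prove the step by contradiction: assume $n_{j}$ is \emph{not} active at $T_{i}$. By the contrapositive of Lemma~\ref{lem: op Compress reachable}, $n_{j}.op$ is a \textit{Compress} object at $T_{i}$; by Lemma~\ref{lem: op properties} this $op$ was installed at some $T_{c}\le T_{i}$ and never changed thereafter. By Lemma~\ref{lem: all children empty} all four children of $n_{j}$ were \textit{Empty} at $T_{c}$, and by Lemma~\ref{lem: all children not changed} they remain \textit{Empty} at every time $\ge T_{c}$. I then split on the position of $T_{i1}$ relative to $T_{c}$. If $T_{i1}\ge T_{c}$, the direction-$d$ child of $n_{j}$ at $T_{i1}$ is at once an \textit{Empty} node and the \textit{Internal} node $n_{j+1}$ — immediate contradiction.

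The remaining case $T_{i1}<T_{c}$ is where the work lies: between $T_{i1}$ and $T_{c}$ the pointer $n_{j}.d$ changed away from $n_{j+1}$, and I must show this forces $n_{j+1}$ to be inactive at $T_{i}$, contradicting the inductive hypothesis. The first such change is a successful \texttt{CAS} on $n_{j}.d$ whose \texttt{oldChild} is exactly $n_{j+1}$; since by Lemma~\ref{lem: op create} the $oldChild$ of any $ireplace$/$rreplace$/$mreplace$ is a \textit{Leaf} node, this \texttt{CAS} must be a $creplace$ with grandparent $n_{j}$ and parent $n_{j+1}$, which detaches $n_{j+1}$ and installs a fresh \textit{Empty} node. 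Because every replace operation installs a $newChild$ that was never previously in the quadtree (the precondition in Lemma~\ref{lem: helpReplace}), $n_{j+1}$ is never re-linked anywhere; and because a valid quadtree snapshot (Definition~\ref{defi: quadtree structure}) together with the immutability of an \textit{Internal} node's region (Observation~\ref{obs: space info}) gives each reachable \textit{Internal} node a unique parent, $n_{j+1}$ is unreachable from the root at every time $\ge T_{c}$, in particular at $T_{i}$. Contradiction. Hence $n_{j}$ is active, and the induction carries down to $n_{0}$.

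The main obstacle I anticipate is precisely this last branch: arguing that once the $find$-recorded edge $n_{j}\to n_{j+1}$ has been broken by a compression, $n_{j+1}$ cannot still be reachable. The three ingredients making this go through — only $creplace$ can overwrite a pointer currently holding an \textit{Internal} node, a $creplace$ on that edge detaches $n_{j+1}$, and a detached \textit{Internal} node never returns — are all available, but they must be assembled carefully (in particular one should not try to shortcut via Lemma~\ref{lem: path Compress end}, since that lemma already speaks of the \emph{active} subpath and is therefore essentially equivalent to what is being proved here). Everything else — the base cases, the treeness appeal to Definition~\ref{defi: quadtree structure}, and the descent from $n_{t}$ to $n_{0}$ — is routine.
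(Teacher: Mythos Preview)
Your proof is correct and takes a more careful route than the paper's. The paper dispatches the lemma in three lines: by Lemma~\ref{lem: op Compress reachable}, any \textit{Internal} node whose $op$ is not \textit{Compress} is active; by Lemma~\ref{lem: path Compress end}, a \textit{Compress}-flagged node can sit only at the end of the path; hence $n_{0},\dots,n_{t-1}$ carry non-\textit{Compress} $op$s and are active. Your suspicion about that shortcut is well founded: Lemma~\ref{lem: path Compress end} is stated only for the \emph{active} subpath, so invoking it to decide which path nodes are active is circular as written, and the paper's own two-line proof of that lemma silently elides exactly the temporal case ``$n_{j+1}$ was read \emph{before} $n_{j}$ acquired its \textit{Compress} $op$.'' That is precisely the $T_{i1}<T_{c}$ branch you work out explicitly, identifying the first overwrite of $n_{j}.d$ as a $creplace$ (the only replace whose $oldChild$ can be \textit{Internal}, by Lemma~\ref{lem: op create}) and then arguing via Lemma~\ref{lem: helpReplace} that the detached $n_{j+1}$ is never relinked, hence inactive at $T_{i}$. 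What the paper's route buys---if one grants Lemma~\ref{lem: path Compress end} the stronger, activeness-free reading its proof tacitly assumes---is brevity; what your route buys is a self-contained argument that actually closes the temporal gap both lemmas need. One small wrinkle: Lemma~\ref{lem: op create} should really say ``\textit{Leaf} or \textit{Empty}'' rather than just ``\textit{Leaf}'' for the $oldChild$ of a \textit{Substitute} or \textit{Move} object (the terminal returned by $find$ may be either), but this does not affect your use of it, since in either case the $oldChild$ is not \textit{Internal}.
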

\begin{proof}
By Lemma~\ref{lem: op Compress reachable}, nodes with $op$ other than \textit{Compress} are reachable from the root. Lemma~\ref{lem: path Compress end} indicates that a node with a \textit{Compress} $op$ is on the end of the path. Therefore, other nodes do not have a \textit{Compress} $op$, which establishes the lemma.
\end{proof}

\begin{lem}
\label{lem: lca active}
If in the $snapshot_{T_{i}}$, $n$ is the LCA on $active\_path$ for $oldKey$ and $newKey$. Then at $T_{i1}, T_{i1} > T_{i}$, $n$ is still the LCA on $active\_path$ for both $oldKey$ and $newKey$ if it is active.
\end{lem}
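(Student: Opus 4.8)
The plan is to argue that the LCA is pinned down by the quadtree's fixed geometry rather than by its current shape. Two immutability facts make this precise: by Observation~\ref{obs: space info} the region $\langle x,y,w,h\rangle$ of every \textit{Internal} node never changes, and by Observation~\ref{obs: leaf key} the coordinates $\langle keyX,keyY\rangle$ of every key never change, so the direction a key $k$ takes out of an \textit{Internal} node $m$, namely $getQuadrant(m,k)$, is time‑invariant. I would first unfold the hypothesis: that $n$ is the LCA on $active\_path$ for $oldKey$ and $newKey$ in $snapshot_{T_{i}}$ means $n$ is an active \textit{Internal} node lying on both active search paths whose two searches leave $n$ through \emph{different} children, i.e. $getQuadrant(n,oldKey)\neq getQuadrant(n,newKey)$ (if the two searches instead ended at a common terminal there would be no branching node to single out as the LCA). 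Since this inequality mentions only $n$'s region and the two key coordinates, it persists at every later time at which $n$ is an \textit{Internal} node.

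Now fix $T_{i1}>T_{i}$ and assume $n$ is active then. Being active, $n$ is reachable from the root, so every node on the root‑to‑$n$ path in $snapshot_{T_{i1}}$ is reachable, hence active (cf.\ Lemma~\ref{lem: nodes above active}, Lemma~\ref{lem: op Compress reachable}); moreover the structural invariant (Definition~\ref{defi: quadtree structure}, which holds in every snapshot) forces each of these nodes to carry the node toward $n$ as its child in the direction dictated by its own region, and Lemma~\ref{lem: dummy not change} fixes the two dummy layers at the top. Hence the root‑to‑$n$ path at $T_{i1}$ coincides with the canonical nested chain of regions containing $n$'s region. A root‑to‑terminal search for any key $k\in n.region$ is then forced down exactly this chain, since at each \textit{Internal} node of the chain the unique child region that contains $n$'s region is precisely the one containing $k$. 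Applying this to $oldKey$ and to $newKey$ — both of which lie in $n$'s region because they occupy different quadrants of it — both searches reach $n$, and at $n$ they split by the time‑invariant inequality above. No node is common to the two searches strictly below $n$, so $n$ is again the LCA; and since the root‑to‑$n$ segment is entirely active it is a prefix of both $active\_path$s at $T_{i1}$, which is the claim. Corollary~\ref{cor: common path} guarantees this common prefix is nonempty, so the statement is well posed.

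The main obstacle is the bookkeeping in the middle step: carrying out the induction up the tree that propagates activeness from $n$ to the root and that pins the ancestor chain to the canonical region nesting, so that the purely geometric descent from the root is compelled to land on $n$ even though the tree may have been reshaped by intervening \textit{Substitute}, \textit{Compress}, and \textit{Move} operations between $T_{i}$ and $T_{i1}$. A subtler point to dispatch is the degenerate configuration in which at $T_{i}$ both keys route through $n$ into a single child that happens to be a terminal: there $n$ violates the branching inequality, so it is not the LCA in the sense used here and the hypothesis excludes it — which is exactly why the statement must be phrased in terms of the branching node rather than merely "the deepest shared node," whose status as a terminal is not preserved under splitting.
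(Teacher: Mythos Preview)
Your argument is correct and its backbone coincides with the paper's: both invoke Lemma~\ref{lem: nodes above active} together with Corollary~\ref{cor: common path} to conclude that once $n$ is active at $T_{i1}$, the entire root-to-$n$ segment is active and hence lies on both active paths. The paper's proof in fact stops there --- it is three sentences and does not argue why $n$ remains the \emph{lowest} common ancestor rather than merely a common one. You supply that missing piece explicitly via Observations~\ref{obs: leaf key} and~\ref{obs: space info}: since $getQuadrant$ reads only immutable region data and immutable key coordinates, the branching condition $getQuadrant(n,oldKey)\neq getQuadrant(n,newKey)$ is time-invariant, so the two searches still split at $n$. Your version is the more complete of the two; the paper evidently treats the preservation of divergence as self-evident from the static geometry and leaves it implicit. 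The inductive ``bookkeeping'' you flag as the main obstacle is exactly what the paper packages into the single citation of Lemma~\ref{lem: nodes above active}, so you need not unfold it further.
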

\begin{proof}
Nodes from root to the LCA node shares a common path (Observation~\ref{cor: common path}. Because the LCA node is active, nodes above the LCA node are active. Hence, the subpath from the root to the LCA node is active by Lemma~\ref{lem: nodes above active}.
\end{proof}

\begin{lem}
\label{lem: compress push more once}
Nodes a with a \textit{Compress} $op$ will not be pushed into $path$ more than once.
\end{lem}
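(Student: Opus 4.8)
The plan is to fix one execution of a basic operation and show that a node $n$ cannot be pushed onto $path$ (or $rPath$/$iPath$) twice while it carries a \textit{Compress} $op$. First I would record the structural facts that make a \textit{Compress} node behave like a terminal of the traversal: once $n.op$ is \textit{Compress} it stays \textit{Compress} (Lemma~\ref{lem: op properties}, part 4), all of $n$'s children are \textit{Empty} (Lemma~\ref{lem: all children empty}), and those children never change afterwards (Lemma~\ref{lem: all children not changed}). Consequently, whenever a call of $find$ (or the $find$ nested inside $findCommon$) reaches $n$ at a moment when $n.op$ is \textit{Compress}, it pushes $n$, then moves via $getQuadrant$ to an \textit{Empty} child and exits the \textbf{while} loop; hence $n$ is the node on top of the stack returned by that call, and the returned $pOp$ equals $n.op$, i.e.\ a \textit{Compress} object (this also re-derives the ``end of path'' part of Lemma~\ref{lem: path Compress end}).

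Next I would examine the iteration that pops such an $n$. In $insert$/$remove$ the top of $path$ is popped into $p$, so $p=n$; since $pOp$ is a \textit{Compress} object the branch guarded by ``$pOp$ is \textit{Clean}'' is skipped and $help(pOp)$ is invoked, which runs $helpCompress$ and hence a $creplace$ that swings $n$'s parent's pointer to a \emph{fresh} \textit{Empty} node. By Corollary~\ref{cor: first replace succeed} the first $creplace$ belonging to this $op$ succeeds, so once $help(pOp)$ returns, $n$ is no longer a child of its parent. (If instead $n$ is popped inside the \textbf{while} loop of $continueFind$/$continueFindCommon$, that loop itself calls $helpCompress$ on it with the same effect, and never re-pushes it, because only the first non-\textit{Compress} node popped becomes the new start of $find$.) Then $continueFind$ restarts $find$ from the first non-\textit{Compress} ancestor $q'$ it popped, having run $helpCompress$ on every \textit{Compress} node crossed on the way up; since each such $creplace$ (again the first one per $op$, by Corollary~\ref{cor: first replace succeed}) has replaced the corresponding node by a fresh \textit{Empty} node, the unique root-to-terminal search path for the keys of this operation — which is well defined by the established invariant that the quadtree keeps its structural properties in every snapshot (Definition~\ref{defi: quadtree structure}) — now leaves the chain of formerly-compressed nodes through an \textit{Empty} node. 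Hence the restarted $find$, and every later $find$ of this operation, reaches that \textit{Empty} node and stops before it could reach $n$, so it never re-pushes $n$. A symmetric argument applies to $rPath$/$iPath$ in $findCommon$/$continueFindCommon$, where the restart is from the LCA (or from above it) rather than from $q'$.

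Finally I would close with the stack discipline: every call of $find$/$findCommon$ pushes a contiguous downward chain, and $continueFind$/$continueFindCommon$ only shorten the stack before calling $find$, so at no instant does a stack contain the same node twice. Therefore a second push of $n$ would have to be preceded by a pop of $n$ at which $n.op$ is \textit{Compress}, and the previous paragraph shows that after such a pop $n$ is unreachable along the operation's search path and is never pushed again; thus $n$ is pushed at most once while holding a \textit{Compress} $op$, which is the claim. The step I expect to be the main obstacle is the sharpening in the second paragraph: proving that, under arbitrary interleavings with other threads' $creplace$, $mreplace$, and $ireplace$ operations, $n$ genuinely stays off the deterministic search path after being compressed away. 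This requires the freshness of the replacement \textit{Empty} nodes (so $n$ cannot be re-linked), Lemma~\ref{lem: op Compress reachable} to pin down which ancestors of the former chain remain reachable, and the spatial invariants of Definition~\ref{defi: quadtree structure} to rule out $n$ re-appearing as a descendant of the restart node.
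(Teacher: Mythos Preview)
Your proposal is correct and, in fact, considerably more thorough than the paper's own argument. The paper's proof is very brief: it notes that a \textit{Compress} $op$ is never unflagged (Lemma~\ref{lem: flag replace transition}), and then simply points to the code lines in \textit{continueFind} and \textit{continueFindCommon} where the popped node's $op$ is re-read and, if it is \textit{Compress}, the node is skipped rather than used as the restart point for the next \textit{find}. That is the entire argument; the paper does not spell out why the restarted \textit{find}, started from some non-\textit{Compress} ancestor, cannot descend back to $n$.

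Your plan covers this same pop-and-check mechanism but then goes further, supplying exactly the reachability step the paper omits: \textit{helpCompress} is invoked on each \textit{Compress} node popped, the first $creplace$ belonging to that $op$ must succeed (Corollary~\ref{cor: first replace succeed}), and since every replacement node is fresh (pre-conditions of \textit{helpReplace}), $n$ can never be re-linked anywhere in the tree. Together with the permanence of $n$'s \textit{Empty} children (Lemmas~\ref{lem: all children empty} and~\ref{lem: all children not changed}), this makes $n$ unreachable along the deterministic search path, so the restarted \textit{find} cannot push it again. The ``main obstacle'' you flag is real, and the ingredients you list are the right ones; the cleanest way to discharge it is precisely the freshness observation: after $creplace$, no child pointer anywhere equals $n$, and no future \textit{helpReplace} can install $n$ because its $newChild$ argument is always newly created. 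In short, your route subsumes the paper's and closes a gap the paper leaves implicit.
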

\begin{proof}
Supposing that a node $n$ is active in $snapshot_{T_{i}}$, and it becomes inactive at $T_{i1} > T_{i}$. Besides, suppose that $op$ of $n$ is \textit{Compress} and pushed at $T_{i2}, T_{i} < T_{i2} < T_{i1}$. We prove that after $T_{i2}$, $n$ will never be pushed again.

Lemma~\ref{lem: flag replace transition} shows that a node with a \textit{Compress} $op$ will never be unflagged. Hence, based on Lemma~\ref{lem: find pre-condition post-conditions} and Lemma~\ref{lem: find pre-condition post-conditions}), we have to prove that nodes in $path$ with a \textit{Compress} $op$ will not be pushed into again. For the insert operation and the remove operation, $pOp$ is checked at line~\ref{quadboost: continueFind p} and line~\ref{quadboost: continueFind pOp} before the next find operation. For the move operation, $rOp$ is checked at line~\ref{quadboost: continueFindCommon insert il ip}, line~\ref{quadboost: continueFindCommon insert il op}, line~\ref{quadboost: continueFindCommon helpCompress common}, and $iOp$ is checked at line~\ref{quadboost: continueFindCommon remove rl op} and line~\ref{quadboost: continueFindCommon remove rl rp} before calling the findCommon function and the find function. Hence, at $T_{i2} < T_{i1}$, $op$ must be checked before $find(keys)$ so that nodes with a \textit{Compress} $op$ will not be pushed more than once.
\end{proof}

Lemma~\ref{lem: path Compress end} shows that for $active\_path$, active nodes with a \textit{Compress} $op$ will only reside at the end. If other nodes pushed before $n$ will become inactive, we can pop them beforehand until reaching the last node an $op$ other than \textit{Compress}. 

\begin{lem}
\label{lem: find snapshot}
After the invocation of $find(keys)$ which reads $l^{t}$, there is a snapshot, where the path from the root to $l^{t}$ is $physical\_path(keys^{t})$.
\end{lem}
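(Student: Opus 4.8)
The plan is to read off, from a concrete execution of $find(keys)$, the chain of nodes it actually dereferences while computing the tuple for key-index $t$, and then to pin down a single instant at which that chain is exactly the real root-to-$l^{t}$ path. Write $r_{0}=root, r_{1},\dots,r_{m}$ for the \textit{Internal} nodes the find-loop pushes onto $path^{t}$ and $r_{m+1}=l^{t}$ for the \textit{Leaf}/\textit{Empty} node it returns, and let $\tau_{i}$ be the instant at which iteration $i$ reads $r_{i+1}$ out of $r_{i}$'s quadrant child $r_{i}.d_{i}$, where $d_{i}=getQuadrant(r_{i},keys^{t})$ is fixed by $r_{i}$'s immutable space fields (Observation~\ref{obs: space info}). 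From the post-conditions of $find(keys)$ already established (Lemma~\ref{lem: find pre-condition post-conditions}, applying the ``$op$ read before child'' clause to every iteration), the $\tau_{i}$ are strictly increasing and the edge $r_{i}\to r_{i+1}$ in direction $d_{i}$ holds at $\tau_{i}$. I then want an instant $T^{*}$ at which all $m+1$ edges hold simultaneously: walking the quadrant children of $keys^{t}$ from $root=r_{0}$ in $snapshot_{T^{*}}$ then reproduces $r_{0},\dots,r_{m+1}$, so $(r_{0},\dots,l^{t})$ is $physical\_path(keys^{t})$ in $snapshot_{T^{*}}$, which is the claim.

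The structural input is one fact about how child pointers move: a pointer that currently points to an \textit{Internal} node $n$ can change only by the $creplace$ that unlinks $n$. This follows from Lemma~\ref{lem: op create} (the $oldChild$ of a \textit{Substitute} or \textit{Move} $op$ is a \textit{Leaf} node, so $ireplace$, $rreplace$, $mreplace$ never touch a pointer aimed at an \textit{Internal} node, whereas $creplace$'s $oldChild$ is an \textit{Internal} node) together with the transition lemmas that make every pointer write a $replace$ belonging to some $op$. Moreover, every $creplace$ of $n$ is preceded by a $cflag$ on $n$ (Lemma~\ref{lem: flag replace transition}); at that $cflag$ all four children of $n$ are \textit{Empty} (Lemma~\ref{lem: all children empty}); and by Lemma~\ref{lem: op properties} the \textit{Compress} $op$ is permanent while by Lemma~\ref{lem: all children not changed} those four children are then frozen forever.

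The argument splits on whether $r_{m}$ (the parent of $l^{t}$) is $creplaced$ at some time $\le\tau_{m}$. A short downward induction using the frozen-children fact shows that the recorded nodes $creplaced$ by $\tau_{m}$ form a bottom suffix of the path whose \emph{earliest} $creplace$ is that of $r_{m}$: if $r_{i+1}$ with $i<m$ is $creplaced$ by $\tau_{m}$, then at the preceding $cflag$ its $d_{i+1}$-child is already \textit{Empty} and frozen, so the \textit{Internal} node $r_{i+2}$ that find saw there at $\tau_{i+1}$ must have been $creplaced$ strictly earlier, hence also by $\tau_{m}$. So ``some recorded node is $creplaced$ by $\tau_{m}$'' is equivalent to ``$r_{m}$ is''. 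If $r_{m}$ is not $creplaced$ by $\tau_{m}$, then no recorded node is, every edge $r_{i}\to r_{i+1}$ that held at $\tau_{i}\le\tau_{m}$ still holds at $\tau_{m}$ (pointers to \textit{Internal} nodes are stable until a $creplace$), the deepest edge holds at $\tau_{m}$ by construction, and $T^{*}:=\tau_{m}$ works. If $r_{m}$ \emph{is} $creplaced$ by $\tau_{m}$, then $r_{m}.d_{m}=l^{t}$ is one of $r_{m}$'s frozen \textit{Empty} children, and I take $T^{*}$ to be any instant strictly between $cflag(r_{m})$ and $creplace(r_{m})$ that is also at or after $\tau_{m-1}$; such an instant exists because $cflag(r_{m})<creplace(r_{m})$ and because find read $r_{m}$ as $r_{m-1}$'s child at $\tau_{m-1}$, so $r_{m}$ was still linked then, i.e. $\tau_{m-1}<creplace(r_{m})$. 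At such a $T^{*}$ none of $r_{1},\dots,r_{m}$ has been $creplaced$ (within the suffix, shallower nodes are unlinked later, and nodes outside the suffix are untouched up to $\tau_m$), so every edge $r_{i}\to r_{i+1}$ coexists; consequently each $r_{i}$ with $i\le m-1$ still points to an \textit{Internal} child at $T^{*}$, hence its $op$ is not \textit{Compress} and it is reachable (Lemma~\ref{lem: op Compress reachable}, with Lemma~\ref{lem: nodes above active} for the whole chain), making $(r_{0},\dots,l^{t})$ the physical path of $snapshot_{T^{*}}$ in this case too.

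The main obstacle is exactly this second case: a $find$ may descend into a subtree that a concurrent thread is compressing, so the returned $l^{t}$ need not lie on the current physical path at all and $\tau_{m}$ is then the wrong snapshot. The delicate points are the cascade argument --- showing that all such damage is confined to a bottom suffix of the recorded path, so the shallower, already-traversed edges are untouched --- and the non-emptiness of the retreat window between $cflag(r_{m})$ and $creplace(r_{m})$, which is what lets me exhibit an earlier legitimate snapshot. Everything else is routine bookkeeping with the post-conditions of $find(keys)$ and the flag/replace transition lemmas.
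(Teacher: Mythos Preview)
Your argument is correct and takes a genuinely different route from the paper's own proof. The paper proceeds by induction along the pushed sequence $n_{0},n_{1},\dots$, maintaining at each step a snapshot in which the prefix $(root,\dots,n_{k})$ is the physical path, and then advancing (or keeping) that snapshot when $n_{k+1}$ is read; the case split there is on whether a replace hits $n_{k}$ versus $n_{k+1}$ between the two reads. You instead fix the whole chain at once and make one global case split on whether the deepest \textit{Internal} node $r_{m}$ has already been $creplaced$ by the final read time $\tau_{m}$. The engine of your proof is the observation that a child pointer currently aimed at an \textit{Internal} node can be rewritten only by the $creplace$ of that node, together with the bottom-up cascade of $creplace$ times (a node can be $cflagged$ only after all four children are \textit{Empty}, so a deeper recorded node must be unlinked strictly earlier). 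This makes the ``retreat window'' $[\max(\tau_{m-1},cflag(r_{m})),\,creplace(r_{m}))$ in your second case both non-empty and sufficient: no recorded node is unlinked there, and $r_{m}$'s child is already the frozen \textit{Empty} $l^{t}$.

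What each approach buys: the paper's induction is lightweight and avoids naming any global invariant about $creplace$ ordering, but its case analysis is terse to the point of being hard to audit. Your argument is more structural---it isolates precisely the stability property of \textit{Internal}-valued child pointers and the downward monotonicity of compression times---and yields a single explicit snapshot $T^{*}$ rather than a sequence of them. One small point to tighten: your claim that the $\tau_{i}$ are strictly increasing is really a statement about the concrete execution (each child is read after its parent is pushed), not something the stated post-conditions of $find$ give directly; since $find(keys)$ includes $continueFind$, which may pop and re-push a node, you should say that $\tau_{i}$ is the \emph{last} time the edge $r_{i}\to r_{i+1}$ is read before the final return, which keeps the ordering intact. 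The closing appeal to Lemmas on reachability is harmless but unnecessary: once all $m{+}1$ edges coexist at $T^{*}$ and $r_{0}=root$, the chain is by definition $physical\_path(keys^{t})$.
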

\begin{proof}
We prove the lemma by induction.

In the base case, where $path$ starts from the root node, the claim is true. 

We consider the pushing sequence as $n_{0}, n_{1}, ..., n_{k}$.  Suppose that for first $k$ nodes, the path from the root to $n_{k}$ is the $physical\_path$. We shall prove that for $n_{k+1}$, the lemma is true. If there is no $replace$ operation before reading $n_{k+1}$, it is obvious that we can linearized it as the same snapshot before pushing $n_{k}$.

Next, we assume that $replace$ occurs on $n_{k}$ before reading $n_{k+1}$, and results in $snapshot$. There are two cases:
\begin{enumerate}
\item Consider if $replace$ occurs after reading $n_{k+1}$. We could have $snapshot_{k+1} = snapshot_{k}$ because $n_{k+1}$ is connected to an active node in a snapshot so that $n_{k+1}$ is also reachable from the root.

\item Consider if $replace$ occurs before reading $n_{k+1}$. If the replace operation changes $n_{k}$, then $n_{k}$ is flagged with a \textit{Compress} $op$ by Lemma~\ref{lem: flag replace transition}. Then we have $snapshot_{k+1} = snapshot_{k}$ because Lemma~\ref{lem: all children not changed} demonstrates that a node flagged with a \textit{Compress} has no successive replace operations on its child pointers. Otherwise, if the replace operation changes $n_{k+1}$, $n_{k}$ is not changed by Lemma~\ref{lem: flag replace unflag transition}. Hence, we have $snapshot_{k+1} = snapshot$, where $n_{k+1}$ is reachable from the root in the $snapshot$ just after a replace operation.
\end{enumerate}
\end{proof}

\begin{lem}
\label{lem: replace properties remain}
After $ireplace$, $rreplace$, $mreplace$, and $creplace$, quadtree's properties remain.
\end{lem}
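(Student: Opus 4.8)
The plan is to prove the lemma by induction over the totally ordered sequence of successful replace operations, showing that each one carries Definition~\ref{defi: quadtree structure} from the snapshot immediately before it to the snapshot immediately after it; combining this with the base case (the initial tree, which is two layers of dummy \textit{Internal} nodes over a layer of \textit{Empty} nodes, and so satisfies both clauses) then yields that the properties hold in every snapshot. The first clause of Definition~\ref{defi: quadtree structure}, that the two dummy layers are never altered, is already Lemma~\ref{lem: dummy not change}, so in the inductive step I would only need to re-establish the second clause, that every \textit{Internal} node keeps its four children in the correct quadrants, after the single CAS pointer swing; I would then split on the kind of the replace.

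For $creplace$ I would argue as follows. By the $flag\rightarrow replace$ transition lemma (Lemma~\ref{lem: flag replace transition}) the operation is preceded by a successful $cflag$ on $op.parent$, so $op.parent$ carries a \textit{Compress} $op$; Lemma~\ref{lem: all children empty} then gives that all four children of $op.parent$ are \textit{Empty}, and Lemma~\ref{lem: all children not changed} that they stay \textit{Empty}. By Lemma~\ref{lem: helpReplace} and Lemma~\ref{lem: op create}, $op.grandparent$ had contained $op.parent$ in a fixed direction $d$, and $creplace$ swings that pointer to a freshly created \textit{Empty} node; since an \textit{Empty} node carries neither $\langle x,y,w,h\rangle$ nor $\langle keyX,keyY\rangle$ it imposes no constraint in direction $d$, and no other \textit{Internal} node's children change, so the second clause is preserved.

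For $ireplace$, $rreplace$, and $mreplace$ I would first use Lemma~\ref{lem: helpReplace} to get that $p=op.parent$, $oldChild$, $newChild$ come from one $op$, Lemma~\ref{lem: op create} together with the post-conditions of $find(keys)$ (Lemma~\ref{lem: find pre-condition post-conditions}, Lemma~\ref{lem: findCommon pre-condition post-conditions}) to get that $p$ had contained $oldChild$ in the quadrant $d$ matching the search key, and the $flag\rightarrow replace\rightarrow unflag$ lemma (Lemma~\ref{lem: flag replace unflag transition}), the $flag\rightarrow unflag$ lemma (Lemma~\ref{lem: flag unflag transition}), and Corollary~\ref{cor: first replace succeed} to get that the effective change is the unique first replace belonging to $op$, which succeeds and puts $newChild$ in direction $d$ of $p$ while leaving every other \textit{Internal} node's children alone. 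Then if $newChild$ is a fresh \textit{Empty} node (the $rreplace$ case and the $il=rl$ subcase of $mreplace$) the second clause holds for the no-constraint reason above; if $newChild$ is a \textit{Leaf} node carrying $newKey$, then because the traversal stopped at $oldChild$'s slot, $newKey$ lies in the direction-$d$ region of $p$, so the coordinate bounds of Definition~\ref{defi: quadtree structure} hold at $p$; and if $newChild$ is a sub-tree returned by $createNode$, then Lemma~\ref{lem: createNode pre-conditions and post-condition} says its root inherits exactly $oldChild$'s region and that internally it is a correctly split quadtree containing $l.key$ and $newKey$, so the second clause holds both at $p$ and throughout the new sub-tree. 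This exhausts the cases.

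I expect the main obstacle to be the $createNode$ sub-tree case: one has to be sure the sub-tree's root is stamped with precisely the $\langle x,y,w,h\rangle$ of $oldChild$'s slot and that every intermediate \textit{Internal} node produced by the recursive splitting keeps its four children in their correct quadrants, so that the invariant is re-established not just at $p$ but throughout the inserted structure — which is exactly what is packaged into the post-condition of Lemma~\ref{lem: createNode pre-conditions and post-condition}, so the step reduces to invoking it. The remaining care is in giving a clean argument that $newKey$ (and, in the split subcase, $l.key$) land in direction $d$ of $p$, which follows from the post-conditions of $find(keys)$ stating that $l$ was a child of $p$ together with the fact that the traversal uses $getQuadrant$ on the search key, rather than simply asserting it.
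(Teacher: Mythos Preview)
Your proposal is correct and follows essentially the same approach as the paper: handle the dummy layers via Lemma~\ref{lem: dummy not change}, then case-split on $creplace$ versus $ireplace/rreplace/mreplace$, invoking Lemmas~\ref{lem: flag replace transition}, \ref{lem: all children empty}, \ref{lem: all children not changed} for the former and Lemma~\ref{lem: flag replace unflag transition} together with the post-condition of $createNode$ (Lemma~\ref{lem: createNode pre-conditions and post-condition}) for the latter. Your version is simply more explicit about the induction and about why the new child lands in the correct quadrant, but the decomposition and key lemmas are the same as the paper's.
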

\begin{proof}
We shall prove that in any $snapshot_{T_{i}}$, quadtree's properties remain.

First, Lemma 10 shows that two layers of dummy nodes remain in the tree. We have to consider other layers of nodes which are changed by replace operations.

Consider $creplace$ that replaces an \textit{Internal} node by an \textit{Empty} node. Because the \textit{Internal} node has been flagged on a \textit{Compress} $op$ before $creplace$ (Lemma~\ref{lem: flag replace transition}), all of its children are \textit{Empty} and not changed (Lemma~\ref{lem: all children empty} and Lemma~\ref{lem: all children not changed}). Thus, $creplace$ does not affect the second claim of Definition~\ref{defi: quadtree structure}. 

Consider $ireplace$, $rreplace$, or $mreplace$ that replaces a terminal node by an \textit{Empty} node, a \textit{Leaf} node, or a sub-tree. By Lemma~\ref{lem: flag replace unflag transition}, before $replace^{k}$, $op.parent^{k}$ is flagged with $op$ such that no successful replace operation could happen on $op.parent^{k}$. Therefore, if the new node is \textit{Empty}, it does not affect the tree property. If the new node is a \textit{Leaf} node or a sub-tree, based on the post-conditions of the createNode function (Lemma~\ref{lem: createNode pre-conditions and post-condition}), after replace operations the second claim of Definition~\ref{defi: quadtree structure} still holds.
\end{proof}

\subsection{Linearizability}
In this Section, we define linearization points for \textit{basic operations}. As the compress operation is included in the move operation and the remove operation that returns true, it does not affect the linearization points of them. If an algorithm is linearizable, it could be ordered equivalently as a sequential one by their linearization points. Since all the modifications depend on $find(keys)$, we first point out its linearization point. 

To prove that our linearization points are correct, we shall demonstrate that for some time $T_{i}$, the key set in $snapshot_{T_{i}}$ is the same as the results of modifications linearized before $T_{i}$. For $find(keys)$, we should define the  linearization point at some $snapshot_{T_{i}}$ so that $l^{t}$ returned is on the $pysical\_path$ for $key^{k}$ in $snapshot_{T_{i}}$. 

\begin{lem}
\label{lem: find physical path}
For $find(keys)$ that returns tuples $\langle l^{k}, pOp^{k}, path^{k}\rangle$, there's a snapshot after its invocation and before reading $l^{k}$, such that $path(keys^{k})$ returned with $l^{k}$ at the end is a $physical\_path$ in $snapshot_{T_{i}}$.
\end{lem}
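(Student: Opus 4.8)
The plan is to bootstrap off Lemma~\ref{lem: find snapshot}, whose inductive proof already walks the sequence of nodes pushed by $find(keys)$ and produces, for the final node $l^{k}$, a snapshot in which the path from the root to $l^{k}$ is exactly $physical\_path(keys^{k})$. What remains is to extract two extra facts from that same induction: that the stack $path^{k}$ returned, with $l^{k}$ appended, \emph{is} that physical path (not merely that $l^{k}$ sits at its end), and that the witnessing $snapshot_{T_{i}}$ can be taken no later than the moment $l^{k}$ is read from the top of $path^{k}$.

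First I would fix the pushing sequence $n_{0}=root, n_{1}, \dots, n_{m}$ for $keys^{k}$, where $n_{m}$ is the top of $path^{k}$ and $l^{k}$ is the child of $n_{m}$ read last. I would carry the invariant: after $n_{t}$ is read, there is a snapshot $S_{t}$ occurring no later than the read of $n_{t+1}$ in which $n_{0}\to n_{1}\to\cdots\to n_{t}$ is a prefix of the physical path for $keys^{k}$ and each $n_{j}$ is active. The base case is immediate since $n_{0}=root$ is always active (Observation~\ref{obs: root}, Lemma~\ref{lem: dummy not change}) and the post-conditions of $find(keys)$ (Lemmas~\ref{lem: find pre-condition post-conditions},~\ref{lem: findCommon pre-condition post-conditions},~\ref{lem: continueFind pre-condition post-conditions},~\ref{lem: continueFindCommon pre-conditions post-conditions}) guarantee each $n_{t+1}$ is read as the direction-$d$ child of $n_{t}$. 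For the step, when $n_{t+1}$ is read from $n_{t}$: if no $replace$ touches $n_{t}$'s children between $S_{t}$ and the read, keep $S_{t}$; if such a $replace$ occurs it can only be a $creplace$ whose target $n_{t}$ carries a \textit{Compress} $op$ (Lemma~\ref{lem: flag replace transition}), and then $n_{t}$'s child pointers are frozen (Lemma~\ref{lem: all children not changed}), so $S_{t}$ still works; otherwise the $replace$ swings a pointer strictly below $n_{t}$ and leaves $n_{t}$'s child pointer intact (Lemma~\ref{lem: flag replace unflag transition}), so I take the snapshot immediately after that $replace$, which occurs before the read of $n_{t+1}$ and hence before the read of $l^{k}$. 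Applying the invariant at $t=m$ and then once more for the final child $l^{k}$ yields a snapshot, no later than the read of $l^{k}$, in which $n_{0}\to\cdots\to n_{m}\to l^{k}$ is the full physical path for $keys^{k}$; $l^{k}$ being a \textit{Leaf} or \textit{Empty} node (post-condition of $find(keys)$) makes it the terminal.

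The main obstacle is $findCommon$, which returns two tuples sharing a common prefix up to the LCA. Since the statement is indexed by $k$, no simultaneous snapshot for both keys is required, so I would simply apply the argument above once to $\langle il,iOp,iPath\rangle$ and once to $\langle rl,rOp,rPath\rangle$; the only care needed is that $rPath$ actually records the shared prefix (by construction, the common path is pushed into $rPath$), so the pushing sequence for $oldKey$ literally contains that prefix and the same induction applies verbatim. To be sure the prefix nodes remain active throughout, I would invoke Lemmas~\ref{lem: path Compress end} and~\ref{lem: nodes above active}: a compression can only unlink a node sitting at the terminal of a recorded path, so no node strictly above $l^{k}$ in $path^{k}$ is ever removed while it is still on the path, which is exactly what keeps $S_{t}$ valid across the induction.
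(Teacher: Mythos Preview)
Your proposal is correct and follows essentially the same approach as the paper: both bootstrap directly off Lemma~\ref{lem: find snapshot}. The paper's own proof is in fact a two-sentence application of that lemma (noting $l^{k}$ is a \textit{Leaf} or \textit{Empty} node, invoking Lemma~\ref{lem: find snapshot} to get the snapshot, and declaring that snapshot the linearization point), whereas you re-expand the induction and explicitly treat the $findCommon$ case and the timing of the witnessing snapshot; this extra care is sound but goes beyond what the paper itself writes down.
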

\begin{proof}
Since $l^{k}$ is a \textit{Leaf} node or an \textit{Empty} node, by Lemma~\ref{lem: find snapshot} there is a $snapshot_{T_{i}}$ that $l^{k}$ as the last node from $l_{i-1}$, and $l_{0}$ to $n_{i-1}$ in the $path(keys^{k})$ are active. Therefore $path(keys^{k})$ with $l^{k}$ is $physical\_path(keys^{k})$ in $snapshot_{T_{i}}$. We define $snapshot_{T_{i}}$ as $find(keys^{k})$'s linearization point.
\end{proof}

In the next, we define linearization points for \textit{basic operations}. For the insert operation, the remove operation, and the move operation that returns true, we define it to be the first $replace_{i}^{k}, 0 \leq k < \vert replace_{i} \vert$ belongs to $op_{i}$. To make a reasonable demonstration, we first show that the first $replace_{i}^{k}$ belongs the $op_{i}$ created by each operation itself, and then illustrate that $op$ is unique for each operation.

\begin{lem}
If the insert operation, the remove operation, or the move operation that returns true, the first $replace_{i}^{k}$ occurs before the returning, and it belongs to $op_{i}$ created by the operation itself.
\end{lem}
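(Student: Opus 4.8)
The plan is to treat the three operations uniformly in two stages: first pin down which $op$ is at stake, then show that a replace belonging to it is forced to occur before the return. For the provenance stage, note that each of insert, remove, and move returns \texttt{true} only from a point inside its \textit{while}-loop body on a code path that has just (i)~constructed a fresh \textit{Substitute} object (in insert and remove) or a fresh \textit{Move} object (in move), (ii)~attempted the matching flag CAS with that object as the new $op$, and (iii)~entered the corresponding help-routine (\texttt{helpSubstitute} or \texttt{helpMove}). Since control returns, no further iteration begins, so this iteration is the last one and $op_i$ is exactly the object created there; and since the help-routine is entered only after a \emph{successful} flag, $flag_i$ (an $iflag$, $rflag$, or $mflag$ attaching $op_i$) has already succeeded before the help-routine runs, which in turn runs before the return.

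For insert and remove this immediately localizes the replace. The call to \texttt{helpSubstitute} that precedes the return invokes \texttt{helpReplace} on $\langle op_i.parent,\, op_i.oldChild,\, op_i.newNode\rangle$ before doing its unflag, and this \texttt{helpReplace} reads $op_i$, hence is a replace attempt belonging to $op_i$. So at least one replace attempt belonging to $op_i$ is issued before the return, and therefore the \emph{first} such attempt is no later than the return. By the corollary that the first replace attempt belonging to an $op$ must succeed (itself a consequence of the $flag\rightarrow replace\rightarrow unflag$ transition lemma, together with the no-$op$-reuse lemma and the ``no replace before its flag'' fact), that first attempt succeeds; it is $replace_i^0$, and it precedes the return.

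For move the same conclusion is reached through \texttt{helpMove}. The operation returns \texttt{true} only if \texttt{helpMove} returned \texttt{true}, i.e.\ $op_i.allFlag$ is true; and $allFlag$ is set only by a thread whose $doCAS$ check held, i.e.\ both of $op_i$'s parents carry $op_i$, and that thread's program order forces it to set \texttt{allFlag} to true, then set \texttt{oldRChild.op} to $op_i$, then perform the \texttt{helpReplace} call(s), then unflag. I would then split on whether the returning thread itself took the $doCAS$-true branch: if it did, it issues the \texttt{helpReplace}(s) itself before returning; if it did not, then its target parent no longer carries $op_i$, which by the no-$op$-reuse lemma can only be because a helper already unflagged it, and by the ordering just described that helper must have issued the \texttt{helpReplace}(s) earlier --- again before the returning thread returns. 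In both cases a replace attempt belonging to $op_i$ precedes the return, so by the same corollary the first such attempt ($replace_i^0$) succeeds before the return; and $op_i$ here is the \textit{Move} object created in this last iteration, as required.

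The step I expect to be the real obstacle is the move case: ruling out the race in which a helper drives \texttt{helpMove} for $op_i$ concurrently while the returning thread ``short-circuits'' on observing $op_i.allFlag=\mathrm{true}$ without itself having witnessed any replace being issued. The argument leans on the intra-\texttt{helpMove} ordering (\texttt{allFlag} before \texttt{oldRChild.op} before the \texttt{helpReplace}s before the unflags) together with the no-$op$-reuse lemma, and on the observation that the evidently typographical guards inside \texttt{helpMove} (e.g.\ a field compared with itself) do not change which \texttt{helpReplace} calls are reachable on the $doCAS$-true branch. Once the ``first replace belonging to $op$ must succeed'' corollary is in hand, the insert and remove cases are routine.
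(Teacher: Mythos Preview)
Your proposal is correct and follows the same overall strategy as the paper: identify the $op$ created in the last iteration, observe that a replace attempt belonging to it is issued on the path to the \texttt{true} return, and invoke the corollary that the first replace belonging to an $op$ must succeed.

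The difference is one of care in the move case. The paper's proof is a three-line sketch: it simply asserts that each operation ``returns true after successful $mflag$ and $mreplace$ using $op$ created at line~\ldots'' and then applies the first-replace-succeeds corollary. It does not explicitly address the race you isolate, where the owner thread sees $doCAS=\mathrm{false}$ in \texttt{helpMove} (because a helper has already flagged, replaced, and unflagged the second parent) yet still returns \texttt{true} via $op.allFlag$. Your case split and the appeal to Lemma~7.6 (no successful unflag before $replace_i$) together with the no-$op$-reuse lemma are exactly what is needed to close that gap: if the owner reads the second parent's $op \neq op_i$ while $allFlag$ is true, an unflag has occurred, and the replace preceding that unflag must have completed before the owner's read, hence before the owner's return. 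So your argument is strictly more complete than the paper's on this point; the paper essentially trusts the reader to supply that case analysis.
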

\begin{proof}
First we prove that $replace_{i}^{k}$ belongs to $op_{i}$ created by the operation itself.

For the insert operation, it returns true after successful $iflag$ and $ireplace$ using $op$ created at line~\ref{quadboost: insert Substitute}.

For the remove operation, it returns true after successful $rflag$ and $rreplace$ using $op$ created at line~\ref{quadboost: remove Substitute}.

For the move operation, it returns true after successful $mflag$ and $mreplace$ using $op$ created at line~\ref{quadboost: move create end}.

Corollary~\ref{cor: first replace succeed} indicates that the first replace operation must succeed, therefore all successful replace operations happen before returning.
\end{proof}

Then, we define the linearization points of the insert operation, the remove operation, and the move operation that returns true are before  returning. We should also point out that there's only one $op$ lead to the linearization point for each operation.

\begin{lem}
\label{lem: op last}
For the insert operation, the remove operation, and the move operation that returns true, $op$ is created at the last iteration in the while loop.
\end{lem}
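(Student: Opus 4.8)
The plan is to argue entirely by inspection of the control flow of the three operations, building on the fact just established that when one of these operations returns \texttt{true}, the first successful replace operation belongs to the $op$ object the operation itself created. The intuition is that each such $op$ is manufactured inside the body of the \texttt{while(true)} loop, and the operation reaches its \texttt{return true} only from within the same iteration that performed the successful flag, so the $op$ that survives to own a replace is exactly the one built in the final iteration.

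First I would fix the operation and locate the creation site of $op$: line~\ref{quadboost: insert Substitute} for the insert operation, line~\ref{quadboost: remove Substitute} for the remove operation, and line~\ref{quadboost: move create end} (the \textit{Move} object) for the move operation. In each case this site lies strictly inside the loop body, so at least one iteration runs and some $op$ is produced. Next I would trace the two exits of a single iteration: either the flag ($iflag$/$rflag$/$mflag$) succeeds and control reaches a \texttt{return true} within the same iteration, just after the calls at lines~\ref{quadboost: insert helpSubstitute}, \ref{quadboost: remove helpSubstitute}, and \ref{quadboost: move helpMove first}/\ref{quadboost: move helpMove second}; or the iteration sets one of $pOp$, $iFail$, $rFail$, $cFail$, falls through to continueFind/continueFindCommon, and the next iteration begins by constructing a brand-new $op$. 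It then remains to rule out that an $op$ produced in a non-final iteration ever owns the first successful replace: for a failed flag this is immediate from Corollary~\ref{cor: flag before replace}, since a replace belonging to an $op$ cannot precede the flag belonging to it, so an $op$ whose flag never succeeded has no replace belonging to it. The one extra case, peculiar to the move operation, is when $mflag$ succeeds on the first parent but the subsequent helpMove call returns \texttt{false}; there I would observe that $op.allFlag$ is then still \texttt{false}, and the replace block of helpMove at lines~\ref{quadboost: helpMove two cas begin}--\ref{quadboost: helpMove two cas end} runs only under the guard $doCAS$, i.e. only once $op.allFlag$ becomes \texttt{true}, so no $mreplace$ belongs to that $op$ either. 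Combining these observations with Lemma~\ref{lem: op ABA} (no $op$ is ever reused), the $op$ owning the first successful replace --- which by the preceding lemma is the one the operation itself created --- must be the $op$ created in the iteration from which the operation returns \texttt{true}, namely the last one.

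The main obstacle I anticipate is the case analysis for the move operation: unlike the insert and remove operations, it has two distinct flag attempts on two parents, two separate \texttt{return true} sites, and a helpMove that can half-succeed with one parent flagged and the other not. Making sure that in every such intermediate state the \textit{Move} object just created carries no surviving replace --- and is therefore discarded before the next iteration's fresh \textit{Move} is built --- is the part that needs the careful split. For the insert and remove operations the argument is essentially a one-line reading of the loop body.
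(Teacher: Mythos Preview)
Your proposal is correct and follows essentially the same control-flow approach as the paper, which also argues (very tersely, via Lemma~\ref{lem: flag replace unflag transition}) that a successful flag in an iteration forces the operation to return \texttt{true} from that iteration. Your treatment is considerably more thorough than the paper's two-line proof---in particular, your explicit handling of the move case where the first $mflag$ succeeds but \texttt{helpMove} returns \texttt{false} (so \texttt{allFlag} stays false and no $mreplace$ belongs to that $op$) fills a gap the paper's proof leaves implicit.
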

\begin{proof}
Based on Lemma~\ref{lem: flag replace unflag transition}, all successful insert, remove and move operations follow the $flag\rightarrow replace \rightarrow unflag$ transition. $flag$ is the first successful flag operation after reading $op_{i}.parent$, so there is no other flag operation further. This establishes the claim.
\end{proof}

\begin{lem}
\label{lem: replace success before return}
If the insert operation, the remove operation, and the move operation that return false, successful $replace$ happens before returning.
\end{lem}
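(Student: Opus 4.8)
The plan is to show, for each branch on which one of these operations reaches its \texttt{false} return, that a particular successful \emph{replace} must already have completed. Every \texttt{false} return is immediately preceded by the last call to $find(keys)$ — for \textit{insert} at line~\ref{quadboost: insert intree}, for \textit{remove} at line~\ref{quadboost: remove intree}, and for \textit{move} at the corresponding tests inside \texttt{findCommon}/\texttt{continueFindCommon} (lines~\ref{quadboost: findCommon intree rl}, \ref{quadboost: findCommon intree il}, \ref{quadboost: continueFindCommon intree rl}, \ref{quadboost: continueFindCommon intree il}) — so it suffices to exhibit such a \emph{replace} that completes before the read of the returned terminal. I would start with \textit{insert}: there the terminal $l$ satisfies $\mathrm{inTree}(l)$, hence $l$ is a \textit{Leaf} carrying $\langle keyX, keyY\rangle$; by Lemma~\ref{lem: find physical path} there is a snapshot in which $l$ lies on the $physical\_path$ and so is reachable from the root. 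By Lemma~\ref{lem: dummy not change} together with Definition~\ref{defi: quadtree structure}, the initial tree contains only dummy \textit{Internal} nodes and a layer of \textit{Empty} nodes, so $l$ is not an initial node; since \textit{Leaf} nodes enter the tree only through $ireplace$ or $mreplace$ (Lemmas~\ref{lem: helpReplace}, \ref{lem: op create}, \ref{lem: replace properties remain}), some such \emph{replace} — the one that linked $l$, or the one that linked the sub-tree containing $l$ — succeeded, and by the post-conditions of $find$ (Lemmas~\ref{lem: find pre-condition post-conditions}, \ref{lem: findCommon pre-condition post-conditions}) it completed before $l$ was read, hence before the operation returned \texttt{false}. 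The \textit{move} branch that returns \texttt{false} because $\mathrm{inTree}(il)\land\lnot\,\mathrm{moved}(il)$ is the identical argument applied to $newKey$'s terminal $il$.

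Next I would treat the $\mathrm{moved}(l)$ disjunct of \textit{remove}'s (resp.\ $\mathrm{moved}(rl)$ of \textit{move}'s) \texttt{false} test. Here $l$ is a \textit{Leaf} with a non-null \textit{Move} $op$ and $\lnot\,\mathrm{hasChild}(op.iParent, op.oldIChild)$. By Lemma~\ref{lem: op create}, $op.iParent$ once held $op.oldIChild$; and by the observation that a moved \textit{Leaf} has its $op$ assigned (line~\ref{quadboost: helpMove assign op}) before the \emph{replace} on $op.iParent$, together with Lemmas~\ref{lem: flag replace unflag transition} and \ref{lem: flag unflag transition}, the only way $op.oldIChild$ can stop being a child of $op.iParent$ is a successful $mreplace$ on $op.iParent$ (line~\ref{quadboost: helpMove helpReplace first}). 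Moreover $op.oldIChild$ can never be re-linked, since new nodes are freshly allocated and never reused (Lemmas~\ref{lem: helpReplace}, \ref{lem: op ABA}). Hence observing $\lnot\,\mathrm{hasChild}(op.iParent, op.oldIChild)$ inside the \texttt{moved} call witnesses a \emph{completed} $mreplace$, which precedes that call and therefore precedes the \texttt{false} return.

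The remaining \texttt{false} cases are the $\lnot\,\mathrm{inTree}(l)$ branches of \textit{remove}/\textit{move}: if $l$ is a \textit{Leaf} with a different key it was placed by an $ireplace$/$mreplace$ as above, and if $l$ is an \textit{Empty} node other than one of the initial ones it was placed by an $rreplace$, $mreplace$, or $creplace$ (Lemmas~\ref{lem: op create}, \ref{lem: replace properties remain}) — in both cases a witnessing \emph{replace} preceding the read of $l$. The one genuinely delicate point, and the part I expect to be the obstacle, is the single residual case in which $l$ is one of the untouched initial layer of \textit{Empty} nodes: then no \emph{replace} has ever occurred, so a literal witness does not exist and the statement must be read with that never-modified region set aside (equivalently, in the linearizability development such a \texttt{false} return is linearized at the initial, read-only snapshot and needs no witnessing \emph{replace}). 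Apart from settling exactly how to phrase that boundary case, the argument is a routine case analysis driven entirely by the post-conditions of $find(keys)$ and the transition lemmas, with no new machinery required.
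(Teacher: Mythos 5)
The statement you were given contains a typo: it should read ``return \emph{true}'', not ``return false''. This is clear from three pieces of internal evidence. First, the paper's own proof of this lemma begins ``As Lemma~\ref{lem: op last} points out that $op$ is created at the last iteration. Moreover, \emph{before returning true}, $mflag$, $iflag$ and $rflag$ must succeed\ldots'' --- i.e., the proof explicitly treats the successful case. Second, the very next lemma in the appendix (Lemma~\ref{lem: replace false not success}) asserts the opposite for false returns: ``If the insert operation, the remove operation, and the move operation that return false, there's no successful $replace$ ever happened.'' Third, later uses cite Lemma~\ref{lem: replace success before return} only in the form ``replace operations only occur in basic operations that \emph{return true}.'' So the intended content is: for operations returning true, the operation's own $replace$ succeeds before the return, and the paper's argument is short --- the returning iteration creates an $op$, the program order forces the corresponding flag to have succeeded, and Corollary~\ref{cor: first replace succeed} guarantees the first replace belonging to that $op$ succeeds, hence it has succeeded by the time the operation returns.

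Your proposal instead tries to establish the literal (mis-typed) claim, which is a genuinely different --- and, as you yourself notice at the end, \emph{false} --- statement. You build an argument that a non-dummy terminal node must have been linked by some earlier $replace$ performed by some thread, which is clever but is not what the paper needs and is not what it proves: the paper's lemma is about the operation's \emph{own} $replace$ succeeding, not about some unrelated thread's $replace$ having touched the tree. Your closing observation that a false-returning search can land on one of the untouched initial \textit{Empty} nodes, in which case ``no \emph{replace} has ever occurred,'' is precisely the counterexample showing the literal statement cannot be right; that observation is correct and is exactly the signal that you should suspect a typo and look at how the lemma is actually used and proved elsewhere in the paper. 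Once read as ``return true,'' the right proof is the short chain through Lemma~\ref{lem: op last} and Corollary~\ref{cor: first replace succeed}, and none of the case analysis on how a \textit{Leaf} or \textit{Empty} node got there is needed.
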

\begin{proof}
As Lemma~\ref{lem: op last} points out that $op$ is created at the last iteration. Moreover, before returning true, $mflag$, $iflag$ and $rflag$ must succeed according to the program order. As the first replace operation that belongs to $op_{i}$ must succeed (Lemma~\ref{cor: first replace succeed}, $replace$ happens before returning.
\end{proof}

\begin{lem}
\label{lem: replace false not success}
If the insert operation, the remove operation, and the move operation that return false, there's no successful $replace$ ever happened.
\end{lem}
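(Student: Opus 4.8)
I read the claim as: for a call to the insert, remove, or move operation that returns \texttt{false}, no successful $replace$ belonging to an $op$ created by that call ever occurs (in particular, the failing call never changes the key set through its own $op$; $replace$s it performs while \emph{helping} are attributed to other operations' $op$s). The plan is to track exactly which $op$ objects the call creates and show that none of them is ever installed on an \textit{Internal} node by a successful flag operation; Corollary~\ref{cor: flag before replace} then gives that no $replace$ belonging to such an $op$ can succeed, now or later. I would first note, by code inspection, that insert creates an $op$ only at line~\ref{quadboost: insert Substitute} (a \textit{Substitute}), remove only at line~\ref{quadboost: remove Substitute} (a \textit{Substitute}), and move only at line~\ref{quadboost: move create begin}--\ref{quadboost: move create end} (a \textit{Move}); the only other source of $op$s, the compress function --- which is also the only caller of $creplace$ --- is invoked exclusively on the branch that ends in \texttt{return true} (line~\ref{quadboost: remove compress}, line~\ref{quadboost: move compress}), so a false-returning call never enters it. Hence it suffices to rule out successful $ireplace$, $rreplace$, and $mreplace$ belonging to its \textit{Substitute}/\textit{Move} $op$s.

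For insert, the \textit{Substitute} $op$ is created only inside the branch guarded by \texttt{pOp.class == Clean} and is handed straight to \texttt{helpFlag} at line~\ref{quadboost: insert helpFlag}; if that \texttt{helpFlag} succeeds, the call proceeds through \texttt{helpSubstitute} and returns \texttt{true}. Therefore, if insert returns \texttt{false}, every such \texttt{helpFlag} failed, so the $op$ was never written into any node's $op$ field. A freshly created $op$ is reachable only via the local variable of its creator, and $replace$s read their $op$ only inside \texttt{helpSubstitute}/\texttt{helpCompress}/\texttt{helpMove}, whose argument is either that very local variable (used only on the successful-flag branch) or an $op$ read out of an \textit{Internal} node (Corollary~\ref{cor: help pOp}); since this $op$ was never installed, no thread can observe it from a node either. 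By Corollary~\ref{cor: flag before replace} no $ireplace$ belonging to it can succeed. The remove operation is handled identically, with line~\ref{quadboost: remove Substitute} and line~\ref{quadboost: remove helpFlag} replacing the insert references.

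For move, the \textit{Move} $op$ is created only inside the \texttt{!cFail \&\& !iFail \&\& !rFail} branch, and two cases arise. If \texttt{rp == ip}, the call runs \texttt{helpMove(op)} directly at line~\ref{quadboost: move helpMove second}; every $mreplace$ in \texttt{helpMove} lies inside the \texttt{doCAS} block (line~\ref{quadboost: helpMove two cas begin}--\ref{quadboost: helpMove two cas end}), and \texttt{doCAS} holds only when the flag \texttt{helpMove} issues on that single node succeeds, in which case \texttt{helpMove} sets \texttt{allFlag} and returns \texttt{true}, forcing move to \texttt{return true} --- contrary to hypothesis; so that flag failed, \texttt{allFlag} stayed \texttt{false}, no $mreplace$ of $op$ ran, and by Lemma~\ref{lem: op ABA} it can never succeed later. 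If \texttt{rp != ip}, the call first performs one \texttt{helpFlag} at line~\ref{quadboost: move helpFlag}: if it fails, $op$ is never installed and we conclude as for insert; if it succeeds, \texttt{helpMove(op)} is called, and again every $mreplace$ belonging to $op$ requires \texttt{doCAS}, i.e.\ both flags on the two distinct parents having succeeded, which first sets \texttt{allFlag}. A \texttt{true} return of \texttt{helpMove(op)} would make move \texttt{return true}; so, since move returns \texttt{false}, \texttt{helpMove(op)} returned \texttt{false}, \texttt{allFlag} never became \texttt{true}, no $mreplace$ of $op$ ran, and $op$ remains flagged on exactly one parent. By Lemma~\ref{lem: flag unflag transition} such a \textit{Move} $op$ undergoes the $flag\rightarrow unflag$ transition, in which the still-missing flag never succeeds afterwards (using also Lemma~\ref{lem: op ABA}), so no $mreplace$ of $op$ ever runs. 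Since the call keeps looping without ever entering the \texttt{return true} branch, this holds for every $op$ it creates.

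The step I expect to be the crux is the move sub-case ``\texttt{rp != ip}, the first flag succeeds, \texttt{helpMove} then fails'': one must be certain that no concurrent helper can quietly finish the half-flagged \textit{Move} $op$ and perform an $mreplace$ after the owning thread has already returned \texttt{false}. This is precisely what Lemma~\ref{lem: flag unflag transition} (the missing flag never later succeeds) together with the no-ABA Lemma~\ref{lem: op ABA} rule out, so it is a matter of invoking them carefully rather than proving anything new. The remaining bookkeeping --- that the \texttt{help}, \texttt{continueFind}, and \texttt{continueFindCommon} calls on the failure path may drive $replace$s, but only ones belonging to $op$s created by \emph{other} operations, never to an $op$ created here --- is routine given Corollary~\ref{cor: help pOp} and the inventory above of where each kind of $op$ is created.
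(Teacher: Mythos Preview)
Your argument is correct and follows the same line as the paper's proof: trace each $op$ the call creates, show it is never (fully) installed by a successful flag, and invoke Corollary~\ref{cor: flag before replace} and Lemma~\ref{lem: flag unflag transition} to rule out any $replace$ belonging to it. You are in fact more careful than the paper---you separate the $rp=ip$ and $rp\neq ip$ sub-cases for move and explicitly address the concurrent-helper concern via Lemma~\ref{lem: flag unflag transition} and Lemma~\ref{lem: op ABA}, whereas the paper's one-paragraph proof collapses these and even misstates that $op.iParent\neq op.rParent$ whenever \texttt{helpMove} returns false.
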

\begin{proof}
Consider the execution of the insert operation, it returns false at line~\ref{quadboost: insert intree} where $op$ created are not flagged on nodes. Consider the remove operation, it returns false at line~\ref{quadboost: remove intree}. Also, $op$ created are not flagged on nodes. For the move operation, it returns false by calling the findCommon function or the continueFindCommon function so that the helpMove function must fail or it is not called. If the helpMove function is not called, there is no replace operations. If the helpMove function returns false, by the ordering of $mflag$ and $unflag$ (Lemma~\ref{lem: flag replace unflag transition} and Lemma~\ref{lem: flag unflag transition}), $op.iParent$ and $op.rParent$ are different. Moreover, the flag operation on the latter parent will fail such that $doCAS$ is false and no replace operation will perform.
\end{proof}

In the next step, we show that how the algorithm could be ordered as a sequential execution. First we linearize the insert operation, the remove operation and the move operation that returns true, where $replace$ results in changing the key set. Let $s_{i}$ be a set of \textit{Leaf} nodes that are active in $snapshot_{T_{i}}$ after performing operations $o_{0}$, $o_{1}$, ..., $o_{n}$, ordered by their linearization points $replace_{0}$, $repalce_{1}$, ..., $replace_{n}$ sequentially.

\begin{obs}
\label{lem: moved replace order}
If a \textit{Leaf} node is \textit{moved}, its $op$ is set before the replace operation on $op.iParent$, which is before the replace operation on $op.rParent$.
\end{obs}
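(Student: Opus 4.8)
The plan is to prove the claim by pinning down the single program point at which a \textit{Leaf} node's $op$ field becomes a non-null \textit{Move} object, and then combining the program order inside \textit{helpMove} with the fact that every helping thread re-executes those steps in the same order. First I would note that a \textit{Leaf}'s $op$ is assigned only at line~\ref{quadboost: helpMove assign op} (the write $op.oldRChild.op = op$ inside the $doCAS$ branch of \textit{helpMove}) and is never reset; since a \textit{Leaf} node's $op$ field starts out null, it follows that if a \textit{Leaf} $l$ satisfies $moved(l)$ at some time $T_i$ then $l = op.oldRChild$ for $op = l.op$, and that write was executed by some invocation of \textit{helpMove} on $op$. I would also record that the non-volatile fields $iParent$, $rParent$, $oldIChild$, $oldRChild$, $newIChild$ of a \textit{Move} object are fixed at construction (line~\ref{quadboost: move create end}) and never mutated, so all invocations of \textit{helpMove} on $op$ agree on which nodes are involved.

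For the first half ($l$'s $op$ is set before the replace on $op.iParent$), by Corollary~\ref{cor: first replace succeed} and Lemma~\ref{lem: flag replace unflag transition} the replace that changes $op.iParent$'s child from $op.oldIChild$ to $op.newIChild$ and belongs to $op$ happens exactly once and succeeds, and by the structure of that transition it can only be performed from inside \textit{helpMove} (a child of $op.iParent$ can be changed only while $op.iParent$ carries $op$). Let $H$ be the invocation that performs it. In $H$, line~\ref{quadboost: helpMove assign op} is executed before either branch of lines~\ref{quadboost: helpMove two cas begin}--\ref{quadboost: helpMove two cas end} in program order, so $H$ writes $l.op = op$ before carrying out that replace; hence the assignment of $l$'s $op$ precedes the replace on $op.iParent$. (That replace has indeed taken effect by $T_i$, since $moved(l)$ requires $hasChild(op.iParent, op.oldIChild)$ to be false.)

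For the second half (the replace on $op.iParent$ precedes the replace on $op.rParent$), I would argue the same way with the two replaces of a move. If $op.oldIChild \neq op.oldRChild$, every invocation of \textit{helpMove} on $op$ that reaches the else-branch calls $helpReplace$ on $op.iParent$ (line~\ref{quadboost: helpMove helpReplace first first}) before $helpReplace$ on $op.rParent$ (line~\ref{quadboost: helpMove helpReplace first second}), and by the time control returns from the first call that replace has been done (by this invocation or a concurrent helper); hence every replace on $op.rParent$ belonging to $op$---in particular the first one---follows the replace on $op.iParent$. If instead $op.oldIChild = op.oldRChild$, then $op.iParent = op.rParent$ and \textit{helpMove} executes only the single replace of line~\ref{quadboost: helpMove helpReplace first}, which plays both roles, so the ordering holds trivially. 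The one real subtlety I expect is precisely this lifting of per-invocation program order to a global order among the (provably unique) events, which leans on the immutability of the \textit{Move} object's node fields and on the already-established ``first replace belonging to $op$ must succeed'' property; the case split on whether the two terminal nodes coincide is routine bookkeeping.
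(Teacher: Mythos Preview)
Your argument is correct. The paper, however, does not prove this statement at all: it is labelled an \emph{Observation} and is taken as immediate from the program text of \textit{helpMove} (the assignment at line~\ref{quadboost: helpMove assign op} textually precedes the replaces at lines~\ref{quadboost: helpMove two cas begin}--\ref{quadboost: helpMove two cas end}, and within the else-branch the replace on $op.iParent$ precedes that on $op.rParent$). What you have supplied is exactly the justification the paper elides---namely the lifting of per-invocation program order to a global order on the unique successful events, using immutability of the \textit{Move} object's node fields together with Corollary~\ref{cor: first replace succeed} to conclude that once any call to \textit{helpReplace} on $op.iParent$ returns, the corresponding replace has already taken effect.

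One minor remark: in the $op.oldIChild = op.oldRChild$ branch you assert $op.iParent = op.rParent$, but the algorithm does not derive equality of the parents from equality of the children alone (the two finds run at different times). You do not actually need that claim; it suffices that \textit{helpMove} performs only the single replace at line~\ref{quadboost: helpMove helpReplace first} in this case, and that replace plays both roles, so the required ordering is reflexive.
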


\begin{lem}
\label{lem: contain snapshot}
For the contain operation that returns true, there is a corresponding snapshot that $l^{k}$ is active in $snapshot_{T_{i}}$. For the contain operation that returns false, there is a corresponding snapshot that $l^{k}$ is inactive in $snapshot_{T_{i}}$.
\end{lem}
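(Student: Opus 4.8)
The plan is to treat the two halves of the statement separately, drawing the witnessing snapshot in each case from Lemma~\ref{lem: find physical path}: immediately after the $find(keys)$ call inside $contain$ there is a $snapshot_{T_i}$ in which the returned $path(keys^k)$ ending in $l^k$ coincides with $physical\_path(keys^k)$, so in particular $l^k$ is reachable from the root at $T_i$. A recurring ingredient will be the \emph{monotonicity of} $moved$: once $moved(l^k)$ holds it holds forever. This is because $l^k.op$ is assigned at most once --- only by $op.oldRChild.op=op$ inside $helpMove$, and a second move routed through $l^k$ as its remove-side terminal finds $moved(l^k)$ true (after the first move completes) and returns false without reassigning it --- and because the child pointer of $l^k.op.iParent$ that once referenced $l^k.op.oldIChild$ is swung away by the corresponding $ireplace$ and, by the no-ABA and replace-ordering arguments (Lemma~\ref{lem: op ABA}, Lemma~\ref{lem: flag replace unflag transition}, Lemma~\ref{lem: flag replace transition}), is never restored.

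For $contain$ returning \texttt{true}: at the return point $l^k$ is a \textit{Leaf} node holding $keys^k$ and $moved(l^k)$ is false. By Observation~\ref{obs: leaf key} the key of $l^k$ never changes, so $l^k$ still holds $keys^k$ at $T_i$; by monotonicity of $moved$, $moved(l^k)$ was false already at the earlier time $T_i$; together with reachability at $T_i$ this makes $l^k$ active in $snapshot_{T_i}$, and we take $T_i$ as the linearization point.

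For $contain$ returning \texttt{false}: at the return point either $\neg inTree(l^k,keys^k)$ or $moved(l^k)$. In the first case, by the post-condition of $find(keys)$ (Lemma~\ref{lem: find pre-condition post-conditions}) and Observation~\ref{obs: leaf key}, $l^k$ is an \textit{Empty} node or a \textit{Leaf} carrying a key other than $keys^k$; since in $snapshot_{T_i}$ the unique path for $keys^k$ (Corollary~\ref{cor: common path}) ends at this $l^k$, no active \textit{Leaf} with key $keys^k$ exists at $T_i$, which is the sense in which $l^k$ is inactive, and we linearize there. In the second case let $T_M$ be the instant $moved(l^k)$ first became true; by monotonicity $l^k$ is inactive at every time $\geq T_M$. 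If $T_M \geq T_i$ we linearize at $T_M$: it lies in $contain$'s execution interval (it is $\geq T_i$, hence after $contain$ began, and $\leq$ the $moved$ test, hence before $contain$ returned), and at $T_M$ the key $keys^k$ has no active terminal. If $T_M < T_i$, then $l^k$ is already moved, hence inactive, at $T_i$, and we linearize at $T_i$.

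The main obstacle I anticipate is the $moved$ branch of the \texttt{false} case: one must derive the monotonicity of $moved$ cleanly from the earlier invariants, and then argue that the chosen instant --- $T_M$ or $T_i$ --- simultaneously lies inside the operation's interval \emph{and} is one at which $keys^k$ has no active terminal. The latter requires noting that the very replace that makes $moved(l^k)$ true is exactly the linearizing replace of the responsible move (the combined CAS, or the insert-side CAS on $iParent$), so $keys^k$ does leave the active set at that point. The \texttt{true} half and the $\neg inTree$ branch are then routine consequences of Lemma~\ref{lem: find physical path} and Observation~\ref{obs: leaf key}.
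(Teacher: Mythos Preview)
Your proof is correct and follows the same high-level strategy as the paper: invoke Lemma~\ref{lem: find physical path} to obtain a witnessing snapshot in which $l^{k}$ sits on the physical path, then reason about the $moved$ status of $l^{k}$ at that instant. The organizational difference is that you isolate the monotonicity of $moved$ as an explicit auxiliary fact and use it uniformly for both halves, whereas the paper argues the \texttt{true} case by a case split on whether $l^{k}.op$ is null at verification time --- an analysis that tacitly relies on exactly the same monotonicity (a non-null $op$ with $hasChild$ still true at verification means no $ireplace$ yet, hence none at the earlier $T_{i}$ either). Your justification for monotonicity (single assignment of $l^{k}.op$ via the flag discipline on $rParent$, plus no-ABA on the $iParent$ child pointer) is sound and is the fact the paper's case analysis needs but never states.

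Your treatment of the \texttt{false}/$moved$ branch is also slightly more careful than the paper's. The paper simply says ``linearize at $snapshot_{T_{i1}}$ after $mreplace$,'' which does not explicitly address the possibility that the responsible $mreplace$ completed \emph{before} $contain$ was invoked (this can happen: between the move's $ireplace$ and $rreplace$, the leaf is still reachable but already $moved$). Your split on $T_{M} < T_{i}$ versus $T_{M} \geq T_{i}$ handles this corner cleanly by falling back to $T_{i}$ in the former case, guaranteeing the chosen instant lies inside $contain$'s execution interval.
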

\begin{proof}
For the first part, we prove that in a snapshot, the end node $l^{k}$ contains $keys^{k}$ and is not \textit{moved}. Lemma~\ref{lem: find physical path} shows that there is a snapshot consists of $physical\_path(keys^{k})$ after calling, thus $l^{k}$ that contains $keys^{k}$ is in $snapshot_{T_{i}}$ before judging whether the node is \textit{moved}. We only have to discuss $mreplace$ in the next cases because other replace operations will not affect $l^{k}.op$. If $l^{k}.op$ is null at verifying, $mreplace$ must have not been done according to Observation~\ref{lem: moved replace order}. In this case, we could linearize it at $snapshot_{T_{i}}$. If $l^{k}.op$ is not null, but $mreplace$ occurs after verifying whether $l.op.iParent$ has $l.op.oldIChild$, then it is also before the second $mreplace$ on $l.op.rParent$ (Observation~\ref{lem: moved replace order}). We could also linearize it $snapshot_{T_{i}}$.

For the second part, Lemma~\ref{lem: find physical path} shows that there is a snapshot consists $physical\_path(keys^{k})$. By Lemma~\ref{lem: replace properties remain}, in every snapshot our quadtree's property maintains. Hence, if $l^{k}$ does not contain $keys^{k}$, we could linearize it at $snapshot_{T_{i}}$. Or else, if $l^{k}$ is in $physical\_path$, but $l^{k}$ is \textit{moved} at verifying, we could linearize at $snapshot_{T_{i1}}$ after $mreplace$.
\end{proof}

\begin{lem}
\label{lem: lineariztion true}
\begin{enumerate}
\item For the insert operation, it returns true when $key \notin s_{i-1}$, $key \in s_{i}$.
\item For the remove operation, it returns true and $key \in s_{i-1}$, $key \notin s_{i}$.
\item For the move operation, it returns true and $oldKey \in s_{i-1}$, $newKey \notin  s_{i-1}$, $oldKey \notin s_{i}$, and $newKey \in s_{i}$.
\end{enumerate}
\end{lem}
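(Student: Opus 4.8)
The plan is to prove the three parts uniformly by isolating the unique linearization point of each successful operation---its first successful replace $replace_i^0$---and showing that this single step turns the active-\textit{Leaf} set $s_{i-1}$ (the state just before $replace_i^0$) into $s_i$ (the state just after) exactly as the sequential semantics demands. Throughout I use that, by Lemma~\ref{lem: replace properties remain}, every snapshot is a valid quadtree, so whether a key $k$ is ``in'' a snapshot is decided by following the deterministic routing from the root to the unique terminal for $k$ and asking whether that terminal is an active \textit{Leaf} carrying $k$. First I fix the last loop iteration of the operation---where $op_i$ is created, by Lemma~\ref{lem: op last}---and the tuples returned by $find(keys)$ there; by Lemma~\ref{lem: find physical path} there is a snapshot $T'$ in which each returned $l^k$ sits at the end of the physical path for its key.

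The next step is to transport the status read at $T'$ forward to $s_{i-1}$. This uses the $flag\rightarrow replace\rightarrow unflag$ structure (Lemma~\ref{lem: flag replace unflag transition}): the parent $op_i.parent^k$ carries $op_i$ continuously from $flag_i^k$ until $replace_i^0$, so during that window no competing successful flag (hence no competing replace) can touch that parent's child pointers, and the edge toward $l^k$ is the same in $s_{i-1}$ as at $T'$. A terminal that is \texttt{moved} stays \texttt{moved}, because a detached \textit{Leaf}/\textit{Empty} never re-enters the tree by the ``newChild has not been in quadtree'' precondition of \texttt{helpReplace} (Lemma~\ref{lem: helpReplace}), so the boolean returned by \texttt{inTree}/\texttt{moved} at the last iteration still holds at $s_{i-1}$. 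Concretely: for \textit{insert}, $l$ either does not carry $key$ or is \texttt{moved}, so $key\notin s_{i-1}$; for \textit{remove}, $l$ is an active \textit{Leaf} carrying $key$, so $key\in s_{i-1}$; for \textit{move}, $rl$ is an active \textit{Leaf} carrying $oldKey$ while $il$ either does not carry $newKey$ or is \texttt{moved}, giving $oldKey\in s_{i-1}$ and $newKey\notin s_{i-1}$.

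Then I compute the effect of $replace_i^0$ (and, for \textit{move}, the paired $mreplace$). For \textit{insert}/\textit{remove} the linearization point is a single replace that swaps $l$ for, respectively, the node returned by \texttt{createNode} or a fresh \textit{Empty}; by the \texttt{createNode} postcondition (Lemma~\ref{lem: createNode pre-conditions and post-condition}) the replacement is either a \textit{Leaf} carrying $key$ or a subtree containing both the old $l.key$ node and a $key$ node, so exactly $key$ is added to (resp.\ removed from) the active set and no other key's terminal is disturbed; hence $s_i=s_{i-1}\cup\{key\}$ (resp.\ $s_i=s_{i-1}\setminus\{key\}$). The compress that follows a successful \textit{remove} only collapses an all-\textit{Empty} subtree into one \textit{Empty} (Lemma~\ref{lem: all children empty}) strictly after $replace_i^0$, so it leaves the active-\textit{Leaf} set unchanged. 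For \textit{move} the key fact is Observation~\ref{lem: moved replace order} together with Lemma~\ref{lem: op create}: $op$ is written onto $op.oldRChild$ at line~\ref{quadboost: helpMove assign op} \emph{before} the first $mreplace$, so at the instant immediately after that first $mreplace$ the \textit{Leaf} carrying $oldKey$ is already \texttt{moved}, hence inactive, while $newKey$'s new terminal is in place---so in one step $oldKey$ leaves and $newKey$ enters, giving $s_i=(s_{i-1}\setminus\{oldKey\})\cup\{newKey\}$. The degenerate case $op.oldIChild=op.oldRChild$ is even simpler: a single CAS replaces the \textit{Leaf} carrying $oldKey$ by a \textit{Leaf} carrying $newKey$.

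I expect the main obstacle to be the \textit{move} case: arguing that the first $mreplace$ really is an atomic transition of the key set even though the tree physically passes through an intermediate state (between the two $mreplace$s) in which both $oldKey$'s old terminal and $newKey$'s new terminal are reachable. The resolution is to take the linearization snapshot $T_i$ to be exactly the state right after the first $mreplace$ and lean on the fact that the \texttt{moved} predicate---seeded by the $op$-write of line~\ref{quadboost: helpMove assign op}---already demotes $oldKey$'s old terminal to inactive at that instant, so the intermediate state contains $newKey$ and (logically) not $oldKey$, as required; one must also verify that nothing else changes the active set between $replace_i^0$ and $replace_{i+1}^0$, i.e.\ the second $mreplace$ (it only overwrites an already-inactive \textit{Leaf} by \textit{Empty}), the $unflag$s, and any interleaved compress all preserve the active-\textit{Leaf} set, and that no other successful flag/replace can occur on the two move-parents while they carry $op_i$ (Lemma~\ref{lem: flag replace unflag transition}). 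Everything remaining is routine bookkeeping on top of snapshot validity (Lemma~\ref{lem: replace properties remain}) and the uniqueness of the terminal for each key.
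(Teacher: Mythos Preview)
Your proposal is correct and follows essentially the same line as the paper: fix the last iteration, use the $find$ snapshot (Lemma~\ref{lem: find physical path}) for the ``before'' state, invoke the $flag\rightarrow replace\rightarrow unflag$ ordering (Lemma~\ref{lem: flag replace unflag transition}) to freeze the parent--child edge up to $replace_i^0$, and then read off the effect of the replace via Lemma~\ref{lem: createNode pre-conditions and post-condition} and Lemma~\ref{lem: replace properties remain}. Your treatment of the \textit{move} case is in fact more complete than the paper's: you explicitly use the $op$-write at line~\ref{quadboost: helpMove assign op} together with Observation~\ref{lem: moved replace order} to argue that the first $mreplace$ already renders $oldKey$'s \textit{Leaf} inactive (via \texttt{moved}), and you check that the second $mreplace$ and any interleaved compress leave the active-\textit{Leaf} set unchanged, whereas the paper's proof for part~3 only argues the $s_{i-1}$ half (that $rl$ stays active until the snapshot witnessing $il$ inactive) and does not spell out the $s_i$ half at all.
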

\begin{proof}
\begin{enumerate}
Lemma~\ref{lem: replace success before return} shows that there is successful $replace$ occurs before the insert operation, the remove operation, and the move operation that returns true. 

If a replace operation succeeds, $op.parent$ has been flagged with $op$ other than \textit{Compress} before $replace$ (Lemma~\ref{lem: flag replace transition} and Lemma~\ref{lem: flag replace unflag transition}). Lemma~\ref{lem: op Compress reachable} implies that $op.parent$ is reachable from the root. Therefore, it is active. Besides, Corollary~\ref{lem: nodes above active} shows that the path from the root to $op.parent$ is an active. Therefore a snapshot exists such that the path from the root to $op.oldChild$ is a $physical\_path(key)$.

\item 
First we prove that $key \notin s_{i-1}$. By Lemma~\ref{lem: contain snapshot}, in the snapshot $l$ is inactive. Otherwise, replace operations will not execute. Hence, $key \notin s_{i-1}$. 

Second, by the post-conditions of the createNode function~\ref{lem: createNode pre-conditions and post-condition}, it creates a node or a sub-tree that contains $newKey$. According to Corollary~\ref{lem: replace properties remain}, after $ireplace$ the node contains $newKey$ is inserted and quadtree's properties maintained. Hence, $key \in s_{i}$.

\item 
First we prove that $key \in s_{i}$. By Lemma~\ref{lem: contain snapshot}, in the snapshot $l$ is active. Otherwise, replace operations will not execute. Hence, $key \notin s_{i-1}$. 

Second, the remove operation creates an \textit{Empty} node. According to Corollary~\ref{lem: replace properties remain}, after $rreplace$ the node contains $oldKey$ is removed so that $key \in s_{i}$.

\item 
First we prove that $oldKey \in s_{i-1}$, $newKey \notin s_{i-1}$. By Lemma~\ref{lem: contain snapshot}, $rl$ is active at $snapshot_{T_{i}}$, $il$ is inactive at $snapshot_{T_{i1}}, T_{i} < T_{i1}$. We have to  demonstrate that $rl$ is still active at $snapshot_{T_{i1}}$ so that we can use $snapshot_{T_{i1}}$ as $s_{i-1}$. 

We prove it by contradiction. Assuming that at $snapshot_{T_{i1}}$ $rl$ is inactive, hence there's some replace operation happens before $T_{i1}$ and after $T_{i}$. However, based on Lemma~\ref{lem: flag replace unflag transition}, $replace_{i}$ is the first successful replace operation that after reading $rl$. Therefore, it derives a contradiction. Hence we prove that in $snapshot_{T_{i1}}$ $rl$ is still active.

\end{enumerate}
\end{proof}

Finally we order linearization points for the insert operation, the remove operation and the move operation that returns false. We also order the contain operation. We consider they are linearized between $replace_{i-1}$ and $replace_{i}$ if exists.

\begin{lem}
\label{lem: lineariztion false}
\begin{enumerate}
\item For the insert operation returns false, $key \in s_{i-1}$.
\item For the remove operation returns false, $key \notin s_{i-1}$.
\item For the move operation returns false, either $oldKey \notin s_{i-1}$ or $newKey \in s_{i-1}$.
\end{enumerate}
\end{lem}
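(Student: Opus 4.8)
The plan is to reuse the machinery already developed for the contain operation. The starting point is Lemma~\ref{lem: replace false not success}: an insert, remove, or move that returns false performs no successful replace, so it changes neither the set of active \emph{Leaf} nodes nor any child pointer of the tree. Consequently it is harmless to place its linearization point anywhere between two consecutive key-changing replace operations $replace_{i-1}$ and $replace_{i}$; what remains is to exhibit one concrete snapshot in that interval witnessing the claimed membership fact, which is exactly the shape of the statement ($key \in s_{i-1}$, $key \notin s_{i-1}$, etc.).

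For part~1, the insert returns false only at line~\ref{quadboost: insert intree}, where $inTree(l,keyX,keyY)$ holds and $moved(l)$ is false for the $l$ returned by the last $find(keys)$. By Lemma~\ref{lem: find physical path} there is a $snapshot_{T_{i}}$ in which the recorded path with $l$ at the end is the $physical\_path(key)$, and by Lemma~\ref{lem: replace properties remain} $l$ is genuinely the terminal for $key$ there. Since $l$ is a \emph{Leaf} carrying $key$, it is active provided it is not \emph{moved} in that snapshot; the delicate point is that $moved(l)$ is evaluated non-atomically, so I would argue exactly as in the proof of Lemma~\ref{lem: contain snapshot}: if $l.op$ is still null when it is read, no $mreplace$ touching $l$ has occurred (Observation~\ref{lem: moved replace order}), so $l$ is active at $snapshot_{T_{i}}$; and if $l.op$ is already set but the $hasChild$ check passes, the first $mreplace$ on $op.iParent$ has not yet happened, again placing us before any deactivation of $l$. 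Either way there is a snapshot in which $l$ is an active \emph{Leaf} with $key$, so $key \in s_{i-1}$.

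Part~2 is the mirror image. The remove returns false at line~\ref{quadboost: remove intree}, where $l$ either is not a \emph{Leaf} carrying $key$ or is \emph{moved}. Taking the $snapshot_{T_{i}}$ from Lemma~\ref{lem: find physical path}: in the first case the terminal for $key$ is an \emph{Empty} node or a \emph{Leaf} with a different key (by the structural claim), so no active \emph{Leaf} holds $key$; in the second case, by the same two-step analysis, there is a snapshot just after the appropriate $mreplace$ in which $l$ is inactive. In both cases $key \notin s_{i-1}$. Part~3 combines the two, run once for each key along its own physical path: \texttt{move} returns false only via the check at line~\ref{quadboost: findCommon intree rl} (and its analogue in \texttt{continueFindCommon}), giving $oldKey \notin s_{i-1}$ by the part~2 argument applied to $rl$, or at line~\ref{quadboost: findCommon intree il}, giving $newKey \in s_{i-1}$ by the part~1 argument applied to $il$. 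Since \texttt{findCommon} returns its two tuples at distinct reads, I would take the linearization point at $T_{i1}$ (the read witnessing $rl$) or at $T_{i2}$ (the read witnessing $il$) accordingly, matching the definitions in the linearization-points discussion.

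The main obstacle is the one that already makes Lemma~\ref{lem: contain snapshot} subtle: the $moved$ predicate is not a single atomic read, so one cannot merely quote ``$l$ is active in $snapshot_{T_{i}}$''; one must chase the ordering of the two $mreplace$ operations relative to the reads of $l$, $l.op$, and $l.op.iParent$, and show that some valid snapshot falls in the gap. Once that bookkeeping is in place for the contain proof it transfers almost verbatim, so the remaining work is routine case analysis over the three return sites.
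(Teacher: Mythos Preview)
Your proposal is correct and follows essentially the same approach as the paper: the paper's proof is a one-line reference stating that parts~1 and~2 reduce to the case analysis already done in Lemma~\ref{lem: contain snapshot}, and your argument is exactly that reduction spelled out in full, including the non-atomic \texttt{moved} bookkeeping. You actually go further than the paper, which does not explicitly treat part~3 at all, whereas you correctly split it into the $rl$ and $il$ checks at lines~\ref{quadboost: findCommon intree rl}/\ref{quadboost: findCommon intree il} (and their \texttt{continueFindCommon} analogues) and reuse parts~1 and~2.
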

\begin{proof}
The first two parts are equivalent as the case in Lemma~\ref{lem: contain snapshot}.
\end{proof}

\begin{lem}
Our quadboost is a linearizable implementation.
\end{lem}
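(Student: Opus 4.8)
The plan is to assemble the linearization points fixed above into one real-time order and show that the induced sequential history is legal, i.e.\ that the return value of every operation agrees with the state $s_i$ obtained by applying the linearized prefix of updating operations in order. Since every modification is funnelled through $find(keys)$, whose linearization point was pinned to a $snapshot_{T_i}$ in which the returned $l^k$ lies on $physical\_path(keys^k)$ (Lemma~\ref{lem: find physical path}), and since Lemma~\ref{lem: replace properties remain} guarantees the quadtree keeps its structural invariants in \emph{every} snapshot, each operation's verdict can be read off from a well-defined snapshot inside its execution interval.

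First I would handle the updating operations that return \emph{true} (a successful insert, remove, or move). Their linearization point is the first successful replace $replace_i^0$ belonging to $op_i$; by Lemma~\ref{lem: op last} this $op_i$ is the one created in the last loop iteration of that invocation, so the point is unambiguous, and by Lemma~\ref{lem: replace success before return} it occurs strictly between invocation and return, hence inside the interval. Because each replace is a single CAS, these points are already totally ordered in real time; letting $o_0, o_1, \dots$ be the updating operations in that order and $s_i$ the set of active \textit{Leaf} nodes just after $replace_i$, Lemma~\ref{lem: lineariztion true} yields exactly the dictionary transition: insert adds a fresh key, remove deletes a present key, and move simultaneously drops $oldKey$ and installs $newKey$. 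For move I would reuse the fact that $replace_i$ is the first successful replace after $rl$ was read, so $oldKey$'s terminal is still active at the moment $newKey$'s terminal is seen inactive, allowing both conditions to be witnessed at a single snapshot.

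Next I would slot the read-only and false-returning operations --- every \textit{contain}, and every insert/remove/move returning false --- into the gap between consecutive updating points $replace_{i-1}$ and $replace_i$. Lemma~\ref{lem: contain snapshot} and Lemma~\ref{lem: lineariztion false} provide, for each such operation, a $snapshot_{T_i}$ lying in its execution interval in which the relevant terminal on $physical\_path(keys^k)$ is active or inactive exactly as the return value demands, and Lemma~\ref{lem: replace false not success} shows these operations perform no successful replace, so inserting them does not perturb the $s_i$ sequence. Concatenating the updating order with these interleaved entries then gives a sequential history that respects real-time precedence and the dictionary semantics, which is precisely linearizability.

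I expect the main obstacle to be the move operation, since unlike insert/remove it reads two terminals and two parents: for a successful move one must show the single witnessing snapshot simultaneously sees $oldKey$ active and $newKey$ absent and precedes the first $mreplace$, while for an unsuccessful move one must show the chosen point ($T_{i1}$ if $oldKey$'s terminal is absent or moved, otherwise $T_{i2}$ where $newKey$'s terminal is present) is correctly ordered against concurrent successful moves that may have shuffled both keys. Both reduce to the ordering facts about the $flag\!\rightarrow\!replace\!\rightarrow\!unflag$ and $flag\!\rightarrow\!unflag$ transitions together with Lemma~\ref{lem: lca active}, which keeps the LCA fixed so the two sub-searches stay coherent; once these are in hand the argument closes.
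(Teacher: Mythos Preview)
Your proposal is correct and follows essentially the same approach as the paper: both arguments reduce linearizability to the two preceding results Lemma~\ref{lem: lineariztion true} (successful updates effect the right state change at their first successful replace) and Lemma~\ref{lem: lineariztion false} (false-returning operations witness the right snapshot), together with Lemma~\ref{lem: contain snapshot} for \textit{contain}. The paper's own proof is in fact a two-line citation of Lemmas~\ref{lem: lineariztion true} and~\ref{lem: lineariztion false}; your write-up simply spells out the assembly of the total order and the supporting lemmas (Lemmas~\ref{lem: op last}, \ref{lem: replace success before return}, \ref{lem: replace false not success}, \ref{lem: find physical path}) that the paper leaves implicit, and your anticipated obstacle for \textit{move} is exactly the content already discharged inside Lemma~\ref{lem: lineariztion true} part~3.
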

\begin{proof}
By lemma~\ref{lem: lineariztion true} and Lemma~\ref{lem: lineariztion false}, our algorithm returns the same result as they are finished in the linearized order. Therefore we prove our algorithm is linearizable and our linearization points are correct. 
\end{proof}

\subsection{Progress Condition}
We say an algorithm is non-blocking if the system as a whole is making progress even if some threads are starving. We prove our quadboost is non-blocking by following set of lemmas. We assume that there are finite number of \textit{basic operations} invocations.

\begin{obs}
\label{obs:finite basic operations}
There are finite number of \textit{basic operations}.
\end{obs}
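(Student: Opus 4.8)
The plan is short because the statement is, by design, a restatement of the standing hypothesis declared in the sentence immediately preceding it (``We assume that there are finite number of \textit{basic operations} invocations''). So the ``proof'' I would give is simply to invoke that assumption: we fix an execution model in which the set of client threads issues only finitely many top-level calls to \texttt{insert}, \texttt{remove}, \texttt{contain}, and \texttt{move} (the natural ``finite input'' premise corresponding to a terminating client program), and Observation~\ref{obs:finite basic operations} merely records this so that later progress lemmas may cite it by name. There is nothing to compute.

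If one wanted a genuine derivation rather than to postulate it, the only thing to observe is that a basic operation never generates another basic operation: the only nested calls a basic operation makes are to the \emph{subroutines} (\texttt{find}, \texttt{findCommon}, \texttt{continueFind}, \texttt{continueFindCommon}, \texttt{help}, \texttt{helpSubstitute}, \texttt{helpCompress}, \texttt{helpMove}, \texttt{compress}, \texttt{createNode}, \ldots), which are explicitly classified as non-basic, and which are not mutually recursive. Hence the total number of basic-operation invocations in the execution equals exactly the number issued by the (fixed, finite) collection of client threads, which is finite. I do not expect any obstacle at this step.

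The reason the observation is worth isolating is its downstream role in the subsequent \textbf{Claim} that quadboost is non-blocking, and that is where the real difficulty sits. There the argument is: each \texttt{while} loop in \texttt{insert}/\texttt{remove}/\texttt{move} executes only finitely often, because every iteration either returns, or is preceded by a \texttt{help} call that changes the key set of the quadtree, or strictly shrinks the still-active prefix of a finite \texttt{path}; and, globally, there are only finitely many successful $flag \rightarrow replace$, $flag \rightarrow replace \rightarrow unflag$, and $flag \rightarrow unflag$ transitions. Combining those loop-termination facts with Observation~\ref{obs:finite basic operations} yields that the whole execution consists of finitely many steps, hence every pending operation eventually completes, which is overall progress. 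The main obstacle in that larger proof is bounding the retry loop of the move operation (the interaction of \texttt{continueFindCommon}, the two stacks, and the LCA invariant), not this observation, whose proof is immediate.
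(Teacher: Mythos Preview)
Your reading is correct: the paper gives no proof for this observation, since it is simply a restatement of the assumption declared in the sentence immediately preceding it (``We assume that there are finite number of \textit{basic operations} invocations''). Your proposal matches the paper's treatment exactly, and your additional remarks about non-recursiveness and the downstream use in the non-blocking claim are accurate but go beyond what the paper itself records here.
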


\begin{lem}
\label{lem: path terminable}
$path(key^{k})$ returned by $find(keys)$ consists of finite number of keys.
\end{lem}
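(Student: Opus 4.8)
The plan is to produce a finite bound $D$ on the number of nodes in any returned path, depending only on the root region and on the (finite) collection of keys that are ever used, and then to show that the loop inside $find$ --- and hence the $continueFind$, $findCommon$, and $continueFindCommon$ variants, which only ever push the same \textit{Internal} nodes and otherwise pop from, or re-descend along, an already valid root-to-leaf path --- cannot descend more than $D$ levels. For the static data: since there are only finitely many \textit{basic operations}, the set $K$ of keys that ever appear as arguments (hence as \textit{Leaf} keys) is finite; if $|K| \le 1$ the tree never grows past the two dummy layers and we are done, so I assume $|K| \ge 2$ and let $\delta > 0$ be the minimum over distinct $k,k' \in K$ of $\max(|k.keyX - k'.keyX|, |k.keyY - k'.keyY|)$, which is strictly positive. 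Let $W$ be the width of the root region, immutable by the observation that an \textit{Internal} node's space information never changes, and let $\rho$ be the fixed width/height ratio inherited by every \textit{Internal} node.

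The next step is a uniform geometric lower bound: every \textit{Internal} node ever present in the tree has width $w \ge c\delta$ with $c = \min(1,\rho)$ fixed. The two dummy layers satisfy this trivially. Every other \textit{Internal} node is produced by $createNode$, whose post-condition says the returned sub-tree contains both $l.key \in K$ and the new key, two distinct points; inspecting $createNode$ shows it creates \textit{Internal} nodes only along the chain from the sub-tree root down to the node whose four children first separate the two keys, and each node on that chain is split precisely because both keys still lie in its quadrant, so its region contains both. Since these two points differ by at least $\delta$ in one coordinate, that node's width, or (via $\rho$) its height, exceeds $\delta$, giving $w \ge c\delta$. Combining this with the structural fact --- valid in every snapshot --- that a child region has exactly half its parent's width, and with $W$ for the root, a node reachable from the root by $i$ parent/child links has width $W/2^{i}$, so $i \le \log_2(W/(c\delta))$; I would set $D := \lfloor \log_2(W/(c\delta)) \rfloor + 3$.

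Finally I would run this on the traversal itself. The $find$ loop is sequential: it reads a child of the current \textit{Internal} node, pushes the current node, and advances, so it visits nodes $n_0, n_1, n_2, \dots$ at strictly increasing times $T_0 < T_1 < \cdots$, with $n_{i+1}$ a child of $n_i$ at time $T_i$ (for $continueFind$ and $continueFindCommon$ the start node already lies on a previously returned valid path, hence at depth $\le D$). If the loop ran past $D+1$ iterations, applying the snapshot-wise quadtree structure along the links read at $T_0, \dots, T_D$ would force $n_{D+1}.w = W/2^{D+1} < c\delta$, contradicting the uniform bound; hence at most $D$ (necessarily \textit{Internal}, by the earlier lemmas) nodes are ever pushed, and every returned path --- in particular $path(key^{k})$ --- consists of finitely many nodes.

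The hard part will be the concurrency bookkeeping, not the geometry: one cannot simply say ``the tree has depth $\le D$ in every snapshot, so $find$ stops,'' because $find$ interleaves with modifications that may deepen branches below its current position. The fix, and the technical heart, is that within one invocation the visited nodes are read in time order and each consecutive pair is a parent/child link that held at the read time, so the region-halving property is available exactly where it is needed and pins $n_i.w = W/2^{i}$; the static bound $w \ge c\delta$ then caps $i$. A secondary fussy point is checking, by reading $createNode$, that \emph{all} of its intermediate \textit{Internal} nodes --- not only the deepest, separating one --- have regions containing both keys, so that the lower bound genuinely applies to every \textit{Internal} node in the structure.
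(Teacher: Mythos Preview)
Your proposal is correct but takes a genuinely different route from the paper. The paper's proof is a short counting argument: by the standing assumption that there are finitely many basic operations, only finitely many successful insert and move operations ever occur; each one adds only finitely many nodes (via the post-condition of \textit{createNode}), so the total number of nodes ever present in the structure is finite. It then invokes the snapshot lemma (that the path returned by $find(keys)$ coincides with a physical path in some snapshot) to conclude that the returned path is bounded by this global node count.

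Your argument instead extracts an explicit, uniform depth bound $D$ from the geometry: the minimum pairwise separation $\delta$ among the (finite) set of keys ever used forces every \textit{Internal} node manufactured by \textit{createNode} to have width at least $c\delta$, and since width halves along each parent--child link read during the traversal (using immutability of the space information and the per-snapshot quadtree property), the number of levels is capped at roughly $\log_2(W/(c\delta))$. This is more quantitative and does not need the snapshot lemma, but it is heavier and relies on a structural fact about \textit{createNode} --- that \emph{every} intermediate \textit{Internal} node it creates contains both keys --- which the paper supplies only via the stated post-condition rather than by inspection of the (unshown) code. The paper's version is shorter and reuses machinery already in place; yours gives an actual bound and makes the termination independent of the snapshot machinery.
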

\begin{proof}
Observation~\ref{obs:finite basic operations} shows that the number of successful insert operation and move operation is limited. Lemma~\ref{lem: replace false not success} illustrates that for \textit{basic operations} return false, there is no successful replace operation. Hence, only successful insert and move operations add nodes into quadtree. The post-conditions of the createNode function (Lemma~\ref{lem: createNode pre-conditions and post-condition}) and the effects of $replace$ (Lemma~\ref{lem: replace properties remain}) demonstrate that they will add finite number of nodes into quadtree. We then prove that $path^{k}$ is terminable. By Lemma~\ref{lem: find snapshot}, $find(keys)$ returns $path(key^{k})$ which is a subpath of  $physical\_path(keys^{k})$ in $snapshot_{T_{i}}$ which is terminable. Thus, there are finite number of nodes in $path^{k}$.
\end{proof}

\begin{cor}
\label{cor: compress terminable}
The compress function must terminate.
\end{cor}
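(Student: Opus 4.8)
The plan is to show that the \texttt{while} loop inside the \texttt{compress} function executes only a bounded number of iterations, the bound being the size of the input stack \textit{path}. First I would inspect the loop body: each iteration reads $pOp = p.op$, and if $pOp$ is not \textit{Clean} it returns immediately; otherwise it performs exactly one \texttt{path.pop()} to obtain $gp$, after which it either returns (when $gp$ is the root, or when \texttt{check(p)} fails, or when \texttt{helpFlag} fails) or it invokes \texttt{helpCompress(op)} and continues with $p = gp$. Since \texttt{compress} never pushes onto \textit{path}, every iteration that does \emph{not} terminate the loop strictly decreases $size(\textit{path})$ by exactly one; hence the number of non-terminating iterations is a monotone decreasing quantity bounded by the initial size of \textit{path}.

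Next I would invoke the pre-condition of \texttt{compress} (the hypotheses established in Lemma~\ref{lem: compress pre-conditions}): \textit{path} is the stack returned by $find(keys)$ at the prior iteration. By Lemma~\ref{lem: path terminable}, this stack contains only finitely many nodes. Combining this with the observation above, the loop can make at most $size(\textit{path})$ non-terminating iterations before it is forced to return, so \texttt{compress} halts after at most $size(\textit{path}) + 1$ iterations. I would also remark that the only nested call inside the loop, \texttt{helpCompress(op)}, itself terminates, since it merely evaluates \texttt{helpReplace}, which executes at most four pointer comparisons and a single \texttt{CAS} and is therefore wait-free; thus finiteness of the iteration count yields termination of the whole procedure.

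The one point requiring care is that \texttt{path.pop()} is never applied to an empty stack, i.e.\ the guard $gp == root$ fires before \textit{path} is exhausted. This follows because $find(keys)$ always begins at the root and pushes it first (Lemma~\ref{lem: find first}, using Observation~\ref{obs: root} that the root is never changed), and Lemma~\ref{lem: dummy not change} guarantees that the two layers of dummy \textit{Internal} nodes---in particular the root---are never unlinked; hence the root remains at the bottom of \textit{path} and is eventually popped into $gp$ while the stack still has content, triggering the \textit{return}. I do not expect any serious obstacle here: the argument is a straightforward combination of a monotone-variant (stack size) bound with the already-proved finiteness of $find(keys)$'s output and the dummy-node invariant; the only subtlety is ruling out the empty-stack corner case, which the $gp == root$ check handles given the structural invariants above.
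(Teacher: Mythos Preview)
Your proposal is correct and follows essentially the same approach as the paper: both reduce termination of the \texttt{compress} loop to the finiteness of \textit{path} established in Lemma~\ref{lem: path terminable}, together with the observation that the subroutines invoked inside the loop contain no further loops. Your version is considerably more detailed---spelling out the monotone decrease of the stack size, the wait-freedom of \texttt{helpCompress}, and the empty-stack corner case guarded by $gp == root$---whereas the paper dispatches the corollary in two sentences, but the underlying argument is the same.
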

\begin{proof}
By Lemma~\ref{lem: path terminable}, $path^{k}$ is terminable. Since other functions called by the compress function do not consist loops, the compress function must terminate.
\end{proof}

The we shall demonstrate that there are finite number of three different CAS transitions.

\begin{lem}
\label{lem: spatial order}
There is an unique \textit{spatial order} among nodes in quadtree in $snapshpt_{T_{i}}$
\end{lem}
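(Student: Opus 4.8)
The plan is to show that the comparison performed by \texttt{getSpatialOrder} --- the lexicographic comparison on the triple $\langle x, y, w\rangle$ in the priority order $x \rightarrow y \rightarrow w$ --- is a strict total order on the set of \textit{Internal} nodes that are reachable in $snapshot_{T_{i}}$. Since, by Observation~\ref{obs: space info}, the fields $\langle x, y, w, h\rangle$ of an \textit{Internal} node are immutable, it suffices to prove that the triple $\langle x, y, w\rangle$ is distinct for any two distinct reachable \textit{Internal} nodes; totality, antisymmetry and transitivity then follow automatically from the lexicographic order on triples of real numbers, which gives uniqueness of the induced order.

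First I would invoke Lemma~\ref{lem: replace properties remain} (equivalently, the claim that a quadtree maintains its properties in every snapshot) to conclude that in $snapshot_{T_{i}}$ the reachable \textit{Internal} nodes form a proper region quadtree rooted at the fixed root (Observation~\ref{obs: root}). From the subdivision rules in Definition~\ref{defi: quadtree structure} I would then prove, by induction on the depth $d$ of a node, that: (i) every \textit{Internal} node at depth $d$ has width exactly $w_{root}/2^{d}$ and height $h_{root}/2^{d}$; and (ii) its upper-left corner is $\bigl(x_{root} + i\, w_{root}/2^{d},\ y_{root} + j\, h_{root}/2^{d}\bigr)$ for integers $0 \le i, j < 2^{d}$ that are uniquely determined by the sequence of directions $d_1 d_2 \cdots d_d \in \{nw, ne, sw, se\}^{d}$ on the root-to-node path, and conversely that sequence is recovered from $(i,j)$. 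Property (i) shows that the width of a node determines its depth, because $w_{root} > 0$ forces the values $w_{root}/2^{d}$ to be pairwise distinct. Property (ii) shows that, within a fixed depth, the upper-left corner $(x,y)$ identifies the node uniquely: distinct direction paths of the same length give distinct $(i,j)$ hence distinct corners, and the proper-tree structure guarantees at most one reachable node per path.

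Combining the two facts: given distinct reachable \textit{Internal} nodes $n_1 \ne n_2$, either $(x_1, y_1) \ne (x_2, y_2)$, and then the $x$-then-$y$ stage of \texttt{getSpatialOrder} already separates them; or $(x_1, y_1) = (x_2, y_2)$, in which case $n_1$ and $n_2$ cannot lie at the same depth (by property (ii) that would force $n_1 = n_2$), so their depths differ and by property (i) $w_1 \ne w_2$, so the third stage of the comparison separates them. Hence \texttt{getSpatialOrder} never reports ``equal'' on two distinct reachable nodes, and the order it induces on the nodes of $snapshot_{T_{i}}$ is total and unique.

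I expect the main obstacle to be the book-keeping in the induction of the middle paragraph: making precise that the $2^{d}\times 2^{d}$ grid of candidate cells at depth $d$ has pairwise distinct upper-left corners \emph{even across different subtrees}, and that an \textit{Internal} node's $\langle x, y, w, h\rangle$ is exactly the grid cell indexed by its root-to-node direction path (this is where the per-direction coordinate formulas of Definition~\ref{defi: quadtree structure} and the immutability of Observation~\ref{obs: space info} must be used together, together with the corollary that any two nodes share a common path from the root, Corollary~\ref{cor: common path}). Once this bijection between root-to-node paths and grid cells is in hand, the uniqueness of the spatial order is immediate.
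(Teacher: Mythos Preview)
Your proposal is correct and follows essentially the same approach as the paper: both arguments reduce the claim to showing that distinct reachable \textit{Internal} nodes have distinct $\langle x, y, w\rangle$ triples, whence the lexicographic comparison of \texttt{getSpatialOrder} is a strict total order. The paper's version is simply terser---it argues by contradiction that two \textit{Internal} nodes with the same $\langle x, y, w\rangle$ would represent the same square region (using $w=h$) and appeals directly to Lemma~\ref{lem: replace properties remain}, whereas your depth-induction makes the underlying grid bijection explicit; your extra bookkeeping is sound but not required.
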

\begin{proof}
As illustrated by the algorithm at line~\ref{quadboost: move getSpatialOrder}, the getSpatialOrder function compares $ip$ with $rp$ by the order: $x \rightarrow y \rightarrow w$. In our quadtree, we only consider square partitions on two-dimensional space. Hence, $w$ is always equal to $h$. We prove the lemma by contradiction. Assuming there are two different \textit{Internal} nodes with the same $\langle x, y, w\rangle$, they represent the same square starting with left coroner $\langle x, y\rangle$ with width $w$. By Lemma~\ref{lem: replace properties remain}, in every snapshot quadtree's properties remain. There cannot be two squares with the same left corner and $w$. Hence, in $snapshpt_{T_{i}}$, a \textit{Internal} node consists of a unique $\langle x, y, w\rangle$ tuple that can be ordered correctly.
\end{proof}

\begin{lem}
\label{lem: finite transitions}
\begin{enumerate}
	\item There are finite number of successful $flag\rightarrow replace \rightarrow unflag$ transitions.
	\item There are finite number of successful $flag\rightarrow replace$ transitions.
	\item There are finite number of successful $flag\rightarrow unflag$ transitions.
\end{enumerate}
\end{lem}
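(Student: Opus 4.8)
The plan is to give a single global bound on the number of successful flag operations that sit at the head of a transition, attacked in the order Part~1 $\to$ Part~2 $\to$ Part~3, so that each part may reuse the previous one. For Parts~1 and~2 the bound follows from Observation~\ref{obs:finite basic operations} together with the structural lemmas already proved; Part~3 will be reduced to a termination statement about the \emph{move} loop, for which the unique spatial order (Lemma~\ref{lem: spatial order}) is the essential tool.

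\emph{Part~1.} By Lemma~\ref{lem: flag replace unflag transition} every $flag\to replace\to unflag$ transition is initiated by a successful $iflag$, $rflag$, or $mflag$. In the insert operation a successful $iflag$ is immediately followed by \textsf{helpSubstitute} and \texttt{return true}, so an insert invocation triggers at most one successful $iflag$; symmetrically a remove invocation triggers at most one successful $rflag$. For the move operation, a transition of this shape (case~(1) of the preceding claim) requires the \textit{Move} $op$ to be flagged on \emph{both} parents, i.e.\ \textit{allFlag} becomes true, after which \textsf{helpMove} returns true and the move invocation returns true; hence a move invocation triggers at most one such transition. Summing over the finitely many invocations (Observation~\ref{obs:finite basic operations}) bounds Part~1.

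\emph{Part~2.} By Lemma~\ref{lem: flag replace transition} every $flag\to replace$ transition is initiated by a successful $cflag$, and by Lemma~\ref{lem: op properties} an $op$ that is \textit{Compress} never changes again, so each \textit{Internal} node is successfully $cflag$-ged at most once. It therefore suffices to bound the number of \textit{Internal} nodes ever reachable from the root: a new \textit{Internal} node becomes reachable only when an $ireplace$ or $mreplace$ splices in a sub-tree built by \textsf{createNode} (a $creplace$ and the replacements inside \textsf{helpSubstitute} install only \textit{Empty} nodes); each such replace belongs to a Part~1 transition, and by Lemma~\ref{lem: createNode pre-conditions and post-condition} each \textsf{createNode} call produces a finite sub-tree. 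With the finitely many nodes present initially (Lemma~\ref{lem: dummy not change}) and Part~1 already bounded, this is finite; Corollary~\ref{cor: compress terminable} additionally prevents an infinite loop inside any single \textsf{compress} call.

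\emph{Part~3 (the main obstacle).} By Lemma~\ref{lem: flag unflag transition} a $flag\to unflag$ transition arises only from a \textit{Move} $op$ whose spatially-first parent was flagged and whose second flag then failed, after which the first parent is unflagged; so the count is at most $2\cdot\#\{\text{\textit{Move} $op$s created}\}=2\sum_{\text{move invocations}}(\text{number of while-loop iterations})$, and it remains to show no move invocation iterates forever. Suppose $\mu$ did. By Parts~1--2 there are only finitely many successful replace operations, so there is a time $T^{*}$ after which the tree is never altered; after $T^{*}$ no successful $iflag$, $rflag$, $cflag$, or \emph{pair} of $mflag$s can occur, since by Corollary~\ref{cor: first replace succeed} each would force a later replace (in particular, whenever $\mu$ itself helps a dangling \textit{Substitute}/\textit{Move} $op$ it would complete it and change the tree). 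Hence, past $T^{*}$, the only live activity is move invocations that flag one parent and later unflag it, and every such attempt respects the unique spatial order (Lemma~\ref{lem: spatial order}), smaller rank first. I would then run a Dijkstra-style ordered-acquisition argument: a wait chain ``move $X$ is blocked on a parent currently held by move $Y$'' forces $\mathrm{rank}(X.\mathrm{first})<\mathrm{rank}(X.\mathrm{second})=\mathrm{rank}(Y.\mathrm{first})<\mathrm{rank}(Y.\mathrm{second})=\cdots$, which is strictly increasing through finitely many ranks, so the chain terminates at a move whose second parent is held by nobody, whose second $mflag$ then succeeds, contradicting that the tree is fixed after $T^{*}$. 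I expect the delicate points to be (i) showing cleanly that after $T^{*}$ only single $mflag$s and their undoing $unflag$s survive, tracking helper threads and the exact \textit{doCAS}/reverse-unflag semantics of \textsf{helpMove}; and (ii) making the wait-chain argument robust to the interleaving in which a blocking parent is transiently released and re-flagged between a stuck move's reads, which is where the strictness of the rank inequalities provided by Lemma~\ref{lem: spatial order} must be used carefully. Given this, $\sum(\text{iterations})$ is finite, so $\#\{\text{\textit{Move} $op$s created}\}$ is finite, and Part~3 follows.
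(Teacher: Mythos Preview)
Your proposal follows essentially the same approach as the paper. Parts~1 and~3 are virtually identical in substance: Part~1 counts one successful $iflag$/$rflag$/$mflag$ per basic operation that returns true and invokes Observation~\ref{obs:finite basic operations}; Part~3 argues by stabilisation after the finitely many replaces of Parts~1--2, then uses the spatial order (Lemma~\ref{lem: spatial order}) to rule out a cycle among concurrently blocked moves, forcing some second $mflag$ to succeed and yielding a contradiction. The paper phrases Part~3 as an acyclic dependency-graph argument (``no ring among all $move_i$'') whose head node must set \textit{doCAS} true, while you phrase it as a Dijkstra-style strictly increasing wait chain; the content is the same, and the delicate points you flag (cleanly isolating the post-$T^{*}$ regime and handling helpers) are exactly the ones the paper also handles informally.

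The only noteworthy difference is Part~2. The paper bounds $flag\to replace$ transitions operationally: a successful $cflag$ occurs only inside the compress function, compress is invoked only by a remove or move that returns true (finitely many, by Observation~\ref{obs:finite basic operations}), and each such call terminates (Corollary~\ref{cor: compress terminable}). You instead bound the set of possible $cflag$ targets structurally: each \textit{Internal} node is $cflag$-ged at most once (Lemma~\ref{lem: op properties}), and the total number of \textit{Internal} nodes ever reachable is finite because new ones arrive only via the sub-trees produced by \textit{createNode} in Part-1 transitions. Both routes are valid and of comparable length; yours is a little more robust to how \textit{compress} is wired into the control flow, the paper's is more direct given that wiring.
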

\begin{proof}
\begin{enumerate}
\item 
By Lemma~\ref{lem: replace success before return}, replace operations only occur in \textit{basic operations} that return true. By Lemma~\ref{lem: op last}, there's a unique $op$ created for operations that return true. Hence, for the move operation, the insert operation and the remove operation, there are finite number of $flag\rightarrow replace \rightarrow$ circles.

\item 
By Lemma~\ref{lem: replace success before return}, replace operations only occur in \textit{basic operations} that return true. Lemma~\ref{lem: flag replace transition} shows that the $flag\rightarrow replace$ transition happens only in the compress function which is called by the successful remove operation or the move operation. Hence, there are a finite number of $flag\rightarrow replace$ transitions, as Corollary~\ref{cor: compress terminable} shows that the compress function is terminable.

\item
The $flag\rightarrow unflag$ transition executes only in the helpMove function where $op.iParent$ cannot be flagged according to Lemma~\ref{lem: flag unflag transition} (if $op.iFirst$ is true). The case that $op.iFirst$ is false can be proved symmetrically.

From first two parts of the lemma, there are finite number of replace operations that change quadtree's structure. If the $flag\rightarrow replace \rightarrow$ transition happens simultaneously, it will reset a node's $op$ to \textit{Clean}. If the $flag\rightarrow replace \rightarrow unflag$ transition happens in the meantime. Lemma~\ref{lem: compress push more once} indicates that a node with \textit{Compress} $op$ will not be pushed twice. Consider if the helpMove function detects $op.iParent.op$ as \textit{Compress}, then $op.iParent$ is not reachable in the next time. Hence, quadtree is stabilized and only when $flag\rightarrow unflag$ transitions execute infinitely and never set $doCAS$ as true.

According to Lemma~\ref{lem: spatial order}, we order \textit{spatial order} on both $op.iParent$ and $op.rParent$ as $ip_{0} > ip_{1} ... > ip_{n}$ and $rp_{0} > rp_{1} ... > rp_{n}$. We assume that there's a ring such that $move_{i}$ which flags $rp_{i}$ but fails on $ip_{i}$, and $move_{i}'$ which flags $ip_{i}$ but fails on $rp_{i}$. By assumption we order $rp_{i}$ and $ip_{i}$ as $rp_{i} > ip_{i}$ by $move_{i}$'s order, and $ip_{i} > rp_{i}$ by $move_{i}'$'s order. But by Lemma~\ref{lem: spatial order}, all \textit{Internal} nodes in a snapshot can be ordered uniquely. Therefore it derives a contradiction. 

Then, we prove that there's no ring among all $move_{i}$ in the same snapshot, so at least one of $move_{i}$ could set $doCAS$ to true, resulting in the $flag\rightarrow replace\rightarrow unflag$ transition. Let's consider the dependency among all $move_{i}$. If $move_{i}$ which fails on $p_{i}$ has been flagged by $move_{i}'$, then there is a directed edge from $move_{i}$ to $move_{i}'$ by order. In this way, the head node in the graph must have no out edge as proved. It will set $doCAS$ to true, all move operations that are directly connected to it will help it finish. After erase the former head node from the graph, there will be other nodes that have no out edge. Finally, all dependency edges are removed. Therefore, there are a finite number of $flag\rightarrow unflag$ transitions.
\end{enumerate}
\end{proof}

\begin{obs}
\label{obs: help mutual}
The help function, the helpCompress function, the helpMove function, and the helpSubstitute function are not called in a mutual way. (If method A calls method B, and method B calls method A reciprocally, we say method A and B are called in a mutual way.)
\end{obs}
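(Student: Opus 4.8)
The plan is to read off the static call graph of the four helper routines and show that, restricted to those routines, it is acyclic. First I would list, directly from the pseudocode of Figures~\ref{fig:quadboost insert}, \ref{fig:quadboost remove} and~\ref{fig:quadboost move}, every procedure invocation occurring inside the bodies of \texttt{help}, \texttt{helpSubstitute}, \texttt{helpCompress} and \texttt{helpMove}. The enumeration is short: \texttt{help(op)} invokes exactly one of \texttt{helpCompress}, \texttt{helpSubstitute} or \texttt{helpMove}, selected by \texttt{op.class}, and makes no other call (lines~\ref{quadboost: help helpCompress}--\ref{quadboost: help helpMove}); \texttt{helpSubstitute} calls only \texttt{helpReplace} (line~\ref{quadboost: helpSubstitute helpReplace}) and \texttt{helpFlag} (line~\ref{quadboost: helpSubstitute helpFlag}); \texttt{helpCompress} calls only \texttt{helpReplace} (line~\ref{quadboost: helpCompress helpReplace}); and \texttt{helpMove} calls only \texttt{helpFlag} and \texttt{helpReplace} (lines~\ref{quadboost: helpMove flag begin}--\ref{quadboost: helpMove reverse end}), every other statement being a plain \texttt{CAS} or a field write such as \texttt{op.oldRChild.op = op}.

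Second, I would observe that \texttt{helpReplace} and \texttt{helpFlag} are leaves of the call graph: each body consists solely of \texttt{CAS} instructions and contains no call to any of the four routines, so no path that passes through \texttt{helpReplace} or \texttt{helpFlag} can produce an edge re-entering the set $\{\texttt{help}, \texttt{helpSubstitute}, \texttt{helpCompress}, \texttt{helpMove}\}$. Hence the subgraph induced on those four vertices has exactly the edges $\texttt{help}\to\texttt{helpCompress}$, $\texttt{help}\to\texttt{helpSubstitute}$ and $\texttt{help}\to\texttt{helpMove}$, and no others; it is a directed star rooted at \texttt{help}, which contains no directed cycle, in particular no $2$-cycle. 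Thus no two of the four routines invoke one another mutually --- which is the statement --- and moreover none of \texttt{helpSubstitute}, \texttt{helpCompress}, \texttt{helpMove} ever (transitively) re-enters \texttt{help}, so every call to \texttt{help} terminates after a bounded number of nested calls, which is the form in which the observation is used in the non-blocking argument.

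I do not expect a genuine obstacle; the only care required is to make the enumeration of call sites exhaustive, in particular to inspect every conditional branch of \texttt{helpMove} (the guards on \texttt{iFirst}, \texttt{doCAS} and \texttt{allFlag}). That inspection is immediate: each branch guards only calls to \texttt{helpFlag} or \texttt{helpReplace}, never a call to one of the four routines, so the star structure --- and hence the absence of mutual calls --- is insensitive to the control flow.
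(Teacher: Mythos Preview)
Your proposal is correct and is essentially the only sensible way to verify the statement: inspect the call sites of the four helper routines and observe that the induced call graph is a star rooted at \texttt{help}. The paper itself offers no proof at all---it records the statement as a bare observation, leaving the code inspection implicit---so your explicit enumeration of the call sites and the leaf status of \texttt{helpFlag} and \texttt{helpReplace} simply makes precise what the paper takes for granted.
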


\begin{lem}
\label{lem: help snapshot}
If help function returned at $T_{i}$, and $find(keys)$ at the prior iteration reads $p^{0}.op$ at $T_{i1} < T_{i}$, \textit{Leaf} nodes in $snapshot_{T_{i}}$ and $snapshot_{T_{i1}}$ are different.
\end{lem}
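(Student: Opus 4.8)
The plan is to open up the call $help(pOp)$ and show that, between the time $T_{i1}$ at which $find(keys)$ read $pOp = p^{0}.op$ at the prior iteration and the time $T_{i}$ at which $help$ returns, a successful replace operation that alters the set of active \textit{Leaf} nodes takes effect. First I would note that $help(pOp)$ is reached with $pOp$ non-\textit{Clean}: if the $find$ result had $pOp.class = Clean$, the operation either returns true after a successful $helpFlag$ or re-reads $p^{0}.op$, which is then a \emph{different}, non-\textit{Clean} op, and it is that op (not the one returned by $find$) that is handed to $help$ — so the hypothesis of the lemma would not apply. Hence, by Lemma~\ref{lem: op properties}, $pOp$ is a \textit{Substitute}, a \textit{Compress}, or a \textit{Move}, and by the three-transition classification its successful transition is $flag \to replace \to unflag$, $flag \to replace$, or $flag \to unflag$; I case-split on this.

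For the principal cases — $pOp$ a \textit{Substitute} (from an insert or a remove), or a \textit{Move} whose transition is $flag \to replace \to unflag$ — $help(pOp)$ invokes $helpSubstitute$ or $helpMove$, which calls $helpReplace$ on $pOp$'s parent(s); by Corollary~\ref{cor: first replace succeed} the first replace operation belonging to $pOp$ succeeds, so by the time $help$ returns at $T_{i}$ this replace has been performed by some thread. I would then show the replace falls in $(T_{i1},T_{i}]$. Since $p^{0}.op = pOp$ at $T_{i1}$ and ops are never reused (Lemma~\ref{lem: op ABA}), the $unflag$ of $pOp$ has not fired by $T_{i1}$; together with Lemma~\ref{lem: flag replace unflag transition} (no successful $unflag$ before the replace) this leaves only the possibility that the replace \emph{already} took effect at $T_{i1}$ with the $unflag$ still pending. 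I would exclude that using the $find(keys)$ post-conditions (Lemma~\ref{lem: find physical path}, in particular that $pOp$ is read before $l$) and the structure invariant (Lemma~\ref{lem: replace properties remain}): had $pOp$'s replace already taken effect when $find$ descended from $p^{0}$ in the relevant quadrant, the terminal $l$ returned would be $pOp.newNode$ or one of its descendants, which would drive the calling operation out of its retry loop at the $inTree$/$moved$ test (lines~\ref{quadboost: insert intree}, \ref{quadboost: remove intree}, \ref{quadboost: findCommon intree il}--\ref{quadboost: findCommon intree rl}) rather than to the $help$ call. Once the replace is placed in $(T_{i1},T_{i}]$, Lemma~\ref{lem: replace properties remain} and the key-set bookkeeping of Lemma~\ref{lem: lineariztion true} give that an $ireplace$, $rreplace$, or $mreplace$ deletes a \textit{Leaf} or installs a \textit{Leaf}/sub-tree containing a \textit{Leaf}, so the key set differs between $snapshot_{T_{i1}}$ and $snapshot_{T_{i}}$.

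The remaining configurations need separate treatment, and the lemma is really invoked only for $pOp \in \{\textit{Substitute},\textit{Move}\}$ with the replace-transition. If $pOp$ is a \textit{Compress}, $help(pOp)$ performs only a $creplace$, which unlinks an \textit{Internal} node all of whose children are \textit{Empty} (Lemma~\ref{lem: all children empty}), leaving the \textit{Leaf} set untouched; similarly a $flag \to unflag$ \textit{Move} performs no replace at all (Lemma~\ref{lem: flag unflag transition}). In the surrounding progress proof these are absorbed into the finiteness of transitions instead: a node carrying a \textit{Compress} op is pushed into $path$ at most once (Lemma~\ref{lem: compress push more once}), the number of \textit{Compress} ops is bounded by the number of successful $remove$/$move$ calls (Observation~\ref{obs:finite basic operations}), and Lemma~\ref{lem: finite transitions} bounds the $flag \to unflag$ transitions. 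I would therefore either state this lemma with $pOp$ restricted to the replace-producing cases, or fold the \textit{Compress}/aborted-\textit{Move} cases directly into the non-blocking argument.

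The step I expect to be the real obstacle is precisely the claim used above: that $help(pOp)$ is never a pure no-op, i.e. that a completed-but-not-yet-unflagged \textit{Substitute}/\textit{Move} sitting on the parent of the terminal $find$ returned is incompatible with the calling operation still being in its loop. Making this airtight means carefully reconciling the cases where $pOp$'s replaced child lies in a different quadrant than the one $find$ descended into, and threading together the exact placement of the $inTree$/$moved$ checks, the ``$pOp$ read before $l$'' ordering from Lemma~\ref{lem: find physical path}, and the property invariant of Lemma~\ref{lem: replace properties remain}. Everything else is mechanical. I would also first record that $help$, $helpCompress$, $helpMove$, $helpSubstitute$ are not mutually recursive (Observation~\ref{obs: help mutual}) and that the underlying $path$ is finite (Lemma~\ref{lem: path terminable}), so that ``$help(pOp)$ returns at $T_{i}$'' is well defined and every nested $helpReplace$/$helpFlag$ terminates.
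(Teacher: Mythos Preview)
Your approach is genuinely different from the paper's, and more careful. The paper does not do a case split on the type of $pOp$ nor try to pin a specific replace into the interval $(T_{i1},T_i]$. Instead it argues by contradiction at the level of the whole execution: it first cites Observation~\ref{obs: help mutual} to get termination of $help$, and then leans entirely on Lemma~\ref{lem: finite transitions} (finiteness of the three transition types, and the acyclicity of the $flag\to unflag$ dependency graph) to assert that if $help$ itself did not change the tree then some other transition must have done so around that interval. The move case with $ip=rp$ but $iOp\neq rOp$ is singled out as the one situation where $help$ may be a no-op yet the two snapshots still differ because the op changed between the two reads. In short, the paper's proof is really a fragment of the non-blocking argument rather than a self-contained proof of the lemma as stated.

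You have correctly put your finger on the weak spot: when $pOp$ is \textit{Compress}, or a \textit{Move} on the $flag\to unflag$ branch, $help(pOp)$ does not alter the \textit{Leaf} set, and your attempt to exclude the ``replace already done before $T_{i1}$'' scenario via the $inTree$/$moved$ checks does not go through when $pOp$'s replaced child sits in a different quadrant from the one $find$ descended into. The paper does not close this gap either; it simply absorbs these cases into the finiteness argument, exactly as you propose in your last paragraph. So your instinct to either restrict the lemma to the replace-producing $pOp$'s or to fold the \textit{Compress}/aborted-\textit{Move} cases directly into the non-blocking proof is the right call, and is effectively what the paper does implicitly. What your route buys is a clearer picture of which $help$ calls actually move the key set; what the paper's route buys is not having to chase the ``already-replaced-but-not-yet-unflagged'' corner case at all, at the cost of a proof that only really makes sense inside Claim~4.
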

\begin{proof}
We prove the lemma by contradiction, there are two possible scenarios. First, the help function might be interminable. But based on Observation~\ref{obs: help mutual}, the help function must terminate. Second, the help function might execute without changing quadtree. According to Lemma~\ref{lem: finite transitions}, there are finite number of CAS transitions. So if some help function does not change quadtree, the structure might be changed before $T_{i1}$. For the move function, if $ip$ and $rp$ are the same but their $op$s are different, the help function might not change the structure. Nevertheless, $snapshot_{T_{i}}$ and $snapshot_{T_{i1}}$ are different. Or else, there are some interval that all transitions are $flag \rightarrow unflag$. However, Lemma~\ref{lem: finite transitions} also shows that all transitions form a acyclic graph that there is no interval that all transitions are $flag \rightarrow unflag$. Therefore, a $flag\rightarrow replace \rightarrow unflag$ transition or a $flag\rightarrow unflag$ transition is executed to change the snapshot. 
\end{proof}

\begin{lem}
Our quadboost algorithms are non-blocking.
\end{lem}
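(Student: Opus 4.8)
The plan is to argue by contradiction in the standard style for lock-free structures. Suppose that from some time $T$ onward the system takes infinitely many steps but no basic operation ever returns. Since the contain operation contains no loop, the stuck operations must be instances of insert, remove, or move spinning in their \textit{while} loop, so it suffices to show that a thread cannot execute that loop infinitely often without some \textit{Leaf} node being inserted into or removed from the quadtree. I would split the argument into three parts, mirroring the sketch in the body: (i) every $find(keys)$ invoked inside the loop terminates; (ii) each $continueFind$ / $continueFindCommon$ restarts its traversal from a node that is active and lies on $physical\_path(keys^{k})$ in some snapshot, so the subsequent $find(keys)$ is well founded; and (iii) between the returns of $find(keys)$ at two consecutive iterations the set of active \textit{Leaf} nodes changes.

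For (i) I would invoke the lemma that $path^{k}$ consists of finitely many keys (Lemma~\ref{lem: path terminable}): only successful insert and move operations add nodes, by the post-conditions of $createNode$ and the effect of a successful $replace$ each such operation adds finitely many nodes, and by Observation~\ref{obs:finite basic operations} there are finitely many basic operations; since $find(keys)$ returns a subpath of a finite $physical\_path$ it halts, and Corollary~\ref{cor: compress terminable} covers the extra work done inside a successful remove or move. For (ii), $continueFind$ and $continueFindCommon$ pop exactly the nodes carrying a \textit{Compress} $op$ from the recorded path, so the node they stop at has an $op$ other than \textit{Compress}; by Lemma~\ref{lem: op Compress reachable} such a node is reachable from the root and by Lemma~\ref{lem: nodes above active} every node above it is active, hence there is a snapshot in which it still lies on $physical\_path(keys^{k})$. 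For the move operation, when $rFail$ or $iFail$ holds this reduces to the single-path case, and when $cFail$ holds Lemma~\ref{lem: lca active} guarantees that an active LCA node still lies on $physical\_path$ for both $oldKey$ and $newKey$, so $findCommon$ may legitimately restart from it.

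Part (iii) is the crux. Before re-entering the loop an insert or remove calls $help(pOp)$ and a move calls $help(rOp)$ and/or $help(iOp)$, and I would appeal to Lemma~\ref{lem: help snapshot}: if $help$ returns at $T_{i}$ and the parent's $op$ was read at $T_{i1}<T_{i}$, then the \textit{Leaf} nodes differ between $snapshot_{T_{i1}}$ and $snapshot_{T_{i}}$. That lemma rests on Observation~\ref{obs: help mutual} (no mutual recursion among $help$, $helpSubstitute$, $helpCompress$, $helpMove$, so each terminates) together with Lemma~\ref{lem: finite transitions} (finitely many successful $flag\!\rightarrow\!replace\!\rightarrow\!unflag$, $flag\!\rightarrow\!replace$, and $flag\!\rightarrow\!unflag$ transitions). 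Putting the pieces together, an infinite run of the loop would force infinitely many distinct post-$help$ snapshots, hence infinitely many successful transitions that change the structure, contradicting Lemma~\ref{lem: finite transitions}; the residual case of a move that keeps failing with $cFail$ because $ip=rp$ but $iOp\neq rOp$ is absorbed by observing that $snapshot_{T_{i}}$ and $snapshot_{T_{i1}}$ already differ between the reads of $rOp$ and $iOp$.

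The step I expect to be the main obstacle is ruling out a livelock among several concurrent move operations, each flagging one of its two parents and failing on the other so that every $helpMove$ returns with $allFlag$ false and no $replace$ ever happens. The intended resolution, which belongs in the proof of Lemma~\ref{lem: finite transitions}, is that the \textit{spatial order} on \textit{Internal} nodes (Lemma~\ref{lem: spatial order}) is total within any snapshot: building the dependency graph whose edges point from a blocked move to the move that has flagged its parent, totality of the order forces this graph to be acyclic, so some move has no outgoing edge, succeeds in flagging both parents, performs its replace, and thereby changes the snapshot --- once more contradicting the stabilization assumption. Assembling these parts shows that no thread can loop forever without a global change, so quadboost is non-blocking.
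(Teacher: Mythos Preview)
Your proposal is correct and follows essentially the same three-part decomposition as the paper's proof: termination of $find(keys)$ via Lemma~\ref{lem: path terminable}, the restart node being active and on $physical\_path$ via Lemmas~\ref{lem: op Compress reachable}, \ref{lem: nodes above active}, and~\ref{lem: lca active}, and the contradiction from Lemma~\ref{lem: help snapshot} with the same case split on $rFail$/$iFail$/$cFail$ for the move operation. You have also correctly located the livelock-among-moves argument in Lemma~\ref{lem: finite transitions} and described the spatial-order acyclicity reasoning the paper uses there.
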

\begin{proof}
We have to prove that no process will execute loops infinitely without changing keys in quadtree. First, we prove that $path$ is terminable. Next, we prove that $find(keys^{k})$ starts from an active node in $physical\_path(keys^{k})$ in $snapshot_{T_{i}}$ between $i_{th}$ iteration and $i+1_{th}$ iteration. Finally, \textit{Leaf} nodes in $snapshot_{T_{i1}}$ at the returning of $find(keys)$ at $i_{th}$ iteration is different from $snapshot_{T_{i2}}$ that at the returning of $find(keys)$ at the $i+1_{th}$ iteration.

For the first part, initially we start from the root node. Therefore, $path$ is empty. Moreover, as Lemma~\ref{lem: path terminable} shows that $path^{k}$ consists of finite number of keys, we establish this part.

For the second part, the continueFind function and the continueFindCommon function pop all nodes with a \textit{Compress} $op$ from $path$. For the insert operation and the remove operation, since Lemma~\ref{lem: op Compress reachable} shows that an \textit{Internal} nodes whose $op$ is not \textit{Compress} is active , and Lemma~\ref{lem: nodes above active} shows that nodes above the active node are also active, there is a snapshot in which the top node of $path$ is still in $physical\_path(keys^{k})$. For the move operation, if either $rFail$ or $iFail$ is true, it is equivalent with the prior case. If $cFail$ is true, Lemma~\ref{lem: lca active} illustrates that if the LCA node is active, it is in $physical\_path$ for both $oldKey$ and $newKey$. Thus, there is also a snapshot in which the start node is in $physical\_path$.

We prove the third part by contradiction, assuming that quadtree is stabilized at $T_{i}$, and all invocations after $T_{i}$ are looping infinitely without changing \textit{Leaf} nodes.

For the insert and remove operations, before the invocation of $find(keys)$ at the next iteration, they must execute the help function at line~\ref{quadboost: insert help} and line~\ref{quadboost: remove help} accordingly. In both cases, the help function changes the snapshot (Lemma~\ref{lem: help snapshot}).

For the move operation, consider different situations of the continueFindCommon function. If $rFail$ or $iFail$ is true, two situations arise: (1) $iOp$ or $rOp$ is \textit{Clean} but $mflag$ fails. (2) $iOp$ or $rOp$ is not \textit{Clean}. Consider the first case that $mflag$ fails, $iOp$ and $rOp$ are updated respectively at line~\ref{quadboost: rFail true} and line~\ref{quadboost: iFail true}. If $ip$ and $rp$ are different, before its invocation of $find(keys^{k})$, the help function is performed at line~\ref{quadboost: continueFindCommon insert help} and line~\ref{quadboost: continueFindCommon remove help}. Thus by Lemma~\ref{lem: help snapshot}, the snapshot is changed between two iterations. Now consider if $cFail$ is true. If $ip$ and $rp$ are the same, it could result from the difference between $rOp$ and $iOp$. In this case, the snapshot might be changed between reading $rOp$ and $iOp$. It could also result from the failure of $mflag$. For above cases, the help function at line~\ref{quadboost: continueFindCommon remove help common} would change quadtree. If $ip$ and $rp$ are different, it results from that either $iPath$ or $rPath$ has popped the LCA node. We have proved the case in the former paragraph. Thus, we derives a contradiction.

From above discussions, we prove that quadboost is non-blocking.
\end{proof}

%% use section* for acknowledgment
%\ifCLASSOPTIONcompsoc
%  % The Computer Society usually uses the plural form
%  \section*{Acknowledgments}
%\else
%  % regular IEEE prefers the singular form
%  \section*{Acknowledgment}
%\fi

%The authors would like to thank...

% Can use something like this to put references on a page
% by themselves when using endfloat and the captionsoff option.
\ifCLASSOPTIONcaptionsoff
  \newpage
\fi

\end{document}